\crefname{equation}{}{} 
\newcommand\pgfmathsinandcos[3]{%
  \pgfmathsetmacro#1{sin(#3)}%
  \pgfmathsetmacro#2{cos(#3)}%
}
\newcommand\LongitudePlane[3][current plane]{%
  \pgfmathsinandcos\sinEl\cosEl{#2} % elevation
  \pgfmathsinandcos\sint\cost{#3} % azimuth
  \tikzset{#1/.style={cm={\cost,\sint*\sinEl,0,\cosEl,(0,0)}}}
}
\newcommand\LatitudePlane[3][current plane]{%
  \pgfmathsinandcos\sinEl\cosEl{#2} % elevation
  \pgfmathsinandcos\sint\cost{#3} % latitude
  \pgfmathsetmacro\yshift{\RadiusSphere*\cosEl*\sint}
  \tikzset{#1/.style={cm={\cost,0,0,\cost*\sinEl,(0,\yshift)}}} %
}
\newcommand\DrawLongitudeArc[4][black]{
  \LongitudePlane{\angEl}{#2}
  \tikzset{current plane/.prefix style={scale=1}}
  \pgfmathsetmacro\angVis{atan(sin(#2)*cos(\angEl)/sin(\angEl))} %
  \pgfmathsetmacro\angA{mod(max(\angVis,#3),360)} %
  \pgfmathsetmacro\angB{mod(min(\angVis+180,#4),360} %
  \draw[current plane,#1]  (\angA:\RadiusSphere) arc (\angA:\angB:\RadiusSphere);
}%
\newcommand\DrawLatitudeCircle[2][1]{
  \LatitudePlane{\angEl}{#2}
  \tikzset{current plane/.prefix style={scale=1}}
  \pgfmathsetmacro\sinVis{sin(#2)/cos(#2)*sin(\angEl)/cos(\angEl)}
  % angle of "visibility"
  \pgfmathsetmacro\angVis{asin(min(1,max(\sinVis,-1)))}
  \draw[current plane] (\angVis:\RadiusSphere) arc (\angVis:-\angVis-180:\RadiusSphere);
}
\newcommand\DrawLatitudeArc[4][black]{
  \LatitudePlane{\angEl}{#2}
  \tikzset{current plane/.prefix style={scale=1}}
  \pgfmathsetmacro\sinVis{sin(#2)/cos(#2)*sin(\angEl)/cos(\angEl)}
  % angle of "visibility"
  \pgfmathsetmacro\angVis{asin(min(1,max(\sinVis,-1)))}
  \pgfmathsetmacro\angA{max(min(\angVis,#3),-\angVis-180)} %
  \pgfmathsetmacro\angB{min(\angVis,#4)} %

  \draw[current plane,#1] (\angA:\RadiusSphere) arc (\angA:\angB:\RadiusSphere);
}
\newtheorem{thm}{Theorem}[section]
\newtheorem{lem}[thm]{Lemma}
\theoremstyle{remark}
\renewcommand{\theequation}{\thesection.\arabic{equation}}
\numberwithin{equation}{section}
\newcommand{\fock}{\mathcal{F}}		% fock space symbol
\newcommand{\di}{{\textnormal{d}}}		% differential (for integrals)
\newcommand{\rfrak}{\mathfrak{r}}
\newcommand{\Acal}{\mathcal{A}}
\newcommand{\Tbb}{\mathbb{T}}
\newcommand{\W}{W}
\newcommand{\Wt}{\widetilde{W}}
\newcommand{\D}{D}
\newcommand{\Ecal}{\mathcal{E}}
\newcommand{\Hbb}{\mathbb{H}}
\newcommand{\Ncal}{\mathcal{N}}		% calligraphic N
\newcommand{\Hcal}{\mathcal{H}}		% calligraphic H
\newcommand{\Ical}{\mathcal{I}}
\newcommand{\Rcal}{\mathcal{R}}
\newcommand{\Ikp}{\Ical_{k}^{+}}
\newcommand{\Ikm}{\Ical_{k}^{-}}
\newcommand{\Ik}{\Ical_{k}}
\newcommand{\Il}{\Ical_{l}}
\newcommand{\Ilp}{\Ical_{l}^{+}}
\newcommand{\ik}{I}
\newcommand{\Hcorr}{\Hcal_\textnormal{corr}}
\newcommand{\Sbb}{\mathbb{S}}
\newcommand{\Dfrak}{\mathfrak{D}}
\newcommand{\tD}{\widetilde{\mathfrak{D}}}
\newcommand{\tc}{\tilde{c}}
\newcommand{\Ocal}{\mathcal{O}}		% big-O, order-of
\newcommand{\hc}{\textnormal{h.c.}}		%hermitian conjugate
\newcommand{\cc}[1]{\overline{#1}}	% complex conjugate
\newcommand{\Rbb}{\mathbb{R}}		% real numbers
\newcommand{\Cbb}{\mathbb{C}}		% complex numbers
\newcommand{\Nbb}{\mathbb{N}}		% natural numbers
\newcommand{\Zbb}{\mathbb{Z}}
\newcommand{\Xbb}{\mathbb{X}}
\newcommand{\Dbb}{\mathbb{D}}
\renewcommand{\Re}{\operatorname{Re}} 	%RealPart
\newcommand{\id}{\mathbb{I}}
\newcommand{\norm}[1]{\lVert#1\rVert}	%Norm
\newcommand{\Bignorm}[1]{\Big\lVert#1\Big\rVert}
\newcommand{\tr}{\operatorname{tr}}
\newcommand{\sgn}{\operatorname{sgn}}
\newcommand{\tagg}[1]{ \stepcounter{equation} \tag{\theequation} \label{eq:#1} } % add tag and label in align*-environments
\newcommand{\BFc}{B_\F^c}
\newcommand{\BF}{B_\F}
\newcommand{\Efrak}{\mathfrak{E}}
\newcommand{\Kfrak}{\mathfrak{K}}
\newcommand{\kappaf}{\kappa}
\newcommand{\north}{\Gamma^{\textnormal{nor}}}
\newcommand{\F}{\textnormal{F}} % abbreviation for Fermi
\newcommand{\B}{\textnormal{B}} % abbreviation for Bose
\newcommand{\diag}{\operatorname{diag}}
\newcommand{\diam}{\operatorname{diam}}
\newcommand{\supp}{\operatorname{supp}}
\newcommand{\linb}{\textnormal{B}}
\title{Correlation Energy of a Weakly Interacting Fermi Gas}
\author[1,*]{Niels Benedikter}
\author[2]{Phan Th\`anh Nam}
\author[3]{Marcello Porta} 
\author[4]{Benjamin Schlein} 
\author[5]{Robert Seiringer}
\affil[1]{Universit\`a degli Studi di Milano, Dipartimento di Matematica, Via Cesare Saldini 50, 20133 Milano, Italy}
\affil[2]{LMU Munich, Department of Mathematics, Theresienstra{\ss}e 39, 80333 M\"unchen, Germany}
\affil[3]{SISSA, Mathematics Area, Via Bonomea 265, 34136 Trieste, Italy}
\affil[4]{Institute of Mathematics, University of Zurich, Winterthurerstrasse 190, 8057 Zurich, Switzerland}
\affil[5]{IST Austria, Am Campus 1, 3400 Klosterneuburg, Austria}
\affil[*]{corresponding author, \href{mailto:niels.benedikter@unimi.it}{niels.benedikter@unimi.it}}
\begin{document} 

\maketitle
\begin{abstract}
We derive rigorously the leading order of the correlation energy of a Fermi gas in a scaling regime of high density and weak interaction. The result verifies the prediction of the random--phase approximation. Our proof refines the method of collective bosonization in three dimensions. We approximately diagonalize an effective Hamiltonian describing  approximately bosonic collective excitations around the Hartree--Fock state, while showing that gapless and non--collective excitations have only a negligible effect on the ground state energy.  
\end{abstract} 

\tableofcontents

\section{Introduction and Main Result}

In the last thirty years, the study of the quantum many--body problem has made tremendous progress, in particular for weakly interacting regimes where the validity of mean--field theory (or slightly more generally the quasi--free approximation) as an effective theory can be proved. In particular for bosonic systems the mathematical results have been very rich. 
% , especially after the experimental observation of Bose--Einstein condensation in 1995 \cite{AEM+95,DMA+95}.
Just to name some: in the beginning of the 2000s the Gross--Pitaevskii functional for the ground state energy of dilute Bose gases was derived \cite{LSY00,LS02}. Later the time--dependent Gross--Pitaevskii equation was derived \cite{ESY09a,ESY10}; bounds on the rate of convergence were obtained by \cite{BdS15,BS19a}. 
In 2011 validity of the quasi--free approximation for the excitation spectrum of Bose gases in the mean--field regime was proven \cite{Sei11}, thus obtaining also the next--to--leading order of the ground state energy. In contrast, for dilute gases, the quasi--free approximation is not sufficient for obtaining the second order of the energy, although it can be used to derive the leading order with optimal rate of convergence \cite{BBCS18b,BBCS20,NNRT20}. Very recently, results going beyond the quasi--free approximation were obtained: the excitation spectrum for dilute Bose gases was derived \cite{BBCS17,BBCS19}; the Lee--Huang--Yang formula for the second order of the ground state energy was proven \cite{FS19}; and nonlinear classical Gibbs measures were derived as an approximation at positive temperature \cite{LNR20,FKSS20}.   

\medskip
Compared to the development in the theory of bosonic systems, the mathematical progress in the derivation of effective theories for fermionic systems has been lagging behind. For fermions, the mean--field or quasi--free theory leads to the  Hartree--Fock approximation\footnote{In this paper we focus on a setting where the pairing density is not relevant. If the pairing density becomes important, one is lead to Hartree--Fock--Bogoliubov theory or the Bardeen--Cooper--Schrieffer (BCS) theory of superconductivity. Already the study of these quasi--free theories is very challenging. Recently the mathematical properties of BCS theory were extensively analyzed \cite{FHSS16,HHSS08}, and the Ginzburg--Landau theory of superconductivity was derived from BCS theory \cite{FHSS12}.} which is widely used in computational physics and chemistry. The validity of the Hartree--Fock approximation was established for the ground state energy of Coulomb systems in a number of seminal works \cite{FS90,Bac92,GS94}.  Rigorous results taking this analysis beyond the quasi--free effective theory have been notably absent, except for a second--order bound \cite{HPR20} on the many--body correction (called correlation energy) to the ground state energy, inspired by \cite{Hai03, HHS05}. In the present paper we derive an optimal formula for the correlation energy.
 
 Our proof is based on a non--perturbative framework which we started to develop in \cite{BNP+20}. The central concept of our approach is that the dominant degrees of freedom are particle--hole pairs which are delocalized over patches on the Fermi surface in momentum space in such a way that they behave \emph{approximately} as quasi--free bosons. In \cite{BNP+20}, by means of a trial state, we proved that the formula known as the \emph{random phase approximation (RPA)} in physics is an \emph{upper bound} to the correlation energy of a three--dimensional Fermi gas in the mean--field scaling regime (i.\,e., high density and weak interaction) with a regular interaction potential. In the present paper, we again start from the interacting many--body Hamiltonian and prove the matching \emph{lower bound}, thus completely validating the random--phase approximation for the ground state energy of the three--dimensional Fermi gas in the mean--field scaling regime.    

\medskip

The problem of calculating corrections to the Hartree--Fock approximation has a long history in theoretical physics. Already in the early days of quantum mechanics the computation of the correlation energy was attempted using second order perturbation theory \cite{Wig34, Hei47} for a Fermi gas with Coulomb interaction (the electron gas); however, this approach leads to a logarithmically divergent expression due to the long range of the Coulomb potential. It was then noticed \cite{Mac50} that perturbation theory with Coulomb potential becomes even more divergent at higher orders and suggested that a resummation might cure this problem. Then in their seminal work \cite{BP53}, Bohm and Pines developed the RPA: they argued that the Hamiltonian can be partially transformed into normal coordinates which describe collective oscillations screening the long--range of the Coulomb potential, and thus leading to a better behaved perturbative expansion. However, they had to introduce additional bosonic collective degrees of freedom by hand. This was somewhat clarified by \cite{SBFB57,Saw57}, who showed that the collective modes can be understood as a superposition of particle--hole pair excitations. The formulation of the RPA due to Sawada et al.\ has in fact been an important inspiration for our work. Ultimately it was discovered that the RPA can be seen as a systematic partial resummation of perturbation theory; following this line,  one even obtains a more precise result \cite{GB57}. These works have been very influential in the establishment of theoretical condensed matter physics.  

The particle--hole pair bosonization of Sawada et al.\ found application in many settings, for example to describe nuclear rotation and calculate moments of inertia of atomic nuclei \cite{MW69,AP75}. A bosonization method considering only the radial excitations of the Fermi surface was developed by \cite{Lut79}; similar methods applied to systems with square Fermi surface \cite{FSL99,SL05}.  Later, the bosonization of collective excitations of the Fermi surface became an important tool in the context of renormalization group methods \cite{BG90,HM93,HKMS94,Hal94,CF94}. The collective aspect was further emphasized in the operator--formalism by \cite{CF94a,CF95}. In the functional--integral formalism \cite{FGM95,KHS95,Khv95,KC96,KS96,FG97} bosonization was established as a Hubbard--Stratonovich transformation. Despite this popularity, difficulties in judging the quality of the bosonic approximation have been pointed out \cite{Kop97}: ``For example, scattering processes that transfer momentum between different boxes on the Fermi surface and non--linear terms in the energy dispersion definitely give rise to corrections to the free--boson approximation for the Hamiltonian. The problem of calculating these corrections within the conventional operator approach seems to be very difficult.'' As far as the mean--field scaling regime is concerned, with our result we quantify such corrections as being of subleading order.     

\medskip

% EXCLUDED
A different mathematical approach to the fermionic many--body problem has been developed by employing rigorous renormalization group methods to construct convergent perturbative expansions. This allowed the construction of Gibbs states or ground states for two main classes of interacting fermionic models.
  
The first class concerns models in the Luttinger liquid universality class (which was first proposed by Haldane \cite{Hal80, Hal81}), such as interacting fermions or quantum spin chains in one dimension and some two--dimensional models. These models show universal properties agreeing with those of the  Tomonaga--Luttinger model which is solvable in one dimension by an \emph{exact} bosonization method \cite{ML65}. These predictions of bosonization have been verified rigorously, starting from \cite{BG90,BGPS94} to \cite{BM01,BM04,BM11,BFM09a,BFM09b,AMP18,MP17,GMT20}; the proofs however are by detailed analysis of the fermionic theory instead of justifying directly the bosonization. One justification of a bosonization method was achieved by \cite{BW20}, showing equivalence of the massless sine--Gordon model for a special choice of the coupling constant and the massive Thirring model at the free fermion point.   

The second class concerns fermions in two or three dimensions at low temperature. In this context, the use of sectors on the Fermi surface, very similar to the construction of patches we use, has been introduced in \cite{FMRT92} for the program of proving existence of superconductivity \cite{FMRT95}. There, bosonization was implemented as a Hubbard--Stratonovich transformation of sectorized collective excitations. While this ambitious program has not been completed, the sector method was later used to prove Fermi liquid behavior of fermions in two dimensions with uniformly convex Fermi surface at exponentially small positive temperatures (and non--Fermi liquid behavior for fermions with flat Fermi surface) \cite{DR00a,DR00b,Riv02,AMR05a,AMR05b,BGM06}. It furthermore lead to a proof of  convergence for the zero--temperature perturbation theory in a special two--dimensional fermionic model with an asymmetry condition of the Fermi surface; this is a series of eleven papers an overview of which is given in \cite{FKT04}. Partial results have been obtained for fermions in three dimensions at positive temperature \cite{DMR01}. We see our approach, while sharing the `sectorization' or `patches' concept, as providing a complementary point of view on related physical problems, based on different, non--perturbative ideas.    

Finally, our result should also be contrasted to the study of two--dimensional models that have been constructed to be exactly bosonizable. This goes back to a proposal of \cite{Mat87}, who was motivated by high--temperature superconductivity. The analysis and variants of the model were developed by \cite{Lan10a,Lan10b,dL10,dL12a,dL12b,dL14}.  
Furthermore, one may also see similarities (such as the limitation of the number of bosons that can occupy a single bosonic mode) in the bosonization concept to methods such as the Holstein--Primakoff map \cite{CG12,CGS15,Ben17} for spin systems. 

\subsection{Many--Body Hamiltonian in the Mean--Field Regime} 
To describe $N$ spinless fermionic particles on the torus $\Tbb^3 := \Rbb^3/(2\pi \Zbb^3)$, the Hilbert space is the space of totally antisymmetric $L^2$--functions of $N$ variables in $\Tbb^3$,
\begin{equation} \label{eq:antisymm_space}
L^2_\textnormal{a}(\Tbb^{3N}) := \{ \psi \in L^2(\Tbb^{3N}) : \psi(x_{\sigma(1)}, \ldots, x_{\sigma(N)}) = \sgn(\sigma) \psi(x_1,\ldots,x_N)\ \forall \sigma \in \mathcal{S}_N \}\;.
\end{equation}
The Hamiltonian is defined as the sum of Laplacians describing the kinetic energy\footnote{Compared to the mass $m=1$ in \cite{BNP+20}, we now choose $m=1/2$.} and a pair interaction, i.\,e., a multiplication operator defined using a function $V:\Rbb^3 \to \Rbb$,
\begin{equation} \label{eq:HN}
H_N := \hbar^2 \sum_{i=1}^N \left( - \Delta_{x_i} \right) + \lambda \sum_{1 \leq i < j \leq N} V\left( x_i - x_j \right)\;.
\end{equation}
The positive parameters $\hbar$ and $\lambda$ adjust the strength of the kinetic energy and interaction operator, respectively.

In this paper, we assume the interaction potential $V$ to be smooth. Thus the Hamiltonian is bounded from below and its self--adjointness follows from the Kato--Rellich theorem or using the Friedrichs extension. Here we are interested in the infimum of the spectrum (the \emph{ground state energy}) 
\begin{equation} \label{eq:EN-def}
E_N := \inf \operatorname{spec}\left( H \right) = \inf \Big\{\langle \psi, H_N \psi  \rangle: \psi \in L^2_\textnormal{a}(\Tbb^{3N})\;,\ \norm{\psi}_{L^2}=1 \Big\}\;. 
\end{equation}
%The eigenvectors associated with $E_N$ are called ground states of the system.
% According to the Perron--Frobenius theorem, the ground state is up to a phase unique under the assumptions that we are going to impose on the interaction potential $V$.

 In full generality, the computation of $E_N$ is clearly out of reach, simply because the model is too general: it may describe physical systems from superconductors to neutron stars. We thus need to be more specific and consider a particular case of the model, the most accessible case being a mean--field scaling regime: by considering a high density of particles we expect the leading order of the theory to be approximately described by an effective one--particle theory. We thus consider the limit of large particle number on the fixed--size torus. However, kinetic energy and interaction energy in typical states scale differently: the kinetic energy like $N^{5/3}$ due to the Pauli exclusion principle, the interaction energy like $N^2$ since there are $N(N-1)/2$ interacting pairs. To have a chance of obtaining a non--trivial limit we choose to scale the parameters by\footnote{Of course we can also set $\hbar=1$ or $\lambda=1$ and scale only the other parameter. The scaling \cref{eq:choice-hbar-lambda} becomes non--trivial when studying the dynamics, where it relates to a rescaling of time \cite{BPS14}.}
\begin{align} \label{eq:choice-hbar-lambda}
\hbar :=N^{-\frac{1}{3}} \quad \textnormal{and} \quad \lambda:=N^{-1} \quad \textnormal{with }N\to\infty\;.
\end{align}
With this choice the kinetic energy and the interaction energy in typical states close to the ground state have the same order of magnitude (order $N$). This scaling regime couples a semiclassical scaling ($\hbar = N^{-\frac{1}{3}}\to 0$) and a mean--field scaling (coupling constant $\lambda=N^{-1}$).

If the interaction vanishes, $V=0$, then the ground state of the system is exactly given by the Slater determinant (i.\,e., antisymmetrized tensor product) of plane waves
%, where the plane waves are chosen such that the kinetic energy is minimized:
\begin{equation}
\psi_\textnormal{pw} = \bigwedge_{k \in \BF} f_k\;, \qquad f_k(x) = (2\pi)^{-\frac{3}{2}} e^{ik\cdot x}\quad \textnormal{with } k \in \Zbb^3,\ x \in \Tbb^3\;. \label{eq:plane-waves}
\end{equation}
Here the momenta $k$ of the plane waves are chosen such that the expectation value of the kinetic energy operator is minimized. The set of the corresponding momenta $\BF$ is called the Fermi ball. For simplicity we assume that the ball is completely filled, namely we set
\begin{equation}\label{eq:Fermiball}
\BF := \{ k \in \Zbb^3: \lvert k\rvert \leq k_\F\}\;,
\end{equation}
and then define the particle number accordingly as $N := \lvert \BF \rvert$. The limit of large particle number is then realized by considering $k_\F \to \infty$. According to Gauss' classic counting argument we have\footnote{In \cite{BNP+20} we also introduced $\kappa_0 = (3/4\pi)^{\frac{1}{3}}$ and compared explicitly to $\kappaf = \left(3/4\pi\right)^{\frac{1}{3}} + \Ocal(N^{-1/3})$ throughout the paper. However, the estimates for this error are simple to follow and of lower order, so that for readability we do not spell them out in the present paper.}
\[k_\F =  \kappaf N^{\frac{1}{3}} \qquad \textnormal{for} \quad \kappaf = \left(3/4\pi\right)^{\frac{1}{3}} + \Ocal(N^{-1/3})\;.\]

If the system is interacting, $V\not= 0$, the ground state becomes a complicated superposition of Slater determinants. Nevertheless, in Hartree--Fock theory one minimizes only over the set of all Slater determinants. In our setting, the Hartree--Fock energy
\[
E_N^\textnormal{HF}:= \inf \Big\{ \langle \psi, H_N \psi\rangle :  \psi=\bigwedge_{i=1}^N u_i \text{ with } \{u_i\}_{i=1}^N \text{ an orthonormal family in } L^2(\mathbb{T}^3) \Big\}
\]
is attained by the plane waves as in \cref{eq:plane-waves,eq:Fermiball}; see \cref{app:B} for a proof\footnote{This fact is special for the completely filled Fermi ball of a homogeneous gas in finite volume. In general, the plane waves state is not even a local minimum of the Hartree--Fock functional \cite{GL18}.}. Thus in order to gain non--trivial information about the interacting system one must go beyond the Hartree--Fock theory.

Note that by the variational principle, the Hartree--Fock energy $E_N^\textnormal{HF}$ is an upper bound to the ground state energy $E_N$. % 
It follows from the analysis of \cite{Bac92,GS94} that Hartree--Fock theory also provides a good lower bound to the ground state energy. In our setting, the approach of \cite{Bac92,GS94} shows that 
% \[ E_N = E_N^{HF} + o (1) \]
% % 
% For the high--density limit, in \cite{Bac92,GS94} it was proven that Hartree--Fock theory also provides a good lower bound to the true ground state energy, namely\footnote{Actually, in \cite{Bac92,GS94} fermions interacting through the Coulomb potential are considered, where the exchange term (Dirac term) is of order $N^{\frac{1}{3}}$ (after translating to our choice of coupling constants), and their error estimate $o(N^{\frac{1}{3}})$. This changes when the potential is compactly supported in Fourier space, when the exchange term becomes order $1$.} 
\begin{equation}    \label{eq:correl_en}
E_N = E_N^\textnormal{HF} + o(1)  \qquad \textnormal{as $N\to\infty$}\;.
\end{equation}
In particular, both $E_N$ and $E_N^\textnormal{HF}$ contain the Thomas--Fermi energy (in our scaling of order $N$) and the Dirac correction, also know as the exchange term (in our scaling of order $1$).

%Even though the Slater determinant of plane waves in \cref{eq:plane-waves} constitutes a critical point of the Hartree--Fock variational problem, it is known that it is not even a local minimum \cite{GL18}. However, the energy difference to the global Hartree--Fock minimizer is very small. Recently \cite{GHL19} it was proven that \footnote{In \cite{GHL19}, the bound was given as $\Ocal(\exp(-N^{\frac{1}{6}}))$ because they consider Coulomb interaction; for regular potentials one gets the stronger exponent.}
%\begin{equation}
%E_N^\textnormal{HF,pw} := \langle \psi_\textnormal{pw}, H_N \psi_\textnormal{pw}\rangle = E_N^\textnormal{HF}  + \Ocal(e^{-N^{\frac{1}{3}}})\;.
%\end{equation}
%In particular, this implies that in order to gain non--trivial information about the interacting system one must go beyond Hartree--Fock theory. 
%
%thm:HF

From the physical point of view, Slater determinants are as uncorrelated as fermionic states (which have to satisfy the Pauli principle) can be, in the sense that they are just antisymmetrized tensor products. Due to the presence of the interaction, the true ground state will contain non--trivial correlations (i.\,e., it will be a superposition of Slater determinants). Therefore Wigner \cite{Wig34} called the difference
\[
E_N - E_N^\textnormal{HF}
\]
the  \emph{correlation energy}. According to \cref{eq:correl_en} we know that the correlation energy in our scaling is of size $o(1)$ as $N \to \infty$. In the present paper, we are going to determine the leading order of the correlation energy. It is of order $\hbar = N^{-\frac{1}{3}}$ and given by the explicit formula predicted by the random--phase approximation, as obtained by \cite{Mac50,GB57} based on a partial resummation of the perturbation series. We believe that our result is of importance as a rigorous step beyond mean--field theory into the world of interacting quantum systems. Our proof shows that the leading order of the correlation energy can be understood as the ground state energy of an effective quadratic Hamiltonian describing approximately bosonic collective excitations.

\subsection{Main Result}
We write the interaction potential via its Fourier coefficients
\[
V(x) = \sum_{k\in \Zbb^3} \hat V(k) e^{i k \cdot x}\;. 
\]

%are going to denote the Fourier transform of the interaction potential $V: \Rbb^3 \to \Rbb$ by 
%$$\hat{V}(k) = (2\pi)^{-3} \int_{\Rbb^3} \di x\, e^{-ik\cdot x} V(x).$$

\begin{thm}[Main Result]
\label{thm:main}
There exists a $v_0 > 0$ such that the following holds true.
Assume that $\hat{V}: \Zbb^3 \to \Rbb$ is compactly supported, non--negative, satisfies $\hat{V}(k)=\hat{V}(-k)$ for all $k\in \Zbb^3$, and $\norm{\hat{V}}_{\ell^1}< v_0$. For every $k_\F >0$ let the particle number be $N := \lvert\{ k\in \Zbb^3: \lvert k\rvert \leq k_\F \}\rvert$. Then as $k_\F \to \infty$, the ground state energy of the Hamiltonian $H_N$ in \eqref{eq:HN} with $\hbar=N^{-1/3}$ and $\lambda=N^{-1}$ is
%$$
%H_N := \hbar^2 \sum_{i=1}^N \left( - \Delta_{x_i} \right) + \lambda \sum_{1 \leq i < j \leq N} V\left( x_i - x_j \right)\;.
%$$
\begin{equation} \label{eq:Ecorr-main-thm}
E_N = E_N^\textnormal{HF} + E_N^\textnormal{RPA} + \Ocal(\hbar^{1+\frac{1}{16}})\;.
\end{equation}
Here the correlation energy $E_N^\textnormal{RPA}$ is of order $\hbar$ and, with $\kappaf = \left(\frac{3}{4\pi}\right)^{\frac{1}{3}}$, given by
\begin{equation}
E_N^\textnormal{RPA} = \hbar \kappaf \sum_{k \in \Zbb^3} \lvert k\rvert \left( \frac{1}{\pi}\int_0^\infty \log\left[1+2\pi\kappa\hat{V}(k) \left(1-\lambda \arctan\left(\lambda^{-1}\right)\right) \right] \di \lambda - \frac{ \pi}{2}\kappa\hat{V}(k) \right)\;. \label{eq:rpa_energy}
\end{equation}
\end{thm}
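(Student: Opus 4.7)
The matching upper bound $E_N \leq E_N^\textnormal{HF} + E_N^\textnormal{RPA} + \Ocal(\hbar^{1+1/16})$ was established in \cite{BNP+20} via an explicit almost--bosonic Bogoliubov trial state, so the task reduces to proving the lower bound. The first move is to implement a particle--hole transformation around the plane--wave Slater determinant $\psi_\textnormal{pw}$, which conjugates $H_N - E_N^\textnormal{HF}$ into an operator on the excitation Fock space consisting of a positive linearized kinetic part $\Hbb_0$, a quartic interaction, and a quadratic pair operator $Q_\B$ coupling particle modes $p \in \BFc$ with hole modes $h \in \BF$. Momentum conservation and the compact support of $\hat V$ show that, on low--energy states, the dominant contribution to $Q_\B$ comes from pairs $(p,h)$ with $p-h = k$ in $\supp \hat V$ and $p,h$ in a thin shell of width $\sim \hbar^{1-\delta}$ around the Fermi sphere.

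Next, I would partition this shell into angular patches $\{B_\alpha\}_\alpha$, as in \cite{BNP+20}, and for each patch $\alpha$ and each $k\in \supp\hat V$ introduce a collective creation operator $b_\alpha^*(k)$ as a normalized sum over $p\in \BFc\cap B_\alpha$, $h = p-k \in \BF$, of particle--hole pair creations. The next step is to verify that these $b^*_\alpha(k)$ obey canonical bosonic commutation relations up to a remainder controlled by the number of excitations per patch, and that the linearized kinetic energy acts diagonally on them via $[\Hbb_0, b_\alpha^*(k)]\simeq 2\hbar \kappaf |k\cdot \hat\omega_\alpha|\, b_\alpha^*(k)$, with $\hat\omega_\alpha$ the center of patch $\alpha$. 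Combined with the fact that $Q_\B$ becomes, up to small corrections, a sum over $k$ of quadratic forms in $b_\alpha(k),b_\alpha^*(k)$ with coupling $\hat V(k)$, one obtains an effective bosonic Hamiltonian $\Hcorr$ whose infimum, computed by an exact Bogoliubov rotation of the block matrix with diagonal $D$ (patch kinetic energies) and off--diagonal rank--one blocks $V$ (one per momentum $k$), equals $E_N^\textnormal{RPA}$. The explicit $\lambda$--integral in \cref{eq:rpa_energy} then arises from rewriting $\tr\bigl(\sqrt{D(D+2V)} - D - V\bigr)$ as a Cauchy integral in a spectral parameter, exactly as in the upper bound.

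The main obstacle, and the source of the exponent $\tfrac{1}{16}$, is to prove that the remainder $H_N - E_N^\textnormal{HF} - \Hcorr$ is bounded below by $-\Ocal(\hbar^{1+1/16})$ uniformly on low--energy states. I would establish a chain of a priori estimates for any approximate minimizer $\psi_N$: a rough bound $\langle \psi_N, \Hbb_0 \psi_N\rangle \leq \Ocal(\hbar)$ from the Hartree--Fock asymptotics \cref{eq:correl_en}; localization estimates on the number of excitations per patch; and operator inequalities showing that the non--bosonizable quartic terms, the exchange contributions, the deviations of the $b$--operators from exact CCR, and the gapless non--collective particle--hole modes transverse to the $b$--subspace can all be absorbed by $C\hbar^{1/16}\Hbb_0 + \Ocal(\hbar^{1+1/16})$ after optimizing over patch diameter, shell width, and patch number. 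The smallness condition $\|\hat V\|_{\ell^1}<v_0$ is needed to guarantee that the Bogoliubov transformation $e^T$ diagonalizing $\Hcorr$ is bounded on Fock space with stable conjugation of $\Hbb_0$, and to ensure positivity of the residual quadratic form after all error terms are absorbed.

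Combining these ingredients by applying $e^T$ to an arbitrary approximate minimizer $\psi_N$ and using positivity of $\Hbb_0$ on $e^{-T}\psi_N$ together with the remainder estimate yields
\[
\langle \psi_N, H_N \psi_N\rangle \geq E_N^\textnormal{HF} + E_N^\textnormal{RPA} - \Ocal\bigl(\hbar^{1+\tfrac{1}{16}}\bigr),
\]
which, combined with the upper bound of \cite{BNP+20}, gives \cref{eq:Ecorr-main-thm}.
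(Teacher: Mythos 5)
Your outline follows the same overall strategy as the paper's proof: particle--hole transformation, patch bosonization, effective quadratic bosonic Hamiltonian, Bogoliubov diagonalization, and a priori kinetic estimates. The identification of $E_N^\textnormal{RPA}$ as $\hbar\kappa\sum_k|k|\,\tr(E(k)-D(k)-W(k))$ via a Cauchy integral and the role of the smallness condition for stability of the (approximate) Bogoliubov transformation are also consistent with the paper.

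However, the decisive step of your plan --- the assertion that the non--bosonizable quartic terms, the linearization error, and the remaining deviations ``can all be absorbed by $C\hbar^{1/16}\Hbb_0 + \Ocal(\hbar^{1+1/16})$ after optimizing over patch diameter, shell width, and patch number'' --- does not work as stated, and this is precisely where the lower bound is hard. Two of those contributions are \emph{not} small relative to $\Hbb_0$. First, the term $\Ecal_2$ coupling $\Dfrak(k)$ to pair operators, once expectations are taken in a state with $\langle\Ncal\rangle\sim N^{1/3}$, is of the same order $\hbar$ as the correlation energy; the only operator inequality available is of the form $\Ecal_1+\Ecal_2^\Rcal\geq -C\|\hat V\|_{\ell^1}\Hbb_0$, i.\,e.\ with constant prefactor rather than $\hbar^{1/16}$. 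Second, the bosonized kinetic energy $\Dbb_\B$ subtracted during linearization produces a contribution $-\Dbb_\B$ which is comparable to $\Hbb_0$, not smaller. The paper does not ``absorb'' these; it \emph{cancels} them: the $-\Dbb_\B$ from $\Hbb_0-\Dbb_\B$ is recovered from the positive Wick--ordered excitation spectrum of the diagonalized effective Hamiltonian (\cref{eq:kinetic-Dbb0-final}), and $\Ecal_1+\Ecal_2^\Rcal$ is bounded below \emph{after} conjugation by $T$ (\cref{lem:non--bosonizable}), so that the residual $-C\|\hat V\|_{\ell^1}\langle\psi,\Hbb_0\psi\rangle$ can be compensated by the $+\langle\psi,\Hbb_0\psi\rangle$ left over from the linearization. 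Both of these cancellations require the sharp entrywise bound on the Bogoliubov kernel $|K(k)_{\alpha,\beta}|\leq \frac{C}{M}\min\{n_\alpha/n_\beta, n_\beta/n_\alpha\}$ from \cref{lem:K}, which your sketch does not mention. You also omit the device that makes the remaining error estimates close: since $\Hbb_0$ is not stable under $T$ and the only available a priori bound on $\Ncal$ is $N^{1/3}$ (too weak), the paper introduces the \emph{gapped} number operator $\Ncal_\delta$, which satisfies $\langle\Psi,\Ncal_\delta\Psi\rangle\leq CN^\delta$ in the approximate ground state, is stable under $T$ (\cref{lem:stability}), and can replace $\Hbb_0$ in the commutator and CCR--error bounds precisely because of the equator cutoff $|k\cdot\hat\omega_\alpha|\geq N^{-\delta}$. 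Without these three ideas the exponent $1/16$ cannot be reached, and indeed no positive power of $\hbar$ beyond $\hbar^1$ can be extracted by the naive absorption you propose.
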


The upper bound, $E_N \leq E_N^\textnormal{HF} + E_N^\textnormal{RPA} + \Ocal(\hbar^{1+\frac{1}{9}})$, was proved in \cite{BNP+20}, even without smallness condition on the potential. In the present paper we prove the lower bound. The smallness condition is technical, and we expect that the lower bound is also true without this condition.

\medskip

As already explained in  \cite{BNP+20}, by expanding \cref{eq:rpa_energy} for small $\hat V$, we obtain
\begin{equation} \label{eq:rpa_energy_small_V}
\frac{E_N - E_N^\textnormal{HF}}{\hbar}= m\pi (1-\log (2)) \sum_{k\in \mathbb{Z}^3} \lvert k\rvert \lvert\hat V(k)\rvert^2 \left(1+\mathcal{O}(\hat V(k))\right) + \Ocal(\hbar^{1+\frac{1}{16}})\;. 
\end{equation} 
Thus we recover the result for the weak--coupling limit of \cite{HPR20}. Moreover, the leading order of the correlation energy of the jellium model as given by Gell-Mann and Brueckner \cite[Eq.~(19)]{GB57} (see also \cite[Eq.~(37)]{SBFB57} and \cite{Mac50}) when applied to the case of bounded compactly supported $\hat{V}$ agrees with \cref{eq:rpa_energy}.   
 
\medskip   

Although some tools from the earlier papers \cite{BNP+20,HPR20} will be useful for us, the proof of Theorem \ref{thm:main} requires several important new ingredients. Conceptually, our justification of the random phase approximation is based on the main input that at the energy scale of the correlation energy there are rather few excitations around the Fermi ball. For the upper bound in \cite{BNP+20}, we consider a trial state whose number of excited particles is of order 1, allowing to control most of error terms easily. However, for the lower bound, the best available estimate for the number of excited particles in a ground state is $\mathcal{O}(N^{\frac 1 3})$, thanks to a kinetic inequality from \cite{HPR20}. This weaker input breaks most of the error estimates in the upper bound analysis \cite{BNP+20}, and this is also the reason why a less precise lower bound was obtained in \cite{HPR20}. In fact, using only similar bounds to \cite{BNP+20}, we can at best show that the error terms are of the same order as the correlation energy. In the present paper, we go beyond that and complete the bosonization approach for the first time. 

\medskip  Let us quickly mention the most important new ingredients of the proof; a more detailed explanation will be given in \cref{sec:sketch}. 
 
\begin{itemize}

\item \emph{A refined estimate for the number of bosonic particles.} In \cite{BNP+20}, we control the number of bosonic particles by the fermionic number operator $\Ncal$. This is insufficient here, since the bound $\langle \Ncal\rangle \le CN^{\frac 1 3}$ mentioned above is too weak. It is natural to try to bound all error terms using the kinetic operator $\Hbb_0$, but a serious problem is that $\Hbb_0$ is \emph{not stable} under the Bogoliubov transformation introduced later. Instead, we introduce the \emph{gapped number operator} $\Ncal_\delta$ in \eqref{eq:gappedN2}, which takes into account only the fermionic particles far from the Fermi surface and has a much better bound $\langle \Ncal_\delta\rangle \le CN^\delta$ with $\delta>0$ small. Thus in practice, using  $\Ncal_\delta$ is as good as using the kinetic operator $\Hbb_0$ in many estimates, with the advantage that $\Ncal_\delta$ is \emph{stable} under the Bogoliubov transformation (see \cref{lem:stability}). Since $\Ncal_\delta$ involves the fermionic particles far from the Fermi surface, we have to control separately the contribution from particles close to the Fermi surface, using an improvement of the kinetic inequality in \cite{HPR20} (see \cref{lem:kineticequator}). The latter issue does not appear in  \cite{BNP+20} since for an upper bound we can simply  take a trial state without any contribution from particles close to the Fermi surface.

\item \emph{A refined linearization of the kinetic energy.} Similarly to \cite{BNP+20}, the bosonization approach in the present paper is based on the construction of patches, which allows to \emph{linearize} the fermionic kinetic operator $\Hbb_0$ and relates it to a bosonic operator $\Dbb_\textnormal{B}$. In \cite{BNP+20}, we prove that the expectation value of $\Hbb_0-\Dbb_\textnormal{B}$ against a well--chosen trial state is small, which requires that the number of patches is $M\gg N^{\frac 1 3}$. In the present paper, we only control the commutator of $\Hbb_0-\Dbb_\textnormal{B}$ with bosonic pairs operators (see \cref{lem:lin}). This weaker bound is sufficient to ensure that $\Hbb_0-\Dbb_\textnormal{B}$  is essentially \emph{invariant} under the Bogoliubov transformation (see \cref{lem:lin-2}), and importantly it  requires only $M\gg N^{2\delta}$ with $\delta>0$ small. The possibility of taking a much smaller $M$ is crucial to bound all error terms caused by the Bogoliubov transformation. 

\item \emph{A refined control on the Bogoliubov kernel.} Similarly to \cite{BNP+20}, we will diagonalize the bosonizable part of the Hamiltonian by a Bogoliubov transformation. In \cite{BNP+20} we prove that the kernel of the Bogoliubov transformation is bounded uniformly in the Hilbert--Schmidt topology. This information is sufficient to estimate the error terms when $\langle \Ncal \rangle \sim 1$ (as in the trial state used for the upper bound), but it is insufficient now that there are potentially many excitations.  In the present paper, we will derive an optimal bound for the matrix elements of the Bogoliubov kernel (see \cref{lem:K}). The new estimate encodes that due to the geometry of the Fermi surface, the interaction energy vanishes at the same rate as the kinetic gap closes. This bound is crucial for improving error estimates involving the Bogoliubov transformation (see \cref{lem:Bog-T}), especially for controlling the non--bosonizable terms.

\item \emph{A subtle analysis of the non--bosonizable terms.} As explained in \cite{BNP+20}, the contribution of the non--bosonizable terms can be controlled by $N^{-1}\langle \Ncal^{2}\rangle$. The trial state in \cite{BNP+20} satisfies $\langle \Ncal^{2}\rangle \sim 1$, and hence the non--bosonizable terms are much smaller than the correlation energy. In the present paper, we only know that  $\langle \Ncal \rangle \le CN^{\frac 1 3}$, which is not enough to rule out the possibility that the non--bosonizable terms are comparable to the correlation energy. It turns out that controlling the non--bosonizable terms is highly nontrivial since these terms couple the bosonic degrees of freedom with the uncontrolled low--energy fermions. Our idea is to bound these terms from below by the kinetic operator. Technically, it is easy to establish the lower bound $-C \|\hat V\|_{\ell^1} \Hbb_0$ by completing a square. However, the difficulty here is that we have to validate this bound \emph{after} implementing the Bogoliubov transformation (see \cref{lem:non--bosonizable}). Handling the non--bosonizable terms requires a subtle analysis, using the refined estimate on the Bogoliubov kernel and the smallness assumption on the innteraction potential.

%This requires the optimal bound on the Bogoliubov kernel that we established previously. The error term in this step is of the form  $-C \|\hat V\|_{\ell^1} \Hbb_0$, which will be controlled by $\Hbb_0$ if $\|\hat V\|_{\ell^1}$ is small. 

 \item \emph{Analysis of the diagonalized effective Hamiltonian.} After implementing the Bogoliubov transformation, we obtain the desired correlation energy plus $\Hbb_0-\Dbb_\textnormal{B} + \mathbb{K}$ where $\mathbb{K} = \sum_{k\in \north} \sum_{\alpha,\beta \in \Ik} 2\hbar \kappaf \lvert k\rvert  \Kfrak(k)_{\alpha,\beta} c_\alpha^*(k) c_\beta(k)$ is the diagonalized effective Hamiltonian. Here $\Hbb_0-\Dbb_\textnormal{B}$ remained since it is essentially invariant under the Bogoliubov transformation. For the upper bound in \cite{BNP+20}, the term $\mathbb{K}$ does not cause any problem since its expectation value in the vacuum state is $0$. In the present paper, however, we have to bound it from below as an operator (see \cref{eq:kinetic-Dbb0-final}). This task is nontrivial and we have to use again the refined estimate on the Bogoliubov kernel and the smallness assumption on the innteraction potential.  
%  
% 
% 
% It turns out that the difference $\widetilde{\Dbb}_\textnormal{B}-\Dbb_\textnormal{B}$ can be bounded from below by $-C \|\hat V\|_{\ell^1} \Hbb_0$, which is again controlled by $\Hbb_0$ if $\|\hat V\|_{\ell^1}$ is small. Thus by sacrificing the Bogoliubov excitation spectrum and assuming that the interaction potential is small, we can control error terms from the comparison of the fermionic and the bosonized kinetic energy.
%
\end{itemize}

In summary, in the present paper we provide a complete and unified bosonization approach which can handle the states with a lot of low--energy excitations. We believe that our approach is of general interest and could be useful in other contexts.   

\medskip

We also see our result as a possible starting point for further investigations. For example, our bosonization method is general enough to derive a norm approximation on the many--body \emph{dynamics} \cite{BNPSS21}. Many questions remain; given the historical context of the problem, maybe most importantly the extension to Coulomb interaction, i.\,e., the electron gas, at least in some coupled mean--field/large--volume limit, requiring to optimize our bounds for extensivity. Of course, to reach this goal, we would first need to remove the small--potential condition, which at the moment plays a central role. The next key task is to deal with the divergence at small $k$ which appears in the higher orders of perturbation theory. As the small--$k$ singularity is improved to a logarithmic singularity in \cref{eq:rpa_energy}, we believe that the bosonization method contains intrinsically the necessary ``resummation'' that is responsible for this screening of the potential. Of course, hard technical refinements, e.\,g., in optimizing the $k$--dependence of our estimates will be necessary. Another question concerns the low--energy spectrum of the Hamiltonian: it is believed that a collective plasmon mode can be isolated from the bosonized excitation spectrum, realizing a theory of electrons dressed by a cloud of excitations and supporting the screening concept. Within the bosonic approximation, the emergence of the plasmon mode has been discussed in \cite{Ben19}. We expect that through a detailed analysis of the spectrum, the screening of the Coulomb potential, and the properties of the approximate ground state, the bosonization method may support the future development of a rigorous, non--perturbative Fermi liquid theory. 

% \medskip

 Beyond the mean--field scaling regime and the electron gas, there are other systems of physical interest: for example the helium isotope ${}^3\textnormal{He}$ is fermionic and has short--range isotropic interactions. Furthermore, a high--density limit is particularly important in the description of atomic nuclei; the short--range interactions there are however spin-- and isospin--dependent and anisotropic and furthermore have attractive parts. We conjecture that even with attractive potentials the RPA formula for the correlation energy applies as long as the logarithm in $E^\textnormal{RPA}_N$ does not become ill--defined.
%  
%  It remains unclear whether this instability is to be identified with the superconducting instability; in general, Cooper pairing may arise even with repulsive interactions due to the Kohn--Luttinger effect. Therefore we believe that the absence of a BCS contribution to the energy is not due to the positivity assumption alone but (also) due to our scaling limit suppressing the Cooper channel; in fact, we estimate explicitly (see \cref{sec:nonboson}) the terms of the Hamiltonian that cannot be described by bosonizing particle--hole pairs.
In our scaling, we do not see any contribution from the pairing density related to superconductivity, but one may expect that even if it was non--vanishing, its effect on the energy may be exponentially small. One may speculate that in an appropriate scaling limit the state of a superconductor might be described using a product of a particle--hole pair Bogoliubov transformation as we construct it for the normal phase, times a BCS--type fermionic Bogoliubov transformation.

\subsection{Sketch of the Proof} \label{sec:sketch}

We will use the Fock space formalism. Recall the fermionic Fock space
\begin{equation} \label{eq:Fock-space}
\fock := \bigoplus_{n =0 }^\infty L^2_\textnormal{a}(\Tbb^{3n}) = \mathbb{C} \oplus L^2(\Tbb^{3}) \oplus L^2_\textnormal{a}((\Tbb^{3})^2) \oplus \cdots
\end{equation}
The vector
\[\Omega := (1,0,0,\ldots) \in \fock\]
is called the vacuum. For $\psi = (\psi^{(0)}, \psi^{(1)}, \psi^{(2)}, \ldots) \in \fock$ and $f\in L^2(\Tbb^3)$ we define the creation operators $a^*(f)$ and the annihilation operators $a(f)$ by their actions
\begin{align*}
\left( a^*(f) \psi \right)^{(n)}(x_1,\ldots,x_n) &:= \frac{1}{\sqrt{n}} \sum_{j=1}^n (-1)^{j-1} f(x_j) \psi^{(n-1)} (x_1, \ldots, x_{j-1}, x_{j+1},\ldots,x_n)\;,\\
\left( a(f)\psi\right)^{(n)}(x_1,\ldots, x_n) &:= \sqrt{n+1} \int_{\Tbb^3} \di x \cc{f(x)} \psi^{(n+1)}(x,x_1,\ldots,x_n)\;. 
\end{align*}
Since we will work in the discrete momentum space (Fourier space) $\Zbb^3$, it is convenient to write
\[
a^*_p := a^*(f_p)\;, \quad a_p := a(f_p)\;, \quad \textnormal{where } f_p(x) = (2\pi)^{-\frac{3}{2}} e^{ip\cdot x} \textnormal{ for } p\in \mathbb{Z}^3\;. 
\] 
These operators satisfy the  \emph{canonical  anticommutator relations} (CAR)
\begin{equation}    \label{eq:car}
\{a_p,a^*_{q}\} = \delta_{p,q}\;, \quad \{a_p,a_q\} =0 = \{a^*_p,a^*_q\}\;, \qquad \forall p,q \in \Zbb^3\;. 
\end{equation}

The Hamiltonian $H_N$ in \cref{eq:HN}, originally defined on the $N$-particle sector $L^2_\textnormal{a}((\Tbb^{3})^N) \subset \fock$, can be lifted to an operator on the fermionic Fock space as 
\begin{equation}\label{eq:HcalN}
\Hcal_N = \hbar^2 \sum_{p\in \mathbb{Z}^3}   \lvert p\rvert^2 a^*_p a_p  + \frac{1}{2N} \sum_{k,p,q\in \mathbb{Z}^3} \hat V(k) a^*_{p+k} a^*_{q-k} a_{q} a_{p}\;. 
\end{equation}
Restricted to $L^2_\textnormal{a}((\Tbb^{3})^N) \subset \fock$, $\Hcal_N$ agrees with the Hamiltonian as given in \cref{eq:HN}.

\paragraph{Correlation Hamiltonian.} Now we separate the degrees of freedom described by the Slater determinant of plane waves in \cref{eq:plane-waves}  from non--trivial quantum correlations. Recall the Fermi ball  and its complement
\[
\BF := \{ p \in \Zbb^3: \lvert p\rvert \leq k_\F\}\;, \quad \BFc := \Zbb^3 \setminus \BF\;. 
\]
We define the  \emph{particle--hole transformation} $R: \fock \to \fock$ by 
\begin{equation}
R^* a^*_p R = \left\{ \begin{matrix}{} a^*_p& \textnormal{for } p \in \BFc\\a_p & \textnormal{for } p \in \BF \end{matrix}\right.\;, \qquad  R\Omega := \bigwedge_{p \in \BF} f_p\;.
\label{eq:phtrafo}
\end{equation}
This map is well--defined since vectors of the form $\prod_{j}a^*_{k_j} \Omega$ constitute a basis of $\fock$. Moreover, it is easy to verify that  $R = R^* = R^{-1}$; in particular $R$ is a unitary transformation. (In fact, $R$ is an example of a fermionic Bogoliubov transformation.)

In practice, the action of $R$ on an operator on Fock space is easily computed using the rules \cref{eq:phtrafo} and the CAR \cref{eq:car}. 
For example, consider the particle number operator
\[
\Ncal := \sum_{p\in \mathbb{Z}^3} a^*_p a_p\;.
\]
For $\psi = (\psi^{(0)}, \psi^{(1)}, \ldots) \in \fock$ we have $\Ncal\psi = (0, \psi^{(1)}, 2\psi^{(2)}, 3 \psi^{(3)}, \ldots)$; in particular $\Ncal \psi = N \psi$ is equivalent to the vector belonging to the $N$--particle sector of Fock space, $\psi \in L^2_\textnormal{a}((\Tbb^{3})^N) \subset \fock$. Now  
\begin{equation}
\begin{split}
R^* \Ncal R  &=  \sum_{h \in \BF} a_h a^*_h + \sum_{p \in \BFc} a^*_p a_p  = \sum_{h \in \BF} \left( 1 - a^*_h a_h \right) + \sum_{p \in \BFc} a^*_p a_p\\
&= N + \sum_{p \in \BFc} a^*_p a_p -   \sum_{h \in \BF} a^*_h a_h  =: N +   \Ncal^\textnormal{p} - \Ncal^\textnormal{h}\;.
\end{split}
\label{eq:Nparticle}
\end{equation}
This identity implies that if $R\psi$  is a $N$--particle state, then
\begin{equation}\label{eq:NpminusNh}
(\Ncal^\textnormal{p} - \Ncal^\textnormal{h}) \psi =0\;,
\end{equation}
namely after the transformation $R$ the number of particles is equal to the number of holes. 

The transformed Hamiltonian $R^* \Hcal_N R $ has been computed in \cite{BPS14,BPS14a,BPS14b,BPS16,BSS18}, in a slightly different way for mixed states in \cite{BJP+16}, and in the context of the correlation energy in \cite{HPR20,BNP+20}. Let us therefore just give a short sketch of the transformation of the interaction term; the transformation of the kinetic term uses \cref{eq:NpminusNh} but is otherwise very similar to \cref{eq:Nparticle}. We start by using the CAR once to write
\begin{equation}\label{eq:Hrhorho}
\frac{1}{2N} \sum_{k,p,q\in \mathbb{Z}^3} \hat{V}(k) a^*_{p+k} a^*_{q-k} a_{q} a_{p} = 
\frac{1}{2N} \sum_{k \in \Zbb^3} \hat{V}(k) \rho(k) \rho(-k) - \frac{1}{2N}\sum_{k \in \Zbb^3} \hat{V}(k) \Ncal\;,
\end{equation}
where we introduced
\[\rho(k) := \sum_{p \in \Zbb^3} a^*_{p+k} a_p \;.\]
The second summand of \cref{eq:Hrhorho} equals $-\frac{1}{2}\sum_{k\in \Zbb^3}\hat{V}(k)$, which contributes to the Hartree--Fock energy. For the transformation of the first summand one computes
\[
 R^* \rho(k) R = \mathfrak{D}(k)^* + b^*(k) + b(-k) + N \delta_{k,0}\;,
\]
where we have introduced for any $k\in \Zbb^3$ the particle--hole pair creation operator\footnote{In \cite{BNP+20} this operator was denoted by $\tilde{b}^*_k$.}
\begin{equation} \label{eq:global_pair}
b^*(k) := \sum_{p \in \BFc \cap (\BF +k)}  a^*_p a^*_{p-k}
\end{equation}
and the non--bosonizable operator
\begin{equation} \label{eq:Dkdef}
\Dfrak(k)^* := \sum_{p  \in \BFc \cap (\BFc+k)} a^*_{p} a_{p-k} - \sum_{h  \in \BF \cap (\BF-k)} a^*_{h} a_{h+k} \;.
\end{equation} 
Note that $\Dfrak(k)^* = \Dfrak(-k)$ and $\Dfrak(0)^* = \Ncal^\textnormal{p} - \Ncal^\textnormal{h}$. Observing that the constant terms (i.\,e., not containing any creation or annihilation operator) contribute to the Hartree--Fock energy $E^\textnormal{HF}_N$ and collecting all quadratic terms in the operator $\Xbb$, we arrive at the result
% \footnote{We prefer slightly different conventions in writing the different terms: $Q_\B$ is written in normal order by moving $b^*$--operators to the left of $b$--operators, which gives rise to a commutator of which the quadratic part becomes $\Xbb$ and the constant part contributes to the Hartree--Fock energy. The term $\Ecal_2$ is brought into normal order by moving the $b$--operators to the right and the $b^*$--operators to the left of the $\Dfrak$-- and $\Dfrak^*$--operators (it is convenient to notice that $[b(k),\Dfrak(-k)]=0$). The term $\Ecal_1$ is not normal ordered but rather written as a square, so that it is manifestly positive.}
\begin{equation}\label{eq:Hcorr} 
\Hcal_\textnormal{corr}:= R^* \Hcal_N R - E^\textnormal{HF}_N =  \Hbb_0 + Q_\B + \mathcal{E}_1 + \mathcal{E}_2 + \Xbb
\end{equation}
where the summands are given by 
\begin{align*} 
\Hbb_0 &:= \sum_{k \in \Zbb^3} e(k) a^*_k a_k \quad \textnormal{with dispersion relation } e(k) := \lvert \hbar^2 \lvert k\rvert^2 - \kappaf^2 \rvert\;, \tagg{kinen} \displaybreak[0] \\
  Q_\B & := \frac{1}{N} \sum_{k \in \north } \hat{V}(k) \Big[  b^*(k) b(k) + b^*(-k) b(-k) + b^*(k) b^*(-k) + b(-k) b(k) \Big]\;, \displaybreak[0]\\%\label{eq:QBN} \\
 \mathcal{E}_1 & := \frac{1}{2N} \sum_{k \in \north } \hat{V}(k) \Big[ \Dfrak(k)^* \Dfrak(k) + \Dfrak(-k)^* \Dfrak(-k)\Big]\;, \displaybreak[0]\\
  \mathcal{E}_2 & := \frac{1}{N} \sum_{k \in \north } \hat{V}(k) \Big[  \Dfrak(k)^* b(k)  + \Dfrak(-k)^* b(-k)  + \hc  \Big]\;,    \displaybreak[0]\\
\Xbb &:= - \frac{1}{2N} \sum_{k \in \Zbb^3} \hat{V}(k) \bigg[ \sum_{p \in \BFc \cap (\BF +k)} a^*_p a_p  + \sum_{h \in \BF \cap (\BFc-k)} a^*_{h} a_{h}\bigg]\;.
\end{align*}
Note that we have introduced the set $\north $ of all momenta $k=(k_1,k_2,k_3)$ in $\Zbb^3 \cap  \supp \hat V$ satisfying 
\[
k_3> 0 \text{ or } (k_3=0 \text{ and } k_2>0) \text{ or } (k_2=k_3=0 \text{ and } k_1>0)\;. 
\]
This set is chosen such that 
\[\north  \cap (- \north ) =\emptyset, \quad \north  \cup (- \north ) =\Big( \mathbb{Z}^3 \cap \supp \hat V \Big) \setminus \{0\}\;.\]
 The term $Q_\B$ is the bosonizable part of the interaction and contains only the pair operators. The term $\mathcal{E}_1$ is purely non--bosonizable and $\mathcal{E}_2$ couples bosonizable and non--bosonizable excitations. Note that unlike the other terms $\Ecal_1$ is not normal--ordered (this choice is made so that we have $\Ecal_1 \geq 0$); for this reason $\Xbb$ and $\mathcal{E}_1$ differ slightly from the  expressions given in \cite{BNP+20}.
 
 Since $\Xbb$ is quadratic in fermionic operators, it can be easily bounded using $\Ncal/N$, which will be seen to have expectation value much smaller than the order $\hbar$ of $E_N^\textnormal{RPA}$.
 
 In \cite{BNP+20}, it was proved that $\Hbb_0+Q_\B$ evaluated in a trial state of quasi--free particle--hole pairs gives rise to $E_N^\textnormal{RPA}$ as an upper bound to the correlation energy. Accordingly, an important part of our task will be to prove that the contribution from $\mathcal{E}_1+\mathcal{E}_2$  is negligible. (Whereas this was easily achieved for the upper bound using the explicit form of the trial state, for the lower bound it actually turns out to be a major challenge.)

The rest of the paper is devoted to the proof of the inequality 
\begin{align} \label{eq:Hcorr-lb}
\inf_{\substack{\psi\in \fock\colon \norm{\psi}=1\,,\\(\Ncal^\textnormal{p} - \Ncal^\textnormal{h})\psi=0}} \langle \psi, \Hcal_\textnormal{corr} \psi\rangle \geq E_N^\textnormal{RPA} + \Ocal(\hbar^{1+\frac{1}{16}}) \;.
\end{align}
Thanks to \cref{eq:Hcorr} it directly implies the main result, the lower bound in \cref{thm:main}.
% Note that the constraint  $(\Ncal^\textnormal{p} - \Ncal^\textnormal{h})\psi=0$ is not required in the proof of the lower bound \cref{eq:Hcorr-lb}, but recall that it was used in obtaining the formula for $\Hcorr$.  

In the following we explain the key estimates in our proof. We use the symbol $C$ for positive constants that may change from line to line, but are independent of $N$, $\hbar$, and $M$ (the number of patches, to be introduced in \cref{eq:defM}). The constants $C$ may depend on the momentum $k$, which does not play a role ultimately since we only consider the finitely many $k \in \supp\hat{V}$, i.\,e., we can always take the maximum and so treat all constants as independent of $k$. We generally absorb any dependence on $\hat{V}$ in the constants $C$; \emph{we only write the $\hat{V}$--dependence of estimates explicitly where the smallness condition on $\norm{\hat{V}}_{\ell^1}$ plays a role}.

\paragraph{A priori estimates.} Similarly to \cite{BNP+20,HPR20}, many approximations used in our approach are based on the idea that the relevant quantum states have only few excitations. For the upper bound in \cite{BNP+20}, this fact is easily justified by the strong bound $\langle \Psi_\textnormal{trial}, \Ncal^m \Psi_\textnormal{trial}\rangle \leq C_m$ (for all $m \in \Nbb$) for the trial state used to compute the expectation value of $\Hcorr$.  Compared to that bound, for the ground state we can only derive weaker estimates. In \cref{lem:N-H0} we prove that the particle number operator can be controlled by the kinetic energy (i.\,e., the kinetic energy operator has a tiny gap, of order $\hbar^2$) by 
\begin{align}\label{eq:KEY-01}
 \Ncal \leq 2 N^{\frac{2}{3}} \Hbb_0\;.  
\end{align}
% so that the expectation value of $\Ncal$ in the ground state is found to be of order $N^{\frac{1}{3}}$.
% This is proved using the classic result \cref{eq:classic-lattice} on the counting of lattice points on spheres.  
% 
To avoid the particle number operator, where possible we bound pair operators directly by the kinetic energy, using an inequality from \cite{HPR20},
\begin{align}\label{eq:KEY-00}
\sum_{p \in \BFc \cap (\BF +k)} \norm{a_p a_{p-k} \psi} \leq CN^{\frac{1}{2}} \norm{\Hbb_0^{1/2} \psi}\;, \qquad \forall \psi \in \fock\;.
\end{align}
(The idea of directly using the
kinetic energy for bounds has appeared already in \cite{Hai03, HHS05} in the context of rigorous second order perturbation theory.)
% This way, the kinetic energy can be bounded by the total Hamiltonian, and together with a trivial upper bound (with the plane--wave Slater determinant as trial state) we conclude that the expectation value of $\Hbb_0$ in the ground state is of order $\hbar$.
The bounds \cref{eq:KEY-00} and \cref{eq:KEY-01} imply the rough estimates in \cref{cor:kineticenergy}, as in \cite{HPR20}:
\begin{align}\label{eq:KEY-02}
\frac{1}{2} (\Hbb_0 +\mathcal{E}_1)  -  \hbar \leq \Hcal_\textnormal{corr}  \leq 2 (\Hbb_0 +\mathcal{E}_1+ \hbar)\;.
\end{align}
Together with an upper bound of order $\hbar$ such as the trivial variational one obtained using the trial state $\Omega$ (corresponding to the Slater determinant of plane waves before the particle--hole transformation),  
% proven in \cite{BNP+20,HPR20}
this implies that the ground state $\psi_\textnormal{gs}$ of $\Hcal_\textnormal{corr}$, the minimizer of the expectation value on the left hand side of \cref{eq:Hcorr-lb}, satisfies  
\begin{align}\label{eq:KEY-03}
\langle \psi_\textnormal{gs}, (\Hbb_0+\mathcal{E}_1) \psi_\textnormal{gs} \rangle \leq C\hbar\;, \quad \langle \psi_\textnormal{gs}, \Ncal \psi_\textnormal{gs} \rangle \leq CN^{\frac{1}{3}}\;. 
\end{align}

For technical reasons, we will also need to control the expectation of higher powers of $\Ncal$, which does not follow from  \cref{eq:KEY-00} and \cref{eq:KEY-01}. To overcome this difficulty, in \cref{lem:loc} we replace the ground state $\psi_\textnormal{gs}$ by an approximate ground state $\Psi$ satisfying
\begin{align}\label{eq:KEY-04}
 \langle \Psi, (\Hbb_0 +\mathcal{E}_1) \Psi \rangle \leq C\hbar\;, \quad \Psi= \mathds{1} (\Ncal \leq CN^{\frac{1}{3}})\Psi
\end{align}
while its energy is still close to the ground state energy, i.\,e.,
\[ 
\langle \psi_\textnormal{gs}, \Hcal_\textnormal{corr} \psi_\textnormal{gs}\rangle \geq \langle \Psi, \Hcal_\textnormal{corr} \Psi \rangle - CN^{-1}\;. 
\] 
This is achieved by using the technique of localizing particle number on Fock space, which goes back to Lieb and Solovej \cite{LS01}. In the proof we will use the  formulation from \cite[Proposition 6.1]{LNSS15}. It is the state $\Psi$ that most of our subsequent analysis will be applied to. 

% The construction of such a state $\Psi$ is an important step for our analysis of the energy lower bound. We do not need that for the energy upper bound in \cite{BNP+20} as the trial state there satisfies much stronger a-priori estimates. 

\paragraph{Approximately bosonic creation operators.} When applied to states with few excitations, the pair creation operators behave approximately as \emph{bosonic} creation operators, namely we have to leading order the \emph{canonical commutator relations} (CCR)  
\begin{equation} \label{eq:CCR-bkbl}
[b^*(k),b^*(l)] = 0, \quad  [b(k),b^*(l)] \simeq \delta_{k,l} \times \textnormal{const}\;, \quad \forall k,l \in \Zbb^3\;.
\end{equation}

Unfortunately there is no expression for the kinetic energy $\Hbb_0$ in terms of the $b^\natural(k)$--operators\footnote{The symbol $\natural$ may stand both for ``$*$'' (adjoint in Fock space $\fock$) and for absence of ``$*$''; we use it whenever the choice does not play a role.}. We take inspiration from the solution of the Luttinger model \cite{ML65}: if the dispersion relation were linear, the $b^*(k)$ would create eigenvectors of $\Hbb_0$. Since the dispersion relation $\hbar^2 \lvert k\rvert^2$ is not linear, we will linearize it locally. This is achieved by localizing the creation operators to patches on the Fermi surface. More precisely, we cut the shell of width $R_{\hat{V}} := \diam\supp\hat{V}$ around the Fermi surface into patches $\{B_\alpha\}_{\alpha=1}^M$. The construction of the patches is recalled in  \cref{sec:patches}. As discussed in the introduction, under the name of ``sectors'', this idea has already been employed in the rigorous renormalization group context. 

We consider the pair excitations supported in each patch\footnote{Where confusion may arise, we use the notation $p\colon p \in \BFc \cap B_\alpha,\,p-k\in \BF \cap B_\alpha$ in specifying the range of summation: here it is over all $p \in \Zbb^3$ (but not over $k$) satisfying $p \in \BFc \cap B_\alpha$ and $p-k\in \BF \cap B_\alpha$.} 
\begin{equation}    \label{eq:def_bos_exc-bb}
b^*_\alpha(k) := \frac{1}{m_\alpha(k)} \sum_{\substack{p\colon p \in \BFc \cap B_\alpha\\p-k\in \BF \cap B_\alpha}}  a^*_p a^*_{p-k} \;.
\end{equation}
To normalize the constant in the approximate CCR, the normalization constant $m_\alpha(k)$ should be chosen such that $\norm{b^*_\alpha(k)\Omega} =1$, namely
\begin{equation}    \label{eq:def_bos_exc-mm}
m_\alpha^2(k) = \sum_{\substack{p\colon p \in \BFc \cap B_\alpha\\p-k\in \BF \cap B_\alpha}} 1\;.
\end{equation}
This has the meaning of the number of particle--hole pairs $(p,h) \in \BFc \times \BF$ inside the patch $B_\alpha$ with relative momentum $p-h=k$.
However, this number may be zero! In fact, if $k\cdot \hat{\omega}_\alpha <0$ with $\hat{\omega}_\alpha$ the unit vector pointing in the direction of the patch $B_\alpha$, then a simple geometric consideration shows that the summation domain in \cref{eq:def_bos_exc-mm} and \cref{eq:def_bos_exc-bb} is empty (the condition $k\cdot \hat{\omega}_\alpha <0$ is incompatible with $p \in \BFc$ and $p-k \in \BF$). The same problem occurs for $m_\alpha^2(-k)=0$ if $k\cdot \hat{\omega}_\alpha >0$. 

Furthermore, as suggested by \cite[Chapters 8, 9.2.3, and 9.2.4]{RS05} and \cite{CS16}, bosonization is expected to be a good approximation only if $m_\alpha(k)$ is large. This cannot be ensured for patches where $k \cdot \hat{\omega}_\alpha \approx 0$ (if we think of the direction of $k$ as defining the north pole of the Fermi ball, these are the patches near the equator). However, the momentum $k$ of such excitations is almost tangential to the Fermi surface and thus their energy is very low. In fact, we will be able to show that their contribution to the ground state energy is small and exclude them from the bosonization. To do so, we introduce a cut--off near the equator by defining the index subset $\Ik= \Ikp \cup \Ikm$ where 
\begin{equation}\label{eq:cutoff} 
\begin{split}
\Ikp & := \left\{ \alpha \in \{1,2,\ldots, M\} : k \cdot \hat{\omega}_\alpha \geq N^{-\delta} \right\}\;, \\ \Ikm & := \left\{ \alpha \in \{1,2,\ldots, M\} : k\cdot \hat{\omega}_\alpha \leq - N^{-\delta} \right\}\;.
\end{split}
\end{equation}
We will choose the cut--off parameter $\delta$ and the number of the patches $M$ such that
\begin{equation}
 N^{2\delta} \ll M \ll N^{\frac{2}{3}-2\delta}\;, \quad 0<\delta <\frac{1}{6}\;. \label{eq:defM}
\end{equation}
(Eventually we will choose $M = N^{4\delta}$ and $\delta=\frac{1}{24}$.) Note that unlike \cite{BNP+20} where we require $M\gg N^{\frac 1 3}$, here we allow a much smaller value of $M$, which is important to control the error terms due to the Bogoliubov transformation introduced later. 

Then by  \cite[Proposition~3.1]{BNP+20}, the constant
\[
n_\alpha(k):=\left\{ \begin{array}{cc} m_\alpha(k) & \textnormal{for } \alpha\in \Ikp \\
m_\alpha(-k) & \textnormal{for } \alpha\in \Ikm \end{array}\right.
\]
can be computed to be given by
\begin{align} \label{eq:n-alpha-intro}
n_\alpha(k)^2  =  \frac{4\pi k_\F^2}{M} \lvert k \cdot \hat\omega_\alpha \rvert  \left( 1 + o(1) \right) \gg 1\;. 
\end{align}
(Heuristically, the reader may think of the number of particle--hole pairs as given by the surface area of the patch, $4\pi k_\F^2/M$, times the depth inside the Fermi ball that can be reached by $h$, namely $\lvert k \cdot \hat\omega_\alpha \rvert$. For this counting argument to be justifiable, the diameter of a patch on the Fermi surface may not become too large, requiring $M \gg N^{2\delta}$.)
Consequently, the operators 
\begin{align} \label{eq:c-b-intro}
c^*_\alpha(k) := \left\{ \begin{array}{cc} b^*_\alpha(k) & \textnormal{for }\alpha\in \Ikp\\
b^*_\alpha(-k) & \textnormal{for } \alpha \in \Ikm \end{array}\right.
\end{align}
are well--defined and behave like bosonic creation operators, namely  
\begin{equation} \label{eq:CCR-ckcl}
[c_\alpha^*(k),c_\beta^*(l)] = 0\;, \quad  [c_\alpha(k),c_\beta^*(l)] \simeq \delta_{\alpha,\beta}\delta_{k,l}\;, \quad \forall k,l\in \north,\ \alpha\in \Ik,\ \beta\in \Il\;.
\end{equation}
This is proven in \cref{lem:ccr}, which is a slight extension of \cite[Lemma 4.1]{BNP+20}. 

\paragraph{Gapped Number Operator.}
As we have seen in \cref{eq:KEY-03} we do not have strong control on the particle number operator, due to the possibility of having many small--energy excitations near the Fermi surface; a problem which in the beginning is avoided by directly using $\Hbb_0$ for bounds. However, a serious problem of using $\Hbb_0$ is that it is  \emph{not stable} under the Bogoliubov transformation that we will later introduce to approximately diagonalize the effective Hamiltonian. A way of overcoming this problem, and a key improvement compared to \cite{BNP+20} is that instead of using the full fermionic number operator $\Ncal$ to control error terms, wherever possible we use only the \emph{gapped number operator}   
\begin{equation}
\label{eq:gappedN}
\Ncal_\delta := \sum_{i \in \Zbb^3\colon e(i) \geq \frac{1}{4} N^{-\frac{1}{3} -\delta}} a^*_i a_i\;, 
\end{equation}
which does not count low--energy excitations. 
Here we have used the dispersion relation $e(i)=\lvert \hbar^2 \lvert i\rvert^2 - \kappa^2 \rvert$ introduced in \cref{eq:kinen}, and due to the artificial gap we obtain
\[
\Ncal_\delta\leq N^{\frac{1}{3}+\delta} \Hbb_0\;.
\]
Therefore, \cref{eq:KEY-04} implies that $\langle \Psi, \Ncal_\delta \Psi\rangle \leq CN^\delta$ which is much better than $\langle \Psi, \Ncal \Psi\rangle \leq CN^{\frac{1}{3}}$ in \cref{eq:KEY-03}. Thus in practice, controlling error terms by using $\Ncal_\delta$ is as good as using the kinetic operator $\Hbb_0$. Furthermore, unlike $\Hbb_0$, the gapped number operator $\Ncal_\delta$ is \emph{stable} under the Bogoliubov transformation (see \cref{lem:stability}).

The main instance where $\Ncal_\delta$ finds use is \cref{lem:bosonic-number}, where we bound the approximately bosonic number operator by the fermionic gapped number operator, \begin{align}\label{eq:KEY-08}
\sum_{\alpha\in \Ik} c_\alpha^*(k) c_\alpha(k) \leq C \Ncal_\delta\;. 
\end{align}
This improves \cite[Lemma~4.2]{BNP+20}, where $\Ncal$ was used as the bound. The key insight leading to this improvement is that only bosonic pair operators with $\alpha \in \Ik$ are needed in the effective Hamiltonian \cref{eq:heff-intro} and the diagonalizing Bogoliubov transformation \cref{eq:gs_exact-intro} to obtain the RPA energy \cref{eq:rpa_as_gs}. Since $\alpha \in \Ik$ means $\lvert k\cdot \hat{\omega}_\alpha \rvert \geq N^{-\delta}$, the relative momentum $k$ between particles $p$ and holes $h = p-k$ cannot be tangential to the Fermi surface; i.\,e., $p $ or $h$ (or both) has to lie above the gap $e(i) \geq \frac{1}{4}N^{-\frac{1}{3}-\delta}$. 
This is the reason for the same parameter $\delta > 0$ appearing both in the gapped number operator and in the equator cut--off \cref{eq:cutoff}. The new bound allows us to work with the bosonic pairs at the energy scale relevant for the result, while keeping them as much as possible separate from the low--energy excitations on whose number we do not have strong control. 

In the next steps, we will write the correlation Hamiltonian $\Hcal_\textnormal{corr}$ as a quadratic Hamiltonian in terms of the approximately bosonic operators  $c_\alpha^*(k)$ and $c_\alpha(k)$. 

\paragraph{Bosonization of the interaction energy.} By decomposing 
\begin{align}\label{eq:bR-c-intro}
b(k) \simeq \sum_{\alpha\in \Ikp} n_\alpha(k) c_\alpha(k)\;, \quad b(-k)  \simeq \sum_{\alpha\in \Ikm} n_\alpha(k) c_\alpha(k)
\end{align}
we can write the main interaction term as 
\begin{align}\label{eq:Q-dec-intro}
Q_\B &\simeq    \frac{1}{N} \sum_{k \in \north } \hat{V}(k) \Big[  \sum_{\alpha,\beta \in \Ikp} n_\alpha(k) n_\beta(k) c^*_\alpha(k) c_\beta(k)  + \sum_{\alpha,\beta \in \Ikm} n_\alpha(k) n_\beta(k) c^*_\alpha(k) c_\beta(k) \\
&\qquad\qquad \qquad  + \sum_{\alpha \in \Ikp,\,\beta \in \Ikm} n_\alpha(k) n_\beta(k) c^*_\alpha(k) c^*_\beta(k) + \sum_{\alpha \in \Ikp,\,\beta \in \Ikm} n_\alpha(k) n_\beta(k) c_\beta(k) c_\alpha(k)  \Big]\;. \nonumber
\end{align}
In the approximation \cref{eq:Q-dec-intro} we have ignored all excitations outside the patches. It is justified in \cref{lem:remove-corridors}, where we prove that
\begin{align}\label{eq:KEY-06a}
Q_\B + \mathcal{E}_2 - Q_\B^{\Rcal} - \mathcal{E}_2^{\Rcal}   \geq - C\left( N^{ - \frac{\delta}{2}} + C N^{ -\frac{1}{6} + \frac{\delta}{2} } M^{\frac{1}{4}}\right) \big(\Hbb_0+\mathcal{E}_1+\hbar\big)
 \end{align}
where $Q_\B^{\Rcal}+\mathcal{E}_2^{\Rcal}$ is similar to $Q_\B + \mathcal{E}_2$ but contains only pair excitations in the patches. The proof of \eqref{eq:KEY-06a} requires an improved version of the kinetic inequality \eqref{eq:KEY-00} (see \cref{lem:kineticequator}). Thanks to  \cref{eq:KEY-04}, the error term in \cref{eq:KEY-06a} does not contribute to the leading order of the correlation energy.

Note that the bound \eqref{eq:KEY-06a} is not necessary for the upper bound in \cite{BNP+20} because the trial state there is constructed to contain only pair excitations inside the patches, so that the expectation value of a pair not belonging completely to relevant patches is identically zero.

\paragraph{Bosonization of the kinetic energy.} The bosonization of the fermionic kinetic energy is more complicated. A key observation is that if $\alpha\in \Ikp$, then using the CAR \cref{eq:car} and linearizing the dispersion relation around $k_\F \hat{\omega}_\alpha$, we find 
 \begin{align} \label{eq:lin-intro}
 [\Hbb_0,c^*_{\alpha}(k)] & =  \Big[ \sum_{i\in \mathbb{Z}^3} e(i) a_i^* a_i, \frac{1}{n_{\alpha}(k)} \sum_{\substack{p\colon p\in \BFc \cap B_\alpha\\p-k \in \BF \cap B_\alpha}} a^*_p a^*_{p-k} \Big] \nonumber\\
   & = \frac{1}{n_{\alpha}(k)}   \sum_{\substack{p\colon p\in \BFc \cap B_\alpha\\p-k \in \BF \cap B_\alpha}} (e(p)+e(p-k))a^*_p a^*_{p-k}\nonumber\\
  & \simeq 2\hbar \kappaf \lvert k\cdot \hat{\omega}_\alpha\rvert c_\alpha^*(k)\;.
  \end{align} 
For linearizing the dispersion relation we used the fact that for any $p\in \BFc \cap (\BF+k)\cap B_\alpha$, since $\diam(B_\alpha)\ll k_\F/\sqrt{M}$ we have
\begin{align} \label{eq:ep-epk-intro}
e(p)+e(p-k)= \hbar^2 (2p-k)\cdot k \simeq \hbar^2 (2 k_\F \hat{\omega}_\alpha) \cdot k = 2\hbar \kappaf \lvert k\cdot \hat{\omega}_\alpha\rvert\;. 
\end{align}
Obviously the same holds if $\alpha\in \Ikm$. Therefore, within commutators with pair operators, $\Hbb_0$ can be approximated as in the Luttinger model \cite{ML65} by independent modes (i.\,e., harmonic oscillators) of energies $\hbar \kappaf2 k\cdot \hat{\omega}_\alpha$, namely 
\begin{align} \label{eq:H0-D0-intro}
\Hbb_0 \simeq  2\kappaf \hbar \sum_{k \in \north }  \sum_{\alpha=1}^M  \lvert k\cdot \hat{\omega}_\alpha\rvert c^*_{\alpha} (k) c_\alpha(k)=:\Dbb_\B\;.
\end{align}

A key idea of our analysis is to justify \cref{eq:H0-D0-intro} not by estimating the difference $\Hbb_0-\Dbb_\B$ directly, but rather by proving that it is essentially invariant under the approximate Bogoliubov transformation $T$  which we will introduce below to diagonalize the quadratic bosonized Hamiltonian. More precisely, in \cref{lem:lin-2} we show that with $\psi := T^* \Psi$ we have
\begin{align}\label{eq:KEY-10}
&  \langle \Psi, (\Hbb_0 - \Dbb_\B) \Psi \rangle = \langle  \psi, (\Hbb_0 - \Dbb_\B) \psi \rangle  + \textnormal{error}
\end{align}
where
\begin{equation}\label{eq:bnd}
\begin{split}
\lvert \textnormal{error}\rvert & \leq C\hbar \Big[ M^{-\frac{1}{2}} \norm{  (\Ncal_\delta+1)^{1/2}  \psi  }^2 \\
& \qquad \quad + CM^{\frac{3}{2}}N^{-\frac{2}{3}+\delta}  \norm{ (\Ncal_\delta+1)^{1/2} (\Ncal+1) \psi } \norm{ (\Ncal_\delta+1)^{1/2}   \psi  } \Big]\;.
\end{split}
\end{equation}
Note that only here, in the first error summand, due to the linearization of $\Hbb_0$, does $M$ enter in the denominator. With $\langle \psi, \Ncal_\delta \psi \rangle \leq C N^\delta$ (this bound is stable under the Bogoliubov transformation),
we need to take $M\gg N^{2\delta}$. We will eventually choose  $M=N^{4\delta}$.

The bound \cref{eq:KEY-10} is a crucial improvement over the linearization technique in  \cite{BNP+20} which requires $M\gg N^{\frac{1}{3}}$, a condition that we cannot fulfill due to the second error summand in \cref{eq:bnd} (recall that in our approximate ground state we only know $\Ncal \leq C N^{\frac{1}{3}}$). This improvement is achieved because in \cite{BNP+20} we unnecessarily linearized the expectation value of $\Hbb_0$, whereas in the present paper we only linearize the necessary commutator with a pair operator $c^*_\alpha(k)$. In general, this new possibility of choosing a rather small $M$ means that we gain flexibility in the technical steps because we can afford arbitrarily high powers of $M$ as long as there is a negative power of $N$.

To apply \cref{eq:KEY-10}, prior to using the Bogoliubov transformation, we will decompose 
\begin{align}\label{eq:decomp}
\Hbb_0=(\Hbb_0-\Dbb_\B) + \Dbb_\B\;.
\end{align}
% the term $+\Dbb_\B$ becomes part of the bosonic quadratic operator \cref{eq:heffeff} which will be diagonalized by the Bogoliubov transformation. Under the approximate Bogoliubov transformation the difference $\Hbb_0-\Dbb_\B$ remains essentially unchanged as we just saw in \cref{eq:KEY-10}. Then \emph{after the transformation} we will use the diagonalized bosonic quadratic operator to control the negative contribution $-\Dbb_\B$, up to a loss of $-C\norm{\hat{V}}_{\ell^1} \Hbb_0$ (see \cref{eq:KEY-12}). Thanks to the smallness assumption on the potential, this leaves us the major part of  the fermionic kinetic energy, $(1-C\norm{\hat{V}}_{\ell^1})\Hbb_0 \geq 0$, to be used for a lower bound on the non--bosonizable parts $\Ecal_1+\Ecal_2^{\Rcal}$ of the Hamiltonian later on.  

\paragraph{Diagonalization of the bosonized Hamiltonian.} By combining the approximation \cref{eq:Q-dec-intro} and the operator $+\Dbb_B$ from \cref{eq:decomp}, we find the effective quadratic bosonic Hamiltonian  
\begin{equation}\label{eq:heffeff}
\Dbb_\B + Q_\B^\Rcal =  \sum_{k\in \north} 2\hbar \kappaf \lvert k\rvert h_\textnormal{eff}(k)
\end{equation}
with 
\begin{align} \label{eq:heff-intro}
 h_\textnormal{eff}(k) & := \frac{1}{\lvert k\rvert}\sum_{\alpha \in\Ical_k} \lvert k \cdot \hat{\omega}_\alpha\rvert  c^*_{\alpha} (k) c_{\alpha} (k)  \\
&\quad\; + \frac{\hat V(k)}{2\hbar \kappaf \lvert k\rvert N}  \Big[  \sum_{\alpha,\beta \in \Ikp} n_\alpha(k) n_\beta(k) c^*_\alpha(k) c_\beta(k)  + \sum_{\alpha,\beta \in \Ikm} n_\alpha(k) n_\beta(k) c^*_\alpha(k) c_\beta(k) \nonumber\\
&\qquad\qquad \quad\  + \sum_{\alpha \in \Ikp,\,\beta \in \Ikm} n_\alpha(k) n_\beta(k) c^*_\alpha(k) c^*_\beta(k) + \sum_{\alpha \in \Ikp,\,\beta \in \Ikm} n_\alpha(k) n_\beta(k) c_\beta(k) c_\alpha(k)  \Big]\;.\nonumber
\end{align}
We have arrived at an effective quadratic Hamiltonian in terms of the approximately bosonic creation and annihilation operators. If the effective Hamiltonian were exactly bosonic, it could be diagonalized by a Bogoliubov transformation \cite{Bog47}. While we do not have this tool available since our operators are not exactly bosonic, we can still use the explicit formula as for a true Bogoliubov transformation and define the unitary map 
\begin{equation}\label{eq:gs_exact-intro}
T=\exp \Big( \sum_{k\in \north}\frac{1}{2}\sum_{\alpha,\beta\in\Ik} K(k)_{\alpha,\beta} c^*_\alpha(k) c^*_\beta(k) - \hc \Big) 
\end{equation}
where the real symmetric matrices $K(k)$ are computed as in the exactly bosonic case. The choice of $K(k)$ is the same as in \cite{BNP+20}, following the abstract formulation given in \cite{GS13}. We will quickly recall it in Section \ref{sec:exa-diag}. 

Another key aspect of our proof is the observation that the Bogoliubov kernel $K(k)$ satisfies a refined entry--wise bound,
\begin{equation}\label{eq:Kbound-intro}
\lvert K(k)_{\alpha,\beta}\rvert \leq \frac{C}{M} \min \left\{ \frac{n_\alpha(k)}{n_\beta(k)},\frac{n_\beta(k)}{n_\alpha(k)}\right\} \quad \textnormal{for all }k\in \north \text{ and }\alpha,\beta \in \Ik\;.
\end{equation}
This is proved in \cref{lem:K}. An important role in the proof is played by the fact that due to the geometry of the Fermi surface the normalization factor $n_\alpha(k)^2$ is proportional to $\lvert k\cdot\hat{\omega}_\alpha\rvert$ (see \cref{eq:n-alpha-intro}) which is also the linearization of the dispersion relation (see \cref{eq:H0-D0-intro}), leading to cancellations. This means that as the gap of the kinetic energy closes when we consider particle--hole pairs that are almost tangential to the Fermi surface, the energy gain due to the interaction of such an excitation vanishes at the same rate. While the proof is essentially a detailed computation, it is crucial in controlling the non--bosonizable terms $\Ecal_2$, see \cref{eq:ch-sh-sum}.

In \cref{lem:Bog-T}, we show that $T$ acts approximately as a bosonic Bogoliubov transformation, namely
\begin{align}\label{eq:KEY-09}
T^*_\lambda c_\gamma(l) T_\lambda &= \sum_{\alpha\in\Il}\cosh(\lambda K(l))_{\alpha,\gamma} c_\alpha(l) + \sum_{\alpha\in\Il} \sinh(\lambda K(l))_{\alpha,\gamma} c^*_\alpha(l) + \mathfrak{E}_\gamma(\lambda,l)
\end{align}
where the error operators  satisfy 
\begin{equation}\label{eq:KEY-09bound}
\sum_{\gamma\in \Il} \norm{ \mathfrak{E}_\gamma(\lambda,l) \psi}  \leq C M N^{-\frac{2}{3}+\delta}  \norm{ (\Ncal_\delta+M)^{1/2} (\Ncal +1)\psi} \qquad \forall \psi \in \fock\;. 
\end{equation}
This is an improvement of \cite[Prop 4.4]{BNP+20} in that we replaced some $\Ncal$ by $\Ncal_\delta$. In order to put the error estimate \cref{eq:KEY-09bound} in good use, we need also that the particle number operators be stable under the approximate Bogoliubov transformation; this is the content of \cref{lem:stability}, based on a refinement of the Gr\"onwall argument in \cite{BPS14,BNP+20}. 

To diagonalize the bosonizable part of the Hamiltonian, we insert \cref{eq:KEY-09} in $T^*(\Dbb_0+Q_\B^\Rcal)T$ and write the transformed expression in Wick--normal order (with respect to the approximately bosonic operators). Up to a small error, this produces the ground state energy as desired,
\begin{equation}\label{eq:rpa_as_gs}
\inf \textnormal{spec} \left( \sum_{k\in \north} 2\hbar \kappaf \lvert k\rvert h_\textnormal{eff}(k) \right) = E_N^\textnormal{RPA} + o(\hbar)\;. 
\end{equation}
Additionally we obtain the (up to a one--particle unitary) diagonalized quadratic Hamiltonian which in exact Bogoliubov theory would be the excitation spectrum; for some explicit matrix $\Kfrak(k)_{\alpha,\beta}$ it has the form
\[
\sum_{k\in \north} 2\kappaf \hbar \lvert k\rvert \sum_{\alpha,\beta \in \Ik} \Kfrak(k)_{\alpha,\beta} c^*_\alpha(k)c_\beta(k)\;.
\]
As mentioned before, a further new idea of our proof is that we sacrifice the positive contribution of the excitation spectrum to control the negative term $-\Dbb_\B$ left from the comparison of the fermionic and bosonic kinetic energy \cref{eq:KEY-10}. In fact, we will prove that (see \cref{eq:kinetic-Dbb0-final}) 
\begin{align}\label{eq:KEY-12}
\sum_{k\in \north} 2\kappaf \hbar \lvert k\rvert \sum_{\alpha,\beta\in \Ik} \Kfrak(k)_{\alpha,\beta}  c^*_\alpha(k)c_\beta(k) \geq \Dbb_\B - C \norm{\hat V}_{\ell^1} \Hbb_0\;.
\end{align} 
The proof of \cref{eq:KEY-12} is based on an explicit computation of the operator $\Kfrak(k)$ and the nice property \cref{eq:Kbound-intro} of the Bogoliubov kernel. 

When $\norm{\hat V}_{\ell^1}$ is small, the error term $-\norm{\hat V}_{\ell^1} \Hbb_0$ in \cref{eq:KEY-12} is controlled by the positive term $\Hbb_0$ left from the comparison of the fermionic and bosonic kinetic energy \cref{eq:KEY-10}. 

\paragraph{Controlling non--bosonizable parts of the Hamiltonian.} We still have to show that the non--bosonizable terms $\Ecal_1+\Ecal_2^{\Rcal}$ have only a small effect on the ground state energy. As explained in \cite{BNP+20} these error terms can be easily controlled by $\Ncal^2/N$. In the trial state in \cite{BNP+20}, the expectation value of $\Ncal^2$ is of order $1$, so that $\Ncal^2/N$ is a small error. For the lower bound however, in the (approximate) ground state we only know that $\Ncal$ is of order $N^{\frac{1}{3}}$, so that $\Ncal^2/N$ would be of the same order $\hbar=N^{-\frac{1}{3}}$ as the correlation energy. Another way to see the difficulty in dealing with these terms is to observe that $\Ecal_2^\Rcal$ couples the ``good'' bosonic degrees of freedom with ``bad'' uncontrolled fermions near the Fermi surface (the latter were, by construction, absent in the trial state used for the upper bound).   

Thus the non--bosonizable parts require a subtle analysis. The following argument relies on the fact that $\Ecal_1$ is non--negative (as $\hat{V}(k) \geq 0$), which helps us in obtaining a lower bound for $\Ecal_2^\Rcal$. By the Cauchy--Schwarz inequality and the kinetic energy estimate \cref{eq:KEY-00}, it is easy to see that 
\[
\Ecal_1+\Ecal_2^\Rcal \geq - C \norm{\hat V}_{\ell^1} \Hbb_0\;. 
\]
Of course, this bound is useless because $\Hbb_0$ is of the same order as $\Hcal_\textnormal{corr}$. However, we are able to rescue this idea by proving a similar lower bound for the \emph{transformed} operator $T^*(\Ecal_1+\Ecal_2^\Rcal)T$. In fact, in \cref{lem:non--bosonizable} we prove that, with $\psi=T^*\Psi$, 
\begin{align}\label{eq:KEY-11}
 \langle \Psi, (\Ecal_1+ \Ecal_2^\Rcal) \Psi \rangle &\geq - C \norm{\hat V}_{\ell^1} \norm{ \Hbb_0^{1/2} \psi}^2 - CN^{-\frac{1}{2}} \norm{\psi}\norm{\Hbb_0^{1/2} \Psi} \nonumber \\
 &\quad - C N^{-\frac{5}{3}+2\delta} M \norm{(\Ncal_\delta+M)^{1/2} (\Ncal +1)\psi}^2\;.
\end{align} 
The bound \cref{eq:KEY-11} is one of the most subtle estimates of our analysis and does not have any counterpart in the proof of the upper bound. Note that on the right hand side, once and only once the vector $\Psi$ appears. The proof of this bound relies on the nice property \cref{eq:Kbound-intro} of the Bogoliubov kernel.

 The second and third summand on the right hand side of \eqref{eq:KEY-11} are simply bounded by the a--priori estimates \cref{eq:KEY-04}. Unlike $CN^{-\frac{1}{2}} \norm{\psi}\norm{\Hbb_0^{1/2} \Psi}$ with its small pre--factor $N^{-\frac{1}{2}}$ the expectation value $-\norm{\hat V}_{\ell^1} \langle \psi,\Hbb_0 \psi\rangle$ has to be controlled differently: since $\norm{\hat V}_{\ell^1}$ is assumed to be small, we can control it using the positive term $\Hbb_0$ left after the Bogoliubov transformation of the difference of fermionic and bosonic kinetic energy, see the right hand side of \cref{eq:KEY-10}.

Eventually we will take the parameters $M=N^{4\delta}$ and $\delta=\frac{1}{24}$, resulting in the total error $\mathcal{O}(\hbar^{1+\frac{1}{16} })$ to the correlation energy. 
This completes the sketch of the proof.

\section*{Acknowledgements}
We thank Christian Hainzl for helpful discussions and a referee for very careful reading of the paper and many helpful suggestions. NB and RS were supported by the European Research Council (ERC) under the European Union’s Horizon 2020 research and innovation programme (grant agreement No.~694227). Part of the research of NB was conducted on the RZD18 Nice--Milan--Vienna--Moscow. NB thanks Elliott~H.\ Lieb and Peter Otte for explanations about the Luttinger model.
PTN has received funding from the Deutsche Forschungsgemeinschaft (DFG, German Research Foundation) under Germany's Excellence Strategy (EXC-2111-390814868). 
MP acknowledges financial support from the European Research Council
(ERC) under the European Union’s Horizon 2020 research and innovation programme (ERC StG MaMBoQ,
grant agreement No.~802901).
BS gratefully acknowledges financial support from the NCCR SwissMAP, from the Swiss National Science Foundation through the Grant ``Dynamical and energetic properties of Bose-Einstein condensates'' and from the European Research Council through the ERC-AdG CLaQS (grant agreement No.~834782).  
All authors acknowledge support for workshop participation from Mathematisches Forschungsinstitut Oberwolfach (Leibniz Association). NB, PTN, BS, and RS acknowledge support for workshop participation from Fondation des Treilles.

\section{Kinetic Estimates}\label{sec:kin_est}

Our goal is to derive some rough estimates on the correlation Hamiltonian $\Hcal_\textnormal{corr}$ in \cref{eq:Hcorr} using the kinetic energy
\[
\Hbb_0 = \sum_{p\in \mathbb{Z}^3} e(p) a^*_p a_p\;, \quad e(p)= \lvert \hbar^2 \lvert p\rvert^2 -\kappaf^2\rvert\;, \qquad \kappaf=  \left(\frac{3}{4\pi}\right)^{\frac{1}{3}}\;. 
\]
The main result of this section is the following estimate for $\Hcal_\textnormal{corr}$. The proof is based on the estimates of \cite[Section~4.1]{HPR20}, which we shall review for the convenience of the reader.
\begin{lem}[A--Priori Estimates for the Correlation Hamiltonian]\label{cor:kineticenergy}
There exists a $v_0 > 0$ such that for $\hat{V}: \Zbb^3 \to \Rbb$ compactly supported, non--negative, with $\hat{V}(k)=\hat{V}(-k)$ for all $k\in \Zbb^3$, and $\norm{\hat{V}}_{\ell^1}< v_0$, the following holds true:
\[
\frac{1}{2} (\Hbb_0 +\mathcal{E}_1)  -  \hbar \leq \Hcal_\textnormal{corr}  \leq 2 (\Hbb_0 +\mathcal{E}_1+ \hbar)\;. 
\]
\end{lem}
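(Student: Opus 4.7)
The strategy is to decompose
\[
\Hcal_\textnormal{corr}-(\Hbb_0+\Ecal_1)=Q_\B+\Ecal_2+\Xbb
\]
and to show that the right--hand side is, in operator sense, bounded above and below by $\pm\bigl[\tfrac12(\Hbb_0+\Ecal_1)+C\hbar\bigr]$, exploiting the smallness of $\|\hat V\|_{\ell^1}$. The two inequalities of the lemma then follow from $\Hbb_0\ge 0$ and $\Ecal_1\ge 0$ by taking constants appropriately. The three tools I will repeatedly use are the kinetic bound \cref{eq:KEY-00} on pair operators, the number bound \cref{eq:KEY-01}, and the elementary operator inequality $\pm(AB+B^*A^*)\le \epsilon AA^*+\epsilon^{-1}B^*B$ (which follows from $(\sqrt\epsilon A^*\pm\epsilon^{-1/2}B)^*(\sqrt\epsilon A^*\pm\epsilon^{-1/2}B)\ge 0$).

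The term $\Xbb$ is easiest: it is quadratic in fermionic operators with coefficients bounded by $\tfrac1N\hat V(k)$, so $|\Xbb|\le \tfrac{1}{N}\|\hat V\|_{\ell^1}\Ncal$, and \cref{eq:KEY-01} gives $|\Xbb|\le 2\|\hat V\|_{\ell^1}\hbar\,\Hbb_0$, negligible for large $N$. For $\Ecal_2$, I apply the Cauchy--Schwarz inequality above with $A=\Dfrak(\pm k)^*$ and $B=b(\pm k)$ to obtain
\[
|\Ecal_2|\le 2\epsilon\,\Ecal_1+\frac{\epsilon^{-1}}{N}\sum_{k\in\north}\hat V(k)\bigl[b^*(k)b(k)+b^*(-k)b(-k)\bigr].
\]
Since $\langle\psi,b^*(k)b(k)\psi\rangle=\|b(k)\psi\|^2\le CN\langle\psi,\Hbb_0\psi\rangle$ by \cref{eq:KEY-00}, the second summand is bounded by $C\epsilon^{-1}\|\hat V\|_{\ell^1}\Hbb_0$.

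For $Q_\B$, the diagonal pieces $b^*(\pm k)b(\pm k)$ are non--negative and, by the same kinetic estimate, controlled by $C\|\hat V\|_{\ell^1}\Hbb_0$ as quadratic forms. The off--diagonal pieces are treated by the observation
\[
0\le (b(k)\pm b^*(-k))^*(b(k)\pm b^*(-k))=b^*(k)b(k)+b(-k)b^*(-k)\pm\bigl(b^*(k)b^*(-k)+b(-k)b(k)\bigr),
\]
which yields $\pm\bigl(b^*(k)b^*(-k)+b(-k)b(k)\bigr)\le b^*(k)b(k)+b(-k)b^*(-k)$. Writing $b(-k)b^*(-k)=b^*(-k)b(-k)+[b(-k),b^*(-k)]$ and using CAR to compute the commutator, one finds $[b(-k),b^*(-k)]\le |\BFc\cap(\BF-k)|+C\Ncal\le CN^{2/3}+C\Ncal$. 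Inserting the $\tfrac{1}{N}\hat V(k)$ prefactor and summing over the (finite) support of $\hat V$ produces contributions of order $\|\hat V\|_{\ell^1}\hbar$ as a bounded operator and $\|\hat V\|_{\ell^1}\hbar\,\Hbb_0$ through \cref{eq:KEY-01}.

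Collecting these bounds,
\[
\pm(Q_\B+\Ecal_2+\Xbb)\le \bigl(2\epsilon+C\epsilon^{-1}\|\hat V\|_{\ell^1}+C\|\hat V\|_{\ell^1}\bigr)(\Hbb_0+\Ecal_1)+C\|\hat V\|_{\ell^1}\hbar,
\]
and choosing first $\epsilon$ small (to absorb $\Ecal_1$) and then $v_0$ so small that the coefficient in front of $\Hbb_0+\Ecal_1$ is at most $\tfrac12$ yields both the lower and upper bound of the lemma. The only delicate step is the off--diagonal contribution of $Q_\B$: one must carefully track that the commutator $[b(-k),b^*(-k)]$ produces exactly the additive $\Ocal(\hbar)$ that appears in the lemma's statement, and not something of the order of the leading kinetic energy.
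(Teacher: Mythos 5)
Your proposal is correct and takes essentially the same route as the paper: split $\Hcal_\textnormal{corr}-(\Hbb_0+\Ecal_1)=Q_\B+\Ecal_2+\Xbb$, control $\Ecal_2$ by Cauchy--Schwarz against $\Ecal_1$ and the pair--operator kinetic bound (\cref{lem:kinetic}), control the off--diagonal part of $Q_\B$ via $\pm(b^*(k)b^*(-k)+\hc)\le b^*(k)b(k)+b(-k)b^*(-k)$ together with the commutator estimate $[b(-k),b^*(-k)]\le CN^{2/3}$ (\cref{lem:bD}), and control $\Xbb$ through $\Ncal\le 2N^{2/3}\Hbb_0$ (\cref{lem:N-H0,lem:exchange}); the smallness of $\norm{\hat V}_{\ell^1}$ then absorbs everything into $\tfrac12(\Hbb_0+\Ecal_1)+\hbar$. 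Two minor remarks: your extra term $C\Ncal$ in $[b(-k),b^*(-k)]\le CN^{2/3}+C\Ncal$ is unnecessary, since the operators subtracted in that commutator are non--negative and can simply be dropped, giving $\le CN^{2/3}$ directly; and both \cref{lem:N-H0} and \cref{lem:exchange} are quadratic--form bounds restricted to the subspace $(\Ncal^\textnormal{p}-\Ncal^\textnormal{h})\psi=0$, so the lemma, like the paper's own use of it, is to be read on that subspace.
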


Before coming to the proof of this lemma at the end of the section we need a couple of auxiliary lemmas.
We start by recalling \cite[Lemma~4.7]{HPR20}, which allows us to control the pair operators $b(k)$ using the kinetic energy $\Hbb_0$.  
\begin{lem} [Kinetic Bound for Pair Operators] \label{lem:kinetic}  For every $k\in \mathbb{Z}^3$ and $\psi \in \fock$ we have   
\[
\sum_{p \in \BFc \cap (\BF +k)} \norm{a_p a_{p-k} \psi} \leq CN^{\frac{1}{2}} \norm{\Hbb_0^{1/2} \psi}\;.
\]
\end{lem}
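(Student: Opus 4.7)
The strategy is a Cauchy--Schwarz decomposition weighted by the kinetic gap, followed by a geometric counting of lattice points in the ``particle--hole crescent'' $\BF^c \cap (\BF+k)$.

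First I would write
\begin{equation*}
\sum_{p \in \BFc \cap (\BF +k)} \norm{a_p a_{p-k} \psi}
= \sum_{p \in \BFc \cap (\BF +k)} \frac{1}{\sqrt{e(p)+e(p-k)}}\, \sqrt{e(p)+e(p-k)} \,\norm{a_p a_{p-k} \psi}\;,
\end{equation*}
and apply Cauchy--Schwarz to obtain the product of
\[
 A_k := \sum_{p \in \BFc \cap (\BF +k)} \frac{1}{e(p)+e(p-k)} \quad \textnormal{and} \quad
 B_k(\psi) := \sum_{p \in \BFc \cap (\BF +k)} \bigl(e(p)+e(p-k)\bigr)\,\norm{a_p a_{p-k}\psi}^2.
\]
So it suffices to show $A_k \leq C N$ and $B_k(\psi) \leq 2\norm{\Hbb_0^{1/2}\psi}^2$.

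For $B_k(\psi)$, since $p \neq p-k$ (as $k \neq 0$ on the relevant summation; the $k=0$ contribution is absent from $\BFc\cap \BF=\emptyset$), the CAR give $a_p^*a_p$ and $a_{p-k}^*a_{p-k}$ commute, and $\norm{a_p a_{p-k}\psi}^2 = \langle \psi, a_{p-k}^*a_p^*a_pa_{p-k}\psi\rangle = \langle \psi, (a_p^*a_p)(a_{p-k}^*a_{p-k})\psi\rangle \leq \langle \psi, a_p^*a_p\,\psi\rangle$, and symmetrically $\leq \langle \psi, a_{p-k}^*a_{p-k}\,\psi\rangle$. Splitting the weight as $e(p)+e(p-k)$ and bounding each half by the appropriate inequality yields $B_k(\psi) \leq \langle \psi, \Hbb_0\psi\rangle + \langle \psi, \Hbb_0\psi\rangle = 2\norm{\Hbb_0^{1/2}\psi}^2$. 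This step is routine.

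For $A_k$, the key algebraic identity is that on $p \in \BFc \cap (\BF+k)$ one has $e(p) = \hbar^2|p|^2 - \kappaf^2 \geq 0$ and $e(p-k) = \kappaf^2 - \hbar^2|p-k|^2 \geq 0$, hence
\[
 e(p) + e(p-k) = \hbar^2\bigl(|p|^2 - |p-k|^2\bigr) = \hbar^2 (2p-k)\cdot k\;,
\]
which is strictly positive on the summation domain. Writing $h := p-k \in \BF$ with $h+k \in \BFc$, and using coordinates $h = h_\parallel \hat{k} + h_\perp$ with $\hat k=k/|k|$, the constraint is $|h_\perp|^2 + h_\parallel^2 \leq k_\F^2 < |h_\perp|^2 + (h_\parallel+|k|)^2$, which confines $h_\parallel$ to an interval of length $|k|$ just below $\sqrt{k_\F^2 - |h_\perp|^2}$. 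I would then compare the lattice sum to the integral (using $\hbar^{-2} = N^{2/3}$ and the volume element $dh = N\cdot d(\hbar h)$ after rescaling, or equivalently bounding the number of lattice points in small strips by the strip area) and obtain
\[
A_k \;\leq\; C \hbar^{-2} \int_{|h_\perp|<k_\F} dh_\perp \, \frac{1}{|k|} \log\!\left(\frac{2\sqrt{k_\F^2-|h_\perp|^2}+|k|}{2\sqrt{k_\F^2-|h_\perp|^2}-|k|}\right) \leq C \hbar^{-2} k_\F \leq CN\;,
\]
where the final bound follows because the logarithm is of order $|k|/\sqrt{k_\F^2 - |h_\perp|^2}$ away from the equator and remains integrable near $|h_\perp|=k_\F$ (it integrates against $|h_\perp|\,d|h_\perp|$, yielding $k_\F$ up to a constant depending only on $|k|$, which is harmless since $k$ lies in the fixed set $\supp\hat V$).

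\textbf{Main obstacle.} The only delicate point is the geometric estimate for $A_k$: one must confirm that the logarithmic enhancement near the equator (where $\sqrt{k_\F^2-|h_\perp|^2}$ becomes comparable to $|k|$) does not spoil the $\Ocal(N)$ bound, and that the continuum approximation to the lattice sum in the thin crescent of thickness $\sim |k|$ on a Fermi sphere of radius $\sim N^{1/3}$ is justified. This follows by a standard comparison with the integral, with corrections of lower order absorbed into the constant $C$ (which depends on $k$ but is taken over a finite set $\supp\hat V$).
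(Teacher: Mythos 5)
Your overall structure — Cauchy--Schwarz with weight $e(p)+e(p-k)$, splitting into the factors $A_k$ and $B_k(\psi)$ — is exactly the paper's, and your treatment of $B_k(\psi)$ is correct and routine. The problem is the bound $A_k \le CN$, where your argument has a genuine gap.

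You wish to replace the lattice sum $\sum_{p}\bigl(e(p)+e(p-k)\bigr)^{-1}$ over the crescent $\BFc\cap(\BF+k)$ by an integral, and you assert that ``the continuum approximation to the lattice sum \dots is justified \dots with corrections of lower order absorbed into the constant~$C$.'' This is not standard and is in fact false at the level of generality you invoke. The crescent has thickness $O(1)$ (comparable to the lattice spacing) and lies at distance $O(N^{1/3})$ from the origin, and the summand is as large as $\hbar^{-2}=N^{2/3}$ near the equator (where $p\cdot k$ is as small as $O(1)$). Without a sharp count of lattice points in such thin shells, the best a priori bound on the number of points at fixed $s=p\cdot k$ is $O(N^{2/3})$ (the total size of the crescent), which would yield $\sum_{s\le CN^{1/3}} |B_s|/s = O(N^{2/3}\log N)\,\hbar^{-2}\cdot\hbar^2$, i.e.\ $A_k = O(N^{4/3}\log N)$ after restoring the $\hbar^{-2}$ — far larger than $CN$. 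The continuum heuristic predicts $|B_s|\sim s$, but establishing this requires controlling the fluctuation of the number of lattice points in an annulus of area $\sim s$ centered at radius $\sim N^{1/3}$, which is precisely a Gauss-circle-type problem: one needs the remainder to be $o(N^{1/3})$ uniformly in $s$.

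The paper closes exactly this gap via Theorem B.1 (Huxley; van der Corput's $\gamma=2/3$ already suffices), which gives, after reducing to a two-dimensional lattice count in an ellipse, $|B_s| \le C(s + N^{\gamma/3})$ with $\gamma<1$, and hence $\sum_s |B_s|/s \le CN^{1/3} + N^{\gamma/3}\log N \le CN^{1/3}$. Your proof is not wrong in spirit — the continuum heuristic points to the right answer — but the passage from the integral to the lattice sum is the entire content of the lemma, and it cannot be waved away as a lower-order correction; it is a number-theoretic input that must be cited or proved.
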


Since we will use this bound several times and we also need a modified version in \cref{sec:patches}, a simplified proof of \eqref{lem:kinetic} is provided in \cref{app:A} for the reader's convenience. 

%will be used many times in our paper, and 
%As this is an important and non--trivial bound, let us recall the proof.
% 
As a consequence of \cref{lem:kinetic} we can bound the pair operators by the kinetic energy.
\begin{lem}[Kinetic Bound for $b(k)^\natural$] \label{lem:bD}
For all $k\in \Zbb^3$ we have
\[
 b^*(k)  b(k) \leq C N \Hbb_0\;, \quad  b(k)  b^*(k) \leq C N (\Hbb_0+ \hbar)\;. 
\]
\end{lem}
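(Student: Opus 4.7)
The plan is to establish the two inequalities separately, both starting from \cref{lem:kinetic}.

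For the first bound, I would compute the expectation value
\[
\langle \psi, b^*(k) b(k) \psi \rangle = \norm{b(k) \psi}^2 = \Bignorm{\sum_{p \in \BFc \cap (\BF+k)} a_{p-k} a_p \psi}^2.
\]
By the triangle inequality, $\norm{b(k)\psi} \leq \sum_p \norm{a_{p-k} a_p \psi} = \sum_p \norm{a_p a_{p-k}\psi}$ (the ordering can be swapped since the anticommutator produces only a sign). Applying \cref{lem:kinetic} directly gives $\norm{b(k)\psi} \leq C N^{1/2} \norm{\Hbb_0^{1/2}\psi}$, and squaring yields $b^*(k) b(k) \leq CN \Hbb_0$ as desired.

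For the second bound I would use the decomposition $b(k) b^*(k) = [b(k), b^*(k)] + b^*(k) b(k)$, where the second term is already controlled by the first bound. The whole task is therefore to estimate the commutator. Expanding with the CAR and using the standard fermionic identity $[a_1 a_2, a_3^* a_4^*] = a_1 \{a_2, a_3^*\} a_4^* - a_1 a_3^* \{a_2, a_4^*\} + \{a_1, a_3^*\} a_4^* a_2 - a_3^* \{a_1, a_4^*\} a_2$, the crucial observation is that the ``cross'' anticommutators $\{a_p, a^*_{p'-k}\}$ and $\{a_{p-k}, a^*_{p'}\}$ vanish identically for $p, p' \in \BFc \cap (\BF+k)$: indeed $\{a_p, a^*_{p'-k}\} = \delta_{p, p'-k}$ would require $p \in \BFc$ and simultaneously $p = p'-k \in \BF$, a contradiction, and similarly for the other. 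Only the ``diagonal'' anticommutators survive, which force $p = p'$ and give
\[
[b(k), b^*(k)] = \sum_{p \in \BFc \cap (\BF+k)} \bigl( 1 - a^*_{p-k} a_{p-k} - a^*_p a_p \bigr).
\]

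Dropping the non-negative number operator terms, we obtain $[b(k), b^*(k)] \leq m^2(k) = \lvert \BFc \cap (\BF+k)\rvert$. A Gauss counting argument bounds this cardinality by $C k_\F^2 \lvert k\rvert \leq C k_\F^2 = C N^{2/3} = C N \hbar$ for $k$ in the fixed compact support of $\hat{V}$. Combining the pieces gives $b(k) b^*(k) \leq C N (\Hbb_0 + \hbar)$. There is no real obstacle here; the only point that needs any care is the verification that the cross anticommutators vanish on the constrained summation set, which is what makes the commutator a pure number operator (bounded by $m^2(k)$) rather than requiring a further kinetic estimate.
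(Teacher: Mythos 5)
Your proof is correct and follows essentially the same route as the paper: triangle inequality plus Lemma 4.2 for the first bound, and the CAR commutator computation plus a lattice-point count for the second. The only difference is stylistic—you spell out why the cross anticommutators vanish via the $\BF$/$\BFc$ dichotomy, which the paper leaves implicit—so there is nothing to flag.
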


\begin{proof}
For every $\psi \in \fock$, from \cref{lem:kinetic} and the triangle inequality we have
\[
\norm{b(k) \psi} \leq \sum_{p\in \BFc \cap (\BF +k)}  \norm{a_p a_{p-k}\psi} \leq C N^{\frac{1}{2}} \norm{\Hbb_0^{1/2} \psi}\;.
\]
This is equivalent to $b^*(k)b(k) \leq C N \Hbb_0$. To estimate $b(k) b^*(k)$, we use the CAR \cref{eq:car}
\begin{align}
[b(k),b^*(k)] &= \sum_{p,q\in \BFc \cap (\BF +k)}  [a_p a_{p-k}, a_{q-k}^* a_q^*] =  \sum_{p\in \BFc \cap (\BF +k)}  \Big( 1 - a^*_p a_p -  a^*_{p-k} a_{p-k}\Big) \nonumber  \\
&\leq \lvert\{ p\in \BFc \cap (\BF +k) \} \rvert \leq CN^{\frac{2}{3}} = C N \hbar\;. % \label{eq:counting-0}
\end{align}
%%%%%%%%%%%%%%%%%%%%%%%%%%%%%%
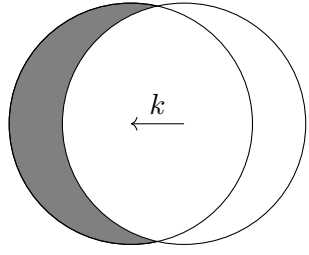
\begin{figure}\centering
\begin{tikzpicture}
 \draw[fill=gray] (4,2) circle (1.6cm);
 \draw[fill=white] (4.7,2) circle (1.6cm);
 \draw (4,2) circle (1.6cm);
 \node [above] at (4.35,2) {$k$}; 
 \draw [->] (4.7,2)--(4,2);
\end{tikzpicture}
 \caption{The grey area represents the set $\BFc \cap (\BF  + k) \subset \Zbb^3$, i.\,e., momenta which are affected by creation of particle--hole pairs with relative momentum $k$.}\label{fig:normalization}
\end{figure}
%%%%%%%%%%%%%%%%%%%%%%%%%%%%
The last estimate follows from a simple counting argument: the set $\BFc \cap (\BF +k)$ (sketched in grey in \cref{fig:normalization}) is contained in the volume obtained by extending an area of size $\Ocal(N^{\frac{2}{3}})$ on the Fermi surface to a shell of thickness of order $\Ocal(1)$. Therefore, this set contains no more than $C N^{\frac{2}{3}}$ points of $\Zbb^3$. Thus 
\[
b(k)  b^*(k) \leq b^*(k)  b(k)  + C N \hbar  \leq C N (\Hbb_0+ \hbar)\;. \qedhere
\]
\end{proof}

The following new bound is our main tool to control the number of excited fermions in the ground state. It is based on the observation that because of the discreteness of momentum space the kinetic energy operator has a tiny (i.\,e., order $\hbar^2$) gap. 

\begin{lem}[Kinetic Bound for Number of Fermions] \label{lem:N-H0}
Let $\psi \in \fock$ satisfy $\left( \Ncal^\textnormal{p} - \Ncal^\textnormal{h} \right) \psi = 0$. Then we have
\[
\langle \psi, \Ncal \psi \rangle \leq 2 N^{\frac{2}{3}} \langle \psi, \Hbb_0 \psi \rangle \;. 
\]
% For any $\theta \in (0, \frac{2}{3}]$ and any $\bourgain > 0$ we have
% \[
%  \Ncal \leq C_\bourgain N^{1-\theta+\bourgain } + N^{\theta} \Hbb_0\;. 
%  \]
\end{lem}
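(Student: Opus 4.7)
The strategy is to exploit the discreteness of the momentum lattice: although the fermionic kinetic operator has arbitrarily small (non-zero) eigenvalues near the Fermi surface from the hole side, it has a strict gap of order $\hbar^2 = N^{-2/3}$ from the particle side. Combining this gap with the constraint $\Ncal^\textnormal{p} = \Ncal^\textnormal{h}$ will convert the bound on $\Ncal^\textnormal{p}$ into a bound on the full number operator $\Ncal = \Ncal^\textnormal{p} + \Ncal^\textnormal{h}$.

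First, I would observe that on a vector $\psi$ satisfying $(\Ncal^\textnormal{p} - \Ncal^\textnormal{h})\psi = 0$ one has
\begin{equation*}
\langle \psi, \Ncal \psi \rangle = \langle \psi, \Ncal^\textnormal{p}\psi\rangle + \langle \psi, \Ncal^\textnormal{h}\psi\rangle = 2 \langle \psi, \Ncal^\textnormal{p}\psi\rangle,
\end{equation*}
so it is enough to estimate $\Ncal^\textnormal{p}$ by $\Hbb_0$. The asymmetric role played by particles and holes here is the reason the hypothesis is needed: for a hole $h$ with $\lvert h\rvert = k_\F$ one has $e(h) = 0$, which prevents any direct pointwise bound on $\Ncal^\textnormal{h}$ by $\Hbb_0$.

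Next, I would prove the gap estimate $e(p) \geq \hbar^2$ for every $p \in \BFc$. Writing $k_\F^2 = \kappa_\textnormal{F}^2/\hbar^2$, which is an integer because $k_\F$ may be chosen as $\max_{q \in \BF}\lvert q\rvert$ (the Fermi ball is by definition completely filled), any $p \in \BFc$ satisfies $\lvert p\rvert^2 > k_\F^2$ with both sides integers, hence $\lvert p\rvert^2 \geq k_\F^2 + 1$. Consequently
\begin{equation*}
e(p) = \hbar^2\lvert p\rvert^2 - \kappa_\textnormal{F}^2 = \hbar^2\bigl(\lvert p\rvert^2 - k_\F^2\bigr) \geq \hbar^2 = N^{-2/3}.
\end{equation*}

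Finally, since $a_p^* a_p \geq 0$ for every $p$, I would drop the contribution of the holes in $\Hbb_0$ and use the gap just established:
\begin{equation*}
\Hbb_0 \geq \sum_{p \in \BFc} e(p)\, a_p^* a_p \geq \hbar^2 \sum_{p \in \BFc} a_p^* a_p = N^{-2/3}\, \Ncal^\textnormal{p}.
\end{equation*}
Combining with the identity $\langle\psi,\Ncal\psi\rangle = 2\langle\psi,\Ncal^\textnormal{p}\psi\rangle$ yields $\langle\psi,\Ncal\psi\rangle \leq 2 N^{2/3}\langle\psi,\Hbb_0\psi\rangle$, proving the lemma. There is no real obstacle: the only point that requires a little care is the conventions fixing $k_\F$ and $\kappa_\textnormal{F}$ so that $k_\F^2$ is an integer and $\hbar^2 k_\F^2 = \kappa_\textnormal{F}^2$ exactly, which ensures the gap estimate is clean.
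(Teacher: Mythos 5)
Your proof is correct and achieves the stated constant, but it takes a genuinely different route in how the constraint $(\Ncal^\textnormal{p} - \Ncal^\textnormal{h})\psi = 0$ is exploited, and it is more sensitive to the normalization of $\kappa_\F$ than the paper's argument. Both proofs hinge on the same lattice gap, $\inf_{p\in\BFc}|p|^2 - \sup_{h\in\BF}|h|^2 \geq 1$. The paper uses the constraint to \emph{shift the constant inside the dispersion}: since $(\hbar^2 c_0 - \kappa_\F^2)(\Ncal^\textnormal{p} - \Ncal^\textnormal{h})\psi = 0$, one may freely replace $\kappa_\F^2$ by $\hbar^2 c_0$ where $c_0 := \frac{1}{2}\inf_{p\in\BFc}|p|^2 + \frac{1}{2}\sup_{h\in\BF}|h|^2$ is the \emph{midpoint} of the spectral gap; this makes \emph{every} mode, particle and hole alike, satisfy $\hbar^2\lvert|k|^2 - c_0\rvert \geq \hbar^2/2$, yielding $\Hbb_0 \geq \frac{\hbar^2}{2}\Ncal$ directly. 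Your proof instead uses the constraint to \emph{double}, $\Ncal = 2\Ncal^\textnormal{p}$, then throws away the hole contribution to $\Hbb_0$ and relies only on the particle gap. For that to give the constant $2$, you need the particle gap to be at least $\hbar^2$ in full, which forces $\kappa_\F^2/\hbar^2$ to sit exactly at the \emph{left} endpoint of the lattice gap, i.e.\ $\kappa_\F^2/\hbar^2 = \max_{q\in\BF}|q|^2$. Had the constant been placed, say, at the midpoint, your particle gap drops to $\hbar^2/2$ and your argument delivers $4N^{2/3}$ instead of $2N^{2/3}$; the paper's midpoint trick is indifferent to where $\kappa_\F^2/\hbar^2$ sits within the gap and always recovers the factor $2$. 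You flag the normalization issue yourself at the end, and your choice is consistent with the paper up to the $\Ocal(N^{-1/3})$ bookkeeping the authors mention in their footnote on $\kappaf$; but the shift-to-midpoint argument dispenses with that sensitivity entirely, which is why it is the one the paper adopts.
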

\begin{proof}
 Consider any $k_2 \in B_\F^c$ and $k_1 \in B_\F$. By definition of the Fermi ball $\lvert k_2 \rvert > \lvert k_1 \rvert$, and since $k_1, k_2 \in \Zbb^3$ we have $\lvert k_2 \rvert^2 - \lvert k_1 \rvert^2 \in \Nbb$; thus
 \[
  \inf_{k \in B_\F^c} \lvert k \rvert^2 - \sup_{k \in B_\F}\lvert k \rvert^2 \geq 1\;.
 \]
 Define 
 \[c_0 := \frac{1}{2} \inf_{k \in B_\F^c} \lvert k\rvert^2 + \frac{1}{2} \sup_{k \in B_\F} \lvert k\rvert^2 \;.\]
 Obviously $\sup_{k \in B_\F} \lvert k\rvert^2 \leq c_0 \leq \inf_{k \in B_\F^c} \lvert k\rvert^2$, and since $B_\F \cup B_\F^c = \Zbb^3$ we get
 \[\inf_{k \in \Zbb^3} \lvert \lvert k \rvert^2 - c_0 \rvert \geq \frac{1}{2}\;.\]
 Moreover, using $\left(\Ncal^\textnormal{p} - \Ncal^\textnormal{h}\right) \psi = 0$ we find
 \begin{align*}
  \Hbb_0 \psi & = \sum_{p \in B_\F^c} \left( \hbar^2 \lvert p\rvert^2 - \kappa^2 \right) a^*_p a_p \psi - \sum_{h \in B_\F} \left( \hbar^2 \lvert h\rvert^2 - \kappa^2 \right) a^*_h a_h \psi \\
  & = \sum_{p \in B_\F^c} \left( \hbar^2 \lvert p\rvert^2 - \hbar^2 c_0 \right) a^*_p a_p \psi - \sum_{h \in B_\F} \left( \hbar^2 \lvert h\rvert^2 - \hbar^2 c_0 \right) a^*_h a_h \psi + \left( \hbar^2 c_0 - \kappa^2 \right) \left( \Ncal^{\textnormal{p}} - \Ncal^\textnormal{h} \right) \psi \\
  & = \sum_{k \in \Zbb^3} \hbar^2 \lvert \lvert k\rvert^2 - c_0\rvert a^*_k a_k \psi\;.
 \end{align*}
Thus we conclude
\[
\langle \psi, \Hbb_0 \psi \rangle \geq \sum_{k \in \Zbb^3} \langle \psi , \frac{\hbar^2}{2} a^*_k a_k \psi\rangle = \frac{\hbar^2}{2} \langle \psi, \Ncal \psi \rangle \;. \qedhere 
\]
\end{proof}

As a consequence of \cref{lem:N-H0} we can easily prove that the exchange term has a very small contribution to the ground state energy, namely bounded as in the following lemma. 
\begin{lem}[Bound for $\Xbb$] \label{lem:exchange} Let $\psi \in \fock$ satisfy $\left( \Ncal^\textnormal{p} - \Ncal^\textnormal{h} \right) \psi = 0$. Then 
\[
\lvert \langle \psi, \Xbb \psi\rangle \rvert \leq C N^{-\frac{1}{3}} \langle \psi, \Hbb_0 \psi \rangle \;.   
\]
\end{lem}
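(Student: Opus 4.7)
The bound is essentially a one-line observation combined with \cref{lem:N-H0}, so the plan is short.

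First I would show that $\Xbb$ is controlled operator-theoretically by the fermionic number operator $\Ncal$. Since $\hat{V}(k)\ge 0$ and each $a^*_p a_p$ and $a^*_h a_h$ is a non-negative operator bounded above by $1$, we can drop the restrictions $p\in \BF+k$ and $h\in \BFc -k$ to enlarge the sums, obtaining the operator inequality
\[
0\le -\Xbb \;\le\; \frac{1}{2N}\sum_{k\in \Zbb^3} \hat V(k)\bigg[\sum_{p\in \BFc} a^*_p a_p + \sum_{h\in \BF} a^*_h a_h\bigg] \;=\; \frac{\norm{\hat V}_{\ell^1}}{2N}\big(\Ncal^{\textnormal{p}}+\Ncal^{\textnormal{h}}\big).
\]
Since $\Ncal^{\textnormal{p}}+\Ncal^{\textnormal{h}} = \sum_{p\in \BFc} a^*_p a_p + \sum_{h\in \BF} a^*_h a_h \le \sum_{k\in \Zbb^3} a^*_k a_k = \Ncal$, this yields $-\Xbb \le \tfrac{\norm{\hat V}_{\ell^1}}{2N}\Ncal$ as an operator.

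Second, for $\psi$ satisfying $(\Ncal^{\textnormal{p}}-\Ncal^{\textnormal{h}})\psi=0$, I would simply apply \cref{lem:N-H0} to get $\langle\psi,\Ncal\psi\rangle \le 2 N^{2/3}\langle\psi,\Hbb_0\psi\rangle$. Combining the two estimates,
\[
\lvert\langle\psi,\Xbb\psi\rangle\rvert \le \frac{\norm{\hat V}_{\ell^1}}{2N}\langle\psi,\Ncal\psi\rangle \le \norm{\hat V}_{\ell^1}\, N^{-1/3}\langle\psi,\Hbb_0\psi\rangle,
\]
and absorbing $\norm{\hat V}_{\ell^1}$ into the constant $C$ gives the claim. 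There is no real obstacle here — the only subtlety is making sure that the hole contribution (whose indices live in $\BF$) is correctly bounded by the total fermionic number operator in the transformed picture, which is immediate once one writes out the summands explicitly.
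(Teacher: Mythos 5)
Your proof is correct and follows essentially the same route as the paper: the paper's one-line proof cites exactly the estimate $\pm\Xbb \leq C\norm{\hat V}_{\ell^1}\Ncal/N$ (obtained by dropping the summation constraints and using $0\le a^*_q a_q\le 1$) together with \cref{lem:N-H0}. Your write-up just spells out the intermediate steps.
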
 

\begin{proof} This follows from the simple estimate\footnote{We use the notation $\pm A \leq B$ for two self--adjoint operators $A$ and $B$ to indicate that both $A \leq B$ and $-A \leq B$ hold.} $\pm \Xbb \leq C\norm{\hat V}_{\ell^1} \Ncal/N$ and \cref{lem:N-H0}.  
\end{proof}

Now we are ready to prove the main result of the section.
\begin{proof}[Proof of \cref{cor:kineticenergy}] By the Cauchy--Schwarz inequality and  \cref{lem:bD}
\begin{align*} 
\pm \Big( \Dfrak(-k)^* b(k) + \hc  \Big)   &\leq \frac{1}{2}\Dfrak(-k)^* \Dfrak(-k) + 2 b^*(k) b(k)\;,\\
\pm \Big( b^*(k) b^*(-k) + \hc  \Big) &\leq b^*(k) b(k) + b(-k) b^*(-k) \leq CN (\Hbb_0 +\hbar)\;. 
\end{align*}
Combining this with \cref{lem:exchange} we obtain 
\[
\Hbb_0 + \frac{1}{2} \mathcal{E}_1 -  C \norm{\hat V}_{\ell^1} (\Hbb_0+ \hbar) \leq  \Hcal_\textnormal{corr} \leq \Hbb_0 + 2 \mathcal{E}_1+ C \norm{\hat V}_{\ell^1} (\Hbb_0 + \hbar)\;.
\]
The desired result follows from the smallness condition on $\hat V$.
\end{proof}

\section{Localization of Particle Number} 

From the previous kinetic energy estimates it is possible to derive a--priori bounds for the ground states of $\Hcal_\textnormal{corr}$ (see \cref{lem:gs} below). For example, we can control the expectation value of the particle number $\Ncal$ in a ground state using \cref{lem:N-H0}. To estimate also the expectation values of higher powers of $\Ncal$, inspired by \cite{LS01} we use IMS localization with respect to particle number to construct an \emph{approximate} ground state which has energy close to the ground state energy and at the same time fulfills the desired bounds for powers of the number operator. This is the main outcome of the section, given in the following lemma.
\begin{lem}[Localization in Particle Number] \label{lem:loc} Let $\psi_\textnormal{gs}$ be a ground state vector for $\Hcal_{corr}$ satisfying $(\Ncal^\textnormal{p} - \Ncal^\textnormal{h})\psi_\textnormal{gs} = 0$, i.\,e., a minimizer of the left hand side of \cref{eq:Hcorr-lb}. Then there exists a normalized vector  $\Psi\in \fock$ such that 
\[
\langle \Psi,  (\Hbb_0 +\mathcal{E}_1) \Psi\rangle \leq C\hbar\;, \qquad \Psi= \mathds{1} (\Ncal \leq CN^{\frac{1}{3}})\Psi 
\] 
(i.,e., $\Psi$ lives in the Fock space sectors with particle number less or equal to $N^{\frac{1}{3}}$) and
\begin{align}  \label{eq:energyprop}
\langle \psi_\textnormal{gs}, \Hcal_\textnormal{corr} \psi_\textnormal{gs}\rangle \geq \langle \Psi, \Hcal_\textnormal{corr} \Psi \rangle - CN^{-1}\;.
\end{align} 
Furthermore $(\Ncal^\textnormal{p} - \Ncal^\textnormal{h})\Psi = 0$.
\end{lem}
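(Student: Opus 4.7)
The plan is to apply IMS localization in the particle number operator $\Ncal$, following \cite{LS01,LNSS15}. First, the variational principle applied with the trial vector $\Omega \in \fock$ (the fermionic vacuum, so that $R\Omega$ is the plane-wave Slater determinant and $\langle\Omega,R^*\Hcal_N R\,\Omega\rangle = E_N^\textnormal{HF}$) yields $E_\textnormal{gs} \leq \langle \Omega, \Hcal_\textnormal{corr}\Omega\rangle = 0$. Combined with \cref{cor:kineticenergy} this gives $\langle \psi_\textnormal{gs},(\Hbb_0+\Ecal_1)\psi_\textnormal{gs}\rangle \leq C\hbar$, and \cref{lem:N-H0} then yields $\langle \psi_\textnormal{gs},\Ncal\psi_\textnormal{gs}\rangle \leq CN^{1/3}$. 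So $\psi_\textnormal{gs}$ obeys the announced bounds on average, but we still must convert the mean estimate into a sharp cutoff.

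Fix a smooth $\chi:\mathbb{R}\to[0,1]$ with $\chi = 1$ on $[0,1]$ and $\chi = 0$ on $[2,\infty)$, set $\tilde{\chi} := \sqrt{1-\chi^2}$, and define $f_M := \chi(\Ncal/M)$, $g_M := \tilde{\chi}(\Ncal/M)$, with $M := C_\star N^{1/3}$ for a sufficiently large constant $C_\star$ to be chosen. Both $f_M, g_M$ are functions of $\Ncal$, hence commute with $\Ncal^\textnormal{p} - \Ncal^\textnormal{h}$, so the constrained subspace is preserved. The IMS identity reads
\[
\Hcal_\textnormal{corr} = f_M \Hcal_\textnormal{corr} f_M + g_M \Hcal_\textnormal{corr} g_M + \mathcal{L}, \qquad \mathcal{L} := \tfrac{1}{2}\bigl([f_M,[f_M,\Hcal_\textnormal{corr}]] + [g_M,[g_M,\Hcal_\textnormal{corr}]]\bigr).
\]
The terms $\Hbb_0$, $\Ecal_1$, $\Xbb$ and the $\Ncal$-preserving pieces of $Q_\B$ all commute with $\Ncal$ and drop out of $\mathcal{L}$; only $b^*(k)b^*(-k) + \hc$ in $Q_\B$ (shifting $\Ncal$ by $\pm 4$) and $\Dfrak(\pm k)^* b(\pm k) + \hc$ in $\Ecal_2$ (shifting by $\pm 2$) contribute. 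For $A$ with $[\Ncal, A] = dA$ one computes $[f_M,[f_M,A]] = A\bigl(f_M(\Ncal+d)-f_M(\Ncal)\bigr)^2$; Taylor's theorem gives a scalar bound $\bigl(f_M(n+d)-f_M(n)\bigr)^2 + \bigl(g_M(n+d)-g_M(n)\bigr)^2 \leq Cd^2/M^2$, supported on the shell $n \in [M, 2M+d]$.

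Now set $\Psi := f_M \psi_\textnormal{gs}/\|f_M\psi_\textnormal{gs}\|$. Markov's inequality with $\langle\Ncal\rangle \leq CN^{1/3}$ gives $\|g_M\psi_\textnormal{gs}\|^2 \leq \langle\Ncal\rangle/M \leq C/C_\star$, hence $\|f_M\psi_\textnormal{gs}\|^2 \geq 1/2$ for $C_\star$ large. Property (ii) follows from $\supp\chi \subset [0,2]$ (after renaming the constant), and (iv) from $[f_M,\Ncal^\textnormal{p}-\Ncal^\textnormal{h}] = 0$; (i) follows from $[\Hbb_0+\Ecal_1,f_M] = 0$ together with $f_M^2 \leq 1$, giving $\langle \Psi,(\Hbb_0+\Ecal_1)\Psi\rangle \leq 2\langle \psi_\textnormal{gs},(\Hbb_0+\Ecal_1)\psi_\textnormal{gs}\rangle \leq C\hbar$. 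For (iii), take the expectation of the IMS identity in $\psi_\textnormal{gs}$: since $g_M\psi_\textnormal{gs}/\|g_M\psi_\textnormal{gs}\|$ also lies in $\ker(\Ncal^\textnormal{p}-\Ncal^\textnormal{h})$, its $\Hcal_\textnormal{corr}$-expectation is $\geq E_\textnormal{gs}$; rearranging leaves
\[
\langle \Psi, \Hcal_\textnormal{corr}\Psi\rangle - E_\textnormal{gs} \leq \frac{|\langle \psi_\textnormal{gs},\mathcal{L}\psi_\textnormal{gs}\rangle|}{\|f_M\psi_\textnormal{gs}\|^2},
\]
reducing everything to the estimate $|\langle \psi_\textnormal{gs},\mathcal{L}\psi_\textnormal{gs}\rangle| \leq CN^{-1}$.

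The main obstacle is this final bound. The scalar prefactor in $\mathcal{L}$ provides only $M^{-2}$ smallness, while a naive operator bound on $Q_\B + \Ecal_2$ is unavailable as these are unbounded. The gain is that the IMS weight is supported on the thin shell $\Ncal \in [M, 2M+4]$ where, by \cref{lem:N-H0}, the kinetic energy is bounded below by $\tfrac{\hbar^2}{2}M$; combining this support property with the Cauchy--Schwarz inequality, the pair-operator kinetic bounds of \cref{lem:kinetic,lem:bD}, and the a priori estimate $\langle \psi_\textnormal{gs},\Hbb_0\psi_\textnormal{gs}\rangle \leq C\hbar$ yields the desired $O(N^{-1})$ bound once $C_\star$ is taken sufficiently large.
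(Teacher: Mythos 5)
Your overall strategy is the same as the paper's: apply IMS localization in $\Ncal$ (the paper invokes \cite[Proposition 6.1]{LNSS15}, restated as \cref{lem:ims}), take $\Psi = f_L\psi_\textnormal{gs}/\|f_L\psi_\textnormal{gs}\|$ with $L\sim N^{1/3}$, use Markov to get $\|f_L\psi_\textnormal{gs}\|^2\geq 1/2$, and employ the variational principle on the $g_L$ piece to isolate the localization error. The preliminary a priori bounds, the check of property (i) via $f_L^2\leq 1$ and commutativity with $\Hbb_0+\Ecal_1$, and the verification of (ii) and (iv) are all correct.

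The gap lies in the last paragraph, where you only sketch how to prove $|\langle\psi_\textnormal{gs},\mathcal{L}\psi_\textnormal{gs}\rangle|\leq CN^{-1}$, and the sketch points in the wrong direction. You propose to combine the support of the IMS weight on the shell $\Ncal\in[M,2M+4]$, the lower bound $\Hbb_0\geq\tfrac{\hbar^2}{2}M$ there, and the pair-operator kinetic bounds of \cref{lem:kinetic,lem:bD}. This route forces you to control four-annihilator expressions such as $b(-k)b(k)\psi_\textnormal{gs}$ directly, and carrying the Cauchy--Schwarz through with only $\Hbb_0^{1/2}$ on one side does not give the full factor $M^{-2}\cdot\hbar\sim N^{-1}$: one of the two factors of $M^{-1}$ is lost. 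The step that actually closes the estimate is different and is the content of the abstract lemma the paper cites: use the positivity $\Acal:=\Hcal_\textnormal{corr}+\hbar\geq 0$ (from \cref{cor:kineticenergy}, noting also that $[f_L,[f_L,\Hcal_\textnormal{corr}]]=[f_L,[f_L,\Acal]]$) and the Cauchy--Schwarz inequality with $\Acal^{1/2}$ applied to each off-diagonal block $P_i\Acal P_j$, $|i-j|\leq 4$. This reduces the double-commutator error to $\tfrac{C}{L^2}\Acal_\textnormal{diag}$, and then the observation $\Acal_\textnormal{diag}\leq C(\Hbb_0+\Ecal_1+\hbar)$ (valid because $\Ncal$ commutes with $\Hbb_0$ and $\Ecal_1$, so the operator inequality of \cref{cor:kineticenergy} passes to the diagonal part) gives $\tfrac{C}{L^2}\langle\Acal_\textnormal{diag}\rangle\leq\tfrac{C\hbar}{L^2}\sim N^{-1}$, without ever needing a lower bound on the kinetic energy over the shell, a direct bound on $b(-k)b(k)$, or a "sufficiently large $C_\star$" for this part of the estimate. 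Your proof would be correct if you replaced the final paragraph by this argument (or simply invoked \cref{lem:ims} together with $\Acal_\textnormal{diag}\leq C(\Hbb_0+\Ecal_1+\hbar)$).
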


As a first ingredient for the proof of \cref{lem:loc}, we have a--priori estimates based on \cref{sec:kin_est}. Note that for $\psi= \Omega$ we have $\langle \Omega, \Hcorr \Omega\rangle =0 \leq C\hbar$, and thus for any ground state $\psi_\textnormal{gs}$ of $\Hcorr$ we have $\langle \psi_\textnormal{gs},\Hcorr \psi_\textnormal{gs} \rangle \leq 0$ by the variational principle. Thus we can apply the following lemma to $\psi_\textnormal{gs}$. 

\begin{lem}[A--Priori Estimates] \label{lem:gs}
Let $\psi \in \fock$ such that $\langle \psi, \Hcal_\textnormal{corr} \psi\rangle \leq C\hbar$. Then we have 
\[
\langle \psi,  (\Hbb_0 +\mathcal{E}_1) \psi\rangle \leq C\hbar\;, 
\qquad  \langle \psi, \Ncal \psi \rangle \leq CN^{\frac{1}{3}}\;.
\]
\end{lem}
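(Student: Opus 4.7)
The plan is to deduce both bounds directly from the a-priori estimate on $\Hcal_\textnormal{corr}$ (\cref{cor:kineticenergy}) together with the kinetic control on the particle number (\cref{lem:N-H0}). Note first that the assumption $\hat{V}(k) \geq 0$ guarantees $\mathcal{E}_1 \geq 0$ as a sum of nonnegative operators $\Dfrak(\pm k)^* \Dfrak(\pm k)$. Thus the lower bound in \cref{cor:kineticenergy} immediately yields the operator inequality
\[
\Hbb_0 + \mathcal{E}_1 \leq 2\Hcal_\textnormal{corr} + 2\hbar\;.
\]
Taking the expectation in $\psi$ and invoking the hypothesis $\langle \psi, \Hcal_\textnormal{corr}\psi\rangle \leq C\hbar$ gives the first claim: $\langle \psi, (\Hbb_0 + \mathcal{E}_1)\psi\rangle \leq (2C+2)\hbar$.

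For the second bound, I would drop the non-negative term $\mathcal{E}_1$ to extract $\langle \psi,\Hbb_0\psi\rangle \leq C'\hbar$, and then apply \cref{lem:N-H0} (which is valid since the overall setting of \cref{eq:Hcorr-lb} and the particle-hole transformation restrict us to vectors satisfying $(\Ncal^\textnormal{p}-\Ncal^\textnormal{h})\psi=0$):
\[
\langle \psi,\Ncal \psi\rangle \;\leq\; 2N^{2/3}\langle \psi, \Hbb_0\psi\rangle \;\leq\; 2C'N^{2/3}\hbar \;=\; 2C'N^{1/3}\;,
\]
where in the last step we use $\hbar = N^{-1/3}$.

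There is no real obstacle here; the content of the lemma is simply the combination of the two previously established estimates, and the only thing to be careful about is to record the use of $\mathcal{E}_1 \geq 0$ (which needs the sign assumption on $\hat{V}$ from the main theorem) so that one may drop $\mathcal{E}_1$ before applying \cref{lem:N-H0}. The condition $(\Ncal^\textnormal{p}-\Ncal^\textnormal{h})\psi = 0$ needed for \cref{lem:N-H0} should be stated explicitly or inherited from the standing assumption on the minimization domain in \cref{eq:Hcorr-lb}.
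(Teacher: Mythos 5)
Your proof is correct and follows essentially the same route as the paper: apply the lower bound of \cref{cor:kineticenergy} to control $\Hbb_0+\mathcal{E}_1$, then drop $\mathcal{E}_1\geq 0$ and invoke \cref{lem:N-H0}. You are also right to flag that \cref{lem:N-H0} requires $(\Ncal^\textnormal{p}-\Ncal^\textnormal{h})\psi=0$, a hypothesis which the paper's statement of \cref{lem:gs} leaves implicit but which is indeed inherited from the minimization domain in \cref{eq:Hcorr-lb}.
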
 

\begin{proof} From the lower bound in \cref{cor:kineticenergy} and the assumption of \cref{lem:gs} we have 
\[
\frac{1}{2}\langle \psi,  (\Hbb_0 +\mathcal{E}_1)  \psi \rangle - \hbar \leq \langle \psi, \Hcal_\textnormal{corr} \psi \rangle \leq C\hbar\;. 
\]
This implies $\langle \psi ,  (\Hbb_0 +\mathcal{E}_1)  \psi \rangle \leq C \hbar$. The bound on $\Ncal$ follows from \cref{lem:N-H0}.
\end{proof}
% 
% \Cref{lem:gs} applies to the ground state $\psi_\textnormal{gs}$ of $\Hcal_\textnormal{corr}$, thanks to  the variational principle
% \[
% \langle \psi_\textnormal{gs}, \Hcal_\textnormal{corr} \psi_\textnormal{gs} \rangle = E_N -  E_N^\textnormal{HF,pw} \leq 0\;. 
% \]

Next, we localize the particle number using a suitable localization formula on Fock space. This technique goes back to \cite[Theorem A.1]{LS01}. The following general statement is taken from \cite[Proposition 6.1]{LNSS15}. 

\begin{lem}[Localization on Fock Space]
\label{lem:ims}
Let $\Acal$ be a non--negative operator on $\mathcal{F}$ such that $P_i D(\Acal)\subset D(\Acal)$ and $P_i \Acal P_j=0$ if $|i-j|>\ell$, where $P_i=\mathds{1}(\Ncal=i)$.  Let $f,g: [0,\infty) \to [0,1]$ be smooth functions such that $f^2 + g^2=1$, $f(x) = 1$ for $ x \leq 1/2$ and $f(x)=0$ for $ x \geq 1$. For any $L \geq 1$ define the operators
\[
f_L:= f(\Ncal/L)\;, \qquad g_L:= g(\Ncal/L)\;. 
\]
Then 
\[
- \frac{C_f \ell^3}{L^2}  \Acal_\textnormal{diag}  \leq \Acal -  f_L \Acal f_L- g_L \Acal g_L  \leq \frac{C_f \ell^3}{L^2} \Acal_\textnormal{diag}
\]
where $\Acal_\textnormal{diag}=\sum_{i=0}^{\infty} P_i \Acal P_i$ and $C_f= 2 (\norm{f'}_{L^\infty}^2 + \norm{g'}_{L^\infty}^2)$.  
\end{lem}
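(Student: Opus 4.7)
My plan is to reduce the two-sided bound to an identity involving double commutators, and then bound those double commutators using the band-structure of $\Acal$ in the eigenbasis of $\Ncal$ together with positivity of $\Acal$.

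\textbf{Step 1: IMS-type identity.} Since $f^2+g^2 \equiv 1$, functional calculus in $\Ncal$ gives $f_L^2 + g_L^2 = \mathbb{I}$. Therefore
\[
\Acal = \tfrac{1}{2}(f_L^2 + g_L^2)\Acal + \tfrac{1}{2}\Acal(f_L^2 + g_L^2),
\]
and a direct computation using $[f_L,[f_L,\Acal]] = f_L^2 \Acal - 2 f_L \Acal f_L + \Acal f_L^2$ (and the analogous formula for $g_L$) yields the identity
\[
\Acal - f_L \Acal f_L - g_L \Acal g_L = \tfrac{1}{2}[f_L,[f_L,\Acal]] + \tfrac{1}{2}[g_L,[g_L,\Acal]].
\]
It therefore suffices to prove $\pm[f_L,[f_L,\Acal]] \leq C \|f'\|_{L^\infty}^2 \ell^3 L^{-2}\, \Acal_\textnormal{diag}$ and the analogue for $g$.

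\textbf{Step 2: Matrix elements in the $\Ncal$-eigenbasis.} Writing $\Acal = \sum_{i,j} P_i \Acal P_j$ and using $f_L P_i = f(i/L) P_i$, one finds
\[
[f_L,[f_L,\Acal]] = \sum_{i,j\in\mathbb{N}_0} \bigl(f(i/L) - f(j/L)\bigr)^2\, P_i \Acal P_j.
\]
By the assumption $P_i \Acal P_j = 0$ when $|i-j|>\ell$, the sum is restricted to $|i-j|\leq \ell$, and the mean value theorem yields
\[
\bigl(f(i/L) - f(j/L)\bigr)^2 \leq \|f'\|_{L^\infty}^2\, \frac{(i-j)^2}{L^2} \qquad \text{for } |i-j|\leq \ell.
\]

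\textbf{Step 3: Operator bound via positivity.} This is the step where positivity of $\Acal$ is used. For any $\psi\in D(\Acal)$, write $a_i := \|\Acal^{1/2} P_i \psi\|$; then, using $P_i \Acal P_j = (\Acal^{1/2} P_i)^* (\Acal^{1/2} P_j)$ and the Cauchy--Schwarz inequality,
\[
\bigl|\langle \psi,[f_L,[f_L,\Acal]]\psi\rangle\bigr|
\leq \frac{\|f'\|_{L^\infty}^2}{L^2} \sum_{|i-j|\leq \ell} (i-j)^2\, a_i a_j.
\]
Applying the elementary inequality $a_i a_j \leq \tfrac12(a_i^2 + a_j^2)$ and summing the inner index gives
\[
\sum_{|i-j|\leq \ell} (i-j)^2 a_i a_j \leq \sum_{i} a_i^2 \sum_{m=-\ell}^{\ell} m^2 \leq C\, \ell^3 \sum_i a_i^2 = C\,\ell^3 \langle \psi, \Acal_\textnormal{diag}\psi\rangle.
\]
The same argument applies to the $g_L$ double commutator with $\|g'\|_{L^\infty}$. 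Combining with the identity from Step~1, and tracking the constant $C_f = 2(\|f'\|_{L^\infty}^2 + \|g'\|_{L^\infty}^2)$, yields the stated two-sided inequality.

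\textbf{Expected obstacle.} The only slightly subtle point is the passage from the quadratic form bound in Step~3 to a genuine operator inequality; this is where positivity of $\Acal$ is essential, since it allows the factorization $P_i \Acal P_j = (\Acal^{1/2} P_i)^* \Acal^{1/2} P_j$ and hence the Cauchy--Schwarz step that produces $a_i a_j$ controlled by $\Acal_\textnormal{diag}$ rather than by $\Acal$ itself. The combinatorial factor $\ell^3$ arises from summing $m^2$ over $|m|\leq \ell$ (one power of $\ell$ from the range of $m$, two powers from $m^2$), which explains the exact exponent in the conclusion.
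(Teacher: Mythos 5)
Your proposal is correct and follows essentially the same route as the paper, which refers to \cite[Proposition 6.1]{LNSS15} and whose proof rests on exactly your Step 1 identity $\Acal - f_L\Acal f_L - g_L\Acal g_L = \tfrac12[f_L,[f_L,\Acal]]+\tfrac12[g_L,[g_L,\Acal]]$, followed by the band-structure expansion in the $\Ncal$-eigenbasis and the Cauchy--Schwarz bound via $\Acal^{1/2}$ that produces $\Acal_\textnormal{diag}$. The only minor remark is that positivity of $\Acal$ is used in the Cauchy--Schwarz factorization itself, while the passage from the quadratic-form bound to the operator inequality is automatic since operator inequalities between self-adjoint operators are understood in the form sense.
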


The proof of \cref{lem:ims} in \cite{LNSS15} is based on the double commutator identity 
\[
\Acal -  f_L \Acal f_L- g_L \Acal g_L = \frac{1}{2} [ f_L, [ f_L, \Acal]] + \frac{1}{2} [ g_L, [ g_L, \Acal]] \quad\text{when}\quad f_L^2+g_L^2=1\;. 
\]
This is an analogue of the standard IMS localization formula in position space \cite{Sim83}
\[
\Delta -  f \Delta  f- g \Delta g =  \lvert \nabla f\rvert^2 +   \lvert\nabla g\rvert^2\;, \quad f^2 + g^2 =1 \;.
\]

Now we are ready to prove the main result of the section.
\begin{proof}[Proof of \cref{lem:loc}] We will apply \cref{lem:ims} for $\Acal = \Hcal_\textnormal{corr} + \hbar$. We can take $\ell=4$ as the Hamiltonian $ \Hcal_\textnormal{corr}$ changes particle number by at most $\pm4$.  By \cref{cor:kineticenergy}, we have
\[
0 \leq \Acal \leq C (\Hbb_0 + \mathcal{E}_1 + \hbar )
\]
which also implies that 
\[
\Acal_\textnormal{diag} \leq C (\Hbb_0 + \mathcal{E}_1 + \hbar )
\]
because $\Ncal$ commutes with $\Hbb_0$ and $ \mathcal{E}_1$. Thus by \cref{lem:gs} we get (with the constant $C_0 > 0$ fixed for reference in the further proof)
\begin{align} \label{eq:gs-property}
\langle \psi_\textnormal{gs}, \Acal_\textnormal{diag} \psi_\textnormal{gs}\rangle  \leq C \hbar\;, \quad \langle \psi_\textnormal{gs}, \Ncal \psi_\textnormal{gs}\rangle \leq C_0 N^{\frac{1}{3}}\;.
\end{align}
Now applying \cref{lem:ims}, for all $L\geq 1$ we can bound 
\[
\langle  \psi_\textnormal{gs}, \Acal   \psi_\textnormal{gs} \rangle \geq \langle f_L  \psi_\textnormal{gs}, \Acal f_L  \psi_\textnormal{gs} \rangle + \langle g_L  \psi_\textnormal{gs}, \Acal g_L  \psi_\textnormal{gs} \rangle - C\hbar L^{-2} \;.
\]
Combining with the variational principle (and using that $\psi_\textnormal{gs}$ is a ground state for $\Acal$)
\[
\langle g_L  \psi_\textnormal{gs}, \Acal g_L  \psi_\textnormal{gs} \rangle \geq \norm{g_L \psi_\textnormal{gs} }^2 \langle  \psi_\textnormal{gs}, \Acal  \psi_\textnormal{gs}\rangle 
\] 
and then together with $f_L^2+g_L^2=1$ we obtain 
\begin{align} \label{eq:loc-f2-faf}
\norm{f_L \psi_\textnormal{gs}}^2 \langle  \psi_\textnormal{gs}, \Acal   \psi_\textnormal{gs} \rangle \geq \langle f_L  \psi_\textnormal{gs}, \Acal f_L  \psi_\textnormal{gs} \rangle - C\hbar L^{-2}\;.
\end{align}
Choosing $L:=4 C_0 N^{\frac{1}{3}}$, with $C_0$ the constant fixed in \cref{eq:gs-property}, we get 
\[
\norm{f_L \psi_\textnormal{gs} }^2 = 1 - \norm{g_L \psi_\textnormal{gs} }^2 \geq 1 - \frac{2\langle \psi_\textnormal{gs}, \Ncal \psi_\textnormal{gs}\rangle }{L} \geq  \frac{1}{2}\;. 
\]
Consequently,  \cref{eq:loc-f2-faf} implies 
\[
 \langle  \psi_\textnormal{gs}, \Acal   \psi_\textnormal{gs} \rangle \geq \langle \Psi, \Acal \Psi \rangle - CN^{-1} 
\]
with 
\[
\Psi:= \frac{f_L  \psi_\textnormal{gs}}{ \norm{ f_L \psi_\textnormal{gs}} }\;. 
\]
Since $\Acal=\Hcal_\textnormal{corr}+\hbar$ and $\Psi$ and $\psi_\textnormal{gs}$ are normalized, the previous inequality is equivalent to 
\begin{align*}
\langle  \psi_\textnormal{gs}, \Hcal_\textnormal{corr} \psi_\textnormal{gs} \rangle \geq \langle \Psi, \Hcal_\textnormal{corr} \Psi\rangle -C N^{-1}\;. 
\end{align*}
Finally, from the definition of $\Psi$ and $0\leq f_L \leq \mathds{1} (\Ncal \leq L)$ we get $\Psi= \mathds{1} (\Ncal \leq L) \Psi$. Since $\Ncal$ and $\Ncal^\textnormal{p} - \Ncal^\textnormal{h}$ commute, it follows also that $(\Ncal^\textnormal{p} - \Ncal^\textnormal{h})\Psi = 0$.  
\end{proof} 

\section{Reduction to Pair Excitations on Patches} \label{sec:patches}

Our bosonization method is based on decomposing the pair excitations $b^*(k)$ into smaller pieces localized in disjoint patches on the Fermi surface. This procedure has been introduced in the context of the renormalization group \cite{BG90,HM93,HKMS94,Hal94,CF94}. In this section we will define the patches precisely. Moreover, we prove that the correlation Hamiltonian $H_\textnormal{corr}$ can be properly represented by the pair excitations in patches up to an explicitly estimated error. 

First, we  will decompose the Fermi surface into patches; a partition of the Fermi surface is sketched in Figure~\ref{fig:blub}. Then we thicken the patches on the Fermi surface by allowing a relative momentum of order $\Ocal(1)$. With a parameter $\delta \in (0,1/6)$ that will eventually be optimized, the number $M$ of patches will be chosen in the range 
\begin{equation}\label{eq:defM2}
N^{2\delta} \ll M \ll N^{\frac{2}{3} -2 \delta}\;.
\end{equation}
(We will eventually take $\delta=1/24$ and $M=N^{4\delta}$.) The parameter $\delta$ is the same as will appear in the gapped number operator \cref{eq:gappedN2} and in the equator cut--off \cref{eq:cutoff2}. The details of the construction are given in the following paragraphs, leading to the patch definition \cref{eq:patch}. 
\begin{figure}\centering
\begin{tikzpicture}[scale=0.7]
% \clip(-5,-2) rectangle (5,5);
\def\RadiusSphere{4} % sphere radius
\def\angEl{20} % elevation angle
\def\angAz{-20} % azimuth angle

\filldraw[ball color = white] (0,0) circle (\RadiusSphere);

\DrawLatitudeCircle[\RadiusSphere]{75+2}
\foreach \t in {0,-50,...,-250} {
  \DrawLatitudeArc{75}{(\t+50-4)*sin(62)}{\t*sin(62)}
 \DrawLongitudeArc{\t*sin(62)}{50+2}{75}
 \DrawLongitudeArc{(\t-4)*sin(62)}{50+2}{75}
  \DrawLatitudeArc{50+2}{(\t+50-4)*sin(62)}{\t*sin(62)}
 }
 \foreach \t in {0,-50,...,-300} {
   \DrawLatitudeArc{50}{(\t+50-4)*sin(37)}{\t*sin(37)}
 \DrawLongitudeArc{\t*sin(37)}{25+2}{50}
  \DrawLongitudeArc{(\t-4)*sin(37)}{25+2}{50}
   \DrawLatitudeArc{25+2}{(\t+50-4)*sin(37)}{\t*sin(37)}
 }
 \DrawLatitudeArc{50}{(-300-4)*sin(37)}{-330*sin(37)}
 \foreach \t in {0,-50,...,-450} {
    \DrawLatitudeArc{25}{(\t+50-4)*sin(23)}{\t*sin(23)}
 \DrawLongitudeArc{\t*sin(23)}{00+2}{25}
 \DrawLongitudeArc{(\t-4)*sin(23)}{00+2}{25}
 \DrawLatitudeArc{00+2}{(\t+50-4)*sin(23)}{\t*sin(23)}
 }
     \DrawLatitudeArc{25}{(-450-4)*sin(23)}{-500*sin(23)}

%center of polar cap
\fill[black] (0,3.75) circle (.075cm);

% first row from the pole
\fill[black] (1.72,3.08) circle (.075cm);
\fill[black] (.76,2.73) circle (.075cm);
\fill[black] (-.66,2.73) circle (.075cm);
\fill[black] (-1.73,3.04) circle (.075cm);
% \fill[black] (-1.81,3.54) circle (.075cm);

% second row from the pole
\fill[black] (2.25,1.5) circle (.075cm);
\fill[black] (.8,1.2) circle (.075cm);
\fill[black] (-.85,1.22) circle (.075cm);
\fill[black] (-2.27,1.5) circle (.075cm);
\fill[black] (-3.09,1.97) circle (.075cm);
\fill[black] (3.09,1.97) circle (.075cm);

% third row from the pole
\fill[black] (2.57,-.15) circle (.075cm);
\fill[black] (1.43,-.37) circle (.075cm);
\fill[black] (.155,-.48) circle (.075cm);
\fill[black] (-1.17,-.41) circle (.075cm);
\fill[black] (-2.35,-.2) circle (.075cm);
\fill[black] (-3.26,0.1) circle (.075cm);
\fill[black] (-3.79,.55) circle (.075cm);
\fill[black] (3.37,.18) circle (.075cm);
\fill[black] (3.85,.57) circle (.075cm);

%%%%%%%%%%%%%%%%%%%%%%%%%%
\end{tikzpicture}
 \caption{Patch decomposition of the northern half of the unit sphere: a spherical cap is placed at the pole; then collars along the latitudes are introduced and split into patches, separated by corridors. The vectors $\hat{\omega}_\alpha$ are picked as centers of the patches, marked in black. Finally patches are reflected by the origin to the southern half sphere.}\label{fig:blub}
\end{figure}
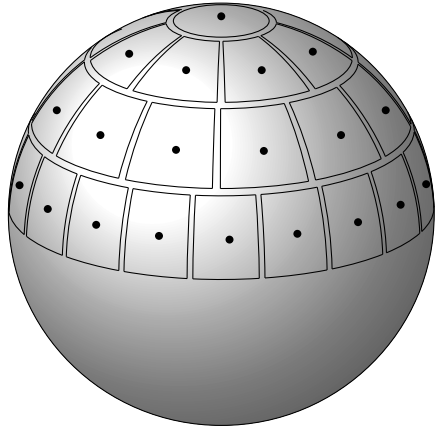

\paragraph{Patches on the unit sphere.} We start our construction on the unit sphere (following, e.\,g., \cite{Leo06}) and later scale up to the Fermi sphere of radius $k_\F = \kappaf N^{\frac{1}{3}}$. We use standard spherical coordinates: for $\hat\omega \in \Sbb^2$, denote by $\theta$ the inclination  (measured between $\hat\omega$ and $e_3 = (0,0,1)$) and by $\varphi$ the azimuth (measured between $e_1 = (1 , 0 , 0)$ and the projection of $\hat{\omega}$ onto the plane perpendicular to $e_3$). We write $\hat{\omega}(\theta,\varphi)$ to specify a vector on the unit sphere by its inclination and azimuth.

We place a spherical cap centered at $e_3$ with opening angle $\Delta\theta_0 := D /\sqrt{M}$, $D > 0$ chosen such that the area of the cap is $4\pi / M$. Then we decompose the remaining part of the northern half sphere into $\sqrt{M}/2$ (rounded to the next integer) collars; the $i$--th collar consists of all $\hat{\omega}(\theta,\varphi)$ with $\theta \in [\theta_i - \Delta\theta_i,\theta_i+\Delta\theta_i)$ and arbitrary $\varphi$. The inclination of every collar extends over $\Delta\theta_i \sim 1/{\sqrt{M}}$; the proportionality constant is adjusted so that the number of collars is integer.

Observe that the circle $\left\{ \hat\omega(\theta_i,\varphi): \varphi \in [0,2\pi) \right\}$ 
has circumference proportional to $\sin(\theta_i)$; therefore we split the $i$--th collar into $\sqrt{M}\sin(\theta_i)$ (rounded to the next integer) patches. This implies that the $j$--th patch in the $i$--th collar covers an azimuth $\varphi \in [\varphi_{i,j}-\Delta\varphi_{i,j},\varphi_{i,j} + \Delta\varphi_{i,j})$, where
\begin{equation}\Delta\varphi_{i,j} \sim \frac{1}{{\sin(\theta_i)\sqrt{M}}}\;.
\end{equation}
We fix the proportionality constants demanding that all patches have area $4\pi/M$.

Since $\hat{V}$ is compactly supported, we can set 
\[
R_{\hat{V}} := \diam \supp \hat{V}\;.
\]
Next we introduce \emph{corridors} between the patches by redefining 
\begin{equation}    \label{eq:tildeD}
\Delta\widetilde{\theta}_i := \Delta\theta_i - \tilde{D} R_{\hat{V}} N^{-\frac{1}{3}}, \quad \Delta\widetilde{\varphi}_{i,j} := \Delta\varphi_{i,j} - \tilde{D} R_{\hat{V}} N^{-\frac{1}{3}}/\sin(\theta_i)\;.
\end{equation}
 The constant $\tilde{D}$ is chosen such that when scaled up to the Fermi sphere adjacent patches are separated by corridors of width strictly larger than $2 R_{\hat{V}}$.

We then define $p_1$ as the spherical cap with opening angle $\Delta\widetilde{\theta}_0$ centered at $e_3$ and the other $\frac{M}{2} - 1$ patches as
\begin{equation} \label{eq:pij}
p_{i,j} := \big\{ \hat{\omega}(\theta,\varphi): \theta \in [\theta_i-\Delta\widetilde{\theta}_i,\theta_i+\Delta\widetilde{\theta}_i) \text{ and } \varphi \in [\varphi_{i,j}-\Delta\widetilde{\varphi}_{i,j},\varphi_{i,j} + \Delta\widetilde{\varphi}_{i,j})\big\}\;.
\end{equation}
 Patches on the southern half sphere are defined through reflection by the origin. Finally we enumerate the patches by  $\alpha \in \{1, \ldots , M\}$ and obtain the collection $\{p_\alpha\}_{\alpha=1}^M$ from \cref{eq:pij}. This completes the construction of patches $\{p_\alpha\}_{\alpha=1}^M$ for the  unit sphere.

\paragraph{Patches on the Fermi sphere.} Next, 
using the Fermi momentum
\[
k_\F = \kappaf N^{\frac{1}{3}}
\]
we scale the patches $\{p_\alpha\}_{\alpha=1}^M$ from the unit sphere up to the Fermi sphere by defining 
\[
P_\alpha := k_\F p_\alpha \qquad \forall \alpha\in\{1,2,\ldots,M\} \;. 
\]
By construction, we have the following properties for $\{P_\alpha\}_{\alpha=1}^M$: 
\begin{enumerate}

\item Reflection symmetry: $-P_\alpha = P_{\alpha+\frac{M}{2}}$ for all $\alpha = 1,\ldots ,\frac{M}{2}$.

\item The area of every patch is $
4\pi k_\F^2{M^{-1}}  \Big( 1 + \Ocal\big(M^{\frac{1}{2}}N^{-\frac{1}{3}}\big)\Big).
$ Moreover, the diameter of each patch is bounded by $
\diam(P_\alpha) \leq C N^{\frac{1}{3}}M^{-\frac{1}{2}}.
$
In words: patches do not degenerate into elongated thin strips. 

\item The patches are separated by corridors of width at least $2 R_{\hat{V}}$. The area of the union of all corridors is bounded by $CN^{\frac{1}{3}}M^{\frac{1}{2}}$.
\end{enumerate}

\paragraph{Extended patches around the Fermi sphere.} 
Next we extend the patch decomposition radially, over the shell around the Fermi surface that is affected by the interaction with momenta $k \in \supp\hat{V}$.  For any $\alpha\in \{1,2,\ldots,M\}$ we introduce the extended patch
\begin{equation} \label{eq:patch}
B_\alpha := \left\{ q \in \Zbb^3 :  k_\F - R_{\hat{V}} \leq \lvert q \rvert \leq k_\F + R_{\hat{V}}  \right\} \bigcap \Bigg( \bigcup_{r \in (0,\infty)} r P_\alpha \Bigg)\;. 
\end{equation}
Thus we have the following properties for the patch decomposition $\{B_\alpha\}_{\alpha=1}^M$: 
\begin{enumerate}

\item Reflection property: $-B_\alpha = B_{\alpha+\frac{M}{2}}$ for all $\alpha = 1,\ldots ,\frac{M}{2}$.

\item The diameter of each patch is bounded by $C N^{\frac{1}{3}}/\sqrt{M}$. 

\item The patches $\{B_\alpha\}_{\alpha=1}^M$ are pairwise disjoint and separated by corridors of width $2 R_{\hat{V}}$. (If the separation of patches on the Fermi surface is $S$, then below the Fermi surface, at distance $k_\F - R_{\hat{V}}$ from the origin, their separation is $S-\Ocal(N^{-\frac{1}{3}})$. Since by construction \cref{eq:tildeD} $S$ is strictly larger than $2R_{\hat{V}}$, also $S-\Ocal(N^{-\frac{1}{3}}) > 2 R_{\hat{V}}$ for large enough $N$.)
\end{enumerate}

\paragraph{Removing patches near the equator.} 
Now we assign a unit vector $\hat{\omega}_\alpha$ to every patch on the northern half such that $k_\F\hat{\omega}_\alpha\in P_\alpha \subset B_\alpha$. Reflecting the construction to the southern half sphere, the vectors $\hat{\omega}_{\alpha}$ inherit the reflection symmetry
\[
\hat \omega_{\alpha + M/2} = -\hat\omega_\alpha, \quad \forall \alpha = 1, \ldots , M/2\;.
\]
For any momentum $k\in \mathbb{Z}^3\setminus \{0\}$, we are only interested in a subset of the constructed patches, as labeled by the index set (the parameter $\delta \in (0,1/6)$ is the same as in \cref{eq:defM2})
\begin{equation} \label{eq:cutoff2}
\Ik := \{\alpha\in \{1,2, \ldots, M\} :  \lvert k \cdot \hat{\omega}_\alpha\rvert \geq N^{-\delta} \}\;.
\end{equation}
Pair excitations near the equator $k \cdot \hat{\omega}_\alpha \approx 0$ are almost tangential to the Fermi surface and cannot be treated with the bosonization technique. Fortunately their contribution to the energy turns out to be small.

  For any $k\in \mathbb{Z}^3\setminus \{0\}$ we define the operators without the corridors and the excitations near the equator as
\begin{align} 
b^\Rcal(k) &:= \sum_{\alpha \in \Ik} \sum_{\substack{p\in \BFc \cap B_\alpha \\ p-k\in \BF \cap B_\alpha}} a_{p-k}a_p \;,   \label{eq:QBNr1} \\
Q_\B^{\Rcal} & := \frac{1}{N} \sum_{k \in \north } \hat{V}(k) \Big[  b^\Rcal(k) ^* b^\Rcal(k) + b^\Rcal(-k)^* b^\Rcal(-k) + b^\Rcal(k)^* b^\Rcal(-k)^* + b^\Rcal(-k) b^\Rcal(k) \Big]\;, \label{eq:QBNr3} \\
 \mathcal{E}_2^{\Rcal} & := \frac{1}{N} \sum_{k \in \north } \hat{V}(k) \Big[  \Dfrak(k)^* b^\Rcal(k)  + \Dfrak(-k)^* b^\Rcal(-k) + \hc \Big]\;.   \label{eq:QBNr2}
\end{align}
The main result of this section is the following lemma, which takes care of estimating the difference to the original $\Hcal_\textnormal{corr}$.
\begin{lem}[Reduction to Pair Excitations on Patches] \label{lem:remove-corridors}
We have
\[
 Q_\B + \mathcal{E}_2 - Q_\B^{\Rcal} - \mathcal{E}_2^{\Rcal}   \geq - C\Big( N^{ - \frac{\delta}{2}} + C N^{ -\frac{1}{6} + \frac{\delta}{2} } M^{\frac{1}{4}}\Big)  \left(\Hbb_0+\mathcal{E}_1+\hbar \right)\;. 
\]
\end{lem}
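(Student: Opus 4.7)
The plan is to decompose $b(k) = b^\Rcal(k) + r(k)$ with the ``bad'' remainder
\[
r(k) := b(k) - b^\Rcal(k) = r_\textnormal{eq}(k) + r_\textnormal{cor}(k),
\]
where, thanks to the corridor separation $> 2R_{\hat V}$, $r_\textnormal{eq}(k)$ collects those pairs $(p,p-k)$ lying entirely within some $B_\alpha$ with $\alpha \notin \Ik$ (equator patches), while $r_\textnormal{cor}(k)$ collects pairs with $p$ or $p-k$ in a corridor between patches. A direct expansion yields
\[
b^*(k)b(k) - b^\Rcal(k)^*b^\Rcal(k) = r(k)^*r(k) + r(k)^*b^\Rcal(k) + b^\Rcal(k)^*r(k),
\]
and analogous identities for the off--diagonal $b^*(k)b^*(-k)$ term and for $\mathcal{E}_2 - \mathcal{E}_2^\Rcal$. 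The diagonal contributions $r(k)^*r(k)$ are non--negative and will be discarded on the positive side of the inequality; the work lies in lower--bounding the cross terms.

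The first and main step is a refined kinetic estimate (the content of \cref{lem:kineticequator}),
\[
r(k)^* r(k) \leq CN \bigl( N^{-\delta} + N^{-\frac{1}{3}+\delta} M^{\frac{1}{2}} \bigr) (\Hbb_0 + \hbar),
\]
proved along the lines of \cref{lem:kinetic} but with the sum $\sum \lVert a_p a_{p-k}\psi\rVert$ restricted to the bad sets. One applies Cauchy--Schwarz with weight $(e(p)+e(p-k))^{-\frac{1}{2}}$ and uses the linearization $e(p)+e(p-k) \simeq 2\hbar\kappaf |k\cdot\hat{\omega}(p)|$ together with lattice counting on the patches: the equator strip covers a fraction $\sim N^{-\delta}$ of the Fermi surface and produces the first summand, while the corridors have fractional area $\sim N^{-\frac{1}{3}}M^{\frac{1}{2}}$ but their intersection with the equator forces a loss of $N^\delta$ in the energy denominator and produces the second summand.

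With this in hand, every cross term is handled by an operator Cauchy--Schwarz with a parameter $\varepsilon > 0$:
\[
\pm \bigl( r(k)^* b^\Rcal(k) + \textnormal{h.c.} \bigr) \leq \varepsilon^{-1} b^\Rcal(k)^* b^\Rcal(k) + \varepsilon\, r(k)^* r(k),
\]
\[
\pm \bigl( r(k)^* b^\Rcal(-k)^* + \textnormal{h.c.} \bigr) \leq \varepsilon^{-1} b^\Rcal(-k) b^\Rcal(-k)^* + \varepsilon\, r(k)^* r(k),
\]
\[
\pm \bigl( \Dfrak(k)^* r(k) + \textnormal{h.c.} \bigr) \leq \varepsilon\, \Dfrak(k)^* \Dfrak(k) + \varepsilon^{-1} r(k)^* r(k).
\]
The operators $b^\Rcal(k)^* b^\Rcal(k)$ and $b^\Rcal(-k) b^\Rcal(-k)^*$ are bounded by $CN(\Hbb_0 + \hbar)$ via \cref{lem:bD} together with the elementary bound $[b^\Rcal(-k), b^\Rcal(-k)^*] \leq CN^{\frac{2}{3}}$ coming from the number of pairs in a patch shell, while $\tfrac{1}{N}\sum_{k\in\north}\hat V(k)[\Dfrak(k)^*\Dfrak(k) + \Dfrak(-k)^*\Dfrak(-k)]$ is exactly $2\mathcal{E}_1$. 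Inserting the refined bound on $r(k)^*r(k)$ and choosing $\varepsilon = \eta^{\frac{1}{2}}$ with $\eta = N^{-\delta}$ or $\eta = N^{-\frac{1}{3}+\delta}M^{\frac{1}{2}}$ yields the two claimed prefactors $N^{-\frac{\delta}{2}}$ and $N^{-\frac{1}{6}+\frac{\delta}{2}}M^{\frac{1}{4}}$ multiplying $\Hbb_0 + \mathcal{E}_1 + \hbar$; the sum over the finitely many $k \in \supp \hat V$ is harmless.

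The main obstacle is the refined kinetic bound on $r(k)^*r(k)$: the equator is precisely the region where the linearized kinetic gap closes, so one must quantitatively balance the smallness of the equator strip against the vanishing denominator. This is exactly why the same cut--off parameter $\delta$ in \cref{eq:cutoff2} must also govern the kinetic estimate, and why the patch condition $N^{2\delta} \ll M \ll N^{\frac{2}{3}-2\delta}$ is used (to keep patches non--degenerate and the corridor area small enough for the Cauchy--Schwarz optimization to close). Once this quantitative kinetic control is established, the remainder of the proof is routine bookkeeping of operator Cauchy--Schwarz cross terms.
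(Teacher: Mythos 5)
Your proposal follows the paper's argument closely: isolate the remainder $\rfrak^\Rcal(k)=b(k)-b^\Rcal(k)$, establish the refined kinetic bound of \cref{lem:kineticequator} by splitting into the near--equator region (where the kinetic gap closes) and the remaining corridors (where the gap $\gtrsim N^{-\frac{1}{3}-\delta}$ together with the corridor count $\lesssim N^{\frac{1}{3}}M^{\frac{1}{2}}$ closes the estimate), and then control all cross terms via Cauchy--Schwarz against $\Hbb_0$ and $\mathcal{E}_1$. This is the same route as the paper; the preliminary decomposition $r=r_\textnormal{eq}+r_\textnormal{cor}$ you state is not actually what you subsequently use — the relevant split is by kinetic energy level (the paper's $Y$ vs.\ $U\setminus Y$), since corridors near the equator must be absorbed into the equator estimate rather than the corridor count, which is exactly what \cref{lem:kineticequator} does.

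One algebraic slip is worth flagging: your first two operator Cauchy--Schwarz inequalities put $\varepsilon^{-1}$ on the large term $b^\Rcal(k)^*b^\Rcal(k)$ and $\varepsilon$ on $r(k)^*r(k)$, so the choice $\varepsilon=\eta^{\frac{1}{2}}$ would produce $\varepsilon^{-1}b^\Rcal(k)^*b^\Rcal(k)/N\sim\eta^{-\frac{1}{2}}\Hbb_0$, a growing coefficient. They should be written as $\pm\bigl(r(k)^*b^\Rcal(k)+\hc\bigr)\leq \varepsilon\,b^\Rcal(k)^*b^\Rcal(k)+\varepsilon^{-1}r(k)^*r(k)$ (matching the parameterization of your third inequality), so that $\varepsilon=\eta^{\frac{1}{2}}$ yields $C\eta^{\frac{1}{2}}(\Hbb_0+\hbar)$ and the claimed prefactors $N^{-\frac{\delta}{2}}$ and $N^{-\frac{1}{6}+\frac{\delta}{2}}M^{\frac{1}{4}}$ emerge correctly.
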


In order to prove \cref{lem:remove-corridors} we will need the following modified version of the kinetic energy estimate in \cref{lem:kinetic}. 
\begin{lem}[Kinetic Bound for Pairs near the Equator]
\label{lem:kineticequator}
Let $\delta \in (0,77/624)$. Then for all $k\in \Zbb^3$ we have
 \begin{equation} \label{eq:kineticequator} \sum_{\substack{p\colon p \in \BFc \cap (\BF +k)\\ e(p)+e(p-k) \leq 4 N^{-\frac{1}{3} -\delta}  }} \norm{a_{p}a_{p-k}\psi} \leq CN^{\frac{1}{2} -\frac{\delta}{2}} \norm{\Hbb_0^{1/2} \psi} \qquad \forall \psi \in \fock\;.\end{equation}
\end{lem}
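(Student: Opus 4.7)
The plan is to adapt the proof of Lemma~3.2 by inserting an energy weight that exploits the stronger pointwise constraint $e(p)+e(p-k)\le 4N^{-\frac13-\delta}$ present on the support of the sum. First, by Cauchy--Schwarz,
\begin{equation*}
\sum_{\substack{p\colon p\in \BFc \cap (\BF+k)\\ e(p)+e(p-k)\le 4N^{-\frac13-\delta}}} \norm{a_p a_{p-k}\psi} \le S(T_0)^{1/2}\Biggl(\sum_{p\in \BFc \cap (\BF+k)}(e(p)+e(p-k))\,\norm{a_p a_{p-k}\psi}^2\Biggr)^{\!1/2},
\end{equation*}
where $T_0:=4N^{-\frac13-\delta}$ and $S(T_0)$ denotes the sum of $(e(p)+e(p-k))^{-1}$ over the same index set. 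The second factor I would bound by $\sqrt{2}\,\norm{\Hbb_0^{1/2}\psi}$ by using the CAR identity $a^*_{p-k}a^*_p a_p a_{p-k}=a^*_p a_p\,a^*_{p-k}a_{p-k}$ (valid since $k\neq 0$) together with the Pauli bound $a^*_q a_q\le 1$. Thus the task reduces to the counting estimate $S(T_0)\le C N^{1-\delta}$.

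Next I would exploit the algebraic identity $e(p)+e(p-k)=\hbar^2(|p|^2-|p-k|^2)=\hbar^2(2p\cdot k-|k|^2)$, which converts the energy bound into a geometric slab condition: the set
\begin{equation*}
A_T:=\big\{p\in\Zbb^3:\,|p|>k_\F,\ |p-k|\le k_\F,\ 2\,p\cdot k - |k|^2 \le T\,N^{2/3}\big\}
\end{equation*}
sits inside the intersection of a slab of thickness $\Ocal(T N^{2/3}/|k|)$ perpendicular to $k$ with a shell across the Fermi sphere of radial thickness $\Ocal(T N^{1/3})$. A lattice-point count would then give $|A_T|\le C T^2 N^{4/3}/|k|$ uniformly in the relevant range of $T$. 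A dyadic decomposition of $S(T_0)$ produces
\begin{equation*}
S(T_0)\le \sum_{j\ge 0}\frac{2^{j+1}}{T_0}\,|A_{2^{-j}T_0}|\le C\,T_0 N^{4/3}\sum_{j\ge 0}2^{-j+1}\le CN^{1-\delta},
\end{equation*}
which combined with the kinetic factor above yields the announced bound $CN^{1/2-\delta/2}\norm{\Hbb_0^{1/2}\psi}$.

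The hard part is the discrete lattice-point estimate for $|A_T|$. When $T$ is small, the slab–shell region has sub-unit thickness in the radial direction, so the naive continuous volume comparison must be replaced by a Weyl-type bound for the quadratic form $|p|^2-k_\F^2$ restricted to the thin equatorial slab $p\cdot k\lesssim TN^{2/3}$. The explicit threshold $\delta<77/624$ almost certainly arises precisely here as the quantitative condition under which the lattice-point remainders remain subleading uniformly across all dyadic scales $T\in[N^{-2/3},T_0]$ appearing in the summation, so that $|A_T|\lesssim T^2 N^{4/3}$ holds with the same constant throughout the dyadic window.
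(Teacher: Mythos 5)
Your overall architecture—Cauchy--Schwarz, reduction to the counting estimate $S(T_0)\lesssim N^{1-\delta}$, and organizing the count by the value of $p\cdot k$—is the same as the paper's, and the second factor bound by $\norm{\Hbb_0^{1/2}\psi}$ goes through as you say. The problem is the lattice-point estimate you build the dyadic sum on.

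The claimed bound $\lvert A_T\rvert\lesssim T^2 N^{4/3}$ is false at the small dyadic scales, and as a consequence your chain of inequalities, if it were valid, would prove the lemma with \emph{no} restriction on $\delta$ at all — indeed your final display reads $S(T_0)\lesssim T_0N^{4/3}=N^{1-\delta}$ with no residual term, so nothing in your argument forces $\delta<77/624$. To see the failure concretely: at the bottom scale, $2p\cdot k-\lvert k\rvert^2$ is a positive integer, so the smallest admissible dyadic $T$ is $T\sim N^{-2/3}$, where $T^2N^{4/3}\sim 1$. But the set $A_T$ at that scale is (essentially) the slice $B_s=\{p\in \BFc\cap(\BF+k): p\cdot k=s\}$ for a single $s=\Ocal(1)$, and this is an annular region between two concentric 2D lattice ellipses of radius $\Ocal(N^{1/3})$. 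The count of integer points in such an annulus is $\text{area}+\Ocal(N^{\gamma/3})$ for any $\gamma>131/208$ (Huxley's theorem), where the area is $\Ocal(s)=\Ocal(1)$; so $\lvert A_T\rvert\sim N^{\gamma/3}\gg 1=T^2N^{4/3}$. The continuous-volume heuristic and the discrete count differ here not by a multiplicative constant but by an \emph{additive} lattice-point remainder $\Ocal(N^{\gamma/3})$ per slice $s$, and no choice of constant repairs it.

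Once the correct bound $\lvert B_s\rvert\le C\bigl(s+N^{\gamma/3}\bigr)$ is inserted (which is what the paper proves using Huxley, after rotating coordinates so that the slice $\{p\cdot k=s\}$ becomes a standard 2D lattice plane and the constraints $\lvert p\rvert>k_\F\ge\lvert p-k\rvert$ become an ellipse annulus), the rest follows by either your dyadic sum or a direct summation in $s$: one gets $S(T_0)\lesssim N^{1-\delta}+N^{2/3+\gamma/3}\log N$, and the second term is subleading precisely when $\gamma/3<1/3-\delta$, i.e.\ $\delta<(1-\gamma)/3$. Since $\gamma$ may be taken arbitrarily close to $131/208$, the threshold is $\delta<(1-131/208)/3=77/624$. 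So the numerology does not come from ``making $\lvert A_T\rvert\lesssim T^2N^{4/3}$ uniform''; it comes from balancing an unavoidable additive number-theoretic remainder against the main term, which is the step your proposal has not supplied.
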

The condition $e(p)+e(p-k) \leq 4 N^{-\frac{1}{3} -\delta}$ implies that the momentum $p$ is located near the equator of the Fermi surface (if we think of $k$ as defining the direction of the north pole). This is easily seen expanding $e(p) + e(p-k) = \hbar^2 2 p\cdot k - \hbar^2 k^2$; because $\lvert p \rvert \sim N^{\frac{1}{3}}$we then have $k\cdot\hat{p} < C N^{-\delta}$. The idea of \cref{lem:kineticequator} is that the estimate in \cref{lem:kinetic} can be improved since here we sum only over a ribbon parallel to the equator on the Fermi surface. The ribbon covers a fraction of order $N^{-\delta}$ of the Fermi surface, explaining the improvement from $N$ to $N^{1-\delta}$ in \eqref{eq:kineticequator}. The proof of \cref{lem:kineticequator} can be found in \cref{app:A}.

\begin{proof}[Proof of \cref{lem:remove-corridors}]
For every $k\in \mathbb{Z}^3$, recall from \cref{eq:QBNr1} that
\[
b^\Rcal(k) := \sum_{\alpha \in \Ik} \sum_{\substack{p\in \BFc \cap B_\alpha \\ p-k\in \BF \cap B_\alpha}} a_{p-k}a_p \;,\]
i.\,e., the summation is over all $p$ in the set
 \[\bigcup_{\alpha \in \Ik} \left( \BFc \cap B_\alpha \right) \cap \left( (\BF \cap B_\alpha) +k \right) = \BFc \cap (\BF + k) \cap \bigcup_{\alpha \in \Ik} \left(B_\alpha \cap (B_\alpha +k)\right)\;.\]
Thus the error term compared to the full pair operator becomes
\begin{equation} \label{eq:badpairs}
\rfrak^\Rcal(k) := b(k)- b^\Rcal(k) = \sum_{p\in U} a_{p-k} a_{p} 
\end{equation}
where, with $A_1 := \BFc \cap (\BF +k)$ and $A_2 := \bigcup_{\alpha\in \Ik}  \Big(  B_\alpha \cap (B_\alpha+k) \Big)$, we define the set
\[
U := A_1 \setminus (A_1 \cap A_2) \;. 
\] 
In words: $U$ consists of all those particle momenta $p \in \BFc$ that correspond to a kinematically permitted particle--hole pair (i.\,e., $h := p-k$ is inside the Fermi ball)  but do not belong to any included patch. Thus in \cref{eq:badpairs} we sum over pairs belonging to a corridor between patches or to the cut--off equator region (i.\,e., they belong to a patch $B_\alpha$ but $\alpha \not\in \Ik$).
To estimate $\rfrak^\Rcal(k)$, let $\psi \in \fock$. By the triangle inequality we can bound
\[
\norm{\rfrak^\Rcal(k) \psi} \leq \sum_{p\in U} \norm{a_{p-k} a_{p} \psi} \leq \sum_{p\in Y} \norm{a_{p-k} a_{p} \psi} + \sum_{p\in U\setminus Y} \norm{a_{p-k} a_{p} \psi}
\]
where 
\[
Y :=  \{ p   \in U: e(p)+e(p-k) \leq 4 N^{-\frac{1}{3} -\delta} \}\;.
\]
To make contact with our earlier heuristic explanations, note that $Y$ corresponds to the region (both patches and corridors between patches) of the Fermi surface near the equator, i.\,e., where $k \cdot \hat{p} \approx 0$, and $U \setminus Y$ to the corridors between the patches on the rest of the Fermi surface. This may be seen by expanding $e(p) + e(p-k) = \hbar^2 2 p\cdot k - \hbar^2 k^2$; because $p$ is close to the Fermi surface we have $\lvert p \rvert \sim N^{\frac{1}{3}}$, so that $k\cdot\hat{p} < C N^{-\delta}$.

On the set $Y$, by \cref{lem:kineticequator}  we get 
\begin{align} \label{eq:reduce-patches-01}
\sum_{p\in Y} \norm{a_{p} a_{p-k} \psi} \leq C N^{\frac{1}{2} - \frac{\delta}{2}}  \norm{\Hbb_0^{1/2} \psi}\;.
\end{align} 
We turn to the set $U \setminus Y$.
By the Cauchy--Schwarz inequality we have
\begin{equation}
 \sum_{p\in U\setminus Y} \norm{a_p a_{p-k}\psi} \leq \sqrt{ \sum_{p\in U\setminus Y}  \norm{a_p a_{p-k}\psi}^2} \sqrt{ \sum_{p\in U\setminus Y}  1} \;.
\end{equation}
To estimate the first factor, note that $p \not\in Y$ implies $e(p)+e(p-k) \geq 4N^{-\frac{1}{3}-\delta}$, so that we have $e(p) \geq 2N^{-\frac{1}{3}-\delta}$ or $e(p-k)\geq 2N^{-\frac{1}{3}-\delta}$. Consequently 
\begin{align} \label{eq:reduce-patches-02}
\sum_{p\in U\setminus Y} \norm{a_p a_{p-k} \psi}^2 &\leq  \sum_{p\in U\setminus Y}    \min\{\norm{a_p \psi}^2, \norm{a_{p-k} \psi}^2 \} \nonumber\\
&\leq  \sum_{q \in \Zbb^3\colon e(q)\geq  2N^{-\frac{1}{3}-\delta} }  \norm{a_q \psi}^2 \leq \frac{1}{2}N^{\frac{1}{3} +\delta} \norm{\Hbb_0^{1/2}\psi}^2\;. 
\end{align} 
% Next, we count the lattice points of $\Zbb^3$ inside $U\setminus Y$. Assume $p\in U\setminus Y$. Let $B_\alpha$ be the patch closest to $p$. Using  the condition $p\notin Y$, we can write 
% \[
% 4 N^{-\frac{1}{3} -\delta} \leq e(p)+e(p-k)=\hbar^2 ( \lvert p\rvert^2 - \lvert p-k\rvert^2 ) = \hbar^2 \Big( 2k_\F \hat{\omega}_\alpha \cdot k +   2 (p -k_\F \hat{\omega}_\alpha) \cdot k -\lvert k\rvert^2 \Big)\;. 
% \]
% Moreover, by construction of the patches we have  
% \[
% \lvert p-k_\F \hat{\omega}_\alpha\rvert \leq C N^{\frac{1}{3}}M^{-\frac{1}{2}} \ll N^{\frac{1}{3}-\delta}\;.
% \]
% Thus we deduce that $\hat{\omega}_\alpha \cdot k > N^{-\delta}$, or in other words, $\alpha\in \Ik$. The condition $p\in U$  then implies that $p$ or $p-k$ does not belong to $B_\alpha$. 
To estimate the second factor, note that the number of lattice points of $\Zbb^3$ in $U\setminus Y$ can be bounded by the number of lattice points in the corridors between \emph{all} patches  
\begin{align} \label{eq:reduce-patches-03}
\lvert U\setminus Y \rvert \leq C\frac{N^{\frac{1}{3}}}{\sqrt{M}} \times M =  CN^{\frac{1}{3}}M^{\frac{1}{2}}\;. 
\end{align} 
(Here we used that the length of a corridor surrounding a patch is of order $N^{\frac{1}{3}} M^{-\frac{1}{2}}$, its width of order one, and the number of patches is $M$.)
Having estimated both factors, we get
\begin{align} \label{eq:reduce-patches-04}
\sum_{p\in U\setminus Y} \norm{a_p a_{p-k}\psi} \leq C N^{\frac{1}{3} +  \frac{\delta}{2}} M^{\frac{1}{4}} \norm{\Hbb_0^{1/2}\psi}\;.
\end{align} 
%%%%%%%%%%%%%%
Putting \cref{eq:reduce-patches-04,eq:reduce-patches-01} together we arrive at
\begin{align} \label{eq:reduce-patches}
\norm{\rfrak^\Rcal(k) \psi} \leq C\Big( N^{\frac{1}{2} - \frac{\delta}{2}} + C N^{\frac{1}{3} + \frac{\delta}{2} } M^{\frac{1}{4}}\Big) \norm{\Hbb_0^{1/2}\psi}\;, \qquad \forall k\in \mathbb{Z}^3\;.  
\end{align} 

Now we turn to the Hamiltonian. Expanding $b(k)= b^\Rcal(k) + \rfrak^\Rcal(k)$ in the formula for $\Hcal_\textnormal{corr}$ in \cref{eq:Hcorr}, we get
\begin{align*} 
Q_B + \mathcal{E}_2 - Q_\B^\Rcal - \mathcal{E}_2^\Rcal &=   \frac{1}{2N} \sum_{k\in \mathbb{Z}^3 \setminus \{0\}} \hat V(k) \Big[ b^*(k) \rfrak^\Rcal(k) + \rfrak^\Rcal(k)^* b^\Rcal(k) + \nonumber \\
&\qquad + \Big( b(k) \rfrak^\Rcal(-k) +  b^\Rcal (-k) \rfrak^\Rcal(k) + 2 \Dfrak(k)^* \rfrak^\Rcal(k) + \hc \Big)   \Big]\; . 
\end{align*}
It is easy to see that the kinetic bounds in \cref{lem:bD} hold also with $b(k)$ replaced by $b^\Rcal(k)$. Therefore, in combination with \cref{eq:reduce-patches}, by the Cauchy--Schwarz inequality we get
\begin{align*}
&\Big\lvert\langle \psi,  (Q_B + \mathcal{E}_2 - Q_\B^\Rcal - \mathcal{E}_2^\Rcal ) \psi \rangle \Big\rvert \\
&\leq CN^{-1} \sum_{k\in \mathbb{Z}^3 \setminus \{0\}}  \hat V(k) \norm{\rfrak^\Rcal(k) \psi} \Big( \norm{ b^\Rcal(k) \psi} + \norm{ b^\Rcal(k)^* \psi} + \norm{\Dfrak(k) \psi} \Big)\\
&\leq C\Big( N^{ - \frac{\delta}{2}} + C N^{ -\frac{1}{6} + \frac{\delta}{2} } M^{\frac{1}{4}}\Big) \langle \psi, (\Hbb_0+\mathcal{E}_1+\hbar)\psi\rangle\;.  \qedhere
\end{align*}
\end{proof}

\section{Approximately Bosonic Creation Operators}

Let $\{B_\alpha\}_{\alpha=1}^M$ be the patches constructed as in the previous section and let $k_\F \hat{\omega}_\alpha\in B_\alpha$. Recall that  for every $k\in \mathbb{Z}^3 \setminus\{0\}$ we have defined 
\[
\Ik := \left\{\alpha\in \{1,2, \ldots, M\} :  \lvert k \cdot \hat{\omega}_\alpha\rvert \geq N^{-\delta} \right\}\;.
\]
By the reflection symmetry we decompose further $\Ik := \Ikm \cup \Ikp$ where 
\[\begin{split}
\Ikp & := \left\{ \alpha \in \{1,2,\ldots, M\} : k \cdot \hat{\omega}_\alpha \geq N^{-\delta} \right\}\;,\\
\Ikm & := \left\{ \alpha \in \{1,2,\ldots, M\} : k \cdot \hat{\omega}_\alpha \leq - N^{-\delta} \right\}\;.
\end{split}
\]
Then we define the local pair excitations $\{c_\alpha^*(k)\}_{\alpha\in \Ik}$ by 
\[            
c^*_\alpha(k) := \frac{1}{n_\alpha(k)} \sum_{\substack{p\colon p \in \BFc \cap B_\alpha\\p-k\in \BF \cap B_\alpha}}  a^*_p a^*_{p-k}\;, \qquad n_\alpha(k)^2 := \sum_{\substack{p\colon p \in \BFc \cap B_\alpha\\p-k\in \BF \cap B_\alpha}} 1\;, \qquad \text{if }\alpha\in \Ikp
\]
and
\[            
c^*_\alpha(k) := \frac{1}{n_\alpha(k)} \sum_{\substack{p\colon p \in \BFc \cap B_\alpha\\p+k\in \BF \cap B_\alpha}}  a^*_p a^*_{p+k}\;, \qquad n_\alpha(k)^2 := \sum_{\substack{p\colon p \in \BFc \cap B_\alpha\\p+k\in \BF \cap B_\alpha}} 1\;,\qquad \text{if }\alpha\in \Ikm\;.
\]
Thus, for all $k \in \north$, the operator $b^\Rcal(k)$ in \cref{lem:remove-corridors} can be decomposed as
\begin{align}\label{eq:bR-c}
b^\Rcal(k) = \sum_{\alpha\in \Ikp} n_\alpha(k) c_\alpha(k)\;, \qquad b^\Rcal(-k)  = \sum_{\alpha\in \Ikm} n_\alpha(k) c_\alpha(k)\;.
\end{align}

The quantity $n_\alpha(k)^2$ counts the number of particle--hole pairs of relative momentum $k$ belonging to patch $B_\alpha$. We cite the result from \cite[Proposition~3.1]{BNP+20}. The condition $M\gg N^{\frac{1}{3}}$ in \cite{BNP+20} is not necessary, $M \gg N^{2\delta}$ is sufficient.  
\begin{lem}[Normalization Constant] 
\label{lem:counting} Assume that $N^{2\delta} \ll M \ll N^{\frac{2}{3}-2\delta}$. Then for all $k \in \north$ and $\alpha \in \Ik$, we have 
\[
n_\alpha(k)^2  =  \frac{4\pi k_\F^2}{M} \lvert k \cdot \hat\omega_\alpha \rvert  \left( 1 + o(1) \right)\;.
\]
\end{lem}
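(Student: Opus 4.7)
The plan is a lattice point count: for $\alpha\in\Ikp$ I approximate $n_\alpha(k)^2$ by the Euclidean volume of the admissible region in $\Rbb^3$, and then control the discretization error using the patch geometry together with the cut-offs $M\gg N^{2\delta}$ and $\lvert k\cdot\hat\omega_\alpha\rvert\ge N^{-\delta}$.

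\emph{Reduction and clean-up of the domain.} The $\alpha\in\Ikm$ case follows from the reflection symmetries $B_{\alpha+M/2}=-B_\alpha$ and $\hat\omega_{\alpha+M/2}=-\hat\omega_\alpha$ applied to $\alpha+M/2\in\Ikp$; so I take $\alpha\in\Ikp$. Because consecutive patches are separated by corridors of width $\ge 2R_{\hat V}\ge 2\lvert k\rvert$, the shift $p\mapsto p-k$ cannot move any $p\in B_\alpha$ into a neighboring patch. The radial inequalities $k_\F-R_{\hat V}\le\lvert p-k\rvert\le k_\F+R_{\hat V}$ needed for ``$p-k\in B_\alpha$'' are automatic from $p\in\BFc\cap B_\alpha$ and $p-k\in\BF$; the only residual discrepancy between ``$p$ lies in the cone over $p_\alpha$'' and ``$p-k$ lies in that cone'' concerns a strip of angular width $\Ocal(\lvert k\rvert/k_\F)$ along $\partial p_\alpha$, and a direct count shows it contributes at most $\Ocal(k_\F\lvert k\cdot\hat\omega_\alpha\rvert/\sqrt M)$ lattice points, a factor $\Ocal(\sqrt M/k_\F)=o(1)$ smaller than the target by $M\ll N^{2/3-2\delta}$.

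\emph{Volume computation.} Parametrizing $p=r\hat n$ in spherical coordinates with $\hat n\in p_\alpha$, the constraints $\lvert p\rvert>k_\F$ and $\lvert p-k\rvert\le k_\F$ become $r\in(k_\F,r_+(\hat n)]$ with $r_+(\hat n)-k_\F=\hat n\cdot k+\Ocal(k_\F^{-1})$ by Taylor expansion. Since $\diam(p_\alpha)=\Ocal(M^{-1/2})$ and $M^{-1/2}\ll N^{-\delta}\le\lvert k\cdot\hat\omega_\alpha\rvert$ by $M\gg N^{2\delta}$, one has $\hat n\cdot k=\hat\omega_\alpha\cdot k\,(1+o(1))$ uniformly on $p_\alpha$. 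Integrating, and using that the area of $p_\alpha$ on the unit sphere equals $4\pi/M\,(1+o(1))$ (the corridor cutout incurring a relative loss of order $\Ocal(\sqrt M/N^{1/3})=o(1)$ by $M\ll N^{2/3-2\delta}$), the admissible volume becomes
\[
 V_\alpha(k) \;=\; \int_{p_\alpha} d\sigma(\hat n)\int_{k_\F}^{r_+(\hat n)}r^2\,dr \;=\; \frac{4\pi k_\F^2}{M}\,\lvert k\cdot\hat\omega_\alpha\rvert\,(1+o(1))\,.
\]

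\emph{Lattice count versus volume -- the main obstacle.} The delicate step is to pass from $V_\alpha(k)$ to the integer count with $(1+o(1))$ relative accuracy, because the shell thickness $\lvert k\cdot\hat\omega_\alpha\rvert$ can be as small as $N^{-\delta}<1$, below the unit lattice spacing. The strategy is to slice transverse to $\hat\omega_\alpha$: pick a coordinate direction $\hat e$ within angle $\Ocal(M^{-1/2})$ of $\hat\omega_\alpha$; for each $h_\perp\in\Zbb^2$ whose projection lies in the patch (there are $(4\pi k_\F^2/M)(1+o(1))$ such $h_\perp$, by a standard 2D Gauss bound with remainder $\Ocal(k_\F/\sqrt M)$), the allowed integer values of $h\cdot\hat e$ form a 1D interval of length $\ell(h_\perp)=\lvert k\cdot\hat\omega_\alpha\rvert(1+o(1))$, and the number of integers therein equals $\ell(h_\perp)+\Ocal(1)$. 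The main task is to show that the sum of the $\Ocal(1)$ errors is $o\bigl((k_\F^2/M)\lvert k\cdot\hat\omega_\alpha\rvert\bigr)$; equivalently, that the fractional parts of the interval endpoints equidistribute in $[0,1)$ as $h_\perp$ varies. This can be established either by a Weyl-type equidistribution for the smooth map $h_\perp\mapsto\sqrt{k_\F^2-\lvert h_\perp\rvert^2}$, exploiting non-degeneracy of its Hessian on the patch, or by recasting $n_\alpha(k)^2$ as a difference of two Gauss-ball counts and invoking the three-dimensional bound $\#\{h\in\Zbb^3:\lvert h\rvert\le R\}=\frac{4\pi}{3}R^3+\Ocal(R^{2-\eta})$ for some $\eta>0$, which localized to our patch yields a relative error $\Ocal(N^{\delta-\eta/3})=o(1)$ for $\delta$ small. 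This is essentially the argument of \cite[Proposition~3.1]{BNP+20}, whose proof in fact needs only $M\gg N^{2\delta}$ rather than the stronger $M\gg N^{1/3}$ stated there. Combining the three steps gives $n_\alpha(k)^2=V_\alpha(k)(1+o(1))$, as claimed.
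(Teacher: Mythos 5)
Your reduction to $\Ikp$ and the Euclidean volume computation are fine, and you correctly flag the passage from volume to integer count as the core of the argument. But neither of your two resolutions for that step is sound as written. The Gauss--ball route is a genuine gap: the remainder $\Ocal(R^{2-\eta})$ in the three--dimensional sphere problem is a \emph{whole--ball} statement, and there is no principle that lets you ``localize'' it to a sector of angular fraction $\sim 1/M$ and claim error $\Ocal(R^{2-\eta}/M)$. Lattice--point counts in thin spherical caps or annular sectors of opening $\sim M^{-1/2}$ are a separate and generally harder problem; the claimed relative error $\Ocal(N^{\delta-\eta/3})$ simply does not follow from the bound you cite. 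The Weyl/Van der Corput route is legitimate in principle but is a substantial exponential--sum argument you do not carry out, and it is far more machinery than the lemma requires. There is also a smaller issue in the set--up: one cannot in general pick a \emph{coordinate} direction $\hat e$ within angle $\Ocal(M^{-1/2})$ of $\hat\omega_\alpha$, so the parametrization of the transverse variable by $h_\perp\in\Zbb^2$ fails for most patches.

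The reason none of this is needed is that you are slicing in the wrong direction. Slicing along $\hat\omega_\alpha$ makes the fibrewise interval have length $\approx\lvert k\cdot\hat\omega_\alpha\rvert$, which can be far smaller than $1$, and that is precisely what creates the equidistribution problem. The proof referenced from \cite[Prop.~3.1]{BNP+20} slices parallel to $k$ instead. Writing $p=q+t\hat{k}$ with $q\perp k$, the lens conditions $\lvert p\rvert>k_\F$ and $\lvert p-k\rvert\le k_\F$ read
\[
\sqrt{k_\F^2-\lvert q\rvert^2}\;<\;t\;\le\;\lvert k\rvert+\sqrt{k_\F^2-\lvert q\rvert^2}\,,
\]
a half--open interval of length \emph{exactly} $\lvert k\rvert$, independent of $q$. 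Since $k\in\Zbb^3$, the integer points on any lattice line parallel to $k$ are spaced $\lvert k\rvert/\gcd(k_1,k_2,k_3)$, so a half--open interval of length $\lvert k\rvert$ contains \emph{exactly} $\gcd(k_1,k_2,k_3)$ of them, regardless of its (irrational) endpoints --- no fractional--part analysis, no exponential sums. The count thus reduces to the two--dimensional problem of how many lattice lines parallel to $k$ project into the projected patch, an area $(4\pi k_\F^2/M)\,\lvert\hat{k}\cdot\hat\omega_\alpha\rvert\,(1+o(1))$ in a planar lattice of covolume $\gcd(k_1,k_2,k_3)/\lvert k\rvert$, giving the main term $(4\pi k_\F^2/M)\lvert k\cdot\hat\omega_\alpha\rvert$ with a perimeter error $\Ocal(k_\F/\sqrt{M})$; the latter is $o$ of the main term precisely when $M\ll N^{2/3-2\delta}$, and $M\gg N^{2\delta}$ enters only where you already used it, to keep $\hat n\cdot\hat{k}$ nearly constant over the patch. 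Replace your slicing direction by $k$ and the obstacle you were struggling with disappears.
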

(This lemma may heuristically be understood as follows: the surface area covered by the patch is $4\pi k_\F^2/M$. We think of the particle--hole pairs $(p,h) \in B_\F^c \times B_\F$ with $p-h =k$ as organized on lines through the lattice $\Zbb^3$ parallel to $k$. To count how many lines intersect the Fermi surface we project the patch onto a plane, picking up the factor $\lvert k \cdot \hat\omega_\alpha \rvert$. The condition $M \gg N^{2\delta}$ ensures that patches remain so small that even near their boundaries the assumption $\lvert k\cdot \hat{\omega}_\alpha \rvert \geq N^{-\delta}$ implies that $k$ points from inside the Fermi ball to outside. The error term arises since we may miscount a pair when one of its components falls into the surrounding corridor, so it is proportional to the circumference of the patch. We write only $1+o(1)$ because the precise estimate is not important for us.) 
 
A crucial idea of our analysis is that the local pair excitation operators $\{c_\alpha^*(k)\}_{\alpha\in\Ik}$ behave similarly to bosonic creation operators. More precisely, we have approximate canonical commutator relations as given by the following lemma. The lemma is a simple consequence of \cite[Lemma 4.1]{BNP+20}, but since it is a key idea of the collective bosonization concept we provide a self--contained proof again.

\begin{lem}[Approximate CCR]\label{lem:ccr}
Let $k\in\north $ and $l\in \north $. Let $\alpha \in \Ik$ and $\beta \in \Il$. The operators $c_\alpha(k)$ and $c_\beta^*(l)$ satisfy the following commutator relations:
\begin{equation}
[c_\alpha(k),c_\beta(l)] = 0 = [c^*_\alpha(k),c^*_\beta(l)]\;, \quad [c_\alpha(k),c^*_\beta(l)]  = \delta_{\alpha,\beta}\big( \delta_{k,l} + \Ecal_\alpha(k,l) \big)\;. 
\label{eq:approximateCCR}
\end{equation}
The operator $\Ecal_\alpha(k,l)=\Ecal_\alpha(l,k)^*$ commutes with $\Ncal$ and, for any $\gamma \in \Ik\cap\Il$, satisfies the operator inequalities 
\begin{align} \label{eq:ccr-0}
\lvert \Ecal_\gamma(k,l) \rvert^2 \leq \sum_{\alpha\in \Ik \cap \Il} \lvert \Ecal_\alpha(k,l) \rvert^2  \leq C (M N^{-\frac{2}{3}+\delta} \Ncal)^2\;. 
\end{align}
Furthermore, for all $\psi \in \fock$ we have
\begin{align} \label{eq:ccr-1}
\sum_{\alpha\in \Ik \cap \Il} \norm{ \Ecal_\alpha(k,l) \psi }  \leq C M^{\frac{3}{2}}N^{-\frac{2}{3}+\delta} \norm{\Ncal \psi}\;. 
\end{align}
\end{lem}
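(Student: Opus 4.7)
My approach is a direct canonical anticommutation computation identifying $\Ecal_\alpha(k,l)$ explicitly, followed by elementary operator estimates exploiting that each term of $\Ecal_\alpha$ is a one-body operator supported on the patch $B_\alpha$ and that the patches are pairwise disjoint.

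First, $[c_\alpha(k),c_\beta(l)] = 0 = [c^*_\alpha(k),c^*_\beta(l)]$ follows directly: each $c_\alpha(k)$ is a linear combination of products $a_p a_{p \mp k}$ of two annihilation operators, and any two such double products commute because moving one annihilator past the other two incurs two sign flips under the CAR \eqref{eq:car}. For the mixed commutator, I would expand $[a_{p \mp k} a_p, a^*_q a^*_{q \mp l}]$ using \eqref{eq:car}, with the signs determined by whether $\alpha \in \Ikp$ or $\Ikm$ (resp.\ $\beta$). This produces two families of contributions: scalar terms from double Kronecker contractions, and four one-body terms of the form $a^*_\bullet a_\bullet$ from single contractions. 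One of the two scalar pairings forces $p = q$ and $k = l$, and its summation domain is empty unless $\alpha = \beta$ (since $B_\alpha \cap B_\beta = \emptyset$ for $\alpha \neq \beta$), producing $n_\alpha(k)^2$; after dividing by $n_\alpha(k)n_\beta(l)$ this gives the leading $\delta_{\alpha,\beta}\delta_{k,l}$. The other pairing would require $k + l = 0$, impossible for $k, l \in \north$. The four remaining one-body terms likewise have support only for $\alpha = \beta$, and their sum defines $\delta_{\alpha,\beta}\Ecal_\alpha(k,l)$. The self-adjointness identity $\Ecal_\alpha(k,l) = \Ecal_\alpha(l,k)^*$ follows by taking adjoints of the commutator formula, and since each summand has the form $a^*_i a_j$, the operator $\Ecal_\alpha$ commutes with $\Ncal$.

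\textbf{Operator bound \eqref{eq:ccr-0}.} I would then write each of the four contributions to $\Ecal_\alpha(k,l)$ as $Y_i = \frac{1}{n_\alpha(k) n_\alpha(l)} \sum_{p \in S_i} a^*_{f_i(p)} a_{g_i(p)}$ with $f_i, g_i$ injective maps into $B_\alpha$. This is the second quantization $d\Gamma(h_i)$ of a one-body operator $h_i$ on $\ell^2(B_\alpha)$ whose matrix has at most one nonzero entry per row and per column, of modulus $\frac{1}{n_\alpha(k) n_\alpha(l)}$, so $h_i$ is a scalar multiple of a partial isometry with $\|h_i\|_{\mathrm{op}} \leq \frac{1}{n_\alpha(k) n_\alpha(l)}$. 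The standard bound $d\Gamma(h_i)^* d\Gamma(h_i) \leq \|h_i\|_{\mathrm{op}}^2 \Ncal_\alpha^2$ (with $\Ncal_\alpha := \sum_{i \in B_\alpha} a^*_i a_i$), the inequality $|A_1 + \cdots + A_4|^2 \leq 4\sum_i |A_i|^2$, and \cref{lem:counting} (yielding $n_\alpha(k)^2 n_\alpha(l)^2 \geq c\, N^{\frac{4}{3}-2\delta}/M^2$ from $|k\cdot\hat\omega_\alpha|, |l\cdot\hat\omega_\alpha| \geq N^{-\delta}$) combine to give
\[
\Ecal_\alpha(k,l)^* \Ecal_\alpha(k,l) \leq C M^2 N^{-\frac{4}{3}+2\delta}\, \Ncal_\alpha^2.
\]
Summing over $\alpha \in \Ik \cap \Il$ and using $\sum_\alpha \Ncal_\alpha^2 \leq \Ncal^2$ (which holds because $\Ncal = \sum_\alpha \Ncal_\alpha + \Ncal_{\mathrm{rest}}$ with all summands commuting and nonnegative, so that squaring produces only nonnegative cross terms) yields \eqref{eq:ccr-0}. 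Estimate \eqref{eq:ccr-1} then follows by Cauchy--Schwarz on an index set of cardinality at most $M$:
\[
\sum_{\alpha \in \Ik \cap \Il} \|\Ecal_\alpha(k,l)\psi\| \leq \sqrt{M}\,\Big(\sum_\alpha \|\Ecal_\alpha(k,l)\psi\|^2\Big)^{1/2} \leq C M^{\frac{3}{2}} N^{-\frac{2}{3}+\delta}\, \|\Ncal\psi\|.
\]

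\textbf{Main obstacle.} The bookkeeping across the four sign cases in $(\Ikp \cup \Ikm) \times (\Ilp \cup \Ilm)$ is routine but error-prone. The only conceptually nontrivial step is the use of the disjointness of patches to obtain $\sum_\alpha \Ncal_\alpha^2 \leq \Ncal^2$, saving a factor $M$ compared with the naive bound $\sum_\alpha \Ncal_\alpha^2 \leq M \Ncal^2$ and so producing the $M^2$ in \eqref{eq:ccr-0} and $M^{\frac{3}{2}}$ (rather than $M^2$) in \eqref{eq:ccr-1}.
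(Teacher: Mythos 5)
Your proof is correct, and the route to the operator bound \cref{eq:ccr-0} is genuinely different from the paper's. You recognize each surviving contraction in $\Ecal_\alpha(k,l)$ as the second quantization $d\Gamma(h_i)$ of a one--body operator $h_i$ supported on $\ell^2(B_\alpha)$ which is a scalar multiple of a partial isometry, and invoke the bound $d\Gamma(h)^* d\Gamma(h) \leq \norm{h}_{\mathrm{op}}^2 \Ncal_\alpha^2$; this is indeed valid, as it reduces via the factorization $\fock \simeq \fock(\ell^2(B_\alpha))\otimes\fock(\ell^2(B_\alpha^c))$ to the elementary triangle inequality $\norm{\sum_{j=1}^m h_j}_{\mathrm{op}}\leq m\norm{h}_{\mathrm{op}}$ on the $m$--particle sector. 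The paper instead estimates $\lvert\Ecal_\alpha(k,l)\rvert^2$ directly at the level of fermionic monomials, applying an operator Cauchy--Schwarz manipulation to $a^*_{p-k}a_{p-l}a^*_{q-l}a_{q-k}+\hc$ and then inserting $\Ncal$ by hand. Both arguments ultimately exploit disjointness of the patches to control the sum over $\alpha$ by $\Ncal^2$ rather than $M\Ncal^2$, which is precisely what upgrades $M^2$ to $M^{3/2}$ in \cref{eq:ccr-1}; you isolate this mechanism cleanly in the inequality $\sum_\alpha\Ncal_\alpha^2\leq\Ncal^2$, whereas in the paper it is folded into the chain $\sum_\alpha\sum_p a^*_{p-k}a_{p-k}\Ncal\leq\Ncal^2$. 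Your version is more structural and reusable; the paper's is more elementary and needs no auxiliary fact about second quantization. One small inaccuracy: of the four candidate single contractions, only two actually contribute --- the cross contractions $\delta_{p-k,q}$ and $\delta_{p,q-l}$ are identically empty, since they would identify a momentum inside $\BF$ with one in $\BFc$ --- so $\Ecal_\alpha(k,l)$ has two summands, as in \cref{eq:Ekk}, not four. This does not affect your estimate (you merely bound two extra zero operators), though the $4$ in $\lvert A_1+\cdots+A_4\rvert^2\leq 4\sum_i\lvert A_i\rvert^2$ could then be replaced by $2$.
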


\begin{proof}  By the CAR \cref{eq:car} it is easy to see that
\[
[c_\alpha(k),c_\beta(l)] = 0 = [c^*_\alpha(k),c^*_\beta(l)]\;.
\]
Moreover, if $\alpha\ne \beta$, then $B_\alpha \cap B_\beta= \emptyset$, and hence $[c_\alpha(k),c^*_\beta(l)]  =0$. 

Now let us focus on the case $\beta=\alpha$ and compute $[c_\alpha(k),c^*_\alpha(l)]$. We only consider the case $\alpha \in \Ikp \cap \Ilp$ (the other cases are simple variations). By the CAR \cref{eq:car} it is straightforward to compute that  
\begin{align}\label{eq:Ekk}
\Ecal_\alpha(k,l) &= - \frac{1}{n_\alpha(k)n_\alpha(l)} \Bigg[  \sum_{\substack{p\colon p \in \BFc \cap B_\alpha\\p-k,\,p-l \in \BF \cap B_\alpha}} a^*_{p-l} a_{p-k} + \sum_{\substack{h\colon h\in \BF\cap B_\alpha \\ h+l,\,h+k\in \BFc \cap B_\alpha\\}} a^*_{h+l} a_{h+k} \Bigg] \\
&=: \Ecal^{(1)}_\alpha(k,l) + \Ecal^{(2)}_\alpha(k,l)\; .\nonumber
 \end{align}
  Let us focus on $\lvert \Ecal^{(1)}_\alpha(k,l)\rvert^2$; the term $\lvert\Ecal^{(2)}_\alpha(k,l)\rvert^2$ can be bounded similarly, and the mixed terms are controlled by the Cauchy--Schwarz inequality. Symmetrizing, we find
\begin{align*}
\lvert  \Ecal^{(1)}_\alpha(k,l) \rvert^2 &= \frac{1}{2n_\alpha(k)^2n_\alpha(l)^2}  \sum_{\substack{p,\,q\colon p,\,q \in \BFc \cap B_\alpha\\p-k,\,p-l,\,q-k,\,q-l \in \BF \cap B_\alpha}}  \left(  a^*_{p-k} a_{p-l} a_{q-l}^* a_{q-k} +\hc\right)\;.
 \end{align*}
By \cref{lem:counting} we have  $n_{\alpha}(k) n_{\alpha}(l) \geq C^{-1} N^{\frac{2}{3}-\delta}/M$. Moreover, by the Cauchy--Schwarz inequality, 
 \begin{align*}
 \pm \left( a^*_{p-k} a_{p-l} a_{q-l}^* a_{q-k} +\hc\right) 
 & = \pm  \left( \delta_{p,q}a^*_{p-k}a_{p-k} - a^*_{p-k} a_{q-l}^* a_{p-l}  a_{q-k} + \hc \right)\\
 &\leq \left(2\delta_{p,q}a^*_{p-k}a_{p-k} + a^*_{p-k} a_{q-l}^* a_{q-l}  a_{p-k} + a^*_{q-k}a^*_{p-l}a_{p-l}  a_{q-k} \right)\;.
\end{align*}
% Note also that if $p=q\in B_\alpha\cap B_\gamma$, then $\gamma=\alpha$ since the patches are disjoint.
Therefore
\begin{align*}
& \sum_{\alpha\in \Ikp \cap \Ilp} \lvert \Ecal^{(1)}_\alpha(k,l) \rvert^2 \displaybreak[0]\\
&\leq  C\left(M N^{-\frac{2}{3}+\delta}\right)^2 \sum_{\alpha\in \Ikp \cap \Ilp}  \sum_{\substack{p,\,q\colon p,\,q \in \BFc \cap B_\alpha\\p-k,\,p-l,\,q-k,\,q-l \in \BF \cap B_\alpha}}    \left(  \delta_{p,q}a^*_{p-k}a_{p-k} + a^*_{p-k} a_{q-l}^* a_{q-l}  a_{p-k}\right) \displaybreak[0]\\
&\leq C\left(M N^{-\frac{2}{3}+\delta}\right)^2 \sum_{\alpha \in \Ikp \cap \Ilp} \sum_{\substack{p\colon p \in \BFc \cap B_\alpha\\p-k,\,p-l \in \BF \cap B_\alpha}} \left(  a^*_{p-k}a_{p-k} + a^*_{p-k} \Ncal a_{p-k}\right) \displaybreak[0]\\
&= C \left(M N^{-\frac{2}{3}+\delta}\right)^2 \sum_{\alpha \in \Ikp \cap \Ilp} \sum_{\substack{p\colon p \in \BFc \cap B_\alpha\\p-k,\,p-l \in \BF \cap B_\alpha}} a^*_{p-k} a_{p-k} \Ncal \\
&\leq C \left(M N^{-\frac{2}{3}+\delta}\right)^2  \Ncal^2\;.
 \end{align*}

 The first bound in \cref{eq:ccr-0} (without the summation) is a trivial consequence. The bound \cref{eq:ccr-1} follows from \cref{eq:ccr-0}  and  the Cauchy--Schwarz inequality.
\end{proof}

In the following, we show that the approximately bosonic number operator can be controlled by a fermionic number operator. One of our main technical improvements compared to \cite[Lemma~4.2]{BNP+20} is that instead of using the full $\Ncal$ we use only the gapped number operator 
\begin{equation}\label{eq:gappedN2}
 \Ncal_\delta := \sum_{i \in \Zbb^3\colon e(i) \geq \frac{1}{4} N^{-\frac{1}{3} -\delta}} a^*_i a_i\;.  
\end{equation}
The parameter $\delta>0$ is the same as that in the cut--off parameter $N^{-\delta}$ defining the index set $\Ik$ of relevant patches. Compared to $\Ncal$ as in \cref{lem:N-H0}, the gain in using the gapped number operator $\Ncal_\delta$ is that it can be controlled by $\langle \Psi, \Ncal_\delta \Psi \rangle \leq C N^{\delta}$ in an approximate ground state $\Psi$.

\begin{lem}[Bosonic Number Operator]\label{lem:bosonic-number} 
For all $k \in \north $ we have
\begin{align}
\label{eq:bosebound0}
\sum_{\alpha\in \Ik} c_\alpha^*(k) c_\alpha(k) \leq \Ncal_\delta\;.  
\end{align}
Consequently, for all $\psi\in \fock$, 
\begin{align}   \label{eq:boseboundX}
\sum_{\alpha\in \Ik} \norm{c_\alpha(k) \psi} \leq M^{\frac{1}{2}}\norm{\Ncal_{\delta}^{1/2} \psi}, \quad \sum_{\alpha\in \Ik} \norm{c_\alpha^* (k) \psi} \leq M^{\frac{1}{2}}\norm{(\Ncal_{\delta}+M)^{1/2} \psi}
 \end{align}
and for $f \in \ell^2(\Ik)$ also 
\begin{align}
  \norm{ \sum_{\alpha \in \Ik} f_\alpha c^*_{\alpha}(k) \psi} & \leq \norm{f}_{\ell^2} \norm{\left( \Ncal_{\delta} +1 \right)^{1/2}\psi}\;. \label{eq:bosebound1}
 \end{align}
\end{lem}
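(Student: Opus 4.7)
The plan is to establish the operator inequality (i) first; parts (ii) and (iii) will then follow from (i) together with the approximate CCR of \cref{lem:ccr} and Cauchy--Schwarz. For (i), let $\chi_\alpha(p,k)$ denote the indicator of the valid pair-momentum set in $B_\alpha$ (so $n_\alpha(k)^2 = \sum_p \chi_\alpha(p,k)$). For any $\psi \in \fock$, the triangle inequality in $\fock$ followed by Cauchy--Schwarz on the sum over $p$ (using $\sum_p \chi_\alpha(p,k) = n_\alpha(k)^2$), combined with $\norm{a_{p-k}a_p\psi}^2 = \langle \psi, n_p n_{p-k}\psi\rangle$, yields
\[
\norm{c_\alpha(k)\psi}^2 \leq \sum_p \chi_\alpha(p,k)\langle\psi, n_p n_{p-k}\psi\rangle.
\]
Summing over $\alpha \in \Ik$ and using the pairwise disjointness of the patches (so that $\chi(p,k) := \sum_{\alpha\in\Ik}\chi_\alpha(p,k) \leq 1$) reduces the task to bounding $\sum_p \chi(p,k)\langle\psi, n_p n_{p-k}\psi\rangle$ by $\langle\psi, \Ncal_\delta\psi\rangle$.

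The crucial step is a kinematic estimate. Expanding $e(p) + e(p-k) = \hbar^2(2 p\cdot k - \lvert k\rvert^2)$ and writing $p = k_\F\hat\omega_\alpha + (p - k_\F\hat\omega_\alpha)$, the patch diameter bound $\lvert p - k_\F\hat\omega_\alpha\rvert \leq C N^{\frac{1}{3}}/\sqrt{M}$ and the equator cut--off $\lvert k\cdot\hat\omega_\alpha\rvert \geq N^{-\delta}$ imply
\[
e(p) + e(p-k) \geq 2\kappaf N^{-\frac{1}{3}-\delta} - C\bigl(N^{-\frac{2}{3}} + N^{-\frac{1}{3}}/\sqrt{M}\bigr) \geq \frac{1}{2}N^{-\frac{1}{3}-\delta}
\]
for $M \gg N^{2\delta}$ and $N$ large, since $\kappaf = (3/4\pi)^{1/3} > 1/2$. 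Hence $\max\{e(p), e(p-k)\} \geq \frac{1}{4}N^{-\frac{1}{3}-\delta}$, which is precisely the gap defining $\Ncal_\delta$. Using the fermionic bound $n_p n_{p-k} \leq \min\{n_p, n_{p-k}\}$, one controls $n_p n_{p-k}$ by $n_p$ when $e(p) \geq \frac{1}{4}N^{-\frac{1}{3}-\delta}$ and by $n_{p-k}$ otherwise. A careful accounting then shows that each index $i$ with $e(i) \geq \frac{1}{4}N^{-\frac{1}{3}-\delta}$ is counted at most once on the right-hand side: the valid ``particle'' sets $U_k^+$ (for $\alpha \in \Ikp$) and $U_k^-$ (for $\alpha \in \Ikm$) are disjoint, because $\Ikp$- and $\Ikm$-patches sit in opposite hemispheres relative to $k$, while the ``particle'' role ($i \in \BFc$) and the ``hole'' role ($i \in \BF$) are mutually exclusive. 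This produces exactly $\sum_{\alpha\in\Ik} c_\alpha^*(k) c_\alpha(k) \leq \Ncal_\delta$.

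Parts (ii) and (iii) are deduced from (i) using the approximate CCR. For the first bound in (ii), Cauchy--Schwarz and (i) give $\sum_\alpha \norm{c_\alpha(k)\psi} \leq \lvert\Ik\rvert^{1/2}\bigl(\sum_\alpha \norm{c_\alpha(k)\psi}^2\bigr)^{1/2} \leq M^{1/2}\norm{\Ncal_\delta^{1/2}\psi}$. For the second, the approximate CCR yields $c_\alpha(k)c_\alpha^*(k) = c_\alpha^*(k)c_\alpha(k) + 1 + \Ecal_\alpha(k,k)$, and inspection of the explicit formula in the proof of \cref{lem:ccr} shows $\Ecal_\alpha(k,k) \leq 0$, so $\norm{c_\alpha^*(k)\psi}^2 \leq \norm{c_\alpha(k)\psi}^2 + \norm{\psi}^2$; summing and applying Cauchy--Schwarz gives the bound by $M^{1/2}\norm{(\Ncal_\delta + M)^{1/2}\psi}$. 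For (iii), Wick-ordering via the CCR,
\[
\Bignorm{\sum_{\alpha\in\Ik} f_\alpha c_\alpha^*(k)\psi}^2 = \Bignorm{\sum_{\alpha\in\Ik} \bar f_\alpha c_\alpha(k)\psi}^2 + \norm{f}_{\ell^2}^2\norm{\psi}^2 + \sum_{\alpha\in\Ik}\lvert f_\alpha\rvert^2\langle\psi, \Ecal_\alpha(k,k)\psi\rangle,
\]
the last term is non-positive, while the first is $\leq \norm{f}_{\ell^2}^2 \sum_\alpha \norm{c_\alpha(k)\psi}^2 \leq \norm{f}_{\ell^2}^2 \langle\psi, \Ncal_\delta\psi\rangle$ by Cauchy--Schwarz and (i). The main technical obstacle lies in (i): obtaining the sharp constant $1$ requires the careful combination of patch disjointness, the opposite-hemisphere separation of $\Ikp$- and $\Ikm$-patches, and particle/hole mutual exclusivity, in order to preclude any double-counting of gapped single-particle modes.
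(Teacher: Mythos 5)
Your proof is correct and follows essentially the same route as the paper: the Cauchy--Schwarz reduction to $\min\{n_p, n_{p-k}\}$, the kinematic estimate $e(p)+e(p-k)\geq\tfrac{1}{2}N^{-1/3-\delta}$ via the patch diameter and equator cut--off, and the deduction of (ii)--(iii) from the approximate CCR with $\Ecal_\alpha(k,k)\leq 0$ are all the same steps. The only cosmetic difference is in the no--double--counting step of (i): you argue globally using hemisphere disjointness and particle/hole exclusivity, while the paper first bounds each $\norm{c_\alpha(k)\psi}^2$ by $\sum_{q\in B_\alpha,\ e(q)\geq\frac{1}{4}N^{-1/3-\delta}}\norm{a_q\psi}^2$ and then invokes patch disjointness when summing over $\alpha$, which is marginally cleaner but mathematically equivalent.
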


\begin{proof} First we take $\alpha \in \Ikp$ (the case $\alpha \in \Ikm$ is similar). For any $\psi\in \fock$,  by the triangle and Cauchy--Schwarz  inequalities, 
\begin{align*}
\norm{c_\alpha(k) \psi} &= \frac{1}{n_\alpha(k)}  \Big\lVert \sum_{\substack{p\colon p\in \BFc \cap B_\alpha \\ p-k\in \BF \cap B_\alpha } } a_{p} a_{p-k} \psi \Big\rVert \leq  \frac{1}{n_\alpha(k)}  \sum_{\substack{p\colon p\in \BFc \cap B_\alpha \\ p-k\in  \BF \cap B_\alpha } }   \norm{a_{p} a_{p-k} \psi}  \\
&\leq  \frac{1}{n_\alpha(k)} \Bigg( \sum_{\substack{p\colon p\in \BFc \cap B_\alpha \\ p-k\in  \BF \cap B_\alpha } }  1\Bigg)^{1/2} \Bigg(\sum_{\substack{p\colon p\in \BFc \cap B_\alpha \\ p-k\in \BF \cap B_\alpha } } \norm{a_{p} a_{p-k} \psi} ^2 \Bigg)^{1/2}\;. 
\end{align*}
Using the definition of $n_\alpha(k)$ and the fermionic property $\norm{a_i}_\textnormal{op} \leq 1$ we deduce that 
\[
 \norm{c_\alpha(k) \psi}^2 \leq \sum_{p\in \BFc \cap B_\alpha \cap (\BF +k) }    \norm{a_{p} a_{p-k} \psi} ^2 \leq  \sum_{p\in \BFc \cap B_\alpha \cap (\BF +k) }    \min\{\norm{a_p \psi}^2, \norm{a_{p-k} \psi}^2 \}\;. 
\]
For all $p\in \BFc \cap B_\alpha \cap (\BF+k)$ we have 
\[
\lvert p-k_\F \hat{\omega}_\alpha \rvert \leq \diam(B_\alpha) \leq CN^{\frac{1}{3}}M^{-\frac{1}{2}} \ll N^{\frac{1}{3}-\delta}\;,
\]
and the condition $\alpha\in \Ikp$ ensures that  $k\cdot \hat{\omega}_\alpha\geq N^{-\delta}$; hence
\[
e(p) + e(p-k) =\hbar^2(\lvert p\rvert^2- \lvert p-k\rvert^2)=  \hbar^2 \Big( 2 k_\F \hat{\omega}_\alpha \cdot k + 2(p-k_\F \hat{\omega}_\alpha) \cdot k   - \lvert k\rvert^2 \Big) \geq  \frac{1}{2} N^{-\frac{1}{3}-\delta}\;.
\]
Consequently, we have $e(p) \geq \frac{1}{4} N^{-\frac{1}{3}-\delta}$ or $e(p-k)\geq  \frac{1}{4} N^{-\frac{1}{3}-\delta}$. Thus
\[
 \norm{c_\alpha(k) \psi}^2 \leq  \sum_{p\in \BFc \cap B_\alpha \cap (\BF +k) }    \min\{\norm{a_p \psi}^2, \norm{a_{p-k} \psi}^2 \} \leq \sum_{\substack{q\in B_\alpha\colon e(q)\geq   \frac{1}{4} N^{-\frac{1}{3}-\delta} }}  \norm{a_q \psi}^2\;.
\]
By the same method we obtain the same bound when $\alpha \in \Ikm$. Thus by the definition of the gapped number operator we can bound
\[
\sum_{\alpha \in \Ik} \norm{c_\alpha(k) \psi}^2 \leq \sum_{\alpha \in \Ik}  \sum_{\substack{q\in B_\alpha\colon e(q)\geq   \frac{1}{4} N^{-\frac{1}{3}-\delta} }}  \norm{a_q \psi}^2 \leq \norm{\Ncal_{\delta}^{1/2}\psi}^2
\]
which is equivalent to \cref{eq:bosebound0}. Moreover, it can be seen from $\Ecal_\alpha(k,k) \leq 0$ in \cref{eq:Ekk} that
\begin{equation}    \label{eq:aacomm}
[c_\alpha(k), c^*_\alpha(k)]\leq 1 \;.
\end{equation}
Thus
\[
\sum_{\alpha \in \Ik} \norm{c^*_\alpha(k) \psi}^2 \leq \sum_{\alpha\in \Ik} ( \norm{c_\alpha(k) \psi}^2 +\norm{\psi}^2) \leq \norm{(\Ncal_\delta + M)^{1/2}\psi}^2\;. 
\]
By the Cauchy--Schwarz inequality we obtain
\[
\sum_{\alpha\in \Ik} \norm{c_\alpha(k) \psi} \leq M^{\frac{1}{2}}\norm{\Ncal_{\delta}^{1/2} \psi}\;, \quad \sum_{\alpha\in \Ik} \norm{c_\alpha^* (k) \psi} \leq M^{\frac{1}{2}}\norm{(\Ncal_{\delta}+M)^{1/2} \psi}\;. 
\]
Using that $[c_\alpha(k), c^*_\beta(k)]$ vanishes for $\alpha \neq \beta$, by \cref{eq:aacomm} we obtain
 \begin{align*}
  \norm{\sum_{\alpha \in \Ik} f(\alpha) c^*_\alpha(k) \psi}^2 & = \sum_{\alpha,\beta \in \Ik} \cc{f(\alpha)} f(\beta) \langle \psi, c^*_\beta(k) c_\alpha(k) \psi\rangle \\
  & \quad + \sum_{\alpha,\beta \in \Ik} \cc{f(\alpha)} f(\beta) \langle \psi, [c_\alpha(k),c^*_\beta(k)]  \psi\rangle \\
  & \leq \sum_{\alpha,\beta \in \Ik} \lvert f(\alpha)\rvert^2 \norm{c_\beta(k) \psi}^2 + \sum_{\alpha \in \Ik} \lvert f(\alpha)\rvert^2 \norm{\psi}^2\\
  &\leq  \sum_{\alpha \in \Ik} \lvert f(\alpha)\rvert^2 \norm{(\Ncal_\delta+1)^{1/2}\psi}^2\;.    
 \end{align*}
 This concludes the proof.
\end{proof}

For further application, it is useful to extend the definition of $c_\alpha(k)$ to include a weight function. Given $g: \mathbb{Z}^3\times \mathbb{Z}^3 \to \mathbb{R}$, we define weighted pair operators
\begin{align}
c^g_{\alpha}(k) &:= \frac{1}{n_{\alpha}(k)}\sum_{\substack{p\colon p \in \BFc \cap B_\alpha\\p-k \in \BF \cap B_\alpha}} g(p,k) a_{p-k} a_p \qquad \text{if } \alpha\in \Ikp\;,  \label{def:weightedpair} \\
c^g_{\alpha}(k) &:= \frac{1}{n_{\alpha}(k)}\sum_{\substack{p\colon p \in \BFc \cap B_\alpha\\p+k \in \BF \cap B_\alpha}} g(p,k) a_{p+k} a_p \qquad \text{if } \alpha\in \Ikm\;. \nonumber
\end{align}

The weighted pair operators satisfy similar bounds as the simple pair operators.

\begin{lem}[Weighted Pair Operators]\label{lem:bosonic-number-weight} For all $k \in \north $ and $\psi\in \fock$ we have 
\[
\sum_{\alpha\in \Ik} \norm{c^g_\alpha(k) \psi} \leq C M^{\frac{1}{2}} \norm{g}_{\ell^\infty} \norm{\Ncal_{\delta}^{1/2} \psi}\;, \quad \sum_{\alpha\in \Ik} \norm{c_\alpha^{g*} (k) \psi} \leq CM^{\frac{1}{2}} \norm{g}_{\ell^\infty} \norm{(\Ncal_{\delta}+M)^{1/2} \psi}\;, 
\]
and for all $f \in \ell^2(\Ik)$ also
\[
  \norm{ \sum_{\alpha \in \Ik} f_\alpha c^{g*}_{\alpha}(k) \psi}  \leq \norm{f}_{\ell^2} \norm{g}_{\ell^\infty} \norm{\left( \Ncal_{\delta} +1 \right)^{1/2}\psi}\;. 
\]
\end{lem}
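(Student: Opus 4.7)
The argument is a direct adaptation of the proof of \cref{lem:bosonic-number}, tracking the weight $g(p,k)$ through each estimate and majorizing $|g(p,k)|\leq \norm{g}_{\ell^\infty}$ pointwise at the right moment. Concretely, I would repeat the three-part argument of \cref{lem:bosonic-number}, inserting the weight and picking up exactly one factor of $\norm{g}_{\ell^\infty}$ in each bound.

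For the first bound, I would start, say for $\alpha\in\Ikp$, from the triangle inequality
\[
\norm{c^g_\alpha(k)\psi} \leq \frac{1}{n_\alpha(k)} \sum_{\substack{p\colon p\in \BFc \cap B_\alpha\\ p-k\in \BF \cap B_\alpha}} |g(p,k)|\, \norm{a_{p-k}a_p\psi}\;,
\]
majorize $|g(p,k)|\leq \norm{g}_{\ell^\infty}$, and then run exactly the Cauchy--Schwarz step of \cref{lem:bosonic-number}. Since $n_\alpha(k)^2$ cancels against the factor $\sum_p 1$, this yields $\norm{c^g_\alpha(k)\psi}^2 \leq \norm{g}_{\ell^\infty}^2 \sum_{q\in B_\alpha\colon e(q)\geq \frac{1}{4}N^{-1/3-\delta}} \norm{a_q\psi}^2$. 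A further Cauchy--Schwarz in $\alpha$ and the definition of $\Ncal_\delta$ then deliver the first claimed estimate.

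For the second bound, I would exploit the identity
\[
\norm{c^{g*}_\alpha(k)\psi}^2 = \norm{c^g_\alpha(k)\psi}^2 + \bigl\langle \psi, [c^g_\alpha(k), c^{g*}_\alpha(k)]\psi\bigr\rangle
\]
and compute the commutator directly via the CAR. The only surviving contributions come from the diagonal $p=p'$, because the kinematic constraints $p\in\BFc$, $p-k\in\BF$ rule out the pairings $\delta_{p,p'-k}$ and $\delta_{p-k,p'}$ that would arise from the CAR expansion; this gives
\[
[c^g_\alpha(k),c^{g*}_\alpha(k)] = \frac{1}{n_\alpha(k)^2}\sum_{p}|g(p,k)|^2\bigl(1 - a^*_p a_p - a^*_{p-k}a_{p-k}\bigr)\;.
\]
Since $a^*_p a_p+a^*_{p-k}a_{p-k}\geq 0$ and $|g(p,k)|^2\leq \norm{g}_{\ell^\infty}^2$, the expectation of the commutator is at most $\norm{g}_{\ell^\infty}^2\norm{\psi}^2$, and the second bound follows by combining with the first and applying Cauchy--Schwarz in $\alpha$. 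For the third bound, I would observe that $[c^g_\alpha(k),c^{g*}_\beta(k)]=0$ whenever $\alpha\neq\beta$ thanks to the disjointness $B_\alpha\cap B_\beta=\emptyset$, so that the expansion
\[
\Bignorm{\sum_{\alpha\in\Ik} f_\alpha c^{g*}_\alpha(k)\psi}^2 = \Bignorm{\sum_{\alpha\in\Ik}\overline{f_\alpha}\, c^g_\alpha(k)\psi}^2 + \sum_{\alpha\in\Ik} |f_\alpha|^2 \bigl\langle \psi, [c^g_\alpha(k),c^{g*}_\alpha(k)]\psi\bigr\rangle
\]
reduces everything to the pointwise bounds already established, via $\norm{\sum_\alpha\overline{f_\alpha}c^g_\alpha(k)\psi}\leq \norm{f}_{\ell^2}(\sum_\alpha \norm{c^g_\alpha(k)\psi}^2)^{1/2}$.

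\textbf{Obstacle.} No step presents a serious difficulty; the proof is effectively a weighted replay of \cref{lem:bosonic-number}. The only mildly delicate point is verifying that the off-diagonal ($p\neq p'$) terms in the commutator vanish identically under the kinematic restrictions imposed by the sums defining $c^g_\alpha(k)$, so that the weight enters only through the manifestly non-negative factor $|g(p,k)|^2$ and not through sign-indefinite cross terms.
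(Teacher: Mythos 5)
Your proposal is correct and coincides with the paper's own proof: the paper also reduces the lemma to a weighted replay of \cref{lem:bosonic-number}, with the only new ingredient being the commutator identity $[c_\alpha^g(k),c_\alpha^{g*}(k)] = n_\alpha(k)^{-2}\sum_{p}|g(p,k)|^2(1-a^*_p a_p - a^*_{p-k}a_{p-k}) \leq \norm{g}_{\ell^\infty}^2$, which you derive and use in exactly the same way (including the observation that the off-diagonal CAR contractions are killed by the kinematic constraints $p\in\BFc$, $p-k\in\BF$).
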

\begin{proof}
 Compared to \cref{lem:bosonic-number} the only non--trivial modification is that we now use
 \begin{align*}
 [c_\alpha^g(k),c_\alpha^{g*}(k)] & = \frac{1}{n_\alpha(k)^2} \sum_{\substack{p \colon p \in \BFc \cap B_\alpha\\p-k \in \BF \cap B_\alpha}} \lvert g(p,k) \rvert^2 \left( 1 - a^*_p a_p - a^*_{p-k} a_{p-k} \right)\\
 & \leq \frac{1}{n_\alpha(k)^2} \sum_{\substack{p \colon p \in \BFc \cap B_\alpha\\p-k \in \BF \cap B_\alpha}} \lvert g(p,k) \rvert^2  \qquad \leq \norm{g}_{\ell^\infty}^2
 \end{align*}
 where before we used $[c_\alpha(k),c^*_\alpha(k)] \leq 1$. We omit the further details.
\end{proof}

\section{Bogoliubov Kernel} \label{sec:exa-diag}

In this section we study the Hamiltonian $h_\textnormal{eff}(k)$ introduced in \cref{eq:heff-intro}. Let us use $\hat{k} := k/\lvert k\rvert$. It is convenient to write 
\begin{equation}    \label{eq:heff}
h_\textnormal{eff}(k)  := \sum_{\alpha,\beta \in \Ik} \Big[ \big( \D(k) + \W(k) \big)_{\alpha,\beta} c_\alpha^*(k) c_\beta(k) +  \frac{1}{2} \Wt(k)_{\alpha,\beta} \big( c^*_\alpha(k) c^*_\beta(k) + c_\beta(k)  c_\alpha(k) \big)  \Big]
\end{equation}
where $\D(k)$, $\W(k)$, and $\Wt(k)$ are $\Ik \times \Ik$ real symmetric matrices with elements 
\begin{equation}
\label{eq:blocks1}\begin{split}
\D (k)_{\alpha,\beta} &: =  \delta_{\alpha,\beta} \lvert \hat{k} \cdot \hat{\omega}_\alpha\rvert \qquad \forall \alpha,\beta \in \Ik\;, \\
\W(k)_{\alpha,\beta}  & := \frac{\hat V(k)}{2\hbar \kappaf N \lvert k\rvert} \times \left\{ \begin{array}{cl} n_\alpha(k) n_\beta(k) & \text{ if } \alpha,\beta \in \Ikp \text{ or } \alpha,\beta \in \Ikm \\
 0 & \text{ otherwise}\;,\end{array} \right. \\
\Wt(k)_{\alpha,\beta} & := \frac{\hat V(k)}{2\hbar \kappaf N \lvert k\rvert} \times \left\{ \begin{array}{cl} 
0  & \text{ if } \alpha,\beta \in \Ikp \text{ or } \alpha,\beta \in \Ikm 
\\ n_\alpha(k) n_\beta(k)  & \text{otherwise}\;.
\end{array} \right.
\end{split}
\end{equation}

If $c_\alpha^*(k)$ were exactly bosonic creation operators, then the quadratic Hamiltonian $h_\textnormal{eff}(k)$ could be diagonalized by a Bogoliubov transformation of the form 
\begin{equation}\label{eq:bogformula}
\exp \Big( \frac{1}{2}\sum_{\alpha,\beta\in\Ik} K(k)_{\alpha,\beta} c^*_\alpha(k) c^*_\beta(k) -\hc \Big)\;. 
\end{equation} 
The matrix $K(k)$ (also called the Bogoliubov kernel) achieving this can be computed from $\D(k)$, $\W(k)$, $\Wt(k)$; we refer to \cite[Appendix A.1]{BNP+20} for a detailed derivation. Here let us just state the result. We introduce the $\Ik \times \Ik$ matrices 
\begin{equation} \label{eq:defE}
E(k) := \big[ \big(\D(k)+\W(k)-\Wt(k)\big)^{1/2} (\D(k)+\W(k)+\Wt(k)) \big(\D(k)+\W(k)-\Wt(k)\big)^{1/2} \big]^{1/2}
\end{equation}
and
\begin{align}\label{eq:defS1}
S_{1}(k)  := (\D(k)+\W(k)-\Wt(k))^{1/2} E(k)^{-1/2}    \;.
\end{align} 
(Formulas \cref{eq:block1,eq:block2} below show that the square roots here involve only positive matrices, so that $E(k)$ and $S_1(k)$ are well--defined.)
Then the Bogoliubov kernel $K(k)$ is
\begin{equation} 
\label{eq:Kk}
K(k) := \log \lvert S_1 (k)^\intercal \rvert =\frac{1}{2} \log  \Big( S_1(k) S_1(k)^\intercal \Big)\; .
\end{equation}

The following lemma provides strong estimates for the matrix elements of $K(k)$. While in most parts the simpler bound $\lvert K(k)_{\alpha,\beta}\rvert \leq C \hat{V}(k)/M$ is sufficient for our analysis, the sharp bound of the lemma is crucial for controlling the non--bosonizable terms $\Ecal_2$; see \cref{eq:ch-sh-sum}. The simpler bound can be proved without smallness assumption on the potential; the sharp bound requires the smallness because we prove it using a power series expansion in \cref{eq:powerseries}.

The proof of the lemma is a lengthy but mostly straightforward computation. A key role in the proof is played by the fact that the factor $k\cdot\hat{\omega}_\alpha$ arising from the linearized kinetic energy (through the matrix $D(k)$) appears also for the independent geometric reason of \cref{lem:counting} in the normalization factor $n^2_\alpha(k)$ in the bosonized interaction, i.\,e., in $W(k)$ and $\tilde{W}(k)$. The geometry of the Fermi surface implies that the excitation--creating operators of the interaction vanish at the same rate $u_\alpha= \sqrt{\lvert \hat{k} \cdot \hat{\omega}_\alpha\rvert}$ as the leading order of their kinetic energy when we move toward ``tangential'' excitations.

\begin{lem}[Bogoliubov Kernel]\label{lem:K} Let $K(k)$ be defined in \cref{eq:Kk}. If $\norm{\hat{V}}_{\ell^\infty}$ is sufficiently small, then for any $k\in \north$, $K(k)$ is a real symmetric matrix satisfying
\begin{equation}
\label{eq:Kbound}
\lvert K(k)_{\alpha,\beta}\rvert \leq \frac{C \hat{V}(k)}{M} \min \left\{ \frac{n_\alpha(k)}{n_\beta(k)},\frac{n_\beta(k)}{n_\alpha(k)}\right\} \quad \textnormal{for all }\alpha,\beta \in \Ik\;.
\end{equation}
\end{lem}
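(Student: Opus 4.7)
The plan is to exploit a rank-one factorization of the matrices $W(k) \pm \widetilde W(k)$ together with the smallness of $\norm{\hat V}_{\ell^\infty}$ to expand $K(k)$ as a convergent power series in $g := \hat V(k)/(2\hbar\kappaf|k|N)$ whose coefficients can be computed essentially explicitly.

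Setting $(v_\pm)_\alpha := n_\alpha(k)\mathds{1}(\alpha \in I_k^\pm)$ and $P_\pm := v_+ \pm v_-$, an inspection of \cref{eq:blocks1} gives $W(k) + \widetilde W(k) = g P_+ P_+^\intercal$ and $W(k) - \widetilde W(k) = g P_- P_-^\intercal$. Hence the matrices $D(k) + W(k) \pm \widetilde W(k)$ entering \cref{eq:defE,eq:defS1} are rank-one perturbations of the diagonal matrix $D = \diag(u_\alpha^2)$ with $u_\alpha := \sqrt{|\hat k\cdot\hat\omega_\alpha|}$, and by \cref{lem:counting} the vectors $v_\pm$ are essentially proportional to $u_\alpha$: $(v_\pm)_\alpha = \eta_k u_\alpha (1+o(1))$ with $\eta_k := (4\pi k_\F^2|k|/M)^{1/2}$. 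This alignment between the rank-one direction and the square root of the diagonal is the algebraic content of the cancellation mentioned right above the statement.

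I would then expand each of $M_\pm^{1/2} := (D(k)+W(k)\pm\widetilde W(k))^{1/2}$, $E(k)^{-1}$ and finally $\log(S_1 S_1^\intercal)$ as Neumann-type series in $g$, using at each step the Sylvester equation $D^{1/2} T + T D^{1/2} = R$ solved entrywise by $T_{\alpha,\beta} = R_{\alpha,\beta}/(u_\alpha + u_\beta)$. An induction on the order shows that every coefficient factorizes as $(P_\varepsilon)_\alpha \Phi_n^{\varepsilon,\varepsilon'}(\alpha,\beta) (P_{\varepsilon'})_\beta$ for $\varepsilon,\varepsilon' \in \{+,-\}$, with scalar multipliers $\Phi_n^{\varepsilon,\varepsilon'}$ depending only on the $u_\gamma$. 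Substituting $(P_\varepsilon)_\alpha \asymp \eta_k u_\alpha$ and using the elementary inequality
\begin{equation*}
\frac{u_\alpha u_\beta}{u_\alpha^2 + u_\beta^2} \leq \min\Big\{\frac{u_\alpha}{u_\beta}, \frac{u_\beta}{u_\alpha}\Big\} = (1+o(1)) \min\Big\{\frac{n_\alpha(k)}{n_\beta(k)}, \frac{n_\beta(k)}{n_\alpha(k)}\Big\}
\end{equation*}
yields at first order $K(k)_{\alpha,\beta} \approx -(2\pi\kappaf \hat V(k)/M)\, u_\alpha u_\beta/(u_\alpha^2 + u_\beta^2)$, nonzero only for $\alpha, \beta$ in opposite hemispheres (the same-hemisphere contributions of $P_+$ and $P_-$ cancel), already matching the target bound with the correct prefactor $g\eta_k^2 = 2\pi\kappaf\hat V(k)/M$.

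The main obstacle is the bookkeeping at higher orders: each further Sylvester inversion introduces internal sums over patch indices $\gamma$ with weights $(P_\pm)_\gamma^2/\big((u_\alpha + u_\gamma)(u_\gamma + u_\beta)\big)$, and one must verify that these sums do not degrade the target minimum into the weaker product $u_\alpha u_\beta$. I expect to close this by tracking inductively a pointwise estimate of the form $|\Phi_n^{\varepsilon,\varepsilon'}(\alpha,\beta)| \leq C^n \eta_k^{-2} \max(u_\alpha, u_\beta)^{-2}$, exploiting that the effective expansion parameter $g\norm{P_\pm}^2 \sim g\eta_k^2 M \sim \hat V(k)$ is uniformly bounded by $C\norm{\hat V}_{\ell^\infty}$ so that the geometric series converges.
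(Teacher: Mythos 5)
Your decomposition $W(k)\pm\widetilde W(k) = g\,P_\pm P_\pm^\intercal$ is exactly equivalent to the paper's block-diagonalization of the $2\ik\times 2\ik$ system by the orthogonal matrix $U$, and you have correctly identified the crucial alignment $n_\alpha(k)\simeq\eta_k u_\alpha$ (\cref{lem:counting}) as the source of the cancellation. Your first-order formula for $K_{\alpha,\beta}$ is also right. But the route you propose differs from the paper's in a way that leaves a genuine gap, and the inductive estimate you state is not even correct at first order.

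The paper does not iterate Sylvester equations. It exploits the integral representation $A^{-1/2}=\tfrac{2}{\pi}\int_0^\infty \di\lambda\,(A+\lambda^2)^{-1}$ together with the Sherman--Morrison formula to produce \emph{closed-form} expressions for the matrix elements of $L_1$ and $L_2$ in \cref{eq:Ldef}: each is a single scalar integral over $\lambda$ with integrand factoring in $\alpha$ and $\beta$ (see the displays just before \cref{eq:L1} and at \cref{eq:L2}). The entrywise bound \cref{eq:Lbound} then follows from elementary scalar integral estimates, with no smallness assumption and no order-by-order bookkeeping. A power series is used only at the very last step, $K=\tfrac12\log(\id+L)$, where the matrix-product structure makes the propagation of the bound \cref{eq:Lbound} trivial (\cref{eq:Lnbound}, \cref{eq:powerseries}). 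You, by contrast, propose to expand $M_\pm^{1/2}$, then $E^{-1/2}$, then the log, composing three power series — this is where the hard work is, and you acknowledge you have not done it.

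Worse, the inductive estimate you propose to close the argument, $|\Phi_n^{\varepsilon,\varepsilon'}(\alpha,\beta)|\le C^n\eta_k^{-2}\max(u_\alpha,u_\beta)^{-2}$, already fails at $n=1$. Your own first-order formula gives $K_{\alpha,\beta}\approx g(P)_\alpha\Phi_1(P)_\beta$ with $\Phi_1=-1/(u_\alpha^2+u_\beta^2)\sim\max(u_\alpha,u_\beta)^{-2}$, whereas your claimed bound would force $|\Phi_1|\lesssim\eta_k^{-2}\max^{-2}$ with $\eta_k^2\sim k_\F^2|k|/M\to\infty$; these are inconsistent. Tracking the normalizations more carefully (the effective parameter per order is $g\eta_k^2 M = 2\pi\kappaf\hat V(k)$, not $g$ alone) shows that the correct inductive target would be $|\Phi_n|\le C^n(\eta_k^2 M)^{n-1}\max(u_\alpha,u_\beta)^{-2}$. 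One then needs to verify, for each Sylvester iteration, that the internal sums $\sum_\gamma n_\gamma^2/\big[(u_\alpha+u_\gamma)(u_\gamma+u_\beta)\big]$ do not pick up extra $\max(u_\alpha,u_\beta)^{-1}$ factors on top of the allowed $\eta_k^2 M$; you have not checked this, and it is precisely the difficulty the paper sidesteps by never expanding the square roots perturbatively. So as written the proposal is a plausible but incomplete plan — the structural insight is right, the main estimate is misstated and unproved.
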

\begin{proof}  In the following, we frequently drop the $k$--dependence from the notation for simplicity.
Let us 
introduce
\[
g:=\frac{1}{2}\kappaf \hat V(k)\;, \qquad u_\alpha:= \sqrt{\lvert \hat{k} \cdot \hat{\omega}_\alpha\rvert}\;, \qquad v_\alpha:= \frac{\hbar}{\kappaf \sqrt{\lvert k\rvert}} n_\alpha(k)\;, \qquad \forall \alpha \in\Ik\;. 
\]
Recall that $\kappaf=(\frac{3}{4\pi})^{\frac{1}{3}}$. By definition of the index set $\Ik$, we have $1\geq u_\alpha^2 \geq CN^{-\delta}$ for all $\alpha \in \Ik$. Moreover, \cref{lem:counting} implies the important relation $v_\alpha \simeq u_\alpha\sqrt{ 4\pi/M}$ (up to a lower order error term) between the normalization factor $n_\alpha(k)$ and the linearization of the kinetic energy, which will be used repeatedly for cancellations.

Due to the reflection symmetry 
\[
B_{\alpha+M/2} = - B_\alpha, \quad \omega_{\alpha+M/2} = - \omega_\alpha \qquad \forall \alpha\in\{1,2,\ldots, M/2\} 
\]
we can denote $\ik := \lvert \Ikp\rvert = \lvert\Ikm\rvert \leq M/2$ and map the indices $\Ikp$ to $\{1,\ldots, \ik\}$, and the indices $\Ikm$ to $\{\ik+1,\ldots 2\ik\}$. 
Obviously
\[
u_\alpha= u_{\alpha+\ik}, \quad  v_\alpha = v_{\alpha+\ik} \qquad \forall \alpha\in\{1,2,\ldots,\ik\}\;. 
\]
Therefore, the matrices in \cref{eq:blocks1} can be written in the $2\ik \times 2\ik$ block form 
\[
\D = \begin{pmatrix} d & 0 \\ 0 & d \end{pmatrix}, \quad \label{eq:Wdef}\W = \begin{pmatrix} b & 0 \\ 0 & b \end{pmatrix}\;, \quad \Wt = \begin{pmatrix} 0& b\\ b & 0\end{pmatrix}\;,
\]
where $d=\diag(u_\alpha^2, \alpha = 1,\ldots,\ik)$ and $b=g\lvert v\rangle \langle v\rvert$ (i.\,e., a rank--one operator) with $v = (v_1, \cdots , v_{\ik})$. 

As in \cite{GS13}, denoting by $\id$ the $\ik\times \ik$--identity matrix, we define
\begin{equation}
\label{eq:blockdiagonalization}
U = \frac{1}{\sqrt{2}}\begin{pmatrix} \id & \id \\ \id & -\id \end{pmatrix}\;.
\end{equation}
Obviously $U^\intercal = U = U^{-1}$ and it simultaneously block--diagonalizes $\D+\W+ \Wt$ and $\D+\W- \Wt$, namely
\begin{equation}\label{eq:block1}
U^\intercal (\D+\W+\Wt) U = \begin{pmatrix} d+2b & 0 \\ 0 & d \end{pmatrix}\;, \quad U^\intercal (\D+\W-\Wt) U = \begin{pmatrix} d & 0\\ 0 & d+2b \end{pmatrix}\;.\end{equation}
Recall the matrix
\[
E = \left( (\D+\W-\Wt)^{1/2} (\D+\W+\Wt) (\D+\W-\Wt)^{1/2} \right)^{1/2} \in \Cbb^{2\ik\times 2\ik}\;;
\]
applying the block--diagonalization we find 
\begin{equation}\label{eq:block2}
U^\intercal E U = \begin{pmatrix} \left[ d^{1/2}(d+2b) d^{1/2} \right]^{1/2} & 0 \\ 0 & \left[ (d+2b)^{1/2} d (d+2b)^{1/2}\right]^{1/2} \end{pmatrix}\;.
\end{equation}
%Note that all matrices  $\D,\W,\Wt$ and $E$ are real and symmetric. 

Now defining the matrix $L := S_1 S_1^\intercal - \id$ with $S_1$ as in \cref{eq:defS1}, we find
\begin{align}\label{eq:matrixseries}
K = \frac{1}{2}\log(S_1 S_1^\intercal) = \frac{1}{2}\log\big(\id + L\big) = \frac{1}{2}\sum_{n=1}^\infty \frac{(-1)^{n+1} L^n}{n} \;.
\end{align}
We are going to prove
\begin{equation}\label{eq:Lbound}
\lvert L_{\alpha,\beta} \rvert \leq C \frac{\hat{V}(k)}{M} \min \left\{ \frac{u_\alpha}{u_\beta},\frac{u_\beta}{u_\alpha} \right\} \quad \textnormal{for all }\alpha,\beta \in \Ik\;.
\end{equation}
In particular, thanks to the assumption of $\norm{\hat{V}}_{\ell^\infty}$ being small, also, e.\,g., the Hilbert--Schmidt norm of $L$ can be assumed to be uniformly smaller than $1$, which is sufficient to ensure convergence of the matrix power series \cref{eq:matrixseries}.

\paragraph{From $L$ to $K$.}
Let us deduce \cref{eq:Kbound} by assuming \cref{eq:Lbound}. Spelling out the matrix product
\begin{equation*}
( L^n )_{\alpha,\beta} = \sum_{\alpha_1 \in \Ik} \sum_{\alpha_2 \in \Ik} \cdots \sum_{\alpha_{n-1} \in \Ik} L_{\alpha,\alpha_1} L_{\alpha_1,\alpha_2} \cdots L_{\alpha_{n-1},\beta}
\end{equation*}
we obtain (recall that $\lvert \Ik \rvert =2\ik \leq M$)
\begin{align*}
\lvert ( L^n )_{\alpha,\beta} \rvert & \leq \sum_{\alpha_1 \in \Ik} \sum_{\alpha_2 \in \Ik} \cdots \sum_{\alpha_{n-1} \in \Ik} \left( \frac{C\hat{V}(k)}{M} \frac{u_\alpha}{u_{\alpha_1}} \right) \left( \frac{C\hat{V}(k)}{M} \frac{u_{\alpha_1}}{u_{\alpha_2}} \right) \cdots \left( \frac{C\hat{V}(k)}{M} \frac{u_{\alpha_{n-1}}}{u_{\beta}} \right) \\
& \leq \left( \frac{C\hat{V}(k)}{M} \right)^n \lvert \Ik \rvert^{n-1} \frac{u_\alpha}{u_\beta} \leq \frac{(C\hat{V}(k))^n}{M}  \frac{u_\alpha}{u_\beta}\;. \tagg{Lnbound}
\end{align*}
The same holds with exchanged roles of $u_\alpha$ and $u_\beta$. 
From \cref{eq:matrixseries} we obtain
\begin{equation}    \label{eq:powerseries}
\begin{split}
\lvert K_{\alpha,\beta}\rvert & \leq \frac{1}{2} \sum_{n=1}^\infty \frac{\lvert (L^n)_{\alpha,\beta}\rvert}{n} \leq \frac{1}{2} \sum_{n=1}^\infty \frac{(C\hat{V}(k))^n}{2n M}  \min \left\{ \frac{u_\alpha}{u_\beta}, \frac{u_\beta}{u_\alpha} \right\}  \\
& = -\frac{\log( 1 - C\hat{V}(k))}{2M}  \min \left\{ \frac{u_\alpha}{u_\beta}, \frac{u_\beta}{u_\alpha} \right\} \leq  C\frac{\hat{V}(k)}{M} \min \left\{ \frac{u_\alpha}{u_\beta}, \frac{u_\beta}{u_\alpha} \right\}\;.
\end{split}
\end{equation}
The convergence of the series of the logarithm follows from the assumption that $|\hat{V}(k)|$ is small. This implies \cref{eq:Kbound}, thanks to \cref{lem:counting}. 
%concludes the proof of the bound for $K$, provided that \cref{eq:Lbound} holds true. 

\paragraph{Bound for $L$.}
We now prove \cref{eq:Lbound}. The $2\ik \times 2\ik$--matrix $L$ can be block--diagonalized using the orthogonal matrix $U$ from \cref{eq:blockdiagonalization}, i.\,e.,
 \begin{align}\label{eq:rel}
U \left( S_1 S_1^\intercal -\id \right) U^\intercal & = U (\D+\W-\Wt)^{1/2} E^{-1} (\D+\W-\Wt)^{1/2} U^\intercal -\id  = \begin{pmatrix} L_1 & 0 \\ 0 & L_2 \end{pmatrix}
 \end{align}
 with $\ik \times \ik$--blocks 
  \begin{equation}
  \label{eq:Ldef}
  \begin{split}
  L_1 & := d^{1/2}\big[d^{1/2}(d+2b)d^{1/2}\big]^{-1/2}d^{1/2} - \id\;, \\
  L_2 & := (d+2b)^{1/2} \big[ (d+2b)^{1/2}d(d+2b)^{1/2} \big]^{-1/2} (d+2b)^{1/2} - \id\;.
  \end{split}\end{equation}
 Inverting \cref{eq:rel} we obtain
 \begin{align}
 L = U^\intercal \begin{pmatrix} L_1 & 0 \\ 0 & L_2 \end{pmatrix} U = \frac{1}{2}\begin{pmatrix} L_1 + L_2 & L_1 - L_2 \\ L_1 - L_2 & L_1 + L_2 \end{pmatrix}\,.
 \end{align}
 Thus, with the matrix indices on $L_1$ and $L_2$ to be read as $\alpha \mod \ik$ and $\beta \mod \ik$, we have
 \begin{equation}
 \lvert L_{\alpha,\beta} \rvert \leq  \lvert (L_1)_{\alpha,\beta} \rvert + \lvert (L_2)_{\alpha,\beta} \rvert \;.
 \end{equation}

  \paragraph{Estimating $L_1$.}
 In the square brackets in the definition \cref{eq:Ldef} of $L_1$ we have a rank--one perturbation of a diagonal matrix, namely defining the vector $\tilde{v} := d^{1/2} v$ we have
\begin{equation}d^{1/2}(d+2b)d^{1/2} = d^2 + 2g \lvert \tilde{v}\rangle\langle \tilde{v} \rvert\;.\label{eq:resol}\end{equation}
Using the Sherman--Morrison formula for the resolvent of any invertible matrix $A$ with rank--one perturbation given by vectors $x$, $y$ such that $1+ \langle y, A^{-1} x \rangle \not= 0$, i.\,e.,
\begin{equation} \label{eq:A}
\left( A + \lvert x \rangle \langle y \rvert \right)^{-1} = A^{-1} - \frac{A^{-1} \lvert x \rangle \langle y \rvert A^{-1}}{1+ \langle y, A^{-1} x \rangle}
\end{equation}
we explicitly calculate the resolvent of \cref{eq:resol} and then enter with it in the integral representation $A^{-1/2} = \frac{2}{\pi}\int_0^\infty \di\lambda (A + \lambda^2)^{-1}$ (for any positive matrix $A$), with the result that
\[
 \left( d^2 + 2 g \lvert \tilde{v}\rangle\langle \tilde{v} \rvert \right)^{-1/2} = d^{-1} -  \frac{4g}{\pi} \int_0^\infty  \frac{\di \lambda}{f(\lambda)} (d^2 + \lambda^2)^{-1} \left\lvert \tilde{v} \right\rangle\left\langle \tilde{v} \right\rvert (d^2 + \lambda^2)^{-1}\;.
\]
The function $f(\lambda)$ here is given by
\[
f(\lambda) = 1 + 2g \sum_{\alpha=1}^{\ik} \frac{u_\alpha^2 v_\alpha^2}{u_\alpha^4 + \lambda^2}\;.
\]
Multiplying from both sides by $d^{1/2}$, and subtracting the identity matrix, we obtain
\[
 (L_1)_{\alpha,\beta} = -  \frac{4g}{\pi} \int_0^\infty \di\lambda \frac{1}{f(\lambda)} \frac{u_\alpha^2 v_\alpha}{u_\alpha^4 + \lambda^2}  \frac{u_\beta^2 v_\beta}{u_\beta^4 + \lambda^2}\;.
\]
Recall $v_\alpha \leq u_\alpha CM^{-\frac{1}{2}}$ and observe that $1/f(\lambda) \leq 1$, so we get
\begin{align}
\lvert \left(L_1\right)_{\alpha,\beta} \rvert & \leq \frac{Cg}{M}  \int_0^\infty \di \lambda \frac{1}{f(\lambda)} \frac{u_\alpha^3}{u_\alpha^4 + \lambda^2} \frac{u_\beta^3}{u_\beta^4 + \lambda^2} \leq \frac{Cg}{M}  u_\alpha^3 u_\beta^3 \int_0^\infty \di\lambda \frac{1}{u_\alpha^4 + \lambda^2} \frac{1}{u_\beta^4 + \lambda^2} \nonumber
\\
& = \frac{Cg}{M}  u_\alpha^3 u_\beta^3 \frac{\pi}{2(u_\alpha^4 u_\beta^2 + u_\alpha^2 u_\beta^4)}  = \frac{Cg}{M} \frac{u_\alpha u_\beta}{u_\alpha^2 + u_\beta^2} \leq \frac{Cg}{M} \min \left\{ \frac{u_\alpha}{u_\beta}, \frac{u_\beta}{u_\alpha} \right\}\;. \label{eq:L1}
\end{align}

\paragraph{Estimating $L_2$.} Recall that
\begin{align}
L_2 = (d+2b)^{1/2} \left[ (d+2b)^{2} -(d+2b)^{1/2} 2b (d+2b)^{1/2} \right]^{-1/2} (d+2b)^{1/2} - \id\;. \label{eq:defL2}
\end{align}
Here $-(d+2b)^{1/2} 2b (d+2b)^{1/2} = -2g \lvert (d+2b)^{1/2} v\rangle \langle (d+2b)^{1/2} v \rvert$ is a rank--one perturbation of $(d+2b)^2$, so by employing again the integral expansion as used for $L_1$ we obtain
\begin{align}\label{eq:L2}
L_2 & = \frac{4g}{\pi}\!\int_0^\infty\!\!\!\!\! \di \lambda \left[ 1 - 2g\big\langle v, \frac{d+2b}{(d+2b)^2 + \lambda^2} v\big\rangle \right]^{-1} \left\lvert \frac{d+2b}{(d+2b)^2+\lambda^2} v\right\rangle \left\langle  \frac{d+2b}{(d+2b)^2+\lambda^2} v \right\rvert.
\end{align}
Now consider the function $\tilde{f}(\lambda) := 1 - 2g\big\langle v, \frac{d+2b}{(d+2b)^2 + \lambda^2} v\big\rangle$, the inverse of which is appearing in the integral. For $\lambda = 0$,  using the Sherman--Morrison formula \cref{eq:A}, this time expanding $d+2b$ around $d$, we find
\begin{align*}
\tilde{f}(0) & = 1 - 2g\langle v,\bigg[ d^{-1} - 2g\frac{d^{-1} \lvert v\rangle \langle v\rvert d^{-1}}{1+ 2g\langle v, d^{-1} v\rangle} \bigg] v\rangle 
%              & = 1 - 2g \langle v,d^{-1} v\rangle + \frac{(2g \langle v,d^{-1} v\rangle)^2}{1+2g \langle v,d^{-1} v\rangle}  
             = 1 - \frac{2g \langle v,d^{-1} v\rangle}{1+2g \langle v,d^{-1} v\rangle}\;.
\end{align*}
Since
\begin{equation}
2g \langle v,d^{-1} v\rangle = 2g \sum_{\alpha=1}^{\ik} \frac{v_\alpha^2}{u_\alpha^2} \leq C g \sum_{\alpha=1}^{\ik} \frac{1}{M} \frac{u_\alpha^2}{u_\alpha^2} \label{eq:A23}
\end{equation}
is uniformly bounded we have $\tilde{f}(0) > 0$, strictly and uniformly in $k$ and $M$. Furthermore $\lambda \to \left\langle v, \frac{d+2b}{(d+2b)^2 + \lambda^2} v\right\rangle$ is monotone decreasing for all $\lambda \geq 0$, thus $\tilde{f}(\lambda) \geq \tilde{f}(0)$ for all $\lambda \geq 0$.
We expand 
$\frac{d+2b}{(d+2b)^2+\lambda^2}= (d+2b)(d+2b+i\lambda)^{-1} (d+2b -i\lambda)^{-1} $ and use the Sherman--Morrison formula separately for both the resolvents $(d+2b\pm i\lambda)^{-1}$. Using the Dirac ket notation, this results in
\begin{align*}
& \left\lvert \frac{d+2b}{(d+2b)^2+\lambda^2} v\right\rangle \\
& = (d+2b) \bigg[ 1 - \frac{2g(d+i\lambda)^{-1}\lvert v\rangle\langle v\rvert }{1+2g \langle v, (d+i\lambda)^{-1} v\rangle}\bigg] (d^2+\lambda^2)^{-1}  \bigg[ 1 - \frac{2g\lvert v\rangle\langle v\rvert (d-i\lambda)^{-1}}{1+2g \langle v, (d-i\lambda)^{-1} v\rangle}\bigg] \lvert v\rangle \\
& = (d+2b) \bigg[ 1 - 2g\frac{(d+i\lambda)^{-1}\lvert v\rangle\langle v\rvert }{1+2g \langle v, (d+i\lambda)^{-1} v\rangle}\bigg] (d^2+\lambda^2)^{-1}  \lvert v\rangle  \frac{1}{1+2g \langle v, (d-i\lambda)^{-1} v\rangle} \displaybreak[0]\\
& = d \bigg[ 1 - 2g\frac{(d+i\lambda)^{-1}\lvert v\rangle\langle v\rvert }{1+2g \langle v, (d+i\lambda)^{-1} v\rangle}\bigg] (d^2+\lambda^2)^{-1}  \lvert v\rangle  \frac{1}{1+2g \langle v, (d-i\lambda)^{-1} v\rangle} \\
& \quad + 2g \lvert v\rangle \frac{1}{1+2g \langle v, (d+i\lambda)^{-1} v\rangle} \langle v, (d^2+\lambda^2)^{-1}  v\rangle  \frac{1}{1+2g \langle v, (d-i\lambda)^{-1} v\rangle} 
\end{align*}
where for the last line we used $b = g\lvert v\rangle \langle v\rvert$.
Keeping the $1$ from the big square bracket separate while combining the other terms, this simplifies to
\begin{align}
 \left\lvert \frac{d+2b}{(d+2b)^2+\lambda^2} v\right\rangle 
& = \frac{d}{d^2+\lambda^2}  \lvert v\rangle  \frac{1}{1+2g \langle v, (d-i\lambda)^{-1} v\rangle} \label{eq:real1}\\
& \quad + 2g \frac{i\lambda}{d+i\lambda} \lvert v\rangle \frac{1}{\lvert 1+2g \langle v, (d+i\lambda)^{-1} v\rangle\rvert^2} \langle v, (d^2+\lambda^2)^{-1}  v\rangle
\;. \label{eq:real2}
\end{align}
The vector $\big\lvert \frac{d+2b}{(d+2b)^2+\lambda^2} v\big\rangle$ has real elements since $v$ is a real vector and $\frac{d+2b}{(d+2b)^2+\lambda^2}$ is a real matrix. However, \cref{eq:real1,eq:real2} are not explicitly real (by choosing an order out of the two options $(d+2b)^2 + \lambda^2 = (d+2b+i\lambda)^{-1} (d+2b -i\lambda)^{-1}$ and $(d+2b)^2 + \lambda^2 = (d+2b-i\lambda)^{-1} (d+2b +i\lambda)^{-1}$ we have broken this symmetry).
% So sad.
To make the expression explicitly real again, let us add the complex conjugate, yielding
\begin{align*}
2\left\lvert \frac{d+2b}{(d+2b)^2+\lambda^2} v\right\rangle &
% & = \frac{d}{d^2+\lambda^2}  \lvert v\rangle  \frac{1}{1+2g \langle v, (d-i\lambda)^{-1} v\rangle} + \frac{d}{d^2+\lambda^2}  \lvert v\rangle  \frac{1}{1+2g \langle v, (d+i\lambda)^{-1} v\rangle}\\
% & \quad + 2g \frac{i\lambda}{d+i\lambda} \lvert v\rangle \frac{\langle v, (d^2+\lambda^2)^{-1}  v\rangle}{\lvert 1+2g \langle v, (d+i\lambda)^{-1} v\rangle\rvert^2}  + 2g \frac{-i\lambda}{d-i\lambda} \lvert v\rangle \frac{\langle v, (d^2+\lambda^2)^{-1}  v\rangle}{\lvert 1+2g \langle v, (d+i\lambda)^{-1} v\rangle \rvert^2} \\
 = \frac{d}{d^2+\lambda^2}  \lvert v\rangle  \frac{2+2g \langle v, (d+i\lambda)^{-1} v\rangle +2g \langle v, (d-i\lambda)^{-1} v\rangle}{\lvert 1+2g \langle v, (d+i\lambda)^{-1} v\rangle\rvert^2}\\
& \quad + 2g \frac{2\lambda^2}{d^2+\lambda^2} \lvert v\rangle \frac{\langle v, (d^2+\lambda^2)^{-1}  v\rangle}{\lvert 1+2g \langle v, (d+i\lambda)^{-1} v\rangle\rvert^2}\;.
\end{align*}
Using $\langle v, (d+i\lambda)^{-1} v\rangle +  \langle v, (d-i\lambda)^{-1} v\rangle =   \langle v, \frac{2d}{d^2+\lambda^2} v\rangle$
(and dividing by $2$) this becomes
\begin{align}
& \left\lvert \frac{d+2b}{(d+2b)^2+\lambda^2} v\right\rangle \label{eq:A37} \\
& = \bigg[ \frac{d}{d^2+\lambda^2}  \lvert v\rangle   \left( 1+2g \langle v, \frac{d}{d^2 + \lambda^2} v\rangle \right)  + 2g \frac{\lambda^2}{d^2+\lambda^2} \lvert v\rangle \langle v, (d^2+\lambda^2)^{-1}  v\rangle \bigg] \times \label{eq:melem}\\
& \quad \times \frac{1}{1+ 2g \langle v, \frac{2d}{d^2+\lambda^2} v\rangle +4g^2 \lvert \langle v, (d+i\lambda)^{-1} v\rangle\rvert^2}\;. \label{eq:frac}
\end{align}
In the denominator $1+ 2g \langle v, \frac{2d}{d^2+\lambda^2} v\rangle +4g^2 \lvert \langle v, (d+i\lambda)^{-1} v\rangle\rvert^2 \geq 1$; hence \cref{eq:frac} can be dropped for an upper bound. Inserting \cref{eq:melem} in \cref{eq:L2} we obtain
\begin{align}
&\lvert (L_2)_{\alpha,\beta} \rvert \label{eq:L2bound}\\
& \leq gC \int_0^\infty\!\!\! \di \lambda\, \Big\lvert \frac{u_\alpha^2 v_\alpha}{u_\alpha^4+\lambda^2}   \left( 1+2g \langle v, \frac{d}{d^2 + \lambda^2} v\rangle \right) + 2g \frac{\lambda^2 v_\alpha}{u_\alpha^4 + \lambda^2} \langle v, (d^2+\lambda^2)^{-1}  v\rangle \Big\rvert \times \nonumber \\ 
& \hspace{4.45em} \times \Big\lvert\frac{u_\beta^2 v_\beta}{u_\beta^4+\lambda^2}   \left( 1+2g \langle v, \frac{d}{d^2 + \lambda^2} v\rangle \right) + 2g\frac{\lambda^2 v_\beta}{u_\beta^4+\lambda^2} \langle v, (d^2+\lambda^2)^{-1}  v\rangle \Big\rvert \;. \nonumber
\end{align}
% (The sub-index $\alpha$ indicates the element of the vector given in ket notation.)

For the following estimates, note that 
$\langle v, \frac{d}{d^2+\lambda^2} v\rangle \leq \langle v, \frac{1}{d} v\rangle \leq C$ according to \cref{eq:A23}.

%%%%%%%%%%%%%%%%%%%%%%%%%%%%%%%%
\paragraph{First summand times first summand in \cref{eq:L2bound}.} Consider the product of the first summands from inside each of the absolute values. This is of the same type as \cref{eq:L1}, so
\begin{align*}
& gC \int_0^\infty \di \lambda \frac{u_\alpha^2 v_\alpha}{u_\alpha^4+\lambda^2}   \left( 1+2g \langle v, \frac{d}{d^2 + \lambda^2} v\rangle \right)^2 \frac{u_\beta^2 v_\beta}{u_\beta^4 + \lambda^2} \\
& \leq \frac{gC}{M}
% (1+8\pi g)^2
u_\alpha^3 u_\beta^3 \int_0^\infty \di \lambda \frac{1}{u_\alpha^4+\lambda^2} \frac{1}{u_\beta^4 + \lambda^2} \leq \frac{gC}{M}
% (1+8\pi g)^2
\min\left\{ \frac{u_\alpha}{u_\beta}, \frac{u_\beta}{u_\alpha} \right\}\;.
\end{align*}

%%%%%%%%%%%%%%%%%%%%%%%%%%%%
\paragraph{Second summand times second summand in \cref{eq:L2bound}.}
We have $\frac{1}{2 d \lambda} \geq \frac{1}{d^2+\lambda^2}$, which implies $\langle v, (d^2+\lambda^2)^{-1} v\rangle \leq (2\lambda)^{-1} \langle v,d^{-1} v\rangle \leq C \lambda^{-1}$. Thus
\begin{align}
g^3 C \int_0^\infty \di \lambda \frac{\lambda^2 v_\alpha}{u_\alpha^4 + \lambda^2} \langle v, (d^2+\lambda^2)^{-1}  v\rangle^2 \frac{\lambda^2 v_\beta}{u_\beta^4 + \lambda^2} \leq \frac{g^3 C}{M} \int_0^\infty \di\lambda \frac{u_\alpha}{u_\alpha^4+\lambda^2} \frac{u_\beta}{u_\beta^4 + \lambda^2} \lambda^2\;. \label{eq:662}
\end{align}
Assuming without loss of generality $u_\alpha \geq u_\beta$, by dropping a non--negative $u_\beta^4$ from the numerator we find
\begin{align*}
\textnormal{\cref{eq:662}} & \leq \frac{g^3 C}{M} \int_0^\infty \di\lambda  \frac{u_\alpha}{u_\alpha^4+\lambda^2} \frac{u_\beta}{\lambda^2} \lambda^2 = \frac{g^3 C}{M} u_\beta \int_0^\infty \di\lambda  \frac{u_\alpha}{u_\alpha^4+\lambda^2}\\
& = \frac{g^3 C}{M} \frac{u_\alpha u_\beta}{u_\alpha^4} \int_0^\infty u_\alpha^2 \di \rho \frac{1}{1+\rho^2} = \frac{g^3 C}{M} \frac{u_\beta}{u_\alpha}\;.
\end{align*}

%%%%%%%%%%%%%%%%%%%%%%%%%%%%%%%%%%%%
\paragraph{Mixed term in \cref{eq:L2bound}.}
We turn to the remaining two terms obtained from the product in the integral, for which we have to estimate
\begin{align}
& g^2C \int_0^\infty \di \lambda \frac{u_\alpha u_\beta (u_\alpha^2 + u_\beta^2)}{(u_\alpha^4 + \lambda^2)(u_\beta^4 + \lambda^2)} \lambda^2 \langle v, \frac{1}{d^2+\lambda^2} v\rangle \nonumber \\
& \leq \frac{g^2C}{M} u_\alpha u_\beta (u_\alpha^2 + u_\beta^2) \sum_{\gamma=1}^{\ik} \frac{C}{M} u_\gamma^2 \int_0^\infty \di\lambda \frac{\lambda^2}{(u_\alpha^4 + \lambda^2)(u_\beta^4 + \lambda^2)(u_\gamma^4 + \lambda^2)} \;. \label{eq:fish}
\end{align}
The integral is
\[
\int_0^\infty \di\lambda \frac{\lambda^2}{(u_\alpha^4 + \lambda^2)(u_\beta^4 + \lambda^2)(u_\gamma^4 + \lambda^2)} = \frac{\pi}{2} \frac{1}{(u_\alpha^2 + u_\beta^2)(u_\alpha^2 + u_\gamma^2)(u_\beta^2 + u_\gamma^2)}\;.
\]
Without loss of generality $u_\alpha \leq u_\beta$; then
\[
\textnormal{\cref{eq:fish}}  = \frac{g^2C}{M}\frac{u_\alpha}{ u_\beta} \frac{1}{M} \sum_{\gamma=1}^{\ik} \frac{u_\gamma^2}{u_\gamma^2 + u_\alpha^2} \frac{u_\beta^2}{u_\beta^2 + u_\gamma^2} \leq \frac{g^2C}{M} \frac{u_\alpha}{u_\beta} \;. 
\]
In the last step we used that both fractions in the sum are bounded by $1$. This completes the proof of \cref{lem:K}. 
\end{proof}

\section{Approximate Bogoliubov Transformation}

Given the Bogoliubov kernel $K(k)$ in \cref{eq:Kk}, for any $\lambda \in \Rbb$ we define a unitary transformation $T_\lambda: \fock \to \fock$ by
\begin{equation}\label{eq:Bog-T}
T_\lambda := \exp \Big( \lambda \sum_{k\in \north }\frac{1}{2}\sum_{\alpha,\beta\in\Ik} K(k)_{\alpha,\beta} c^*_\alpha(k) c^*_\beta(k) - \hc \Big)\;.
\end{equation}

The following lemma is the main result of this section, showing that $T_\lambda$ acts approximately as a bosonic Bogoliubov transformation. 
 \begin{lem}[Approximate Bogoliubov Transformation]\label{lem:Bog-T}
 For all $\lambda\in[-1,1]$, $l \in\north $ and $\gamma \in \Il$, we have
\begin{align} \label{eq:Bog_Tlambda}
T^*_\lambda c_\gamma(l) T_\lambda &= \sum_{\alpha\in\Il}\cosh(\lambda K(l))_{\alpha,\gamma} c_\alpha(l) + \sum_{\alpha\in\Il} \sinh(\lambda K(l))_{\alpha,\gamma} c^*_\alpha(l) + \mathfrak{E}_\gamma(\lambda,l)
\end{align}
where the error operator $\mathfrak{E}_\gamma(\lambda,l)$ satisfies (with a constant $C$ independent of $\lambda$ and $l$)
\begin{align*}
\sum_{\gamma\in \Il} \norm{\mathfrak{E}_\gamma(\lambda,l) \psi}  \leq C M N^{-\frac{2}{3}+\delta}  \norm{(\Ncal_\delta+M)^{1/2} (\Ncal +1)\psi}\;, \qquad \forall \psi \in \fock\;. 
\end{align*}
The same estimate holds for $\mathfrak{E}^*_\gamma(\lambda,l)$. 
\end{lem}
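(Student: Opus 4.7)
The plan is to differentiate $F_\gamma(\lambda) := T_\lambda^* c_\gamma(l) T_\lambda$ in $\lambda$ and compare with the flow that would arise if the $c_\alpha(k)$ were exact bosons, controlling the mismatch by a Duhamel/Grönwall argument. Writing the generator as
\[
G := \sum_{k\in\north}\tfrac12 \sum_{\alpha,\beta\in\Ik} K(k)_{\alpha,\beta}\, c_\alpha^*(k) c_\beta^*(k) - \hc\;,
\]
one has $\frac{d}{d\lambda} F_\gamma(\lambda) = T_\lambda^*[c_\gamma(l),G]T_\lambda$. The $-\hc$ part of $G$ contains only $c$'s, which commute exactly among themselves by \cref{lem:ccr}, so it does not contribute. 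For the remaining piece I would expand $[c_\gamma(l), c_\alpha^*(k) c_\beta^*(k)]$ by the approximate CCR of \cref{lem:ccr}; using the symmetry $K(k)^\intercal = K(k)$, this produces
\[
[c_\gamma(l), G] = \sum_{\alpha \in \Il} K(l)_{\gamma,\alpha}\, c_\alpha^*(l) + \mathcal{R}_\gamma\;,
\]
where $\mathcal{R}_\gamma$ collects the error terms built from $\Ecal_\gamma(l,k)$, schematically $\mathcal{R}_\gamma = \sum_{k,\beta} K(k)_{\gamma,\beta}\, \tfrac12 \bigl( \Ecal_\gamma(l,k) c_\beta^*(k) + c_\beta^*(k) \Ecal_\gamma(l,k) \bigr)$.

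\textbf{Ideal bosonic flow and Duhamel formula.} Set
\[
F^{\mathrm{id}}_\gamma(\lambda) := \sum_{\alpha\in\Il} \cosh(\lambda K(l))_{\alpha,\gamma} c_\alpha(l) + \sum_{\alpha\in\Il} \sinh(\lambda K(l))_{\alpha,\gamma} c_\alpha^*(l)\;.
\]
The symmetry of $K(l)$ combined with the identities $\frac{d}{d\lambda}\cosh(\lambda K) = K\sinh(\lambda K)$ and $\frac{d}{d\lambda}\sinh(\lambda K)=K\cosh(\lambda K)$ gives $\frac{d}{d\lambda} F^{\mathrm{id}}_\gamma(\lambda) = \sum_{\alpha \in \Il} K(l)_{\gamma,\alpha} (F^{\mathrm{id}}_\alpha(\lambda))^*$, with $F^{\mathrm{id}}_\gamma(0)=c_\gamma(l)=F_\gamma(0)$. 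Hence $\mathfrak{E}_\gamma(\lambda,l) := F_\gamma(\lambda) - F^{\mathrm{id}}_\gamma(\lambda)$ satisfies, by Duhamel's formula,
\[
\mathfrak{E}_\gamma(\lambda,l)\psi = \int_0^\lambda \sum_{\alpha \in \Il} K(l)_{\gamma,\alpha}\, \mathfrak{E}_\alpha^*(s,l)\psi\, ds + \int_0^\lambda T_s^*\, \mathcal{R}_\gamma\, T_s\psi\, ds\;.
\]

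\textbf{Bounding the driving term and closing the Grönwall.} To estimate $\sum_{\gamma\in\Il} \norm{T_s^* \mathcal{R}_\gamma T_s\psi}$, I would treat $\sum_{\beta} K(k)_{\gamma,\beta} c_\beta^*(k)$ as a smeared pair creation operator and apply \cref{eq:bosebound1} with $\norm{K(k)_{\gamma,\cdot}}_{\ell^2}^2 \leq C \hat V(k)^2/M$ obtained from the entrywise bound of \cref{lem:K}. The $\Ecal_\gamma(l,k)$ factors are then controlled by the operator inequality $|\Ecal_\gamma(l,k)|\leq CMN^{-2/3+\delta}\Ncal$ of \cref{lem:ccr}; the commutators between $\Ecal_\gamma$ and $c_\beta^*$ produced by the reordering are of lower order and can be absorbed. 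The conjugation by $T_s$ is handled by the stability of $\Ncal$ and $\Ncal_\delta$ under the approximate Bogoliubov transformation (the companion stability lemma referenced as \cref{lem:stability} in \cref{sec:sketch}). For the linear Duhamel term I would sum over $\gamma\in\Il$ and apply Grönwall's integral inequality to $\Phi(\lambda) := \sum_{\gamma\in\Il}\norm{\mathfrak{E}_\gamma(\lambda,l)\psi}$; since $\norm{K(l)}_{\mathrm{op}}$ is uniformly bounded (in fact small for small $\norm{\hat V}_{\ell^1}$), the linear feedback is harmless on $\lambda\in[-1,1]$. The estimate for $\mathfrak{E}_\gamma^*$ follows by taking adjoints throughout.

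\textbf{Main obstacle.} The delicate point is to land \emph{exactly} on the power $M^1$ rather than the naive $M^{3/2}$ in front of $N^{-2/3+\delta}$. A loose pairing—bounding each $|K(k)_{\gamma,\beta}|\leq C\hat V(k)/M$ separately and summing—produces an extra factor $M^{1/2}$ from the outer sum over $\gamma$. Avoiding this overshoot requires keeping $\sum_\beta K(k)_{\gamma,\beta} c_\beta^*(k)$ grouped and using the $\ell^2$--based smeared pair bound \cref{eq:bosebound1}, combined with the observation that the $M^{1/2}$ from the outer $\gamma$--summation can be absorbed into $(\Ncal_\delta + M)^{1/2}$ on the right--hand side (since $(\Ncal_\delta+M)^{1/2}\geq M^{1/2}$). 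This reliance on the sharp entrywise kernel bound of \cref{lem:K} and on the stability of particle--number controls under $T_s$ is precisely what makes the present lemma significantly more subtle than its counterpart in \cite{BNP+20}.
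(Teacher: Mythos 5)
Your architecture (Duhamel formula for the deviation from the ideal bosonic flow, closed by a Grönwall inequality) is a legitimate reformulation of the paper's proof, which instead iterates the Duhamel expansion $n_0$ times so that the $K(l)_{\gamma,\alpha}c^*_\alpha(l)$ contributions assemble the $\sinh$/$\cosh$ series directly, with the leftover error displayed explicitly in \cref{eq:e-Bog-00}. Both routes use the same commutator identity \cref{eq:bla} and the same stability input \cref{lem:stability}. However, your estimate of the driving term does not land on the stated bound: the $M$--accounting in the ``Main obstacle'' paragraph is incorrect.

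Keeping $\sum_\beta K(k)_{\gamma,\beta}c^*_\beta(k)$ grouped and invoking \cref{eq:bosebound1} gives $\norm{K(k)_{\gamma,\cdot}}_{\ell^2}\leq CM^{-1/2}$ per $\gamma$; using the pointwise $|\Ecal_\gamma(k,l)|\leq CMN^{-2/3+\delta}\Ncal$ from \cref{eq:ccr-0} gives another $CMN^{-2/3+\delta}$ per $\gamma$; and summing over $\gamma\in\Il$ (with $\lvert\Il\rvert\leq M$) yields $M^{-1/2}\cdot M\cdot M = M^{3/2}$, not $M^1$. Observing that $(\Ncal_\delta+M)^{1/2}\geq M^{1/2}$ cannot rescue this: since $M(\Ncal_\delta+1)\geq\Ncal_\delta+M$, the quantity $M^{1/2}\norm{(\Ncal_\delta+1)^{1/2}\xi}$ is an \emph{upper} bound for $\norm{(\Ncal_\delta+M)^{1/2}\xi}$, so replacing the former by the latter weakens, rather than tightens, the estimate. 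What actually saves the missing $M^{1/2}$ in the paper is the $\ell^1$--summed bound \cref{eq:ccr-1}, $\sum_\alpha\norm{\Ecal_\alpha(k,l)\xi}\leq CM^{3/2}N^{-2/3+\delta}\norm{\Ncal\xi}$, which beats $\lvert\Il\rvert$ copies of the pointwise bound by exactly a factor $M^{1/2}$ — but to apply it one must \emph{un-}group, paying $|K(k)_{\alpha,\beta}|\leq CM^{-1}$ per term, summing the $\Ecal$-factors over $\alpha$ at fixed $\beta$, and only then summing $\beta$ via \cref{eq:boseboundX}; tally $M^{-1}\cdot M^{3/2}\cdot M^{1/2}=M$. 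Your grouped form blocks \cref{eq:ccr-1} because the vector $\sum_\beta K(k)_{\gamma,\beta}c^*_\beta(k)T_s\psi$ depends on $\gamma$. Finally, dismissing the reorderings of $\Ecal_\gamma(k,l)$ and $c^*_\beta(k)$ as ``lower order'' hides a genuine difficulty: for $\beta=\gamma$, $\Ecal_\gamma(k,l)$ does not commute with $\Ncal_\delta$, and the paper resolves this with the Jacobi-identity rewrite \cref{eq:c-eps-com}, which trades it for $\Ecal_\gamma(k,k)$ (which does commute with $\Ncal_\delta$).
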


Here the matrices $\cosh(K(k))$ and $\sinh(K(k))$ are defined by functional calculus, or more explicitly by the series
\begin{align*}
\cosh(K(k)) = \sum_{n=0}^\infty \frac{K(k)^{2n}}{(2n)!}\;, \qquad \sinh(K(k)) = \sum_{n=0}^\infty \frac{K(k)^{2n+1}}{(2n+1)!}\;.
\end{align*}
As a consequence of \cref{lem:K}, by a calculation similar to that in \cref{eq:powerseries} using the power series, we can verify that, for all $\lambda\in [-1,1]$ and $k\in \north$, 
\begin{align} \label{eq:ch-sh}
|\cosh(\lambda K(k))_{\alpha,\beta}-  \delta_{\alpha,\beta}| + \lvert\sinh(\lambda K(k))_{\alpha,\beta}\rvert \leq \frac{C \hat V(k)}{M}\min\left\{ \frac{n_\alpha(k)}{n_\beta(k)} , \frac{n_\beta(k)}{n_\alpha(k)}\right\}\;. 
\end{align}

In order to prove \cref{lem:Bog-T}, we need to show that the fermion number is stable under the approximate Bogoliubov transformation. This is the content of the next lemma.

\begin{lem}[Stability of Fermion Number] \label{lem:stability} Let $T_\lambda$ be the approximate Bogoliubov transformation defined in \cref{eq:Bog-T}. For all $\lambda \in [-1,1]$ and all $\psi \in \fock$ we have
 \begin{align}
%   T^*_\lambda \Ncal_{\delta} T_\lambda   &\leq C  (\Ncal_{\delta}+1) \\
 T^*_\lambda \Ncal^m T_\lambda  &\leq C_m (\Ncal+1)^m  ,\quad\forall m\geq 1\;. 
 \label{eq:furtherbound1}\\
  T^*_\lambda \Ncal_{\delta} \Ncal^m T_\lambda  &\leq C_m  (\Ncal_{\delta}+1)(\Ncal+1)^m \;,\quad\forall m\geq 0\;. \label{eq:stability1} 
\end{align}
\end{lem}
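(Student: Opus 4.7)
My strategy is a Grönwall argument in the parameter $\lambda$. Writing $T_\lambda = e^{\lambda B}$ with the anti--self--adjoint generator
\[
B := \frac{1}{2}\sum_{k\in \north}\sum_{\alpha,\beta\in\Ik} K(k)_{\alpha,\beta}\bigl( c^*_\alpha(k) c^*_\beta(k) - c_\beta(k) c_\alpha(k)\bigr),
\]
we get $\frac{d}{d\lambda}\langle T_\lambda\psi, A\, T_\lambda\psi\rangle = \langle T_\lambda\psi, [B,A]\, T_\lambda\psi\rangle$ for any self--adjoint operator $A$. For \cref{eq:furtherbound1} I define $\phi_m(\lambda):=\langle T_\lambda\psi, (\Ncal+1)^m T_\lambda\psi\rangle$. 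Since $c^*_\alpha(k) c^*_\beta(k)$ shifts the eigenvalue of $\Ncal$ by $+4$, I have the identity
\[
[(\Ncal+1)^m,\, c^*_\alpha(k) c^*_\beta(k)] = c^*_\alpha(k) c^*_\beta(k)\bigl((\Ncal+5)^m-(\Ncal+1)^m\bigr),
\]
and the bracket is a polynomial of degree $m-1$ in $\Ncal+1$ with coefficients $\leq C_m$.

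\textbf{Closing the Grönwall inequality.} The essential estimate I need is that, for any $\phi\in\fock$ and $k\in\north$,
\[
\Bigl|\Bigl\langle \phi,\sum_{\alpha,\beta\in\Ik} K(k)_{\alpha,\beta}\, c^*_\alpha(k)c^*_\beta(k)\, P_{m-1}(\Ncal)\,\phi\Bigr\rangle\Bigr| \leq C_m\, \norm{(\Ncal+1)^{m/2}\phi}^2.
\]
I would prove this by Cauchy--Schwarz, commuting $P_{m-1}(\Ncal)$ past the pair operators at the cost of a shift by $4$, and then applying the double version of \cref{eq:bosebound1} in \cref{lem:bosonic-number} together with $\Ncal_\delta\leq\Ncal$. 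The smallness of the relevant prefactor comes from the uniform Hilbert--Schmidt bound
\[
\sum_{\alpha,\beta\in\Ik} K(k)_{\alpha,\beta}^2 \leq \frac{C\hat V(k)^2}{M^2}\sum_{\alpha,\beta\in\Ik} 1 \leq C,
\]
which follows from \cref{lem:K} and $|\Ik|\leq M$. The $h.c.\ $ half of $B$ is handled analogously after normal--ordering. Summing the (finitely many) $k\in\supp\hat V$ gives $|\phi_m'(\lambda)|\leq C_m\,\phi_m(\lambda)$, and integrating Grönwall from $0$ to $\lambda\in[-1,1]$ yields \cref{eq:furtherbound1}. The resulting expectation--value inequality upgrades to the operator inequality by polarization and density.

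\textbf{Extension to $\Ncal_\delta$.} For \cref{eq:stability1} I repeat the argument with $\phi_m^\delta(\lambda):=\langle T_\lambda\psi, (\Ncal_\delta+1)(\Ncal+1)^m T_\lambda\psi\rangle$. The new ingredient is the commutator $[B,\Ncal_\delta]$. For $\alpha\in\Ik$, the equator cut--off $|k\cdot\hat\omega_\alpha|\geq N^{-\delta}$ forces at least one of the two momenta $p$, $p-k$ in \cref{eq:def_bos_exc-bb} to satisfy $e\geq \tfrac14 N^{-1/3-\delta}$ --- this is exactly the mechanism established in the proof of \cref{lem:bosonic-number}. Consequently $c^*_\alpha(k)c^*_\beta(k)$ shifts the eigenvalue of $\Ncal_\delta$ by an integer in $\{2,3,4\}$, so the corresponding commutator of $(\Ncal_\delta+1)(\Ncal+1)^m$ with a pair operator is bounded by $C_m (\Ncal_\delta+1)(\Ncal+1)^{m-1}$ times that pair operator. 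Applying the same Hilbert--Schmidt estimate on $K(k)$ together with \cref{lem:bosonic-number}, Grönwall again closes.

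\textbf{Main obstacle.} The delicate point is the $\Ncal_\delta$ bookkeeping: $\Ncal_\delta$ is \emph{not} conserved by $T_\lambda$ (pair operators create excitations on both sides of the gap), so I must verify that the jump in $\Ncal_\delta$ produced by each pair operator is bounded by an \emph{absolute} constant, independent of $M$ and $N$. This is what the matching choice of the parameter $\delta$ in the equator cut--off \cref{eq:cutoff2} and in the gap in \cref{eq:gappedN2} guarantees; without this compatibility the Grönwall argument would lose powers of $\Ncal$ at every step and fail to close.
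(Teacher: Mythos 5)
Your overall strategy matches the paper's exactly: a Gr\"onwall argument in $\lambda$, reducing the derivative of $\langle T_\lambda\psi, A T_\lambda\psi\rangle$ to a commutator $[A, c^*_\alpha(k)c^*_\beta(k)]$ which shifts $\Ncal$ by $+4$, and then controlling the resulting bilinear form via the row-$\ell^2$ bound on $K$ (from \cref{lem:K}) in tandem with the bosonic-number estimates of \cref{lem:bosonic-number} (so that the factor $M^{-1/2}$ from $\|K_{\alpha,\cdot}\|_{\ell^2}$ cancels the factor $M^{1/2}$ from $\sum_\alpha \|c_\alpha(k)\,\cdot\,\|$). You also identify the correct conceptual reason why \cref{eq:stability1} closes: the compatibility between the equator cut-off in \cref{eq:cutoff2} and the gap in \cref{eq:gappedN2} guarantees a uniformly bounded jump in $\Ncal_\delta$. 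One caution on your side remark: if you really Cauchy--Schwarz over the product index $(\alpha,\beta)$ using only the Hilbert--Schmidt norm $\sum_{\alpha,\beta}K_{\alpha\beta}^2\leq C$, the factor $\sum_\beta\|c^*_\beta(k)\psi\|^2$ costs $\|(\Ncal_\delta+M)^{1/2}\psi\|^2$ (not $\|(\Ncal_\delta+1)^{1/2}\psi\|^2$), which introduces an uncontrolled $M$; the asymmetric estimate of the paper (row-$\ell^2$ via \cref{eq:bosebound1}, then sum over $\alpha$ via \cref{eq:boseboundX}) is what actually makes the constant uniform.

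The genuine imprecision is in your treatment of $[\Ncal_\delta, c^*_\alpha(k)c^*_\beta(k)]$. You assert that $c^*_\alpha(k)c^*_\beta(k)$ ``shifts the eigenvalue of $\Ncal_\delta$ by an integer in $\{2,3,4\}$'' and conclude that the commutator is a bounded scalar multiple of the same pair operator. This is not true as stated: $c^*_\alpha(k)$ is a sum over $p$ of elementary operators $a^*_p a^*_{p-k}$, and the number of momenta in $\{p,p-k\}$ above the gap varies from summand to summand (it can be $1$ or $2$, and the cut-off only guarantees it is $\geq 1$). Hence $[\Ncal_\delta, c^*_\alpha(k)]$ is \emph{not} a multiple of $c^*_\alpha(k)$ but rather the \emph{weighted} pair operator $c^{g*}_\alpha(k)$ of \cref{def:weightedpair} with $g(p,k)\in\{0,1,2\}$, and analogously $[\Ncal_\delta, c^*_\alpha(k)c^*_\beta(k)] = c^{g*}_\alpha(k)c^*_\beta(k) + c^*_\alpha(k)c^{g*}_\beta(k)$. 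To close the Gr\"onwall bound one must therefore verify that these weighted operators obey the same $\ell^2$-row and summed estimates as the unweighted ones --- this is precisely the content of \cref{lem:bosonic-number-weight}, which your argument tacitly invokes but does not establish. The gap is patchable (the weights are bounded in $\ell^\infty$, so \cref{lem:bosonic-number-weight} follows as in \cref{lem:bosonic-number}), but it is a real step that your write-up compresses into a false identity.
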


\begin{proof}  Let $\psi \in \fock$. We use a variation of the Gr\"onwall argument in \cite{BPS14,BNP+20}. 
\paragraph{Proof of \cref{eq:furtherbound1}.} 
For any function $f$ we have $f(\Ncal) c_\alpha^*(k) c_\beta^*(k)   = c_\alpha^*(k) c_\beta^*(k) f(\Ncal+4)$. Thus  
\begin{align*}
&[(\Ncal +4)^m, c^*_\alpha(k)c^*_\beta(k)  ]  =   \Big( (\Ncal +4)^m- \Ncal^m\Big) c^*_\alpha(k)c^*_\beta(k)  \\
&=  \Big( (\Ncal +4)^m- \Ncal^m\Big)^{1/2}  c^*_\alpha(k)c^*_\beta(k)  \Big( (\Ncal +8)^m- (\Ncal +4)^m\Big)^{1/2}\;.
\end{align*}
 Recall from \cref{lem:K} that $\lvert K(k)_{\alpha,\beta} \rvert\leq C/M$, so that 
%  \[
%  \norm{K(k)}\HS = \Big(\sum_{\alpha,\beta \in \Ik} \lvert K(k)_{\alpha,\beta} \rvert^2 \Big)^{1/2} \leq C\;.\]
 \cref{eq:bosebound1} from \cref{lem:bosonic-number} implies
\begin{align*}
\Bignorm{ \sum_{\beta\in \Ik} K(k)_{\alpha,\beta} c^*_{\beta}(k) T_\lambda \psi} &\leq CM^{-\frac{1}{2}}  \norm{(\Ncal_{\delta} + 1)^{1/2} T_\lambda\psi}\;.
% \Bignorm{ \sum_{\beta\in \Ik} K(k)_{\alpha,\beta} c^{g*}_{\beta}(k) T_\lambda \psi} &\leq CM^{-\frac{1}{2}}  \norm{(\Ncal_{\delta} + 1)^{1/2} T_\lambda\psi} \;,\\
\end{align*}
Furthermore, by \cref{eq:boseboundX} from \cref{lem:bosonic-number} we have
\begin{align*}
  \sum_{\alpha \in \Ik} \norm{ c_{\alpha}(k)T_\lambda \psi}  &\leq C M^{\frac{1}{2}} \norm{\Ncal_{\delta}^{1/2} T_\lambda \psi} \;.
%   \\
%     \sum_{\alpha \in \Ik} \norm{ c^g_{\alpha}(k)T_\lambda \psi}  &\leq C M^{\frac{1}{2}} \norm{\Ncal_{\delta}^{1/2} T_\lambda \psi}\;.
 \end{align*}
Hence for all vectors $X,Y \in \fock$ we have
\begin{align*}
\Big\lvert  \sum_{\alpha,\beta \in \Ik} K(k)_{\alpha,\beta} \langle X , c^*_\alpha(k) c^*_\beta(k)  Y \rangle \Big\rvert \leq C \norm{\Ncal_\delta^{1/2}X} \norm{(\Ncal_\delta+1)^{1/2}Y}\;.
 \end{align*}
 Using this bound we find
\begin{align*}& \left\lvert \frac{\di}{\di\lambda} \langle T_\lambda \psi , (\Ncal +4)^m T_\lambda \psi  \rangle \right\rvert\\
 & = \Big\lvert \Re \sum_{k \in \north } \sum_{\alpha,\beta \in \Ik} K(k)_{\alpha,\beta} \langle T_\lambda \psi , [(\Ncal +4)^m, c^*_\alpha(k)c^*_\beta(k)  ] T_\lambda \psi \rangle \Big\rvert \\
 &= \Big\lvert \Re \sum_{k \in \north } \sum_{\alpha,\beta \in \Ik} K(k)_{\alpha,\beta}\; \times\\
 &\quad \times \Big\langle T_\lambda \psi , \Big( (\Ncal +4)^m-\Ncal^m \Big)^{1/2} c^*_\alpha(k)c^*_\beta(k)  \Big( (\Ncal +8)^m-(\Ncal+4)^m \Big)^{1/2}  T_\lambda \psi \Big\rangle \Big\rvert\\
 &\leq  C \norm{ \Ncal_\delta^{1/2}   \Big( (\Ncal +4)^m-\Ncal^m \Big)^{1/2}   T_\lambda \psi} \norm{(\Ncal_\delta+1)^{1/2} \Big( (\Ncal +8)^m-(\Ncal+4)^m \Big)^{1/2} T_\lambda \psi}\\
 &\leq C_m \norm{(\Ncal +4)^{m/2} T_\lambda \psi}^2\;. \tagg{gronwall}
 \end{align*}
 In the last estimate we used that $\Ncal_\delta$ commutes with $\Ncal$, that $0\leq \Ncal_\delta \leq \Ncal$, and that 
 \begin{align*}
 0&\leq (\Ncal +4)^m-\Ncal^m  \leq C_m (\Ncal +4)^{m-1}\;,\\
 0&\leq (\Ncal +8)^m-(\Ncal+4)^m \leq C_m (\Ncal+4)^{m-1}\;. 
 \end{align*}
 The estimate \cref{eq:gronwall} closes a Gr\"onwall bound for $\langle T_\lambda \psi , (\Ncal +4)^m T_\lambda \psi  \rangle$ and \cref{eq:furtherbound1} follows.  
 
\paragraph{Proof of \cref{eq:stability1}.}
 In view of definition \cref{def:weightedpair} we can write 
 \begin{equation}
  \label{eq:Nccommute}
 [\Ncal_\delta, c^*_{\alpha}(k)] = c^{g*}_{\alpha}(k)
 \end{equation}
 for some weight function $g$ with $g(p,k)\in \{0,1,2\}$.
Similarly to the above calculation
\begin{align*}
&[(\Ncal_\delta +1)(\Ncal +4)^m, c^*_\alpha(k)c^*_\beta(k)  ] \\
&= [\Ncal_\delta, c^*_\alpha(k)c^*_\beta(k)] (\Ncal +4)^m + (\Ncal_\delta +1)[(\Ncal +4)^m, c^*_\alpha(k)c^*_\beta(k)]\\
&=\Big( c^{g*}_{\alpha}(k) c^*_{\beta}(k) +  c^{*}_{\alpha}(k) c^{g*}_{\beta}(k) \Big) (\Ncal +4)^m  + (\Ncal_\delta +1) \Big( (\Ncal +4)^m - \Ncal^m\Big) c^*_\alpha(k)c^*_\beta(k)\\
&= \Ncal^{m/2} \Big( c^{g*}_{\alpha}(k) c^*_{\beta}(k) +  c^{*}_{\alpha}(k) c^{g*}_{\beta}(k) \Big) (\Ncal +4)^{m/2}\\
&\quad + (\Ncal_\delta +1)  (\Ncal+1)^{-1/2} \Big( (\Ncal +4)^m - \Ncal^m \Big)^{1/2}  \times \\ 
&\hspace{9em}  \times  c^*_\alpha(k)c^*_\beta(k) \Big( (\Ncal +8)^m - (\Ncal+4)^m \Big)^{1/2} (\Ncal+5)^{1/2}\;. 
\end{align*}
Similar to the bounds used above, \cref{lem:bosonic-number-weight} provides us with  
\begin{align*}
% \Bignorm{ \sum_{\beta\in \Ik} K(k)_{\alpha,\beta} c^*_{\beta}(k) T_\lambda \psi} &\leq CM^{-\frac{1}{2}}  \norm{(\Ncal_{\delta} + 1)^{1/2} T_\lambda\psi}\;,\\
 \Bignorm{ \sum_{\beta\in \Ik} K(k)_{\alpha,\beta} c^{g*}_{\beta}(k) T_\lambda \psi} &\leq CM^{-\frac{1}{2}}  \norm{(\Ncal_{\delta} + 1)^{1/2} T_\lambda\psi} \;,\\
%   \sum_{\alpha \in \Ik} \norm{ c_{\alpha}(k)T_\lambda \psi}  &\leq C M^{\frac{1}{2}} \norm{\Ncal_{\delta}^{1/2} T_\lambda \psi} \;.
%   \\
    \sum_{\alpha \in \Ik} \norm{ c^g_{\alpha}(k)T_\lambda \psi}  &\leq C M^{\frac{1}{2}} \norm{\Ncal_{\delta}^{1/2} T_\lambda \psi}\;.
 \end{align*}
We then get
\begin{align*}& \left\lvert \frac{\di}{\di\lambda} \langle T_\lambda \psi , (\Ncal_\delta +1)(\Ncal +4)^m T_\lambda \psi  \rangle \right\rvert\\
 & = \Big\lvert \Re \sum_{k \in \north } \sum_{\alpha,\beta \in \Ik} K(k)_{\alpha,\beta} \langle T_\lambda \psi , [(\Ncal_\delta +1)(\Ncal +4)^m, c^*_\alpha(k)c^*_\beta(k)  ] T_\lambda \psi \rangle \Big\rvert \displaybreak[0]\\
 &\leq  \Big\lvert  \sum_{k \in \north } \sum_{\alpha,\beta \in \Ik} K(k)_{\alpha,\beta} \Big\langle T_\lambda \psi ,  \Ncal^{m/2} \Big( c^{g*}_{\alpha}(k) c^*_{\beta}(k) +  c^{*}_{\alpha}(k) c^{g*}_{\beta}(k) \Big) (\Ncal +4)^{m/2}  T_\lambda \psi \Big\rangle \Big\rvert \\
 &\quad+ \Big\lvert  \sum_{k \in \north } \sum_{\alpha,\beta \in \Ik} K(k)_{\alpha,\beta} \Big\langle T_\lambda \psi , (\Ncal_\delta +1)  (\Ncal+1)^{-1/2} \Big( (\Ncal +4)^m - \Ncal^m \Big)^{1/2}  \times \\
 &\hspace{12em}  \times  c^*_\alpha(k)c^*_\beta(k) \Big( (\Ncal +8)^m - (\Ncal+4)^m \Big)^{1/2} (\Ncal+5)^{1/2} T_\lambda \psi \Big\rangle \Big\rvert \displaybreak[0]\\
 &\leq C \norm{(\Ncal_\delta+1)^{1/2} \Ncal^{m/2}  T_\lambda \psi} \norm{(\Ncal_\delta+1)^{1/2} (\Ncal +4)^{m/2}  T_\lambda \psi} \\
 &\quad + C \norm{ (\Ncal_\delta+1)^{1/2} (\Ncal_\delta +1)  (\Ncal+1)^{-1/2} \Big( (\Ncal +4)^m -\Ncal^m \Big)^{1/2}  T_\lambda \psi} \times\\
 &\hspace{10em} \times \norm{ (\Ncal_\delta+1)^{1/2} \Big( (\Ncal +8)^m - (\Ncal+4)^m \Big)^{1/2} (\Ncal+5)^{1/2}  T_\lambda \psi}  \\
 &\leq C_m \norm{(\Ncal_\delta+1)^{1/2} (\Ncal +1)^{m/2}  T_\lambda \psi}^2\;.
 \end{align*}
 Thus  \cref{eq:stability1} follows by Gr\"onwall's inequality. 
\end{proof}

\begin{proof}[Proof of \cref{lem:Bog-T}.] Let us denote the exponent of the Bogoliubov transformation by $B$, i.\,e., $T_\lambda = \exp(\lambda B)$. As in the proof of \cite[Prop.~4.4]{BNP+20}, we pick $n_0 \in \Nbb$ and iterate $n_0$ times the Duhamel expansion  
\[
T^*_\lambda c_\gamma(l) T_\lambda = c_\gamma(l) + \int_0^\lambda \di\tau T^*_\tau [c_\gamma(l),B] T_\tau
\]
and use the commutator formula 
\begin{equation}\label{eq:bla}
[c_\gamma(l),B]= \sum_{\alpha \in \Il} K(l)_{\gamma,\alpha}c_\alpha^*(l) + \frac{1}{2}\sum_{k \in \north} \sum_{\beta \in \Ik} K(k)_{\gamma,\beta} \left(\Ecal_\gamma(k,l) c^*_\beta(k) + c^*_\beta(k) \Ecal_\gamma(k,l)\right)
\end{equation}
which follows from \cref{lem:ccr}. When iterating the expansion, the term $K(l)_{\gamma,\alpha}c_\alpha^*$ in the commutator formula gives rise to the power series of the $\sinh$ and $\cosh$; the rest of \cref{eq:bla} will be considered as an error term. The conclusion is that \cref{eq:Bog_Tlambda} holds true, with the error term summed over all iteration steps being  
\begin{align} \label{eq:e-Bog-00}
\mathfrak{E}_\gamma(\lambda,l) & =  \sum_{n=0}^{n_0-1} \int_0^\lambda \di \tau \frac{(\lambda-\tau)^n}{n!}  \sum_{k\in \north}\sum_{\alpha\in\Il \cap \Ik}  \sum_{\beta\in \Ik}\left(K(l)^n\right)_{\gamma,\alpha} K(k)_{\alpha,\beta}  \times \\
&\hspace{16em}  \times T^*_{\tau}\frac{1}{2}\left(\Ecal_\alpha(k,l) c^*_\beta(k) + c^*_\beta(k) \Ecal_\alpha(k,l) \right)^{\natural} T_{\tau} \nonumber\\
& \quad +  \int_0^\lambda \di \tau \frac{(\lambda-\tau)^{n_0-1}}{(n_0-1)!} \sum_{\alpha\in\Il} \left( K(l)^{n_0} \right)_{\gamma,\alpha} T^*_{\tau} c^\natural_\alpha(l) T_{\tau} - \sum_{\alpha \in \Il} \sum_{n = n_0}^\infty \frac{\lambda^n (K(l)^n)_{\gamma,\alpha}}{n!} c^\natural_{\alpha} (l)  \nonumber
\end{align}
for any $n_0 \geq 1$. (The two terms on the last line are the non--explicit integral term from the Duhamel formula and the powers missing to complete the series of the $\sinh$ and $\cosh$ from $n=n_0$ to $+\infty$.) Here in every summand $X^\natural$ means  $X$ for $n$ even and $X^*$ for $n$ odd. 

Recall that as a consequence of \cref{lem:K} we have
\[ \lvert \left(K(l)^n\right)_{\gamma,\alpha}\rvert \leq \left\{
                \begin{array}{ll}
                   \delta_{\gamma,\alpha}  &\text{ if }n=0\;,\\
                   C^n M^{-1}  &\text{ if }n \geq 1\;.
                \end{array}
                \right.
\]
Therefore, by the triangle inequality and summing over $\gamma$,  
\begin{align*}
\sum_{\gamma\in \Il} \norm{\mathfrak{E}_\gamma(\lambda,l) \psi} & \leq \sum_{n=0}^{n_0-1} \frac{C^n}{n! M} \int_0^\lambda \di \tau \sum_{k\in \north}\sum_{\alpha\in\Il \cap \Ik}  \sum_{\beta\in \Ik} \Big[ \norm{\Ecal_\alpha(k,l) c^*_\beta(k) T_{\tau} \psi}  +  \nonumber \\
&\hspace{4.5em}+ \norm{c^*_\beta(k) \Ecal_\alpha(k,l) T_{\tau} \psi} + \norm{ c_\beta(k) \Ecal^*_\alpha(k,l) T_{\tau} \psi}  + \norm{ \Ecal^*_\alpha(k,l) c_\beta(k)  T_{\tau} \psi} \Big]\nonumber \\
& \quad + \frac{C^{n_0}}{(n_0-1)!}   \int_0^\lambda \di \tau  \sum_{\alpha\in\Il} \Big( \norm{c_\alpha(l) T_{\tau}  \psi} + \norm{c^*_\alpha(l) T_{\tau}  \psi}  \Big) \nonumber\\
& \quad+ \sum_{n = n_0}^\infty \frac{C^n}{n!} \sum_{\alpha \in \Il} \Big( \norm{ c_\alpha(l)  \psi} + \norm{ c^*_\alpha(l)  \psi}  \Big)\; . 
\end{align*}
Using \cref{lem:bosonic-number} we can bound the operators in the last two lines; then taking $n_0\to \infty$ we obtain 
\begin{align}
\sum_{\gamma\in \Il} \norm{\mathfrak{E}_\gamma(\lambda,l) \psi} & \leq C M^{-1} \int_0^\lambda \di \tau \sum_{k\in \north}\sum_{\alpha\in\Il \cap \Ik}  \sum_{\beta\in \Ik} \Big[ \norm{\Ecal_\alpha(k,l) c^*_\beta(k) T_{\tau} \psi}  \nonumber \\
&\hspace{4em}+ \norm{c^*_\beta(k) \Ecal_\alpha(k,l) T_{\tau} \psi} + \norm{ c_\beta(k) \Ecal^*_\alpha(k,l) T_{\tau} \psi}  + \norm{ \Ecal^*_\alpha(k,l) c_\beta(k)  T_{\tau} \psi} \Big] \nonumber \\& =:I_1+I_2+I_3+I_4\;. \label{eq:sum-Ebog-01}
\end{align}

It remains to bound \cref{eq:sum-Ebog-01} term by term. For the first term, using \cref{lem:ccr}, \cref{lem:bosonic-number} and \cref{lem:stability}  we can estimate 
\begin{align}
I_1 &\leq \sup_{\tau \in [-1,1]} C M^{-1}\sum_{k\in \north }\sum_{\alpha\in\Il \cap \Ik} \sum_{\beta\in \Ik}  \norm{\Ecal_\alpha(k,l)  c^*_\beta(k)  T_{\tau} \psi} \nonumber \\
&\leq \sup_{\tau \in [-1,1]} CM^{-1} \sum_{k\in \north } \sum_{\beta\in \Ik}  M^{\frac{3}{2}}N^{-\frac{2}{3}+\delta}  \norm{\Ncal c^*_\beta(k)  T_{\tau} \psi} \nonumber\\
&= \sup_{\tau \in [-1,1]} CM^{\frac{1}{2}}N^{-\frac{2}{3}+\delta}  \sum_{k\in \north}  \sum_{\beta\in \Ik} \norm{c^*_\beta(k) (\Ncal +2) T_{\tau} \psi} \nonumber\\
&\leq \sup_{\tau \in [-1,1]} C M N^{-\frac{2}{3}+\delta}   \norm{(\Ncal_\delta+M)^{1/2}(\Ncal+2) T_{\tau} \psi} \nonumber\\
&\leq C M N^{-\frac{2}{3}+\delta}   \norm{(\Ncal_\delta+M)^{1/2}(\Ncal+1) \psi}\; . \label{eq:sum-Ebog-01-1}
\end{align}
For $I_2$, if $\beta\ne \alpha$, then $c^*_\beta(k)\Ecal_\alpha(k,l)=\Ecal_\alpha(k,l) c^*_\beta(k)$ which is just what we bounded in \cref{eq:sum-Ebog-01-1}. If $\beta=\alpha$ we have the decomposition 
\begin{align} \label{eq:c-eps-com}
c^*_\alpha(k)\Ecal_\alpha(k,l) = \Ecal_\alpha(k,l)  c^*_\alpha(k) -   \Ecal_{\alpha}(k,k) c^*_\alpha(l) + c^*_\alpha(l) \Ecal_{\alpha}(k,k)\; .  
\end{align}
(This decomposition can be verified by noticing that \cref{eq:c-eps-com} consists of two commutators,  recalling that $\Ecal_\alpha(k,l) = [c_\alpha(k),c^*_\alpha(l)] - \delta_{k,l}$,  and using the Jacobi identity.)
With this decomposition we proceed to
\begin{align}
M^{-1}\sum_{\alpha\in \Il \cap \Ik}\norm{\Ecal_\alpha(k,l)  c^*_\alpha(k)T_\tau \psi} &\leq \sum_{\alpha\in \Il \cap \Ik} CN^{-\frac{2}{3}+\delta} \norm{ \Ncal c^*_\alpha(k) T_\tau \psi} \nonumber\\
&\leq C M^{\frac{1}{2}} N^{-\frac{2}{3}+\delta} \norm{ (\Ncal_\delta+M)^{1/2} (\Ncal+1) \psi} \;.
\end{align}
Here we have used the first bound from \cref{lem:ccr} (without the summation), \cref{lem:bosonic-number} and then proceeded similarly to \cref{eq:sum-Ebog-01-1}. The second term of \cref{eq:c-eps-com} is treated similarly. For the last term of \cref{eq:c-eps-com}, we bound $\norm{c^*_\alpha(l)\xi} \leq \norm{(\Ncal_\delta+1)^{1/2}\xi}$ (which follows from \cref{eq:bosebound1} with $f=\delta_\alpha$) and then use the fact that  $\Ecal_{\alpha}(k,k)$ commutes with $\Ncal_\delta$ (this is clear from \cref{eq:Ekk} with $k =l$) to get
\begin{align*}
M^{-1}\sum_{\alpha\in\Il \cap \Ik}  \norm{ c^*_\alpha(l) \Ecal_{\alpha}(k,k)  T_{\tau} \psi} 
&\leq M^{-1} \sum_{\alpha\in\Il \cap \Ik}  \norm{(\Ncal_\delta +1)^{1/2} \Ecal_{\alpha}(k,k)  T_\tau \psi}  \\
&= M^{-1} \sum_{\alpha\in\Il \cap \Ik}  \norm{\Ecal_{\alpha}(k,k)  (\Ncal_\delta +1)^{1/2}  T_\tau \psi} \\
&\leq CM^{-1} M^{\frac{3}{2}}  N^{-\frac{2}{3}+\delta}  \norm{ \Ncal (\Ncal_\delta +1)^{\frac{1}{2}}  T_{\tau} \psi} \\
&\leq C M^{\frac{1}{2}} N^{-\frac{2}{3}+\delta}     \norm{ (\Ncal_\delta+1)^{\frac{1}{2}} (\Ncal +1) \psi}\;.
\end{align*}
In the last estimate we used  \cref{lem:stability}  again.
% Thus from \cref{eq:c-eps-com} we have
% \begin{align}\label{eq:c-eps-com-final}
% M^{-1} \sum_{\alpha\in\Il \cap \Ik}  \norm{ c^*_\alpha(k)\Ecal_\alpha(k,l)  T_{\tau} \psi} \leq  C M^{\frac{1}{2}} N^{-\frac{2}{3}+\delta} \norm{ (\Ncal_\delta+M)^{1/2} (\Ncal+1) \psi}\;.  
% \end{align}
Therefore
% , similarly to  \cref{eq:sum-Ebog-01-1} we have
\begin{align}
I_2 \leq C M N^{-\frac{2}{3}+\delta}   \norm{(\Ncal_\delta+M)^{1/2}(\Ncal+1) \psi}\; . \label{eq:sum-Ebog-02}
\end{align}
By the same argument, we obtain similar bounds for $\Ecal^*_\alpha(k,l) c_\beta(k)$ and $c_\beta(k) \Ecal^*_\alpha(k,l)$ (recall that $\Ecal^*_\alpha(k,l) = \Ecal_\alpha(l,k)$ to reduce to the previous estimates). Collecting the estimates for the terms of \cref{eq:sum-Ebog-01}, this completes the proof of \cref{lem:Bog-T}. 
\end{proof}

\section{Linearization of the Kinetic Energy}

In this section we prove that the fermionic kinetic energy $\Hbb_0$ behaves similarly to the bosonized kinetic energy
\begin{equation}
\Dbb_\B :=  2\hbar \kappaf \sum_{k\in \north} \sum_{\alpha\in \Ik} \lvert k \cdot \hat{\omega}_\alpha\rvert  c^*_\alpha(k) c_\alpha(k)\;.
\label{eq:DBdef}
\end{equation}
Recall the definition of the Fermi momentum $k_\F=\kappaf \hbar^{-1}=(\frac{3}{4\pi})^{\frac{1}{3}}N^{\frac{1}{3}}$. 
The main result of the section is that the difference of fermionic and approximately bosonic kinetic energy is almost invariant under the approximate Bogoliubov transformation.
\begin{lem}[Comparing Fermionic and Bosonized Kinetic Energy]\label{lem:lin-2}
Let $T_\lambda$ be the approximate Bogoliubov transformation defined in \cref{eq:Bog-T}. Then for all $\psi\in \fock$ we have
\begin{align} 
% &\Big\langle \Psi, (\Hbb_0 - \Dbb_\B) \Psi \Big\rangle -  \Big\langle \psi, (\Hbb_0 - \Dbb_\B) \psi \Big\rangle  \nonumber \\
& \Big\langle T_1\psi, (\Hbb_0 - \Dbb_\B) T_1\psi \Big\rangle  - \Big\langle  \psi, (\Hbb_0 - \Dbb_\B)  \psi \Big\rangle  \nonumber\\
&\geq  - C\hbar \Big[  M^{-\frac{1}{2}} \norm{  (\Ncal_\delta+1)^{1/2}  \psi  }^2 + M N^{-\frac{2}{3}+\delta}  \norm{ (\Ncal_\delta+1)^{1/2} (\Ncal+1) \psi} \norm{ (\Ncal_\delta+1)^{1/2}   \psi } \Big]\;. \nonumber
% \\
% &\geq - C\hbar \Big[  M^{-\frac{1}{2}} N^\delta + M^{\frac{3}{2}}N^{-\frac{1}{3}+2\delta +\bourgain } \Big]\;.\label{eq:ferror-3}
\end{align}
\end{lem}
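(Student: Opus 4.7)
The plan is to use a Duhamel-type identity to write the desired difference as
\[ \langle T_1\psi, (\Hbb_0-\Dbb_\B)T_1\psi\rangle - \langle\psi,(\Hbb_0-\Dbb_\B)\psi\rangle = \int_0^1 \langle T_\lambda\psi, [\Hbb_0-\Dbb_\B, B]\, T_\lambda\psi\rangle\, \di\lambda, \]
where $B = \tfrac12\sum_{k\in\north}\sum_{\alpha,\beta\in\Ik} K(k)_{\alpha,\beta}c^*_\alpha(k)c^*_\beta(k) - \hc$ is the antiselfadjoint generator of $T_\lambda$. The task is then to bound the integrand uniformly in $\lambda\in[0,1]$.

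The central computation is the commutator $[A, c^*_\alpha(k)]$ with $A := \Hbb_0 - \Dbb_\B$. The linearization identity from \cref{eq:lin-intro} gives
\[ [\Hbb_0, c^*_\alpha(k)] = 2\hbar\kappaf|k\cdot\hat\omega_\alpha|\, c^*_\alpha(k) + r^*_\alpha(k), \]
where $r^*_\alpha(k)$ is a weighted pair operator of the form \cref{def:weightedpair} whose weight $\hbar^2\, 2(p-k_\F\hat\omega_\alpha)\cdot k$ is uniformly bounded by $\hbar^2\diam(B_\alpha)|k| \le C\hbar M^{-1/2}$. The approximate CCR of \cref{lem:ccr} give
\[ [\Dbb_\B, c^*_\alpha(k)] = 2\hbar\kappaf|k\cdot\hat\omega_\alpha|\, c^*_\alpha(k) + 2\hbar\kappaf \sum_{l\in\north\colon \alpha\in\Il} |l\cdot\hat\omega_\alpha|\, c^*_\alpha(l)\,\Ecal_\alpha(l,k). \]
The leading terms cancel, leaving a \emph{linearization remainder} $r^*_\alpha(k)$ of size $\hbar M^{-1/2}$ and a \emph{CCR remainder} controlled by the operator bounds of \cref{lem:ccr}. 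Expanding $[A, c^*_\alpha(k)c^*_\beta(k)]$ by Leibniz splits the inner commutator into exactly these two kinds of pieces (and their reflections under $\alpha\leftrightarrow\beta$), together with the adjoint contributions from $-\hc$ in $B$.

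Substituting into the integrand and applying Cauchy--Schwarz termwise yields two types of estimates. For the linearization remainder, the weight bound combined with the weighted--pair estimates of \cref{lem:bosonic-number-weight} and the entry bound $|K(k)_{\alpha,\beta}| \le C/M$ from \cref{lem:K} gives a contribution of size $C\hbar M^{-1/2}\|(\Ncal_\delta+1)^{1/2}T_\lambda\psi\|^2$. Crucially, $c^*_\alpha(k)$ with $\alpha\in\Ik$ involves only momenta above the kinetic gap $e(p)\geq\tfrac14 N^{-1/3-\delta}$ (the same observation underlying \cref{lem:bosonic-number}), so that $\Ncal_\delta$ rather than $\Ncal$ suffices. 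For the CCR remainder, the bounds on $\Ecal_\alpha(l,k)$ from \cref{lem:ccr} together with \cref{lem:bosonic-number} produce a contribution of size $C\hbar M N^{-2/3+\delta}\|(\Ncal_\delta+1)^{1/2}(\Ncal+1)T_\lambda\psi\|\,\|(\Ncal_\delta+1)^{1/2}T_\lambda\psi\|$. Finally, the stability estimates of \cref{lem:stability} allow replacement of $T_\lambda\psi$ by $\psi$ up to harmless constants, and integration over $\lambda\in[0,1]$ produces the lemma.

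The principal technical obstacle is careful bookkeeping: one must ensure that (i) the linearization remainder is bounded by $\Ncal_\delta$ alone, so it can later be absorbed using the a priori estimate $\langle\Ncal_\delta\rangle \lesssim N^\delta$, and (ii) the CCR remainder carries at most one power of $\Ncal$. Point (i) is delivered by the gap-cutoff matching between $\alpha\in\Ik$ and the definition of $\Ncal_\delta$. Point (ii) requires commuting $c^*_\alpha(l)$ past $\Ecal_\alpha(l,k)$ efficiently: using the pointwise operator bound \cref{eq:ccr-0} rather than the summed $\ell^1$ bound \cref{eq:ccr-1}, together with the entry--wise bound on $K$ from \cref{lem:K}, and absorbing an $M^{1/2}$ factor into the overall combinatorics via Cauchy--Schwarz, produces exactly the claimed $MN^{-2/3+\delta}$ prefactor.
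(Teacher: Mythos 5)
Your strategy is the paper's: a Duhamel expansion of $T_1^*(\Hbb_0-\Dbb_\B)T_1-(\Hbb_0-\Dbb_\B)$, a Leibniz expansion of the commutator with $c^*_\alpha(k)c^*_\beta(k)$, the split into a linearization remainder (the paper's $\Efrak^{\textnormal{lin}}_\alpha(k)$, handled by \cref{lem:lin} and \cref{lem:bosonic-number-weight}) and a CCR remainder (the paper's $\Efrak^{\linb}_\alpha(k)$, handled by \cref{lem:ccr} and \cref{lem:bosonic-number}), together with the entry bound $|K(k)_{\alpha,\beta}|\le C/M$ from \cref{lem:K} and the stability estimates of \cref{lem:stability} to undo $T_\lambda$. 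The sizes $\hbar M^{-1/2}$ and $\hbar M N^{-2/3+\delta}$ come out exactly as in the paper's terms $I_1,\dots,I_4$.

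There is, however, one real lacuna in your description of point (ii). The diagonal piece $\alpha=\beta$ of the CCR remainder produces $c^*_\alpha(k)\Efrak^{\linb}_\alpha(k)^*$, i.\,e., terms of the form $c^*_\alpha(k)\,c^*_\alpha(l)\,\Ecal_\alpha(l,k)$. You assert that ``using the pointwise operator bound \cref{eq:ccr-0} rather than the summed $\ell^1$ bound, together with Cauchy--Schwarz'' resolves it, but that is not sufficient on its own: \cref{eq:ccr-0} controls $\Ecal_\alpha(l,k)$ by the \emph{full} $\Ncal$, and $\Ecal_\alpha(l,k)$ for $l\neq k$ does \emph{not} commute with $\Ncal_\delta$, so after bounding $\|c^*_\alpha(l)\,\xi\|\le\|(\Ncal_\delta+1)^{1/2}\xi\|$ with $\xi=\Ecal_\alpha(l,k)T_\lambda\psi$ you cannot move the gapped operator to the right of $\Ecal_\alpha(l,k)$. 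The paper gets around this with the Jacobi-identity rewriting \cref{eq:c-eps-com},
\[
c^*_\alpha(k)\Ecal_\alpha(l,k)=\Ecal_\alpha(l,k)c^*_\alpha(k)+c^*_\alpha(l)\Ecal_\alpha(l,l)-\Ecal_\alpha(l,l)c^*_\alpha(k)\;,
\]
which trades the ``bad'' $\Ecal_\alpha(l,k)$ for terms where it sits on the outside (so that $\Ncal$ can be commuted through $c^*_\alpha$) or is replaced by $\Ecal_\alpha(l,l)$, a pure number operator that \emph{does} commute with $\Ncal_\delta$. This decomposition is the missing ingredient; without it the claimed $\Ncal_\delta$-based bound for this term does not follow. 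The remainder of your sketch is on target.
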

The first of the two error terms in the lemma is the main reason for taking $M$ large. More precisely, the first error (involving $M^{-\frac{1}{2}}$) comes from linearizing the fermionic kinetic energy on each patch, for which the size of the patch must not be too large. The linearization error of the fermionic kinetic energy is estimated in the following lemma.

\begin{lem}[Linearization of Kinetic Energy]\label{lem:lin}
 For all $k \in \north$ and all $\alpha \in \Ik$ we have
\[
 [\Hbb_0,c^*_{\alpha}(k)] = 2\hbar \kappaf \lvert k \cdot \hat{\omega}_\alpha\rvert   c^*_{\alpha}(k) + \hbar \Efrak^{\textnormal{lin}}_\alpha(k)^*
\]
where the error term satisfies, for all $\psi\in \fock$,  
\[
  \sum_{\alpha\in \Ik} \norm{ \Efrak^{\textnormal{lin}}_\alpha(k)  \psi} \leq C \norm{\Ncal_\delta^{1/2} \psi}\;, \qquad \sum_{\alpha\in \Ik} \norm{ \Efrak^{\textnormal{lin}}_\alpha(k)^*  \psi} \leq C \norm{(\Ncal_\delta+M)^{1/2} \psi}\;,
\]
and 
\[
   \norm{  \sum_{\alpha\in \Ik} f_\alpha \Efrak^{\textnormal{lin}}_\alpha(k)^*  \psi} \leq C M^{-\frac{1}{2}} \norm{f}_{\ell^2}  \norm{(\Ncal_\delta+1)^{1/2} \psi} \qquad \forall f\in \ell^2(\Ik)\;.
\]
\end{lem}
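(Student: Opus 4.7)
The first step is to compute the commutator explicitly. For $\alpha \in \Ikp$, the CAR \cref{eq:car} give, after cancellation of the terms arising from anticommuting $a^*_i a_i$ through $a^*_p a^*_{p-k}$,
\[
[\Hbb_0,c^*_\alpha(k)] = \frac{1}{n_\alpha(k)} \sum_{\substack{p\colon p\in \BFc\cap B_\alpha\\ p-k\in\BF\cap B_\alpha}} \bigl(e(p)+e(p-k)\bigr)\, a^*_p a^*_{p-k}\;.
\]
Since for such $p$ we have $p\in\BFc$ and $p-k\in\BF$, the absolute values defining $e(\cdot)$ open with definite signs and
\[
e(p)+e(p-k) = \hbar^2(\lvert p\rvert^2-\lvert p-k\rvert^2) = \hbar^2\bigl(2p\cdot k - \lvert k\rvert^2\bigr)\;.
\]

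The second step is to linearize around the patch center. Writing $p = k_\F\hat{\omega}_\alpha + r$ with $\lvert r\rvert \leq \diam B_\alpha \leq C N^{1/3}M^{-1/2}$, and using $\hbar k_\F = \kappaf$ together with $\hat{\omega}_\alpha\cdot k = \lvert k\cdot\hat{\omega}_\alpha\rvert$ for $\alpha\in\Ikp$, I obtain
\[
e(p)+e(p-k) = 2\hbar\kappaf\lvert k\cdot\hat{\omega}_\alpha\rvert + \hbar\, g(p,k)\;, \qquad g(p,k):=\hbar\bigl(2(p-k_\F\hat{\omega}_\alpha)\cdot k - \lvert k\rvert^2\bigr).
\]
The leading constant pulls out of the sum and reconstitutes $2\hbar\kappaf\lvert k\cdot\hat{\omega}_\alpha\rvert\,c^*_\alpha(k)$, while the remainder defines $\hbar\,\Efrak^{\textnormal{lin}}_\alpha(k)^* = c^{g*}_\alpha(k)$ in the notation of \cref{def:weightedpair}. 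For $\alpha\in\Ikm$ one proceeds identically after replacing $k\mapsto -k$ in the shift, the sign of $\hat{\omega}_\alpha\cdot k$ being absorbed into the absolute value.

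The third step is the weight estimate. Using $\hbar = N^{-1/3}$, $\diam B_\alpha \leq C N^{1/3}M^{-1/2}$, and $\lvert k\rvert \leq R_{\hat V}$, one finds
\[
\norm{g}_{\ell^\infty} \leq C\bigl(\hbar\cdot N^{1/3}M^{-1/2} + \hbar\bigr) \leq C M^{-1/2}\;,
\]
where the second term is negligible for $M$ in the range of \cref{eq:defM2}. The three claimed bounds then follow directly by inserting this estimate into \cref{lem:bosonic-number-weight}: the $\ell^1$-over-$\alpha$ bounds gain a factor $M^{1/2}$ from the lemma which cancels against $\norm{g}_{\ell^\infty}\leq CM^{-1/2}$ leaving only $\norm{\Ncal_\delta^{1/2}\psi}$ (resp.\ $\norm{(\Ncal_\delta+M)^{1/2}\psi}$), whereas the $\ell^2$-over-$\alpha$ bound retains the full factor $M^{-1/2}$, producing the decisive small prefactor that is needed in \cref{lem:lin-2}.

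There is no real obstacle here; this is essentially a bookkeeping lemma. The only point requiring care is the sign handling that unifies the $\Ikp$ and $\Ikm$ cases into $\lvert k\cdot\hat{\omega}_\alpha\rvert$, and the verification that the shift $p\mapsto p-k_\F\hat\omega_\alpha$ indeed stays within $B_\alpha$ so that the diameter bound can be applied uniformly in $p$ over the summation range.
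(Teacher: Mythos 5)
Your proposal is correct and follows essentially the same route as the paper's own argument: compute the commutator via the CAR, recognize that $e(p)+e(p-k)=\hbar^2(2p\cdot k - |k|^2)$ since $p\in\BFc$ and $p-k\in\BF$, peel off the constant $2\hbar\kappaf|k\cdot\hat\omega_\alpha|$ leaving a weighted pair operator $c^g_\alpha(k)$ with $g(p,k)=\hbar[2k\cdot(p-k_\F\hat\omega_\alpha)-|k|^2]$, bound $\norm{g}_{\ell^\infty}\leq CM^{-1/2}$ from $\diam(B_\alpha)\leq CN^{1/3}M^{-1/2}$, and invoke \cref{lem:bosonic-number-weight}. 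The only small remark is that your last caveat about "the shift staying within $B_\alpha$" is a non-issue: $|p-k_\F\hat\omega_\alpha|\leq\diam(B_\alpha)$ holds simply because both $p$ and $k_\F\hat\omega_\alpha$ lie in $B_\alpha$ by construction, with no need to track where $p-k_\F\hat\omega_\alpha$ itself lands.
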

All bounds here gain a factor $M^{-\frac{1}{2}}$ compared to the bounds in \cref{lem:bosonic-number}.
\begin{proof} Let us consider $\alpha \in \Ikp$, the other case is similar. By the CAR \cref{eq:car} we have $[a_i^* a_i, a^*_p] = \delta_{i,p} a^*_p$. Therefore
 \begin{align*}
 [\Hbb_0,c^*_{\alpha}(k)] & =  \Big[ \sum_{i\in \mathbb{Z}^3} e(i) a_i^* a_i, \frac{1}{n_{\alpha}(k)} \sum_{\substack{p\colon p\in \BFc \cap B_\alpha\\p-k \in \BF \cap B_\alpha}} a^*_p a^*_{p-k} \Big]\\
%  & = \frac{1}{n_{\alpha}(k)} \sum_{\substack{p\colon p\in \BFc \cap B_\alpha\\p-k \in \BF \cap B_\alpha}}  \sum_{i\in \mathbb{Z}^3} e(i) \Big( [a_i^* a_i, a^*_p]a_{p-k}^* + a^*_p [a_i^* a_i, a_{p-k}^*]  \Big) \\
  & = \frac{1}{n_{\alpha}(k)}   \sum_{\substack{p\colon p\in \BFc \cap B_\alpha\\p-k \in \BF \cap B_\alpha}} (e(p)+e(p-k))a^*_p a^*_{p-k} \quad  = 2\hbar \kappaf \lvert k\cdot \hat{\omega}_\alpha\rvert c_\alpha^*(k) + \hbar \Efrak^{\textnormal{lin}}_\alpha(k)^* \;,   \end{align*}
  where, 
%   \[
%   \Efrak^{\textnormal{lin}}_\alpha(k) = \frac{1}{n_{\alpha}(k)}   \sum_{\substack{p\colon p\in \BFc \cap B_\alpha\\p-k \in \BF \cap B_\alpha}} \hbar^{-1} (e(p)+e(p-k) - 2\hbar \kappaf \lvert k\cdot \hat{\omega}_\alpha\rvert) a_{p-k} a_p\;.
%   \]
using definition \cref{def:weightedpair}, we can write $\Efrak^{\textnormal{lin}}_\alpha(k) = c^{g}_{\alpha}(k)$  with the weight function
% (for the case $\alpha\in \Ikp$)
\begin{align*}
 g(p,k) = \hbar^{-1} \Big[e(p)+e(p-k) - 2\hbar  \kappaf \lvert k\cdot \hat{\omega}_\alpha\rvert \Big]  = \hbar \Big[ 2 k\cdot (p -k_\F \hat{\omega}_\alpha) -\lvert k\rvert^2 \Big]\;.
\end{align*}
Since $\diam(B_\alpha)\leq CN^{\frac{1}{3}}M^{-\frac{1}{2}}$ and $\lvert k\rvert\leq C$ we can bound 
\begin{align} \label{eq:lin-g-diam}
\lvert g(p,k)\rvert \leq C \hbar N^{\frac{1}{3}}M^{-\frac{1}{2}} = CM^{-\frac{1}{2}}\;. 
\end{align}
The claimed error estimates now follow from \cref{lem:bosonic-number-weight}.
\end{proof}

\begin{proof}[Proof of \cref{lem:lin-2}]
Recall that $T_0 = \id$. We will show that for all $\lambda \in [0,1]$ we have
\begin{align*}& \Big\lvert\frac{\di}{\di\lambda} \Big\langle T_\lambda \psi, (\Hbb_0 - \Dbb_\B) T_\lambda \psi \Big\rangle \Big\rvert \tagg{final81}\\
&\leq C\hbar \Big[ M^{-\frac{1}{2}} \norm{  (\Ncal_\delta+1)^{1/2}  \psi  }^2 + MN^{-\frac{2}{3}+\delta}  \norm{ (\Ncal_\delta+1)^{1/2} (\Ncal+1) \psi} \norm{ (\Ncal_\delta+1)^{1/2}   \psi } \Big]\;.
\end{align*}
The claim then follows by integration over $\lambda \in [0,1]$.

Consider the approximately bosonic operator $\Dbb_\B$. We have
   \begin{align}
 [\Dbb_\B,c^*_{\alpha}(k)] & = 2\hbar \kappaf \sum_{l \in \north} \sum_{\gamma \in \Il} \lvert k \cdot \hat \omega_\gamma\rvert c_\gamma^*(l) \big[ c_\gamma(l),  c_\alpha^*(k)\big] \nonumber\\
 & = 2\hbar \kappaf \sum_{l \in \north} \sum_{\gamma \in \Il} \lvert k \cdot \hat \omega_\gamma\rvert c_\gamma^*(l) \delta_{\gamma,\alpha} (\delta_{k,l} + \Ecal_\alpha(l,k)) \nonumber\\
  & = 2\hbar \kappaf \lvert k \cdot \hat{\omega}_\alpha\rvert  c_\alpha^*(k) + \hbar \Efrak^{\linb}_\alpha(k)^*
  \end{align}
  where, with an indicator function $\chi(\alpha \in \Il)$,  the error term is
\begin{align} \label{eq:e-lin-d}
  \Efrak^{\linb}_\alpha(k) := 2 \kappaf \sum_{l \in \north}  \lvert k \cdot \hat{\omega}_\alpha\rvert   \Ecal^*_\alpha(l,k) c_\alpha(l) \chi(\alpha \in \Il)\;. 
  \end{align}
For all $\psi\in \fock$, by the non--summed first bound from \cref{lem:ccr} and by \cref{lem:bosonic-number}
\begin{align} \label{eq:lin-bos}
  \sum_{\alpha\in \Ik} \norm{\Efrak^{\linb}_\alpha(k)  \psi} &\leq C \sum_{l \in \north} \sum_{\alpha \in \Ik\cap \Il} \norm{ \Ecal^*_\alpha(l,k) c_\alpha(l) \psi} \nonumber\\
  &\leq C \sum_{l \in \north} \sum_{\alpha \in \Ik\cap \Il}  CMN^{-\frac{2}{3}+\delta} \norm{ \Ncal c_\alpha(l) \psi} \nonumber \\
  &= CMN^{-\frac{2}{3}+\delta} \sum_{l \in \north}\sum_{\alpha \in \Ik\cap \Il}  \norm{  c_\alpha(l) (\Ncal-2) \psi} \nonumber\\
  &\leq CMN^{-\frac{2}{3}+\delta} M^{\frac{1}{2}} \norm{  (\Ncal_\delta+1)^{1/2}(\Ncal-2) \psi} \nonumber\\
  &\leq CM^{\frac{3}{2}} N^{-\frac{2}{3}+\delta} \norm{ (\Ncal_\delta+1)^{1/2}(\Ncal+1) \psi}\;.  
  \end{align}
Now using
\begin{align*}
[\hbar^{-1}(\Hbb_0 - \Dbb_\B), c^*_\alpha(k) c^*_\beta(k) ] &= \hbar^{-1}[\Hbb_0 - \Dbb_\B, c^*_\alpha(k)]  c^*_\beta(k)  + c^*_\alpha(k) [\hbar^{-1}(\Hbb_0 - \Dbb_\B),  c^*_\beta(k) ]    \\
&=  \Big(\Efrak^{\textnormal{lin}}_\alpha(k) - \Efrak^{\linb}_\alpha(k) \Big)^* c^*_\beta(k) 
 + c^*_\alpha(k) \Big(\Efrak^{\textnormal{lin}}_\beta(k) - \Efrak^{\linb}_\beta(k) \Big)^* \end{align*}
we can decompose and estimate
\begin{align}
&\hbar^{-1} \Big\lvert \frac{\di}{\di\lambda} \Big\langle T_\lambda \psi, (\Hbb_0 - \Dbb_\B) T_\lambda \psi \Big\rangle \Big\rvert \\
& = \hbar^{-1} \Big\lvert \Re \sum_{k\in \north} \sum_{\alpha,\beta\in \Ik} K(k)_{\alpha,\beta}\Big\langle T_\lambda \psi, [\Hbb_0 - \Dbb_\B, c^*_\alpha(k) c^*_\beta(k) ] T_\lambda \psi \Big\rangle \Big\rvert  \nonumber \\
& \leq  \sum_{k\in \north}  \Big\lvert \sum_{\alpha,\beta\in \Ik} K(k)_{\alpha,\beta} \Big\langle T_\lambda \psi, \Efrak^{\textnormal{lin}}_\alpha(k)^*c^*_\beta(k)  T_\lambda \psi \Big\rangle \Big\rvert  \nonumber \\
&\quad +  \sum_{k\in \north}  \Big\lvert \sum_{\alpha,\beta\in \Ik} K(k)_{\alpha,\beta} \Big\langle T_\lambda \psi, c^*_ \alpha(k) \Efrak^{\textnormal{lin}}_\beta(k)^*  T_\lambda \psi \Big\rangle \Big\rvert   \nonumber\\
&\quad + \sum_{k\in \north}  \Big\lvert \sum_{\alpha,\beta\in \Ik} K(k)_{\alpha,\beta} \Big\langle T_\lambda \psi, \Efrak^{\linb}_\alpha(k)^*c^*_\beta(k)  T_\lambda \psi \Big\rangle \Big\rvert \nonumber  \\
&\quad+  \sum_{k\in \north}  \Big\lvert  \sum_{\alpha,\beta\in \Ik} K(k)_{\alpha,\beta} \Big\langle T_\lambda \psi, c^*_ \alpha(k) \Efrak^{\linb}_\beta(k)^*  T_\lambda \psi \Big\rangle \Big\rvert \nonumber\\
&=: I_1 + I_2 + I_3 + I_4\;. \label{eq:H0-D}
\end{align}

It remains to estimate the right side of \cref{eq:H0-D}, term by term. For $I_1$, since $\lvert K(k)_{\alpha,\beta}\rvert \leq CM^{-1}$, using \cref{lem:bosonic-number},  \cref{lem:lin}  and \cref{lem:stability} we have
\begin{align*}
I_1 & \leq \sum_{k\in \north} \sum_{\alpha\in \Ik}  \Big\lvert \Big\langle \Efrak^{\textnormal{lin}}_\alpha(k) T_\lambda \psi,  \sum_{\beta \in \Ik} K(k)_{\alpha,\beta} c^*_\beta(k)  T_\lambda \psi \Big\rangle \Big\rvert \\
&\leq \sum_{k\in \north}  \sum_{\alpha \in \Ik}  \norm{ \Efrak^{\textnormal{lin}}_\alpha(k)  T_\lambda \psi}  \norm{ \sum_{\beta \in \Ik} K(k)_{\alpha,\beta} c^*_\beta(k)  T_\lambda \psi}\\
&\le\sum_{k\in \north} \sum_{\alpha \in \Ik}  \norm{ \Efrak^{\textnormal{lin}}_\alpha(k)  T_\lambda \psi} C M^{-\frac{1}{2}} \norm{  (\Ncal_\delta+1)^{1/2}  T_\lambda \psi  }\\
&\leq CM^{-\frac{1}{2}} \norm{  (\Ncal_\delta+1)^{1/2}  T_\lambda \psi  }^2 \leq CM^{-\frac{1}{2}} \norm{  (\Ncal_\delta+1)^{1/2}  \psi }^2\;.  
\end{align*}
We can bound $I_2$ similarly to $I_1$, simply exchanging the roles of the $\Efrak^{\textnormal{lin}}$--operator with the $c$--operator. For $I_3$ using \cref{eq:lin-bos} instead of \cref{lem:lin} we have
\begin{align*}
I_3 &\leq \sum_{k\in \north} \sum_{\alpha \in \Ik}  \norm{  \Efrak^{\linb}_\alpha(k)  T_\lambda \psi}  \norm{ \sum_{\beta \in \Ik} K(k)_{\alpha,\beta} c^*_\beta(k)  T_\lambda \psi}\\
&\leq C M^{\frac{3}{2}} N^{-\frac{2}{3}+\delta}   \norm{ (\Ncal_\delta+1)^{1/2} (\Ncal+1) T_\lambda \psi}\, M^{-\frac{1}{2}} \norm{  (\Ncal_\delta+1)^{1/2}  T_\lambda \psi  }\\
&\leq C M N^{-\frac{2}{3}+\delta}  \norm{ (\Ncal_\delta+1)^{1/2} (\Ncal+1) \psi}   \norm{ (\Ncal_\delta+1)^{1/2}   \psi }\; .
\end{align*}
For $I_4$, we split the sum over $\alpha,\beta\in \Ik$ into two parts. If $\alpha\ne \beta$, then 
\[
c^*_ \alpha(k) \Efrak^{\linb}_\beta(k)^*=\Efrak^{\linb}_\beta(k)^* c^*_ \alpha(k)
\]
and this part can be treated similarly to $I_3$. When $\alpha= \beta$, the corresponding contribution is 
\begin{align*}  
I_4'
% &:= \sum_{k\in \north}  \Big\lvert \sum_{\alpha,\beta\in \Ik} \delta_{\alpha,\beta}K(k)_{\alpha,\beta} \Big\langle T_\lambda \psi, c^*_ \alpha(k) \Efrak^{\linb}_\alpha(k)^*  T_\lambda \psi \Big\rangle \Big\rvert \\
&=\sum_{k\in \north}  \Big\lvert \sum_{\alpha\in \Ik}  K(k)_{\alpha,\alpha} \Big\langle T_\lambda \psi, c^*_ \alpha(k) \Efrak^{\linb}_\alpha(k)^*  T_\lambda \psi \Big\rangle \Big\rvert \\
&= \sum_{k\in \north}  \Big\lvert   \sum_{l \in \north} \sum_{\alpha\in \Ik \cap \Il}    2 \kappaf \lvert k \cdot \hat{\omega}_\alpha\rvert  K(k)_{\alpha,\alpha}  \times\\
&\qquad\qquad \times \Big\langle T_\lambda \psi, c^*_ \alpha(k) \Big( \Ecal_\alpha(l,k) c_\alpha^*(l) + c_\alpha^*(k) \Ecal_\alpha (l,l) - \Ecal_\alpha (l,l) c^*_\alpha (k)  \Big)   T_\lambda \psi \Big\rangle \Big\rvert\;.
\end{align*}
Here we have inserted the definition \cref{eq:e-lin-d} and used  \cref{eq:c-eps-com} to obtain
\begin{align*}
c^*_ \alpha(k) \Efrak^{\linb}_\alpha(k)^* &= 2 \kappaf \sum_{l\in \north} \lvert k \cdot \hat{\omega}_\alpha\rvert   c^*_ \alpha(k) c_\alpha^*(l)\Ecal_\alpha(l,k)  \\
&= 2 \kappaf \sum_{l\in \north} \lvert k \cdot \hat{\omega}_\alpha\rvert   c^*_ \alpha(k) \Big( \Ecal_\alpha(l,k) c_\alpha^*(l) + c_\alpha^*(k) \Ecal_\alpha (l,l) - \Ecal_\alpha (l,l) c^*_\alpha (k)  \Big)\;.
\end{align*}
The advantage of the last expression is that $\Ecal_\alpha (l,l)$ commutes with $\Ncal_{\delta}$. Therefore, using the Cauchy--Schwarz inequality together with \cref{lem:ccr}, \cref{lem:bosonic-number} and \cref{lem:stability} we obtain 
\begin{align*}
I_4' &\leq CM^{-1} \sum_{k,l\in \north}  \sum_{\alpha\in \Ik \cap \Il}  \norm{c_ \alpha(k) T_\lambda \psi} \Big[ \norm{ \Ecal_\alpha(l,k) c_\alpha^*(l)  T_\lambda \psi } + \\
&\hspace{16em}+  \norm{ c_\alpha^*(k) \Ecal_\alpha (l,l) T_\lambda \psi} + \norm{\Ecal_\alpha (l,l) c^*_\alpha (k) T_\lambda \psi}   \Big] \\
&\leq CM^{-1} \sum_{k,l\in \north}  \sum_{\alpha\in \Ik \cap \Il} \norm{\Ncal_\delta^{1/2} T_\lambda \psi} \Big[ M N^{-\frac{2}{3}+\delta} \norm{ \Ncal c_\alpha^*(l)  T_\lambda \psi} +\\
&\hspace{15em} + \norm{(\Ncal_{\delta}+1)^{1/2}\Ecal_\alpha (l,l) T_\lambda \psi} + M N^{-\frac{2}{3}+\delta} \norm{ \Ncal c_\alpha^*(k)  T_\lambda \psi}  \Big] \\
&\leq CM^{-1} \sum_{l\in \north}  \sum_{\alpha\in  \Il} \norm{\Ncal_\delta^{1/2} T_\lambda \psi} \Big[ M N^{-\frac{2}{3}+\delta} \norm{ c_\alpha^*(l) (\Ncal +2) T_\lambda \psi} +\norm{\Ecal_\alpha (l,l) (\Ncal_{\delta}+1)^{1/2} T_\lambda \psi}  \Big] \\
&\leq CM^{-1} \sum_{l\in \north} \norm{\Ncal_\delta^{1/2} T_\lambda \psi} \Big[ M N^{-\frac{2}{3}+\delta} M^{\frac{1}{2}}\norm{ (\Ncal_{\delta}+M)^{1/2} (\Ncal +2) T_\lambda \psi} \\
& \hspace{12em} +M^{\frac{3}{2}}N^{-\frac{2}{3}+\delta} \norm{\Ncal  (\Ncal_{\delta}+1)^{1/2} T_\lambda \psi}  \Big] \\
&\leq CM N^{-\frac{2}{3}+\delta} \norm{ (\Ncal_{\delta}+1)^{1/2} (\Ncal +1)  \psi} \norm{ \Ncal_\delta^{1/2}  \psi}\;.
\end{align*}
% This bound is even better than that of $I_3$.
Adding up the contributions to \cref{eq:H0-D} we arrive at the claimed bound \cref{eq:final81}.
\end{proof}

\section{Controlling Non--Bosonizable Terms}
\label{sec:nonboson}

In this section we consider the non--bosonizable terms $\Ecal_1+\Ecal_2$. \Cref{lem:remove-corridors} allows us to replace $\Ecal_2$ by $\Ecal_2^{\Rcal}$, hence we estimate $\mathcal{E}_1 + \mathcal{E}_2^{\Rcal}$. Recall that after the expansion into patches \cref{eq:bR-c} we have 
\begin{equation}
\label{eq:E1pE2R}
\begin{split}
 \mathcal{E}_1 + \mathcal{E}_2^{\Rcal} &= \frac{1}{2N} \sum_{k \in \north } \hat{V}(k) \Big[  \Dfrak(k)^*\Dfrak (k) + \Dfrak(-k)^*\Dfrak (-k) \Big]\\
 & \quad + \frac{1}{N} \sum_{k \in \north } \hat{V}(k) \Big[  \Dfrak(k)^* \sum_{\alpha\in \Ikp} n_\alpha(k) c_\alpha(k) + \Dfrak(-k)^* \sum_{\alpha\in \Ikm} n_\alpha(k) c_\alpha(k) + \hc \Big]\;. 
\end{split}\end{equation}
Now we prove that after the Bogoliubov transformation, the non--bosonizable terms can be bounded from below using the fermionic kinetic energy. This step relies on the assumption $\hat{V}(k) \geq 0$.

\begin{lem}[Non--Bosonizable Terms]\label{lem:non--bosonizable}
Let $T_\lambda$ be the approximate Bogoliubov transformation defined in \cref{eq:Bog-T}. Then for all $\lambda\in [-1,1]$ and for all $\psi\in \fock$ we have
\begin{equation} \label{eq:nonbos} \begin{split}
\langle \psi, T_\lambda^* (\Ecal_1+ \Ecal_2^{\Rcal}) T_\lambda  \psi\rangle &\geq - C \norm{\hat V}_{\ell^1} \langle \psi, \Hbb_0 \psi\rangle - CN^{-\frac{1}{2}} \norm{\psi}\norm{ \Hbb_0^{1/2} T_\lambda \psi} \\
&\quad - C N^{-\frac{5}{3}+2\delta} M \norm{(\Ncal_\delta+M)^{1/2} (\Ncal +1)\psi}^2\;.
\end{split} \end{equation}
\end{lem}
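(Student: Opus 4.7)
My plan is to complete the square in the fermionic picture to reduce the desired bound to an estimate of $\norm{b^\Rcal(\pm k) T_\lambda \psi}^2$, and then to use the approximate Bogoliubov transformation together with the sharp kernel bound to expand $T_\lambda^* b^\Rcal(k) T_\lambda \psi$ into a main piece $b^\Rcal(k)\psi$ plus small corrections. Defining $J_-(k) := \Dfrak(k) + 2 b^\Rcal(k)$ and $J_+(k) := \Dfrak(-k) + 2 b^\Rcal(-k)$, a direct computation using $\Dfrak(k)^* = \Dfrak(-k)$ will give the identity
\[
\Ecal_1 + \Ecal_2^\Rcal = \frac{1}{2N}\sum_{k\in\north}\hat V(k)\bigl[J_-(k)^*J_-(k) + J_+(k)^*J_+(k)\bigr] - \frac{2}{N}\sum_{k\in\north}\hat V(k)\bigl[b^\Rcal(k)^*b^\Rcal(k) + b^\Rcal(-k)^*b^\Rcal(-k)\bigr].
\]
Since $\hat V(k)\geq 0$ (the only place this sign condition enters), conjugation by $T_\lambda$ preserves the positivity of the squares, $T_\lambda^* J_\pm(k)^* J_\pm(k) T_\lambda = (J_\pm(k) T_\lambda)^* (J_\pm(k) T_\lambda) \geq 0$, so they drop from a lower bound and I am left with
\[
\langle \psi, T_\lambda^*(\Ecal_1 + \Ecal_2^\Rcal)T_\lambda\psi\rangle \geq -\frac{2}{N}\sum_{k\in\north}\hat V(k)\bigl[\norm{b^\Rcal(k) T_\lambda\psi}^2 + \norm{b^\Rcal(-k) T_\lambda\psi}^2\bigr].
\]

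The heart of the argument is to control this right-hand side so that the leading term produces $\Hbb_0$ evaluated on $\psi$ rather than on $T_\lambda\psi$; the naive bound $\norm{b^\Rcal T_\lambda\psi}^2 \leq CN\norm{\Hbb_0^{1/2}T_\lambda\psi}^2$ from \cref{lem:bD} would be useless since $\Hbb_0$ is not invariant under $T_\lambda$. Using the unitarity identity $\norm{b^\Rcal(\pm k) T_\lambda\psi} = \norm{T_\lambda^* b^\Rcal(\pm k) T_\lambda\psi}$, applying \cref{lem:Bog-T} to each $c_\gamma(k)$ inside $b^\Rcal(k) = \sum_{\gamma\in\Ikp}n_\gamma(k)c_\gamma(k)$, and separating out the identity piece of $\cosh(\lambda K)$ will yield
\[
T_\lambda^* b^\Rcal(k) T_\lambda\psi = b^\Rcal(k)\psi + \sum_{\alpha\in\Ik} u_\alpha(k) c_\alpha(k)\psi + \sum_{\alpha\in\Ik} v_\alpha(k) c_\alpha^*(k)\psi + \sum_{\gamma\in\Ikp}n_\gamma(k)\Efrak_\gamma(\lambda,k)\psi,
\]
with coefficients $u_\alpha, v_\alpha$ involving $\cosh(\lambda K) - \id$ and $\sinh(\lambda K)$ respectively. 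The refined entry-wise kernel bound of \cref{lem:K}---in particular the factor $\min\{n_\alpha(k)/n_\gamma(k), n_\gamma(k)/n_\alpha(k)\}$, which encodes the geometric cancellation between the linearized kinetic gap and the bosonized interaction near tangential excitations---will give the sharp estimate $\lvert u_\alpha(k)\rvert, \lvert v_\alpha(k)\rvert \leq C\hat V(k)\,n_\alpha(k)$. This sharp kernel control is exactly what makes every subsequent estimate tight enough.

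Expanding $\norm{T_\lambda^* b^\Rcal(k) T_\lambda\psi}^2$, I will then bound the diagonal piece $\norm{b^\Rcal(k)\psi}^2 \leq CN\langle \psi,\Hbb_0\psi\rangle$ by \cref{lem:bD}; summing $\tfrac{2}{N}\sum_k \hat V(k)$ produces the main error $-C\norm{\hat V}_{\ell^1}\langle \psi,\Hbb_0\psi\rangle$. The squared norms of the $u$- and $v$-contributions will be controlled using \cref{lem:bosonic-number} together with $\sum_\alpha n_\alpha(k)^2 \leq Ck_\F^2$, and the Bogoliubov remainder via \cref{lem:Bog-T} together with $\max_\gamma n_\gamma(k) \leq Ck_\F M^{-1/2}$. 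After squaring and summing, these contributions will fit inside the third error term $-CN^{-5/3+2\delta}M\norm{(\Ncal_\delta+M)^{1/2}(\Ncal+1)\psi}^2$, with the smallness of $\norm{\hat V}_{\ell^1}$ used to absorb $\hat V$-proportional factors into the constant.

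I expect the main obstacle to be the middle error $-CN^{-1/2}\norm\psi\norm{\Hbb_0^{1/2}T_\lambda\psi}$. It will arise from the cross terms $2\Re\langle b^\Rcal(k)\psi, (\text{correction})\psi\rangle = 2\Re\langle \psi, b^\Rcal(k)^*(\text{correction})\psi\rangle$ in the expansion of the squared norm, where I will need to pair one factor with the direct bound $\norm{b^\Rcal(\pm k) T_\lambda\psi}\leq C\sqrt N\norm{\Hbb_0^{1/2}T_\lambda\psi}$ from \cref{lem:bD} to produce the asymmetric $\psi$/$T_\lambda\psi$ combination. The hard part is that a naive Cauchy-Schwarz would yield $\norm{\Hbb_0^{1/2}T_\lambda\psi}^2$ with coefficient of order $\norm{\hat V}_{\ell^1}$, which \cref{lem:lin-2} could not absorb at the required precision; avoiding this requires both the sharp kernel bound of \cref{lem:K} and, crucially, the smallness assumption on $\norm{\hat V}_{\ell^1}$, which is precisely where this hypothesis enters the whole proof strategy.
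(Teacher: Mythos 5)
Your opening square-completion identity is correct, and the positivity of the $J_\pm$-squares is preserved under conjugation by $T_\lambda$, so the reduction to bounding $\frac{2}{N}\sum_k\hat V(k)\bigl[\norm{b^\Rcal(k)T_\lambda\psi}^2 + \norm{b^\Rcal(-k)T_\lambda\psi}^2\bigr]$ is a legitimate first step. The bound $\lvert u_\alpha\rvert,\lvert v_\alpha\rvert \leq C\hat V(k)\,n_\alpha(k)$ also follows from \cref{lem:K} exactly as you say. However, there is a genuine gap in the estimate of the $v$-contribution $\norm{\sum_\alpha v_\alpha c_\alpha^*(k)\psi}^2$, and it cannot be absorbed into any of the three error terms in the lemma. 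Expanding the square and normal ordering,
\begin{equation*}
\norm{\textstyle\sum_\alpha v_\alpha c_\alpha^*(k)\psi}^2 = \norm{\textstyle\sum_\alpha \overline{v_\alpha}\, c_\alpha(k)\psi}^2 + \sum_\alpha \lvert v_\alpha\rvert^2\bigl\langle \psi, \bigl(1+\Ecal_\alpha(k,k)\bigr)\psi\bigr\rangle\;.
\end{equation*}
The first piece is controlled by the kinetic energy as you intend, but the commutator (constant) piece is bounded only by $\sum_\alpha\lvert v_\alpha\rvert^2\norm{\psi}^2 \leq C\hat V(k)^2\sum_\alpha n_\alpha(k)^2\norm{\psi}^2 \leq C\hat V(k)^2 N^{2/3}\norm{\psi}^2$, since $\sum_\alpha n_\alpha(k)^2 \sim k_\F^2$. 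Multiplying by the prefactor $\frac{2\hat V(k)}{N}$ produces an error of order $\hat V^3 N^{-1/3}\norm{\psi}^2 = \hat V^3\hbar\norm{\psi}^2$.

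This is the same order as $E_N^\textnormal{RPA}$ itself, and in particular it does not lie below the lemma's third error term: for $\psi=\Omega$ the lemma's right-hand side is of order $N^{-5/3+2\delta}M^2\ll\hbar$, while your residual is $\sim\hbar$. Physically, the obstruction is that $\langle T_\lambda\Omega, b^\Rcal(k)^* b^\Rcal(k) T_\lambda\Omega\rangle$ is a genuine $\Ocal(N^{2/3})$ quantity (the Bogoliubov transformation creates pairs, so $b^\Rcal$ no longer annihilates the transformed vacuum), and the $\hat V/N$ prefactor only brings it down to $\hbar$, not below. The paper's own proof circumvents this precisely by \emph{not} completing a square that produces a pure $b^{\Rcal*}b^\Rcal$ remainder. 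Instead it estimates $\langle\psi, T_\lambda^*\Ecal_2^\Rcal T_\lambda\psi\rangle$ directly by expanding $T_\lambda^*\Dfrak(l)^* c_\alpha(k) T_\lambda$, and the final Cauchy--Schwarz pairs each piece against $\norm{\Dfrak(\pm k)T_\lambda\psi}$, which is then absorbed into the manifestly non-negative $\langle\psi,T_\lambda^*\Ecal_1 T_\lambda\psi\rangle$ with coefficient $<1$. Keeping $\Dfrak$ (rather than $b^\Rcal$) as the Cauchy--Schwarz partner throughout is essential: $\Ecal_1$ provides the positive $\norm{\Dfrak\cdot}^2$ one needs, whereas nothing in the Hamiltonian provides a comparable positive $\norm{b^\Rcal\cdot}^2$ to cancel the term your decomposition isolates.

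A secondary point: you flag the cross terms producing $N^{-1/2}\norm{\psi}\norm{\Hbb_0^{1/2}T_\lambda\psi}$ as the expected main obstacle. In the paper this term actually arises not from a cross term in a squared norm but from the \emph{commutator} $[\Dfrak(l)^*,c_\alpha^*(k)]$, which is again a pair operator and can be controlled by \cref{lem:kinetic} applied to $T_\lambda\psi$; your framework does not naturally produce such a commutator, so even if the $v$-term were fixable, the structure of the second error term would look quite different in your approach.
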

Note that in the lemma $\Hbb_0$ acts once on $\psi$ and once on $T_\lambda \psi$. The sharp bound on the matrix elements of the Bogoliubov kernel from \cref{lem:K} is crucial to the proof of this lemma.

The smallness assumption on $\hat{V}$ is important to control the term $-C\norm{\hat{V}}_{\ell^1} \Hbb_0$ appearing on the right hand side in this lemma.

\begin{proof} Take $k,l\in \north$. We denote $\tD(l):=T_\lambda^* \Dfrak(l) T_\lambda$. By \cref{lem:Bog-T}
% \begin{align}\label{eq:c-T-lambda}
% T_\lambda^* c_\alpha(k) T_\lambda =\mathfrak{E}_\alpha(\lambda,k) + \sum_{\beta\in\Ik} \cosh( \lambda K(k))_{\alpha,\beta} c_\beta(k) +  \sum_{\beta\in\Ik} \sinh(\lambda K(k))_{\alpha,\beta} c^*_\beta(k)
% \end{align}
we can write
\begin{align*}
T_\lambda^*  \Dfrak(l)^* c_\alpha(k) T_\lambda = \tD^*(l) \Big( \mathfrak{E}_\alpha(\lambda,k) + \sum_{\beta\in\Ik} \cosh(\lambda K(k))_{\alpha,\beta} c_\beta(k) +  \sum_{\beta\in\Ik} \sinh(\lambda K(k))_{\alpha,\beta} c^*_\beta(k) \Big). 
\end{align*}
Therefore, by the triangle inequality, the contribution of $\mathcal{E}_2^{\Rcal}$ to \cref{eq:nonbos} can be bounded by 
\begin{align}
&N^{-1} \sum_{\alpha \in \Ik}  n_\alpha(k) \big\lvert \big\langle  \psi, T_\lambda^* \Dfrak(l)^* c_\alpha(k) T_\lambda \psi \big\rangle\big\rvert \\
& \leq N^{-1}\sum_{\alpha \in \Ik} n_\alpha(k) \big\lvert \big\langle  \psi, \tD^*(l)  \mathfrak{E}_\alpha(\lambda,k) \psi \big\rangle\big\rvert \nonumber\\
&\quad + N^{-1}\sum_{\alpha,\beta\in\Ik} n_\alpha(k) \lvert\cosh( \lambda  K(k))_{\alpha,\beta}\rvert \big\lvert \big\langle  \psi, \tD^*(l) c_\beta(k) \psi \big\rangle\big\rvert \nonumber\\
&\quad + N^{-1} \sum_{\alpha,\beta\in\Ik} n_\alpha(k) \lvert\sinh(\lambda K(k))_{\alpha,\beta}\rvert \big\lvert \big\langle  \psi, \tD^*(l) c^*_\beta(k) \psi \big\rangle\big\rvert =: I_1+I_2+I_3\;. \label{eq:D-b-T-01} 
\end{align}
We proceed to bound the right side of \cref{eq:D-b-T-01} term by term. The first term can be bounded using $n_\alpha(k)\leq CN^{\frac{1}{3}}M^{-\frac{1}{2}}$ and \cref{lem:Bog-T}:\begin{align*}
I_1 &\leq CN^{-1} \sum_{\alpha\in \Ik} N^{\frac{1}{3}}M^{- 1/ 2} \norm{ \tD(l) \psi} \norm{\mathfrak{E}_\alpha(\lambda,k) \psi}\\
% &\leq CN^{-1} N^{\frac{1}{3}}M^{-\frac{1}{2}} \norm{ \tD(l) \psi}   M N^{-\frac{2}{3} +\delta} \norm{(\Ncal_\delta+M)^{1/2} (\Ncal +1)\psi} \\
&\leq CN^{-\frac{4}{3}+\delta} M^{\frac{1}{2}} \ \norm{ \tD(l) \psi} \norm{(\Ncal_\delta+M)^{1/2} (\Ncal +1)\psi}\;. 
\end{align*}
Using \cref{eq:ch-sh} we have 
\begin{align}  \label{eq:ch-sh-sum}
\sum_{\alpha \in \Ik} n_\alpha(k) \Big( \lvert\cosh(\lambda K(k))_{\alpha,\beta}\rvert +  \lvert\sinh(\lambda K(k))_{\alpha,\beta}\rvert \Big) \leq C n_\beta(k)\;.
\end{align}
Consequently, using \cref{lem:kinetic} we get
\begin{align*}
I_2 &\leq CN^{-1} \sum_{\beta \in \Ik}  n_\beta(k) \big\lvert \big\langle  \psi, \tD^*(l) c_\beta(k) \psi \big\rangle\big\rvert \\
&\leq    CN^{-1} \norm{\tD(l)\psi} \sum_{\beta  \in \Ik}  n_\beta(k) \norm{c_\beta(k) \psi} \leq CN^{-\frac{1}{2}} \norm{\tD(l)\psi} \norm{\Hbb_0^{1/2} \psi}\;. 
\end{align*}

The third term is more difficult. We have
\begin{align*}
I_3 &\leq CN^{-1} \sum_{\alpha \in \Ik}  n_\alpha(k) \big\lvert \big\langle  \psi, \tD^*(l) c_\alpha^*(k) \psi \big\rangle\big\rvert \\
&\leq CN^{-1} \sum_{\alpha \in \Ik}  n_\alpha(k) \big\lvert \big\langle  \psi, c_\alpha^*(k) \tD^*(l)  \psi \big\rangle\big\rvert + CN^{-1} \sum_{\alpha \in \Ik}  n_\alpha(k) \big\lvert \big\langle  \psi, [\tD^*(l), c_\alpha^*(k)]  \psi \big\rangle\big\rvert \\
&=: I_4+I_5\;. 
\end{align*}
The term $I_4$ can be bounded again by \cref{lem:kinetic} as
\begin{align*}
I_4 \leq CN^{-1}  \sum_{\alpha \Ik}  n_\alpha(k) \norm{c_\alpha(k) \psi}   \norm{\tD^*(l)\psi} \leq CN^{-\frac{1}{2}} \norm{\Hbb_0^{1/2} \psi} \norm{\tD(-l)\psi}\;. 
\end{align*} 
The commutator in $I_5$ can be computed by undoing the approximate Bogoliubov transformation, 
\[
[\tD^*(l), c_\alpha^*(k)] = [T_\lambda^* \Dfrak(l)^* T_\lambda, c_\alpha^*(k)] = T_\lambda^* [\Dfrak(l)^*, T_\lambda c^*_\alpha(k) T_\lambda^*] T_\lambda\;,
\]
and using \cref{eq:Bog_Tlambda} with $T_\lambda=T^*_{-\lambda}$, i.\,e.,
\[
T_\lambda c^*_\alpha(k) T_\lambda^*= \mathfrak{E}^*_\alpha(-\lambda,k) + \sum_{\beta\in\Ik} \cosh(\lambda K(k))_{\alpha,\beta} c^*_\beta(k) -   \sum_{\beta\in\Ik} \sinh(\lambda K(k))_{\alpha,\beta} c_\beta(k)\;.
\]
Therefore, by the triangle inequality and \cref{eq:ch-sh-sum} we can decompose 
\begin{align*}
I_5 &\leq CN^{-1} \sum_{\alpha \in  \Ik}  n_\alpha(k) \big\lvert \big\langle T_\lambda \psi, \big[\Dfrak(l)^*, \mathfrak{E}^*_\alpha(-\lambda,k)\big] T_\lambda \psi \big\rangle\big\rvert \\
&\quad +  CN^{-1} \sum_{\alpha \in  \Ik} n_\alpha(k) \big\lvert \big\langle T_\lambda \psi, \big[\Dfrak(l)^*, c_\alpha^*(k) \big] T_\lambda \psi \big\rangle\big\rvert \\
&\quad + CN^{-1} \sum_{\alpha \in  \Ik} n_\alpha(k) \big\lvert \big\langle T_\lambda \psi, \big[\Dfrak(l)^*, c_\alpha(k) \big] T_\lambda \psi \big\rangle\big\rvert =: I_6 + I_7+ I_8\;. 
\end{align*}
For $I_6$, we simply expand the commutator and use  \cref{lem:Bog-T} similarly as done for $I_1$ to get
\begin{align*}
I_6 &\leq CN^{-1} \sum_{\alpha\in \Ik} N^{\frac{1}{3}}M^{- 1/ 2} \Big( \norm{  \Dfrak(l) T_\lambda \psi} \norm{\mathfrak{E}^*_\alpha(-\lambda,k) T_\lambda \psi} + \norm{\mathfrak{E}_\alpha(-\lambda,k) T_\lambda \psi} \norm{  \Dfrak(-l) T_\lambda \psi}\Big)\\
&\leq CN^{-\frac{4}{3}+\delta} M^{\frac{1}{2}}  \Big( \norm{ \tD(l) \psi} + \norm{ \tD(-l) \psi} \Big) \norm{(\Ncal_\delta+M)^{1/2} (\Ncal +1)\psi}\;. 
\end{align*}
Here we used $\norm{  \Dfrak(l) T_\lambda \psi}= \norm{ \tD(l) \psi}$ as $T_\lambda$ is unitary.

For $I_7$ we compute the commutator explicitly. We decompose the operator $\Dfrak(l)^*$ as
\[
\Dfrak(l)^* = \sum_{p  \in \BFc \cap (\BFc+l)} a^*_{p} a_{p-l} - \sum_{h  \in \BF \cap (\BF+l)} a^*_{h} a_{h-l} =: \Dfrak^\textnormal{p}(l)^* - \Dfrak^\textnormal{h}(l)^*\;.
\]
In the case $\alpha \in \Ikp$ we can then compute 
\[
 n_\alpha(k) [\Dfrak^\textnormal{p}(l)^*,c_{\alpha}^*(k)]  = \big[ \sum_{q \in \BFc \cap (\BFc+l)} a^*_{q} a_{q-l}, 
   \sum_{\substack{p\colon p \in \BFc \cap B_\alpha\\p-k\in \BF \cap B_\alpha}}  a^*_p a^*_{p-k} \big]  =  \sum_{\substack{q\colon q-l \in \BFc \cap B_\alpha \\q-l-k \in \BF \cap B_\alpha \\ q \in \BFc }} a^*_{q} a^*_{q-l-k}\; . 
\]
By the Cauchy--Schwarz inequality and the kinetic energy estimate in \cref{lem:kinetic} we obtain
\begin{align*}
N^{-1} \sum_{\alpha\in \Ikp} n_\alpha(k) \left\lvert \left\langle T_\lambda \psi, [\Dfrak^\textnormal{p}(l)^*,  c^*_\alpha(k)] T_\lambda \psi \right\rangle \right\rvert & \leq N^{-1}  \sum_{q \in \BFc \cap (\BF + k+l) } \norm{T_\lambda \psi}  \norm{ a_{q} a_{q-k-l} T_\lambda \psi} \\
&\leq CN^{-\frac{1}{2}} \norm{\psi} \norm{ \Hbb_0^{1/2} T_\lambda \psi}\;. 
\end{align*} 
For $\alpha\in \Ikm$ and $\Dfrak^\textnormal{h}(l)$ we get similar estimates. The commutator in $I_8$ can be bounded exactly the same way, using $\Dfrak(l)^*=\Dfrak(-l)$. Thus  
\[
I_7 + I_8 \leq  CN^{-\frac{1}{2}} \norm{\psi}\norm{ \Hbb_0^{1/2} T_\lambda \psi}\;. 
\]
Collecting all estimates for $I_1,\ldots, I_8$ we conclude from \cref{eq:D-b-T-01} that
\begin{align} \label{eq:key-D-b}
&N^{-1} \sum_{\alpha \in \Ik}  n_\alpha(k) \big\lvert \big\langle  \psi, T_\lambda^* \Dfrak(l)^* c_\alpha(k) T_\lambda \psi \big\rangle\big\rvert \nonumber\\
&\leq CN^{-\frac{4}{3}+\delta} M^{\frac{1}{2}} \Big( \norm{ \tD(l) \psi} + \norm{ \tD(-l) \psi} \Big)\norm{(\Ncal_\delta+M)^{1/2} (\Ncal +1)\psi} \nonumber\\
&\quad + CN^{-\frac{1}{2}} \Big( \norm{ \tD(l) \psi} + \norm{ \tD(-l) \psi} \Big) \norm{\Hbb_0^{1/2} \psi} + CN^{-\frac{1}{2}} \norm{\psi}\norm{ \Hbb_0^{1/2} T_\lambda \psi}\;. 
\end{align}

The bound \cref{eq:key-D-b} holds true for all $k,l\in \north$. In particular, by the Cauchy--Schwarz inequality we deduce that 
\begin{align*}
& |\langle \psi, T_\lambda^* \Ecal_2^{\Rcal} T_\lambda  \psi\rangle| \leq  2 \sum_{k\in \north} \sum_{l= \pm k}  \frac{\hat V(k)}{N} \sum_{\alpha \in \Ik}  n_\alpha(k) \big\lvert \big\langle  \psi, T_\lambda^* \Dfrak(l)^* c_\alpha(k) T_\lambda \psi \big\rangle\big\rvert \\
&\leq   \sum_{k\in \north}\sum_{l= \pm k}  C\hat V(k) \Big[ N^{-\frac{4}{3}+\delta} M^{\frac{1}{2}} \ \norm{ \tD(l) \psi} \norm{(\Ncal_\delta+M)^{1/2} (\Ncal +1)\psi} + \\
&\hspace{9em} + CN^{-\frac{1}{2}} \norm{\tD(l)\psi} \norm{\Hbb_0^{1/2} \psi} + CN^{-\frac{1}{2}} \norm{\psi}\norm{ \Hbb_0^{1/2} T_\lambda \psi} \Big]\\
&\leq  \sum_{k\in \north}  \frac{\hat V(k)}{4N}  \Big[ \norm{ \tD(k) \psi}^2+\norm{ \tD(-k) \psi}^2\Big] \\
& \qquad +   \sum_{k\in \north} C \hat V(k) N \Big[ N^{-\frac{4}{3}+\delta} M^{\frac{1}{2}}  \norm{(\Ncal_\delta+M)^{1/2} (\Ncal +1)\psi} + N^{-\frac{1}{2}}\norm{\Hbb_0^{1/2} \psi} \Big]^2  \\
&\qquad + CN^{-\frac{1}{2}} \norm{\psi}\norm{ \Hbb_0^{1/2} T_\lambda \psi} \\
& \leq  \langle \psi, T_\lambda^* \Ecal_1 T_\lambda  \psi\rangle  + C \norm{\hat V}_{\ell^1} \langle \psi, \Hbb_0 \psi\rangle  \\
&\qquad + C N^{-\frac{5}{3}+2\delta} M \norm{(\Ncal_\delta+M)^{1/2} (\Ncal +1)\psi}^2 +CN^{-\frac{1}{2}} \norm{\psi}\norm{ \Hbb_0^{1/2} T_\lambda \psi}\;. 
\end{align*} 
This concludes the proof of \cref{eq:nonbos}.
\end{proof}    
  
\section{Diagonalization of Approximately Bosonic Hamiltonian} \label{sec:bosonic-Hamiltonian} 

We now focus on the approximately bosonic Hamiltonian $\Dbb_\B + Q_\B^{\Rcal}$, with $\Dbb_\B$ and $Q_\B^{\Rcal}$ as defined in \cref{eq:DBdef} and \cref{eq:QBNr3}. With $h_\textnormal{eff}(k)$ being the effective Hamiltonian introduced in \cref{eq:heff}, we can write
\begin{equation} \label{eq:DplusQ}
\Dbb_\B + Q_\B^{\Rcal} = \sum_{k\in \north} 2\hbar \kappaf \lvert k\rvert h_\textnormal{eff}(k)  \;.
\end{equation}
% \[
% h_\textnormal{eff}(k)  := \sum_{\alpha,\beta \in \Ik} \Big[ \big( \D(k) + \W(k) \big)_{\alpha,\beta} c_\alpha^*(k) c_\beta(k) +  \frac{1}{2} \Wt(k)_{\alpha,\beta} \big( c^*_\alpha(k) c^*_\beta(k) + c_\beta(k)  c_\alpha(k) \big)  \Big]\;.
% \]

The main result of this section is the following lemma in which we approximately diagonalize the effective Hamiltonian, extract its ground state energy $E_N^\textnormal{RPA}$, and then bound the excitation spectrum below by $\Dbb_\B - C \norm{\hat{V}}_{\ell^1} \Hbb_0$.
\begin{lem}[Diagonalization of Bosonized Hamiltonian] \label{lem:diag-final} Let $T_1$ be the approximate Bogoliubov transformation defined in \cref{eq:Bog-T}. For all normalized $\psi\in \fock$ we have 
 \begin{align*}
\Big\langle \psi, T_1^* \Big(\Dbb_\B + Q_\B^{\Rcal} \Big) T_1 \psi \Big\rangle & \geq E_N^\textnormal{RPA}  +  \left\langle \psi,  \Dbb_\B \psi \right\rangle - C \norm{\hat V}_{\ell^1} \langle \psi, \Hbb_0 \psi\rangle \\
&\quad -C \hbar \Big[ N^{-\frac{2}{3}+\delta} \norm{\Ncal^{1/2}\psi}^2 + M^{\frac{1}{4}}N^{-\frac{1}{6}+\frac{\delta}{2}} + N^{-\frac{\delta}{2}} + M^{-\frac{1}{4}}N^{\frac{\delta}{2}} \nonumber\\
& \qquad \qquad +  \Big( M N^{-\frac{2}{3}+\delta}
% \norm{(\Ncal_\delta+1)^{1/2} \psi} 
\norm{(\Ncal_\delta+M)^{1/2} (\Ncal +1) \psi} \Big)^2\\
& \qquad  \qquad   + MN^{-\frac{2}{3}+\delta}  \norm{(\Ncal_\delta+1)^{1/2} \psi} \norm{(\Ncal_\delta+M)^{1/2} (\Ncal +1) \psi}\Big]\; ,
\end{align*} 
where $E_N^\textnormal{RPA}$ is the RPA correlation energy defined in \cref{eq:rpa_energy}.
\end{lem}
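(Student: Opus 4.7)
The strategy is to mimic, to leading order, the computation one would do if $T_1$ were an \emph{exact} bosonic Bogoliubov transformation diagonalizing an exactly bosonic version of $h_\textnormal{eff}(k)$, and then carefully book-keep the three sources of error that make our setting only approximate: (a) the error operators $\mathfrak{E}_\gamma(\lambda,l)$ from \cref{lem:Bog-T}, (b) the CCR defect operators $\Ecal_\alpha(k,l)$ from \cref{lem:ccr}, and (c) the normal-ordering corrections that are $\Ocal(1)$ bosonically but only $\Ocal(\Ncal_\delta/n_\alpha^2)$ fermionically. I would first insert the identity $T_1 T_1^* = \id$ between each pair operator in $\Dbb_\B + Q_\B^{\Rcal}$ and apply the transformation formulas of \cref{lem:Bog-T} together with their adjoints. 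The ``principal part'' of the resulting expression coincides exactly with the image of the bosonic $h_\textnormal{eff}(k)$ under its diagonalizing Bogoliubov transformation, which by the standard computation (see \cite[App.~A]{BNP+20}) equals
\begin{equation*}
\tfrac{1}{2}\tr\bigl[E(k)-D(k)-W(k)\bigr] + \sum_{\alpha,\beta\in\Ik}\Kfrak(k)_{\alpha,\beta}\,c_\alpha^*(k)c_\beta(k),
\end{equation*}
where $\Kfrak(k)$ is an explicit symmetric matrix expressible through $\cosh(K(k))$, $\sinh(K(k))$, $D(k)$, $W(k)$ and $\Wt(k)$.

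For the constant piece, summing $2\hbar\kappaf|k|\cdot\tfrac{1}{2}\tr[E(k)-D(k)-W(k)]$ over $k\in\north$ produces $E_N^\textnormal{RPA}$ up to a subleading error, by the same integral representation of the matrix square root that was used in \cite{BNP+20} to derive the upper bound; the integral in \cref{eq:rpa_energy} arises from $A^{1/2}=\pi^{-1}\int_0^\infty \lambda^2(A+\lambda^2)^{-1}A^{-1}\di\lambda$ combined with the rank-one structure $W(k)=g|v\rangle\langle v|$ that already appeared in the proof of \cref{lem:K}. This step is essentially a recomputation of the bosonic trace formula of \cite[\S5]{BNP+20}, so I would not redo it in detail.

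The key new operator estimate is the lower bound
\begin{equation*}
\sum_{k\in\north}2\hbar\kappaf|k|\sum_{\alpha,\beta\in\Ik}\Kfrak(k)_{\alpha,\beta}\,c_\alpha^*(k)c_\beta(k)\ \geq\ \Dbb_\B - C\norm{\hat V}_{\ell^1}\Hbb_0,
\end{equation*}
which I expect to be the main obstacle. To prove it I would expand $\Kfrak(k)$ using the power series of $\cosh$ and $\sinh$, isolating the zeroth-order term $D(k)+W(k)$ (which contributes $\Dbb_\B$ plus a non-negative piece from $W$) and treating the remainder as a perturbation. Here the sharp entry-wise bound $|K(k)_{\alpha,\beta}|\le C\hat V(k) M^{-1}\min\{n_\alpha/n_\beta,n_\beta/n_\alpha\}$ from \cref{lem:K} is essential: combined with \cref{lem:counting} it yields, after cancellation of the factors $|\hat k\cdot\hat\omega_\alpha|$ hidden in $n_\alpha(k)^2$, a bound $|(\Kfrak(k)-D(k)-W(k))_{\alpha,\beta}|\le C\hat V(k)\,n_\alpha(k)n_\beta(k)/(\hbar N)$ with the same quadratic-in-$n$ structure as $Q_\B^\Rcal$ itself. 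A Cauchy--Schwarz estimate, together with the bound $\sum_\alpha n_\alpha(k)^2\lesssim N^{2/3}|k|$ from \cref{lem:counting} and \cref{lem:kinetic}, then absorbs this remainder into $C\norm{\hat V}_{\ell^1}\Hbb_0$; the smallness of $\norm{\hat V}_{\ell^1}$ is used precisely to ensure that the series in the power expansion of $\log(S_1 S_1^\intercal)$ converges and that the resulting constant is smaller than what the linearization step \cref{lem:lin-2} gives us back on the positive side.

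Finally, the three error sources (a)--(c) are controlled as follows. Errors of type (a) come from applying \cref{lem:Bog-T} twice, once for $c^*$ and once for $c$; each insertion of $\mathfrak{E}_\gamma$ is paired with a pair operator (from $Q_\B^\Rcal$ or from the transformed $\Dbb_\B$) and estimated by Cauchy--Schwarz together with the norm bound in \cref{lem:Bog-T} and \cref{lem:bosonic-number}, producing a contribution of size $\hbar MN^{-\frac{2}{3}+\delta}\norm{(\Ncal_\delta+M)^{1/2}(\Ncal+1)\psi}\cdot\norm{(\Ncal_\delta+1)^{1/2}\psi}$. Errors of type (b), arising when one reorders $[c_\gamma(l),c_\alpha^*(k)]$ in the transformed interaction to extract the constant contributing to $E_N^\textnormal{RPA}$, are bounded via the $\Ecal_\alpha(k,l)$ estimates of \cref{lem:ccr} and give at worst the squared term $(MN^{-\frac{2}{3}+\delta})^2\norm{(\Ncal_\delta+M)^{1/2}(\Ncal+1)\psi}^2$. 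Errors of type (c), i.e.\ the discrepancy between the bosonic and fermionic vacuum expectation values of $c_\alpha(k)c_\alpha^*(k)$ produced by normal-ordering, are controlled by $\hbar N^{-\frac{2}{3}+\delta}\norm{\Ncal^{1/2}\psi}^2$ using again \cref{lem:ccr} and \cref{lem:bosonic-number}. Finally, the $M^{\frac{1}{4}}N^{-\frac{1}{6}+\frac{\delta}{2}}$ and $N^{-\delta/2}$ terms on the right absorb the patch-truncation error from \cref{lem:remove-corridors} that was implicitly used in identifying $Q_\B^\Rcal$ with its bosonized form, while the $M^{-\frac{1}{4}}N^{\delta/2}$ term accounts for the subleading correction $1+o(1)$ in \cref{lem:counting} when converting the RPA trace integral into its closed form \cref{eq:rpa_energy}. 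Collecting these contributions yields the stated inequality.
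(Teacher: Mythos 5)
Your plan follows the paper's proof essentially step for step: expand $T_1^*h_\textnormal{eff}(k)T_1$ via \cref{lem:Bog-T}, achieve bosonic Wick normal order using the approximate CCR, extract the constant $\frac{1}{2}\tr(E-D-W)$ as $E_N^\textnormal{RPA}$, and bound the diagonalized operator below by $\Dbb_\B-C\norm{\hat V}_{\ell^1}\Hbb_0$ via the sharp entry-wise kernel bound of \cref{lem:K} combined with $u_\alpha\lesssim M^{1/2}N^{-1/3}n_\alpha$ and \cref{lem:kinetic}; your choice to isolate $D+W$ rather than $D$ as the ``zeroth-order'' piece is equally valid since $W\ge 0$. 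One piece of your error bookkeeping is misattributed: the terms $M^{1/4}N^{-1/6+\delta/2}$, $N^{-\delta/2}$, $M^{-1/4}N^{\delta/2}$ do \emph{not} come from the patch-truncation lemma \cref{lem:remove-corridors} (which never enters this proof and is applied separately when assembling the full lower bound in Section 10); they come from \cref{eq:GSE-app}, i.e.\ the error in passing from the discrete patch trace $\sum_k\hbar\kappa|k|\tr(E(k)-D(k)-W(k))$ to the closed-form RPA integral \cref{eq:rpa_energy}, which is a Riemann-sum plus $n_\alpha(k)^2$-normalization error already worked out in \cite[Eq.~(5.15)]{BNP+20}. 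Similarly, the quadratic error $(MN^{-2/3+\delta}\norm{(\Ncal_\delta+M)^{1/2}(\Ncal+1)\psi})^2$ arises from the $\mathfrak{E}^*_\alpha\mathfrak{E}_\beta$ cross terms of your type (a), not from the CCR defect of type (b); the CCR defect contributes only the $N^{-2/3+\delta}\norm{\Ncal^{1/2}\psi}^2$ term (cf.\ \cref{eq:error-ccr-easy}). These are labeling slips, not gaps, and the strategy as a whole is sound.
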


The smallness assumption on $\hat{V}$ is important to control the term $-C\norm{\hat{V}}_{\ell^1} \Hbb_0$ appearing on the right hand side in this lemma.

\begin{proof} \textbf{Error Terms.} Recall that $K(k)$ is a real symmetric matrix, and hence also $\sinh(K(k))$ and $\cosh(K(k)$ are real symmetric matrices.
Defining
\begin{align} \label{eq:tc-def}
\tc_\alpha(k):= \sum_{\beta\in\Ik}\cosh(K(k))_{\alpha,\beta} c_\beta(k) + \sum_{\beta\in\Ik} \sinh( K(k))_{\alpha,\beta} c^*_\beta(k)
\end{align}
we have, according to \cref{lem:Bog-T}, 
\begin{align} \label{eq:Bog-error-f}
T_1^* c_\alpha(k)  T_1 = \tc_\alpha(k)  + \mathfrak{E}_\alpha(1,k) \;.
\end{align} 
In the first step we are going to control the contribution of $\mathfrak{E}_\alpha(1,k)$. By \cref{eq:ch-sh} we have  
\[
\lvert\cosh( K(k))_{\alpha,\beta}-  \delta_{\alpha,\beta}\rvert + \lvert\sinh( K(k))_{\alpha,\beta}\rvert \leq \frac{C}{M}
\]
and thus using \cref{lem:bosonic-number} (to treat the contribution of $\delta_{\alpha,\beta}$, recall that $\norm{c_\alpha(k) \psi} \leq \norm{\Ncal_\delta^{1/2}\psi}$ for all $\psi \in \fock$ follows from the first bound in the lemma)  we get 
\begin{align} \label{eq:Bog-error-f-2}
\norm{\tc_\alpha(k) \psi} +  \norm{\tc_\alpha^*(k) \psi} \leq C \norm{ (\Ncal_\delta+1)^{\frac{1}{2}} \psi}\;.
\end{align}
Now we expand
\begin{align} \label{eq:THbT-exp}
T_1^* h_\textnormal{eff}(k) T_1&= \sum_{\alpha,\beta \in \Ik}  \Big( \D(k) + \W(k)\Big)_{\alpha,\beta}  \Big(\tc_\alpha^*(k)+ \mathfrak{E}^*_\alpha(1,k) \Big) \Big(\tc_\beta(k) + \mathfrak{E}_\beta(1,k) \Big) \nonumber \\
&\quad\ +  \frac{1}{2}\sum_{\alpha,\beta \in \Ik}  \Big[ \Wt(k)_{\alpha,\beta} \Big( \tc^*_\alpha(k) + \mathfrak{E}^*_\alpha(1,k) \Big) \Big( \tc^*_\beta(k) + \mathfrak{E}^*_\beta(1,k)) \Big)+\hc \Big]\; .
\end{align}
The main contribution is
\begin{align*} 
h_\textnormal{eff}^\textnormal{diag}(k) &:= \sum_{\alpha,\beta \in \Ik} \Big[ \big( \D(k) + \W(k)\big)_{\alpha,\beta}  \tc_\alpha^*(k)\tc_\beta(k)  + \frac{1}{2} \Wt(k)_{\alpha,\beta} ( \tc^*_\alpha(k) \tc^*_\beta(k) + \tc_\beta(k)\tc_\alpha(k) \Big]\;.
\end{align*}
To bound the error terms in \cref{eq:THbT-exp}, first observe that  
\[
\lvert \D(k)_{\alpha,\beta} \rvert \leq \delta_{\alpha,\beta}\;, \quad \lvert \W(k)_{\alpha,\beta}\rvert + \lvert \Wt(k)_{\alpha,\beta}\rvert\leq CM^{-1}\;.
\]
Now using  the Cauchy--Schwarz inequality together with \cref{eq:Bog-error-f-2} and \cref{lem:Bog-T} we get 
\begin{align}
& \Big\lvert\langle \psi, (T_1^* h_\textnormal{eff}(k) T_1 - h_\textnormal{eff}^\textnormal{diag}(k) ) \psi  \rangle\Big\rvert \nonumber\\
&\leq \sum_{\alpha,\beta\in \Ik} \Big(\D(k) + \W(k)\Big)_{\alpha,\beta} \Big[  2\norm{\tc_\alpha(k) \psi} \norm{\mathfrak{E}_\beta(1,k) \psi} + \norm{ \mathfrak{E}_\alpha(1,k) \psi} \norm{ \mathfrak{E}_\beta(1,k) \psi} \Big] \nonumber\\
& \quad + 2\sum_{\alpha,\beta\in \Ik} \Wt(k)_{\alpha,\beta} \Big[ \norm{\tc_\alpha(k) \psi} \norm{ \mathfrak{E}^*_\beta(1,k) \psi} + \norm{ \mathfrak{E}_\alpha(1,k) \psi} \norm{ \mathfrak{E}^*_\beta(1,k) \psi}\Big]\nonumber\\
& \quad +  2\sum_{\alpha\in \Ik} \norm{ \mathfrak{E}_\alpha(1,k) \psi} \norm{ \sum_{\beta\in \Ik} \Wt(k)_{\alpha,\beta} \tc^*_\beta(k) \psi}\nonumber\\
&\leq CMN^{-\frac{2}{3}+\delta}  \norm{(\Ncal_\delta+1)^{1/2} \psi} \norm{(\Ncal_\delta+M)^{1/2} (\Ncal +1) \psi} \nonumber\\
&\quad + C\Big( MN^{-\frac{2}{3}+\delta}
% \norm{(\Ncal_\delta+1)^{1/2} \psi}
\norm{(\Ncal_\delta+M)^{1/2} (\Ncal +1) \psi} \Big)^2\;. \label{labeq:THbT-exp-err} 
\end{align}
In this calculation we used the crude bound $\sum_{\alpha \in \Ik} \norm{ \mathfrak{E}_\alpha(1,k) \psi}^2 \leq \big(\sum_{\alpha \in \Ik} \norm{ \mathfrak{E}_\alpha(1,k) \psi}\big)^2$.

\paragraph{Bosonic Terms.} Now we compute $h_\textnormal{eff}^\textnormal{diag}(k)$ by inserting the transformation \cref{eq:tc-def}.   In this step let us suppress the $k$--dependence in the notation.  We have
\begin{align*}
\tc_\alpha^* \tc_\beta 
% &= \Big[ \sum_{\alpha'\in\Ik} \cosh(K)_{\alpha,\alpha'} c^*_{\alpha'} + \sum_{\alpha' \in\Ik} \sinh( K)_{\alpha,\alpha'} c_{\alpha'} \Big] \times \\
% &\qquad \times \Big[ \sum_{\beta'\in\Ik}\cosh(K)_{\beta,\beta'} c_{\beta'} + \sum_{\beta'\in\Ik} \sinh( K)_{\beta,\beta'} c^*_{\beta'} \Big] \\
&= \sum_{\alpha',{\beta'} \in\Ik} \Big[ \cosh(K)_{\alpha,\alpha'} \cosh(K)_{\beta,{\beta'}} c^*_{\alpha'}c_{\beta'}  + \cosh(K)_{\alpha,\alpha'}  \sinh( K)_{\beta,{\beta'}} c^*_{\alpha'} c^*_{\beta'}   \\
&\hspace{5em}  + \sinh( K)_{\alpha,\alpha'}  \cosh(K)_{\beta,{\beta'}} c_{\beta'}c_{\alpha'}   \\
&\hspace{5em}  + \sinh( K)_{\alpha,\alpha'} \sinh( K)_{\beta,{\beta'}}  \Big( c^*_{\beta'}c_{\alpha'}   +  \delta_{\alpha',\beta'} +  \delta_{\alpha',\beta'} \Ecal_{\alpha'} (k,k)  \Big)  \Big]\;,
\end{align*}
where we used the approximate CCR \cref{eq:approximateCCR} to achieve bosonic Wick--normal order.
% Note that 
% \begin{align*}
% \sum_{\alpha,\beta \in \Ik} \Big(D+\W\Big)_{\alpha,\beta}  \cosh(K)_{\alpha,\alpha'} \cosh(K)_{\beta,{\beta'}}  =  \Big( \cosh(K) \Big(D+\W\Big)  \cosh(K) \Big)_{\alpha',\beta'}\;.
% \end{align*}
Moreover, using $[c_\alpha^*,c_\beta^*]=[c_\alpha,c_\beta]=0$ we symmetrize the coefficients of $c_\alpha^* c_\beta^*$ and $c_\alpha c_\beta$; thus 
\begin{align*}
&\sum_{\alpha,\beta \in \Ik} (D+\W)_{\alpha,\beta} \tc_\alpha^* \tc_\beta\\
&= \sum_{\alpha,\beta \in\Ik} \Big( \cosh(K) (D+W)  \cosh(K) + \sinh(K) (D+W)  \sinh(K)  \Big)_{\alpha,\beta} c_\alpha^* c_\beta  \\
&\quad + \frac{1}{2}\sum_{\alpha,\beta \in\Ik} \Big( \cosh(K) (D+W) \sinh( K) + \sinh(K) (D+W) \cosh( K) \Big)_{\alpha,\beta} (c^*_\alpha c^*_\beta +   c_\beta c_\alpha )\\
&\quad + \sum_{\alpha\in \Ik} \Big( \sinh(K) (D+W)  \sinh(K) \Big)_{\alpha,\alpha} \Big(1+ \Ecal_\alpha (k,k)\Big)\;. 
\end{align*}
Moreover 
\begin{align*}
& \sum_{\alpha,\beta \in \Ik} \Wt_{\alpha,\beta} \tc_\alpha^* \tc^*_\beta \\ & = \sum_{\alpha,\beta \in\Ik} \Big( \cosh(K) \Wt  \sinh(K) + \sinh(K) \Wt  \cosh(K)  \Big)_{\alpha,\beta} c_\alpha^* c_\beta  \\
&\quad + \sum_{\alpha,\beta \in\Ik} \Big( \cosh(K) \Wt \cosh( K) \Big)_{\alpha,\beta} c^*_\alpha c^*_\beta  + \sum_{\alpha,\beta \in\Ik} \Big( \sinh(K) \Wt \sinh(K) \Big)_{\alpha,\beta} c_\alpha c_\beta  \\
&\quad + \sum_{\alpha\in \Ik} \Big( \sinh(K) \Wt  \cosh(K) \Big)_{\alpha,\alpha} \Big(1+  \Ecal_\alpha (k,k)\Big)\;. 
\end{align*}
Adding both terms, we thus have
\begin{align} \label{eq:H-diag-final}
h_\textnormal{eff}^\textnormal{diag}(k) &= \sum_{\alpha,\beta \in\Ik}  \Big[ \cosh(K) \Big(D +W \Big)  \cosh(K)  + \sinh(K) \Big(D+W\Big)  \sinh(K)  \nonumber \\
&\qquad \qquad + \cosh(K) \Wt  \sinh(K) + \sinh(K) \Wt  \cosh(K) \Big]_{\alpha,\beta} c_\alpha^* c_\beta\nonumber  \\
&\quad + \frac{1}{2}\sum_{\alpha,\beta \in\Ik}    \Big[   \cosh(K) \Big(D+W\Big) \sinh( K) + \sinh(K) \Big(D+W\Big) \cosh( K) \nonumber  \\
&\hspace{6em} + \cosh(K) \Wt \cosh( K)  + \sinh(K) \Wt \sinh( K) \Big]_{\alpha,\beta} (c_\alpha^* c^*_\beta + c_\beta c_\alpha)  \nonumber  \\
&\quad + \frac{1}{2}\sum_{\alpha\in \Ik}   \Big[ 2\sinh(K) \Big(D+W\Big)  \sinh(K) \nonumber \\
&\hspace{5em} + \cosh(K) \Wt \sinh(K)    + \sinh(K) \Wt  \cosh(K) \Big]_{\alpha,\alpha} \Big(1+ \Ecal_\alpha (k,k)\Big)\;.
\end{align}
To simplify \cref{eq:H-diag-final} further,
recall from \cref{eq:defS1} that
\[
S_1 = (D+W-\tilde{W})^{1/2} E^{-1/2}
\]
and set
\[
S_2 := (S_1^\intercal)^{-1} \;.
\]
We use the polar decomposition to write $S_1^\intercal = O \lvert S_1^\intercal \rvert$ with an orthogonal matrix $O$. Inserting our choice $K = \log \lvert S_1^\intercal\rvert$ into the exponentials defining the $\cosh$ and the $\sinh$, and noting that each exponential by itself is a symmetric matrix, we obtain
\begin{equation}
\cosh(K) = \frac{1}{2}(S_1 + S_2)O\;, \qquad \sinh(K) = \frac{1}{2}(S_1-S_2)O\;. \label{eq:bogbog1}
\end{equation}
% \footnote{The appearance of the matrix $O$ means that with $\cosh(K)$ and $\sinh(K)$ we have not implemented the Bogoliubov transformation defined by $S_1$ and $S_2$ but rather one that differs by a one--particle unitary $\Gamma(O)$. In the bosonic approximation, a one--particle unitary does not change the energy of the state.
% % This is inevitable since in the explicit expression \cref{eq:Bog-T} only the symmetric part of $K(k)$ is relevant.
% }
% Recall that $K$ is a symmetric matrix, so that also $\cosh(K)$ and $\sinh(K)$ are symmetric.

%(The reader may wonder if we cannot find $K$ such that $\cosh(K) = \frac{1}{2}(S_1+S_2)$, avoiding the matrix $O$. However, since $c^*_\alpha(k)$ and $c^*_\beta(k)$ commute exactly, only the symmetric part $K^\textnormal{symm}_{\alpha,\beta} = \frac{1}{2}(K_{\alpha,\beta} + K_{\beta,\alpha})$ of any matrix $K$ contributes in \cref{eq:bogformula}. In general, $\cosh(K)$ in \cref{eq:tc-def} would become $\cosh(K^\textnormal{symm})$. As a symmetric matrix, this cannot equal the non--symmetric $\frac{1}{2}(S_1+S_2)$. With $\log\lvert S_1^\intercal \rvert$ we have an explicitly symmetric $K$, and $\lvert S_1^\intercal \rvert$ being positive also avoids any discussion of the branch of the logarithm. The additional matrix $O$ corresponds to a change of basis in the one--boson Hilbert space, which, at least in the bosonic approximation, does not change the energy of the many--body state.)  

Note that since $K$ is symmetric, it cannot equal the non--symmetric matrix $\frac{1}{2}(S_1+S_2)$. Thus the inclusion of the unitary operator $O$ in \eqref{eq:bogbog1} is inevitable. However, $O$ corresponds to a change of basis in the one--boson Hilbert space, which, at least in the bosonic approximation, does not change the energy of the many--body state. 

%The reader may wonder if we cannot find $K$ such that $\cosh(K) = \frac{1}{2}(S_1+S_2)$, avoiding the matrix $O$. However, since $c^*_\alpha(k)$ and $c^*_\beta(k)$ commute exactly, only the symmetric part $K^\textnormal{symm}_{\alpha,\beta} = \frac{1}{2}(K_{\alpha,\beta} + K_{\beta,\alpha})$ of any matrix $K$ contributes in \cref{eq:bogformula}. In general, $\cosh(K)$ in \cref{eq:tc-def} would become $\cosh(K^\textnormal{symm})$. As a symmetric matrix, this cannot equal the non--symmetric $\frac{1}{2}(S_1+S_2)$. With $\log\lvert S_1^\intercal \rvert$ we have an explicitly symmetric $K$, and $\lvert S_1^\intercal \rvert$ being positive also avoids any discussion of the branch of the logarithm. The additional matrix $O$ corresponds to a change of basis in the one--boson Hilbert space, which, at least in the bosonic approximation, does not change the energy of the many--body state.)  

\paragraph{Ground state energy.}
As in \cite[Proof of Theorem 2.1 and Appendix A.2]{BNP+20}, with the matrix $E$ defined in \cref{eq:defE}, we have
\[ 
S_1^\intercal (D+W+\tilde{W}) S_1 = S_2 ^\intercal (D+W-\tilde{W}) S_2 =E\;.
\]
Using the fact that $\cosh(K)$ and $\sinh(K)$ are symmetric matrices, the constant term of \cref{eq:H-diag-final} thus simplifies to
\begin{align*}  
&\frac{1}{2}\sum_{\alpha\in \Ik}  \Big[ 2\sinh(K) \Big(D+W\Big)  \sinh(K)+ \sinh(K) \Wt  \cosh(K) +\cosh(K)\Wt \sinh(K)     \Big]_{\alpha,\alpha}\\
% &=\frac{1}{2}\tr \Big[ 2\sinh(K) \Big(D+W\Big)  \sinh(K)+ \sinh(K) \Wt  \cosh(K) +\cosh(K)\Wt \sinh(K)     \Big] \\
& = \frac{1}{8}\tr \Big[ 2 (S_1^\intercal-S_2^\intercal)\Big(D+W\Big) (S_1-S_2)+ (S_1^\intercal-S_2^\intercal)  \Wt  (S_1+S_2) + (S_1^\intercal+S_2^\intercal)\Wt (S_1-S_2) \Big]\\
% &= \frac{1}{4} \tr \Big[ S_1^\intercal \Big(D+W +\Wt \Big) S_1+ S_2^\intercal (D+W-  \Wt) S_2 - S_1^\intercal (D+W) S_2 - S_2^\intercal (D+W) S_1   \Big] \\
& =  \frac{1}{4} \tr \Big[ 2 E - (S_2 S_1^\intercal + S_1  S_2^\intercal) (D+W)   \Big] = \frac{1}{2} \tr(E-D-W)\;. 
\end{align*}
This is the term giving rise to the ground state energy we aim to derive.

To estimate the error term proportional to $\Ecal_\alpha (k,k)$ in \cref{eq:H-diag-final}, note first that
\begin{equation}    \label{eq:step1}
\Big\lvert \Big[ 2\sinh(K) \Big(D+W\Big)  \sinh(K) + \cosh(K) \Wt \sinh(K)    + \sinh(K) \Wt  \cosh(K) \Big]_{\alpha,\alpha} \Big\rvert \leq \frac{C}{M}\;.
\end{equation}
From \cref{eq:Ekk} one easily derives the bound
\[   \label{eq:new}
\sum_{\alpha \in \Ik} \lvert \langle \psi, \Ecal_\alpha(k,k) \psi\rangle \rvert \leq \frac{1}{\inf_{\alpha \in \Ik} n_{\alpha}(k)^2} \langle\psi,\Ncal \psi\rangle   \qquad \forall \psi \in \fock\;.
\] 
This bound in combination with \cref{eq:step1} and the fact that, due to the equator cut--off $\lvert k\cdot \hat{\omega}_\alpha\rvert \geq N^{-\delta}$ in the definition of the index set $\Ik$,
\[\inf_{\alpha \in \Ik} n_{\alpha}(k)^2 = \inf_{\alpha \in \Ik} 4\pi k_\F^2 M^{-1} \lvert k \cdot \hat\omega_\alpha \rvert  \left( 1 + o(1) \right)\geq C N^{\frac{2}{3}-\delta}/M\;,\] implies
\begin{align} 
\pm \frac{1}{2} \sum_{\alpha\in \Ik} \Ecal_\alpha (k,k) \Big[&  2\sinh(K) \Big(D+W\Big)  \sinh(K) \nonumber \\
& + \cosh(K) \Wt \sinh(K)    + \sinh(K) \Wt  \cosh(K) \Big]_{\alpha,\alpha} \leq CN^{-\frac{2}{3}+\delta}\Ncal\;. \label{eq:error-ccr-easy}
\end{align}

\paragraph{Vanishing of the off--diagonal terms.} Next we show that the terms in $h_\textnormal{eff}^\textnormal{diag}(k)$ proportional to $(c_\alpha^* c^*_\beta + c_\beta c_\alpha)$ vanish, as intended with the Bogoliubov transformation. Indeed
\begin{align*}
& \cosh(K)(D+W)\sinh(K) + \sinh(K)(D+W)\cosh(K) \\
& \quad + \cosh(K) \tilde{W}\cosh(K) + \sinh(K)\tilde{W}\sinh(K) \\
% & = \frac{1}{4} O^\intercal \Big[ (S_1^\intercal + S_2^\intercal) (D+W) (S_1-S_2)+ (S_1^\intercal-S_2^\intercal) (D+W) (S_1+S_2)\\
% &\hspace{3.5em} + (S_1^\intercal + S_2^\intercal)\tilde{W}(S_1 + S_2) + (S_1^\intercal - S_2^\intercal)\tilde{W}(S_1- S_2) \Big] O \\
& = \frac{1}{4}O^\intercal \Big[ 2S_1^\intercal (D+W+\tilde{W}) S_1 - 2S_2^\intercal (D+W - \tilde{W}) S_2\Big]O \\
& = \frac{1}{4}O^\intercal \Big[ 2E-2E\Big]O=0\;. \tagg{intercal}
\end{align*}
Summarizing we get the lower bound
\begin{equation} \label{eq:summarizing}
h_\textnormal{eff}^\textnormal{diag}(k) \geq \frac{1}{2}\tr\big(E(k)-D(k)-W(k)\big) + \sum_{\alpha,\beta\in \Ik} \Kfrak(k)_{\alpha,\beta}\, c_\alpha^*(k) c_\beta(k) - CN^{-\frac{2}{3}+\delta}{\Ncal} 
\end{equation}
with the matrix\footnote{A computation shows that $\Kfrak(k)=O(k)^\intercal E(k) O(k)$
 but we are not going to use this formula.
%  because the analysis of the unitary $O$ is complicated.
} 
\begin{align*}
\Kfrak(k)& := \cosh(K(k))\big(D(k)+W(k)\big)\cosh(K(k)) + \sinh(K(k))\big(D(k)+W(k)\big)\sinh(K(k)) \\
 & \quad + \cosh(K(k)) \Wt(k)\sinh(K(k)) + \sinh(K(k)) \Wt(k) \cosh(K(k))\;.
\end{align*}

\paragraph{Bogoliubov--diagonalized effective Hamiltonian.} Summing \cref{eq:summarizing} over $k\in \north$ and including the pre--factors as in \cref{eq:DplusQ} (in particular an $\hbar$) we conclude that  
\begin{align} \label{eq:final-quad-diag-01}
&\langle \psi, T_1^* (\Dbb_\B + Q_\B^\Rcal) T_1 \psi\rangle \\
& =  \sum_{k\in \north} 2\hbar \kappaf \lvert k\rvert \langle \psi, T_1^* h_\textnormal{eff}(k) T_1\psi\rangle \nonumber\\
& \geq \sum_{k\in \north} \hbar \kappaf \lvert k\rvert \tr\big(E(k)-D(k)-W(k)\big)  +\sum_{k\in \north} \sum_{\alpha,\beta \in \Ik} 2\hbar \kappaf \lvert k\rvert \big\langle \psi, \Kfrak(k)_{\alpha,\beta} c_\alpha^*(k) c_\beta(k) \psi \big\rangle \nonumber\\
&\quad - C\hbar \Big[N^{-\frac{2}{3}+\delta} \norm{\Ncal^{1/2}\psi}^2 + MN^{-\frac{2}{3}+\delta}  \norm{(\Ncal_\delta+1)^{1/2} \psi} \norm{(\Ncal_\delta+M)^{1/2} (\Ncal +1) \psi} \nonumber\nonumber\\
&\qquad\qquad + \big(\, MN^{-\frac{2}{3}+\delta} \norm{(\Ncal_\delta+M)^{1/2} (\Ncal +1) \psi} \,\big)^2 \Big]\;. \nonumber 
\end{align} 
The error terms on the last two lines of \cref{eq:final-quad-diag-01}  come from  \cref{labeq:THbT-exp-err} and \cref{eq:error-ccr-easy}. 
In \cite[Eq.~(5.15)]{BNP+20} we already showed that the constant term on the right hand side of \cref{eq:final-quad-diag-01} gives rise to the correlation energy $E_N^\textnormal{RPA}$ we aimed to derive,
\begin{align}  \label{eq:GSE-app}
 \sum_{k\in \north} \hbar \kappaf \lvert k\rvert \tr (E(k) - \D(k)-\W(k))  = E_N^\textnormal{RPA} + \hbar \mathcal{O}\big(M^{\frac{1}{4}}N^{-\frac{1}{6}+\frac{\delta}{2}} + N^{-\frac{\delta}{2}} + M^{-\frac{1}{4}}N^{\frac{\delta}{2}}\big) \;.
\end{align} 
This is the correlation energy as given in the main theorem in \cref{eq:rpa_energy}.
\paragraph{Controlling $-\Dbb_\B$ by the diagonalized effective Hamiltonian.} To make use of the positive contribution of $\sum_{\alpha,\beta \in \Ik}\Kfrak(k)_{\alpha,\beta} c_\alpha^*(k) c_\beta(k)$, it is convenient to subtract $D(k)$. We can then expand $\cosh(K(k)) = (\cosh(K(k))-\id) + \id$; the term in $\Kfrak(k)$ where $D(k)$ is multiplied from both sides by the identity cancels with the explicitly subtracted $D(k)$. For the remaining terms we can use \cref{eq:ch-sh} and $\lvert W(k)_{\alpha,\beta}\rvert \leq \frac{C}{M} u_\alpha(k) u_\beta(k)$ to get
\begin{equation}
|(\Kfrak(k) -D(k) )_{\alpha,\beta}| \leq \frac{\hat V(k)}{M} u_\alpha(k) u_\beta(k)\;, \qquad \forall \alpha,\beta\in \Ik\;.   
\end{equation}
Thus, by \cref{lem:kinetic} and recalling $u_\alpha(k) \leq C M^{\frac{1}{2}} N^{-\frac{1}{3}} n_\alpha(k)$, we conclude that for all $\xi\in \fock$
\begin{align*}
&\Big\lvert \Big\langle \xi, \sum_{\alpha,\beta \in \Ik} ( \Kfrak(k) -D(k) )_{\alpha,\beta} c^*_{\alpha}(k) c_{\beta}(k) \xi \Big\rangle  \Big\rvert \leq \sum_{\alpha,\beta \in \Ik} \lvert( \Kfrak(k) -D(k) )_{\alpha,\beta}\rvert \norm{ c_\alpha(k) \xi} \norm{c_\beta(k)\xi} \\
&\leq \sum_{\alpha,\beta \in \Ik} \frac{C \hat{V}(k)}{N^{\frac{2}{3}}} n_\alpha(k) n_\beta(k) \norm{ c_\alpha(k) \xi} \norm{c_\beta(k)\xi} = \frac{C \hat{V}(k)}{N^{\frac{2}{3}}}  \Big(\sum_{\alpha\in \Ik} n_\alpha(k)  \norm{ c_\alpha(k) \xi}\Big)^2 \\
& \leq  \frac{C \hat{V}(k)}{N^{\frac{2}{3}}} N \norm{\Hbb_0^{1/2}\xi}^2 = C  \hat{V}(k) \hbar^{-1} \langle \xi, \Hbb_0 \xi\rangle\;. 
\end{align*} 
Therefore 
\[
\sum_{\alpha,\beta \in \Ik} \Big( \Kfrak(k)  -D(k)  \Big)_{\alpha,\beta} c^*_{\alpha}(k) c_{\beta}(k) \geq  - C  \hat{V}(k) \hbar^{-1} \Hbb_0\;. 
\]
Summing over $k\in \north$
% and moving the contribution of $\sum_{\alpha,\beta \in \Ik} D(k)_{\alpha,\beta} c^*_{\alpha}(k) c_{\beta}(k)$ to the other side
we conclude that
\begin{align} \label{eq:kinetic-Dbb0-final}
\sum_{k\in \north} \sum_{\alpha,\beta \in \Ik} 2\hbar \kappaf \lvert k\rvert  \Kfrak(k)_{\alpha,\beta} c_\alpha^*(k) c_\beta(k) 
\geq  \Dbb_\B   - C \norm{  \hat{V}}_{\ell^1} \Hbb_0\;. 
\end{align}
Inserting the last bound together with \cref{eq:GSE-app} in \cref{eq:final-quad-diag-01}, the proof is complete.
\end{proof}

\section{Proof of the Main Result}

\begin{proof}[Proof of \cref{thm:main}] Recall the definition \cref{eq:Hcorr} of the correlation Hamiltonian $\Hcal_\textnormal{corr}$,
\[
\Hcal_\textnormal{corr} = R^* \Hcal_N R - E^\textnormal{HF}_N =  \Hbb_0 + Q_\B + \mathcal{E}_1 + \mathcal{E}_2 + \Xbb\;.
\]
Let $\Psi \in \fock$ be the approximate ground state constructed by the particle number localization, \cref{lem:loc}, from some exact ground state $\psi_\textnormal{gs}$ of $\Hcorr$ (i.\,e., from a minimizer of the expectation value on the left hand side of \cref{eq:Hcorr-lb}). By the localization we have
\[
\langle \psi_\textnormal{gs}, \Hcal_\textnormal{corr} \psi_\textnormal{gs}\rangle \geq \langle \Psi, \Hcal_\textnormal{corr} \Psi \rangle - CN^{-1}\;.
\]and  
\begin{align} \label{eq:apriori-last}
\langle \Psi, (\Hbb_0 + \Ecal_1) \Psi\rangle \leq C\hbar,  \quad \norm{ (\Ncal_\delta+1)^{1/2} (\Ncal+1)^m \Psi} \leq CN^{\frac{1}{3}m+\frac{\delta}{2} }
\end{align}
for all $m\geq 0$. Here we have used $\Ncal_\delta \leq CN^{\frac{1}{3}+\delta} \Hbb_0$ to estimate the gapped number operator. 

Let $T_1$ be the approximate Bogoliubov transformation defined in \cref{eq:Bog-T} with the Bogoliubov kernel $K(k)$ from \cref{eq:Kk}. Since $T_1$ is unitary, we can define $\psi$ by setting
\[
\Psi=T_1\psi\;.
\]
From \cref{eq:apriori-last} and \cref{lem:stability} we also have
\begin{align} \label{eq:apriori-last-01}
\langle \psi, \Ncal_\delta \psi \rangle \leq CN^\delta, \quad \norm{ (\Ncal_\delta+1)^{1/2} (\Ncal+1) \psi} \leq CN^{\frac{1}{3}+\frac{\delta}{2}}\;, 
\end{align}
and because of $M \gg N^{2\delta}$ we have
\begin{align*}
 \norm{(\Ncal_\delta+M)^{1/2} (\Ncal +1) \psi} & \leq \sqrt{ \langle (\Ncal+1)\psi,(\Ncal_\delta+1) (\Ncal+1)\psi\rangle + \langle (\Ncal+1)\psi,M (\Ncal+1)\psi\rangle }\\
 & \leq \sqrt{CN^{\frac{2}{3}+\delta} + C N^{\frac{2}{3}} M}\leq C M^{\frac{1}{2}} N^{\frac{1}{3}} \;. \tagg{end_err}
\end{align*}

Now we collect the main bounds and estimate the error terms using \cref{eq:apriori-last}, \cref{eq:apriori-last-01}, and \cref{eq:end_err}. First, by  \cref{lem:exchange} we can bound the quadratic operator $\Xbb$ below by 
\begin{align} \label{eq:ferror-1}
\langle \Psi, \Xbb \Psi\rangle \geq - C N^{-\frac{1}{3}} \langle \Psi, \Hbb_0 \Psi\rangle \geq -  CN^{-\frac{2}{3} }\;. 
\end{align}
By  \cref{lem:remove-corridors}, we have
\begin{align} 
\Big\langle \Psi,  (Q_\B + \mathcal{E}_2 - Q_\B^{\Rcal} - \mathcal{E}_2^{\Rcal})  \Psi \Big\rangle  &\geq - C\Big( N^{ - \frac{\delta}{2}} + C N^{ -\frac{1}{6} + \frac{\delta}{2} } M^{\frac{1}{4}}\Big) \Big\langle \Psi, (\Hbb_0+\mathcal{E}_1+\hbar)\Psi\Big\rangle \nonumber\\
&\geq - C\hbar \Big( N^{ - \frac{\delta}{2}} + C N^{ -\frac{1}{6} + \frac{\delta}{2} } M^{\frac{1}{4}}\Big)\;. 
\label{eq:ferror-2}
\end{align}
Next we can use  \cref{lem:lin-2}  to deduce that 
\begin{align}  
% &
\Big\langle \Psi, (\Hbb_0 - \Dbb_\B) \Psi \Big\rangle -  \Big\langle \psi, (\Hbb_0 - \Dbb_\B) \psi \Big\rangle 
% \nonumber 
% \\
% &= \Big\langle T_1\psi, (\Hbb_0 - \Dbb_\B) T_1\psi \Big\rangle  - \Big\langle T_0 \psi, (\Hbb_0 - \Dbb_\B) T_0 \psi \Big\rangle  \nonumber\\
% &\geq  - C\hbar \Big[  M^{-\frac{1}{2}} \norm{  (\Ncal_\delta+1)^{1/2}  \psi  }^2 + CM^{\frac{3}{2}}N^{-\frac{2}{3}+\delta}  \norm{ (\Ncal_\delta+1)^{1/2} (\Ncal+1) \psi} \norm{ (\Ncal_\delta+1)^{1/2}   \psi } \Big]  \nonumber\\
% &
\geq - C\hbar \Big[  M^{-\frac{1}{2}} N^\delta + MN^{-\frac{1}{3}+2\delta } \Big]\label{eq:ferror-3} 
\end{align} 
and, because of the estimate for the matrix elements of the Bogoliubov kernel derived  in \cref{lem:K}, \cref{lem:non--bosonizable} shows that
\begin{align}
&\Big\langle \Psi, (\Ecal_1+ \Ecal_2^{\Rcal}) \Psi\Big\rangle = \Big\langle \psi, T_1^* (\Ecal_1+ \Ecal_2^{\Rcal}) T_1  \psi\Big\rangle \nonumber \\
&\geq - C \norm{\hat V}_{\ell^1} \langle \psi, \Hbb_0 \psi\rangle - CN^{-\frac{1}{2}} \norm{\psi}\norm{\Hbb_0^{1/2} \Psi}  - C N^{-\frac{5}{3}+2\delta} M \norm{(\Ncal_\delta+M)^{1/2} (\Ncal +1)\psi}^2 \nonumber\\
&\geq - C \norm{\hat V}_{\ell^1} \langle \psi, \Hbb_0 \psi\rangle - C \hbar \Big[ N^{-\frac{1}{3}}+ M^2 N^{-\frac{2}{3}+2\delta}\Big]\;. \label{eq:ferror-4}
\end{align}
(Note that here, in the second summand on the second line, in $\norm{\Hbb_0^{1/2}\Psi}$ we have a single instance of the approximate ground state vector $\Psi$ and not $\psi$, so that this norm can be estimated by \cref{eq:apriori-last}.)
Finally, by \cref{lem:diag-final}, the bosonized effective Hamiltonian $\Dbb_\B + Q_\B^{\Rcal}$ yields 
 \begin{align}
&\Big\langle \Psi , (\Dbb_\B + Q_\B^{\Rcal} ) \Psi \Big\rangle- \left( E_N^\textnormal{RPA} +  \big\langle \psi,  \Dbb_\B \psi \big\rangle - C \norm{\hat V}_{\ell^1} \langle \psi, \Hbb_0 \psi\rangle \right) \nonumber\\
&\geq - C \hbar \Big[ N^{-\frac{2}{3}+\delta} \norm{\Ncal^{1/2}\psi}^2 + MN^{-\frac{2}{3}+\delta}  \norm{(\Ncal_\delta+1)^{1/2} \psi} \norm{(\Ncal_\delta+M)^{1/2} (\Ncal +1) \psi} \nonumber\nonumber\\
& \qquad \qquad + \Big( MN^{-\frac{2}{3}+\delta}
% \norm{(\Ncal_\delta+1)^{1/2} \psi} 
\norm{(\Ncal_\delta+M)^{1/2} (\Ncal +1) \psi} \Big)^2 + M^{\frac{1}{4}}N^{-\frac{1}{6}+\frac{\delta}{2}} + N^{-\frac{\delta}{2}} + M^{-\frac{1}{4}}N^{\frac{\delta}{2}} \Big] \nonumber\\
&\geq - C \hbar \Big[ N^{-\frac{1}{3}+\delta} + M^{\frac{3}{2}} N^{-\frac{1}{3}+\frac{3\delta}{2} } \nonumber \\
&\qquad \qquad + \Big( M^{\frac{3}{2}} N^{-\frac{1}{3}+\delta  } \Big)^2  + M^{\frac{1}{4}}N^{-\frac{1}{6}+\frac{\delta}{2}} + N^{-\frac{\delta}{2}} + M^{-\frac{1}{4}}N^{\frac{\delta}{2}} \Big]\;. \label{eq:ferror-5}
\end{align} 

To conclude, we sum all error bounds from \cref{eq:ferror-1}--\cref{eq:ferror-5}. The quantities 
\[
N^{-\frac{\delta}{2}}, \quad M^{-\frac{1}{4}}N^{\frac{\delta}{2}}\quad \text{ and }\quad M^{\frac{3}{2}} N^{-\frac{1}{3}+\frac{3\delta}{2}}
\]
from the error terms suggest us to take
\[
M=N^{4\delta}, \quad \delta= \frac{1}{24}\;. 
\]
With this choice, collecting all of \cref{eq:ferror-1}--\cref{eq:ferror-5}, we conclude that
\begin{align}
\Big\langle \Psi , \Hcal_\textnormal{corr} \Psi \Big\rangle \geq E_N^\textnormal{RPA} + \Big( 1 - C \norm{\hat V}_{\ell^1} \Big)\langle \psi, \Hbb_0 \psi\rangle - C \hbar N^{-\frac{1}{48}}\;. 
\end{align} 
The contribution of $\big( 1 - C \norm{\hat V}_{\ell^1} \big)\langle \psi, \Hbb_0 \psi\rangle$ is non--negative thanks to the smallness assumption for the potential. Thus it remains the error of order $\hbar N^{-\frac{1}{48}} = \hbar^{1+\frac{1}{16}}$, which completes the proof of the main result.
\end{proof}

\appendix

\section{Hartree--Fock Theory} \label{app:B}

In this appendix we consider the Hartree--Fock energy
\begin{align*}
E_N^\textnormal{HF} &:= \inf \Big\{ \langle \psi, H_N \psi\rangle :  \psi=\bigwedge_{j=1}^N u_j \text{ with } \{u_j\}_{j=1}^N \text{ an orthonormal family in } L^2(\mathbb{T}^3) \Big\}
\end{align*}
with  the Hamiltonian $H_N$ in \eqref{eq:HN}. We assume that $N=|\BF|$, namely the Fermi ball is completely filled. It is clear that if the interaction vanishes, i.\,e., $V=0$, then the Slater determinant of plane waves $\psi_\textnormal{pw}$ as in \eqref{eq:plane-waves} is the unique minimizer for $E_N^\textnormal{HF}$. However, it is less trivial that the Hartree--Fock minimizer is unchanged if the interaction is sufficiently weak.

\begin{thm}%[Triviality of Hartree--Fock theory] 
\label{thm:HF} Consider the Hamiltonian $H_N$ in \eqref{eq:HN} with 
\[
0\le \widehat V \in \ell^1(\Zbb^3)\qquad\textnormal{with }0\le \lambda \|\widehat V \|_{\ell^1} < \frac{\hbar^2}{2} \textnormal{ and } N=|\BF|\;.
\]
Then the Slater determinant of plane waves $\psi_\textnormal{pw}$ in \eqref{eq:plane-waves} is the unique minimizer (up to a phase) for $E_N^\textnormal{HF}$.
\end{thm}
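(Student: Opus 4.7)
The plan is to parametrize Slater determinants by their one--body density matrices: a Slater determinant $\psi=u_1\wedge\cdots\wedge u_N$ has $\gamma=\sum_{j=1}^{N}|u_j\rangle\langle u_j|$, a rank--$N$ orthogonal projection on $L^{2}(\Tbb^{3})$, and a direct computation gives $\langle\psi,H_N\psi\rangle=\mathcal{E}^{\textnormal{HF}}(\gamma)$ with
\begin{equation*}
\mathcal{E}^{\textnormal{HF}}(\gamma):=\hbar^{2}\tr(-\Delta\,\gamma)+\frac{\lambda}{2}\int_{\Tbb^{3}}\!\!\int_{\Tbb^{3}}\!V(x-y)\bigl[\rho_\gamma(x)\rho_\gamma(y)-|\gamma(x,y)|^{2}\bigr]\,dx\,dy,\quad \rho_\gamma(x):=\gamma(x,x)\,.
\end{equation*}
Let $\gamma_\textnormal{pw}:=\sum_{p\in\BF}|f_p\rangle\langle f_p|$, $\alpha:=\gamma-\gamma_\textnormal{pw}$, and $s:=\sum_{p\in\BFc}\langle f_p,\gamma f_p\rangle\geq 0$. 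The identity $\tr(\alpha^{2})=2s$, which holds because both $\gamma$ and $\gamma_\textnormal{pw}$ are rank--$N$ projections, shows $s=0\Leftrightarrow\gamma=\gamma_\textnormal{pw}$. My target is the pointwise bound
\begin{equation*}
\mathcal{E}^{\textnormal{HF}}(\gamma)-\mathcal{E}^{\textnormal{HF}}(\gamma_\textnormal{pw})\ \geq\ s\bigl(\hbar^{2}-2\lambda\|\hat V\|_{\ell^{1}}\bigr)\,,
\end{equation*}
which under the smallness assumption is strictly positive whenever $\gamma\neq\gamma_\textnormal{pw}$, yielding uniqueness of the minimizer.

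I would assemble this estimate by controlling the three pieces of the energy difference separately. \emph{Kinetic}: since $|p|^{2}\in\Nbb$ for $p\in\Zbb^{3}$, the Fermi ball enjoys the spectral gap $\min_{\BFc}|p|^{2}-\max_{\BF}|p|^{2}\geq 1$ (same mechanism as in Lemma~\ref{lem:N-H0}); combined with $0\leq\hat\gamma(p,p)\leq 1$ and $\sum_p\hat\gamma(p,p)=N$, this yields $\hbar^{2}\tr(-\Delta\,\alpha)\geq\hbar^{2}s$. \emph{Direct}: in momentum variables the Hartree energy reads $\frac{\lambda}{2}\sum_{k\in\Zbb^{3}}\hat V(k)\,|d_k(\gamma)|^{2}$ with $d_k(\gamma)=\sum_p\hat\gamma(p,p-k)$; the zero mode $d_0(\gamma)=\tr\gamma=N$ agrees for $\gamma$ and $\gamma_\textnormal{pw}$ while $d_k(\gamma_\textnormal{pw})=0$ for $k\neq 0$, so $\hat V\geq 0$ forces the direct difference to be non--negative. \emph{Exchange}: expand $|\gamma(x,y)|^{2}-|\gamma_\textnormal{pw}(x,y)|^{2}=\gamma_\textnormal{pw}(x,y)\bigl(\alpha(x,y)+\alpha(y,x)\bigr)+|\alpha(x,y)|^{2}$, exploiting that $\gamma_\textnormal{pw}$ has a real kernel by symmetry of $\BF$. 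Because $V(x-y)\gamma_\textnormal{pw}(x,y)$ is translation invariant in $x-y$, the cross integral $\int\!\!\int V(x-y)\gamma_\textnormal{pw}(x,y)\alpha(x,y)\,dx\,dy$ reduces to the diagonal sum $\sum_p\alpha(p,p)\,h(p)$ with $h(p):=\sum_{q\in\BF}\hat V(q-p)\in[0,\|\hat V\|_{\ell^{1}}]$; then $\sum_p\alpha(p,p)=0$ combined with centring $h$ at $\|\hat V\|_{\ell^{1}}/2$ produces the oscillation estimate $|\sum_p\alpha(p,p)\,h(p)|\leq s\|\hat V\|_{\ell^{1}}$. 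The quadratic integral equals $\sum_k\hat V(k)\tr(M_k\alpha M_{-k}\alpha)$ where $M_k$ is multiplication by $e^{ik\cdot x}$; unitarity of $M_k$ and a Hilbert--Schmidt Cauchy--Schwarz bound it by $\|\hat V\|_{\ell^{1}}\tr(\alpha^{2})=2s\|\hat V\|_{\ell^{1}}$. Summing these two contributions gives $|X(\gamma)-X(\gamma_\textnormal{pw})|\leq 2\lambda s\|\hat V\|_{\ell^{1}}$, and combining with the kinetic and direct bounds delivers the target inequality.

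The decisive obstacle is the linear--in--$\alpha$ piece of the exchange energy. A naive operator Cauchy--Schwarz applied to the Hilbert--Schmidt pairing picks up the factor $\|\gamma_\textnormal{pw}\|_{\textnormal{HS}}=\sqrt{N}$, producing only the wasteful bound $\sqrt{sN}\|\hat V\|_{\ell^{1}}$, which for small $s$ swamps the kinetic gain $\hbar^{2}s$ and wrecks the argument. The rescue is structural rather than analytic: translation invariance of the weighted kernel $V(x-y)\gamma_\textnormal{pw}(x,y)$ forces the pairing against $\alpha$ to test only its momentum diagonal $\alpha(p,p)$, after which the particle--number constraint $\tr\alpha=0$ turns an $\ell^{\infty}$ bound on $h$ into a half--oscillation bound. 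This refinement is precisely what yields a linear-in-$s$ estimate and allows the exchange contribution to be absorbed into the kinetic gap under the hypothesis $\lambda\|\hat V\|_{\ell^{1}}<\hbar^{2}/2$.
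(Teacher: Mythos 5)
Your proof is correct and follows essentially the same approach as the paper's: the same decomposition into kinetic, direct, and exchange (linear plus quadratic in $\alpha=\gamma-\gamma_\textnormal{pw}$) pieces, the same spectral--gap bound on the kinetic term, and the same observation that the linear exchange term tests only the momentum diagonal $\alpha(p,p)$ so that $\tr\alpha=0$ permits a centering and oscillation estimate. The only organizational difference is that the paper absorbs the kinetic and linear--exchange contributions into a single effective operator $A(k)=|\hbar^2|k|^2-\lambda G(k)-C_0|$ via the projection identity $\alpha^2=\gamma_\textnormal{pw}^\perp\alpha\gamma_\textnormal{pw}^\perp-\gamma_\textnormal{pw}\alpha\gamma_\textnormal{pw}$, whereas you bound them separately; both routes produce the identical final estimate $\geq s(\hbar^2-2\lambda\|\widehat V\|_{\ell^1})$ with $s=\tfrac12\tr[(\gamma-\gamma_\textnormal{pw})^2]$.
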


Note that  \cref{thm:HF} does not require any specific choice of parameters $\lambda$ and $\hbar$. In our semiclassical mean--field scaling, $\hbar= N^{-1/3}$ and $\lambda=N^{-1}$, the condition $0\le \lambda \|\widehat V \|_{\ell^1} < \hbar^2/2$ holds for $N$ large provided that $\widehat V \in \ell^1$. 

We follow the argument in \cite{GHL19} where the Hartree--Fock energy of the electron gas is studied. The main difference is that in our finite volume setting, the spectral gap of the Laplacian is strong enough to dominate the interaction, ensuring the exact equality $E_N^\textnormal{HF}=\langle \psi_\textnormal{pw}, H_N \psi_\textnormal{pw}\rangle$ instead of just an exponential closeness as in \cite{GHL19}.

\begin{proof} 
Let $\Psi=\bigwedge_{j=1}^N u_j$ be a Slater determinant. Then a straightforward computation shows that 
\begin{align} \label{eq:app-e-diff}
&\langle \Psi, H_N\Psi \rangle -  \langle \psi_\textnormal{pw}, H_N \psi_\textnormal{pw}\rangle \nonumber \\
&=  \tr (-\hbar^2\Delta (\gamma-\gamma_\textnormal{pw})) -  \frac{\lambda}{2}\int_{\Tbb^3}\int_{\Tbb^3}  [|\gamma(x,y)|^2 -|\gamma_\textnormal{pw}(x,y)|^2]   V(x-y) \di x \di y  \nonumber\\
&\quad + \frac{\lambda}{2}\int_{\Tbb^3}\int_{\Tbb^3}  [\rho_{\gamma}(x) \rho_\gamma(y) - \rho_{\gamma_\textnormal{pw}}(x)\rho_{\gamma_\textnormal{pw}}(x)]  V(x-y) \di x \di y
\end{align}
where $\gamma=\sum_{j=1}^N |u_j\rangle \langle u_j|$ is the one--body density matrix of $\Psi$, $\gamma(x,y)=\sum_{j=1}^Nu_j(x) \overline{u_j(y)}$ and $\rho_\gamma(x)=\gamma(x,x)$.  Note that the one--body density matrix $\gamma_\textnormal{pw}$ of the plane waves has the integral kernel 
\[
\gamma(x,y)= \gamma_\textnormal{pw}(x-y)=(2\pi)^{-3}\sum_{p\in \BF}  e^{ip \cdot (x-y)}\;. 
\]
In the following we estimate the right side of \eqref{eq:app-e-diff} term by term. 

\paragraph{Direct term.} Since $\rho_{\gamma_\textnormal{pw}}$ is constant, we have $\widehat \rho_{\gamma_\textnormal{pw}}(k)=0$ for all $0\ne k \in \Zbb^3$. Moreover, $\widehat \rho_{\gamma}(0)= \widehat \rho_{\gamma_\textnormal{pw}}(0)=(2\pi)^{-3}N$. Therefore, using  $\widehat V \ge 0$ we get
\begin{align} \label{eq:app-direct}
&\int_{\Tbb^3}\int_{\Tbb^3}  [\rho_{\gamma}(x) \rho_\gamma(y) - \rho_{\gamma_\textnormal{pw}}(x)\rho_{\gamma_\textnormal{pw}}(y)]  V(x-y) \di x \di y \nonumber\\
&=\sum_{k\in \Zbb^3} \widehat V(k) \Big( |\widehat \rho_{\gamma}(k)|^2 - |\widehat \rho_{\gamma_\textnormal{pw}}(k)|^2 \Big) = \sum_{0\ne k\in \Zbb^3} \widehat V(k)   |\widehat \rho_{\gamma}(k)|^2 \ge 0\;. 
\end{align}

\paragraph{Exchange term.} We decompose 
\[
|\gamma(x,y)|^2 -|\gamma_\textnormal{pw}(x-y)|^2 = |\gamma(x,y)-\gamma_\textnormal{pw}(x-y)|^2 + 2 \Re \Big[ (\gamma(x,y) -\gamma_\textnormal{pw}(x-y)) \gamma_\textnormal{pw}(y-x)  \Big]\;. 
\]
The first part can be estimated as 
\begin{align*}
& \int_{\Tbb^3}\int_{\Tbb^3}  |\gamma(x,y)-\gamma_\textnormal{pw}(x-y)|^2 V(x-y) \di x \di y \\
&\le   \|V\|_{L^\infty} \int_{\Tbb^3}\int_{\Tbb^3}  |\gamma(x,y)-\gamma_\textnormal{pw}(x-y)|^2 \di x \di y =  \|\widehat V\|_{\ell^1}  \tr [(\gamma-\gamma_\textnormal{pw})^2]\;. 
\end{align*}
For the second part of the exchange term, we write 
\[
\int_{\Tbb^3}\int_{\Tbb^3} (\gamma(x,y) -\gamma_\textnormal{pw}(x-y))  \gamma_\textnormal{pw}(y-x)  V(y-x) \di x \di y = \tr ( G (\gamma - \gamma_\textnormal{pw}))
\]
where $G$ is an operator on $L^2(\Tbb^3)$ with kernel $\gamma_\textnormal{pw}(y-x) V(y-x)$. Equivalently, $G$ is the multiplication operator in Fourier space with
\[
G(k) = \sum_{p\in \BF} \widehat V(k-p)\;. 
\]
In particular $G\ge 0$ and hence $\tr ( G (\gamma - \gamma_\textnormal{pw}))\in \Rbb$. Thus  the exchange term is bounded as
\begin{align} \label{eq:app-exchange}
&-  \frac{\lambda}{2}\int_{\Tbb^3}\int_{\Tbb^3}  [|\gamma(x,y)|^2 -|\gamma_\textnormal{pw}(x,y)|^2]   V(x-y) \di x \di y \nonumber \\
&\qquad \ge - \frac{\lambda \|\widehat V\|_{\ell^1}}{2}  \tr [(\gamma-\gamma_\textnormal{pw})^2] - \lambda \tr ( G (\gamma - \gamma_\textnormal{pw}))\;. 
\end{align}
Inserting \eqref{eq:app-direct} and \eqref{eq:app-exchange} in \eqref{eq:app-e-diff} we obtain
\begin{align} \label{eq:app:diff-1}
\langle \Psi, H_N\Psi \rangle -  \langle \psi_\textnormal{pw}, H_N \psi_\textnormal{pw}\rangle  &\ge \tr [ (-\hbar^2\Delta - \lambda G) (\gamma-\gamma_\textnormal{pw})] - \frac{\lambda \|\widehat V\|_{\ell^1}}{2}  \tr [(\gamma-\gamma_\textnormal{pw})^2] \;.
\end{align}

\paragraph{Kinetic term.} Finally, we prove that if $0\le \lambda  \|\widehat V\|_{\ell^1} \le \hbar^2$, then 
\begin{align} \label{eq:app:diff-3}
 \tr [ (-\hbar^2\Delta - \lambda G)  (\gamma-\gamma_\textnormal{pw})]  \ge \frac{\hbar^2 -\lambda \|\widehat V\|_{\ell^1} }{2}  \tr [ (\gamma-\gamma_\textnormal{pw})^2]\;.
\end{align}
To see that, we proceed similarly to \cite[Eq.~(5)]{GHL19}. More precisely, let us find a multiplication operator $A(k) \ge (\hbar^2 -\lambda \|\widehat V\|_{\ell^1} )/2$ on Fourier space such that 
\begin{align} \label{eq:app:diff-2}
 \tr [ (-\hbar^2\Delta - \lambda G)  (\gamma-\gamma_\textnormal{pw})] =  \tr [ A(\gamma-\gamma_\textnormal{pw})^2]\;. 
\end{align}
Since $\gamma$ and $\gamma_\textnormal{pw}$ are projections we can decompose
\[
(\gamma-\gamma_\textnormal{pw})^2= \gamma_\textnormal{pw}^\bot(\gamma-\gamma_\textnormal{pw})\gamma_\textnormal{pw}^\bot -  \gamma_\textnormal{pw}(\gamma-\gamma_\textnormal{pw})\gamma_\textnormal{pw}\;, \quad \gamma_\textnormal{pw}^\bot = 1- \gamma_\textnormal{pw}\;. 
\]
Hence, for any constant $C_0\in \Rbb$,
\begin{align*}
\tr [ A (\gamma-\gamma_\textnormal{pw})^2] &= \tr [ (\gamma_\textnormal{pw}^\bot A  \gamma_\textnormal{pw}^\bot -  \gamma_\textnormal{pw} A  \gamma_\textnormal{pw})  (\gamma-\gamma_\textnormal{pw})] \\
&= \tr [ (\gamma_\textnormal{pw}^\bot A  \gamma_\textnormal{pw}^\bot -  \gamma_\textnormal{pw} A  \gamma_\textnormal{pw} + C_0)  (\gamma-\gamma_\textnormal{pw})]\;.
\end{align*}
Here in the last equality we have used $\tr(\gamma)= \tr (\gamma_\textnormal{pw})=N$. 
Thus the desired equality \eqref{eq:app:diff-2} holds true if 
\[
\gamma_\textnormal{pw}^\bot A  \gamma_\textnormal{pw}^\bot -  \gamma_\textnormal{pw} A  \gamma_\textnormal{pw}  + C_0 =  -\hbar^2\Delta - \lambda G 
\]
which is equivalent to 
\[
A(k) {\mathds 1}(k\in \BF^c) - A(k) {\mathds 1}(k\in \BF) =  \hbar^2 \lvert k\rvert^2 - \lambda G(k) - C_0\;. 
\]
The latter equality holds true when 
\[
A(k)= |\hbar^2|k|^2 - \lambda G(k) - C_0| =  \begin{cases}
\hbar^2|k|^2 - \lambda G(k) - C_0\;, &\quad k\in \BFc \\
- (\hbar^2 |k|^2 - \lambda G(k) - C_0)\;, &\quad k\in \BF
\end{cases}
\]
provided that the constant $C_0$ satisfies 
\[
\sup_{k\in \BF} \Big( \hbar^2 |k|^2 - \lambda G(k) \Big) \le C_0 \le \inf_{k\in \BF^c} \Big( \hbar^2 |k|^2 - \lambda G(k) \Big)\;.
\]
Note that since the Fermi ball is completely filled we have the gap $|k_2|^2- |k_1|^2 \ge 1$ for all $k_1\in \BF$ and $k_2\in \BF^c$ (since $|k_2|^2 - |k_1|^2$ is positive and integer). Furthermore, 
\[
0\le G(k) = \sum_{p\in \BF} \widehat V(k-p) \le \|\widehat V\|_{\ell^1}\;. 
\]
When  $0\le \lambda  \|\widehat V\|_{\ell^1} \le \hbar^2$, we can choose 
\[
C_0 := \frac{1}{2}\inf_{k\in \BFc} \Big( \hbar^2 \lvert k\rvert^2 - \lambda G(k) \Big) +\frac{1}{2} \sup_{k\in \BF} \Big( \hbar^2 \lvert k\rvert^2 - \lambda G(k) \Big)
\]
and obtain 
\[
A(k)= |\hbar^2 |k|^2 - \lambda G(k) - C_0|  \ge \frac{\hbar^2 -\lambda \|\widehat V\|_{\ell^1} }{2} \qquad \forall k \in \Zbb^3\;. 
\]
The desired estimate \eqref{eq:app:diff-3} follows immediately.

%\eqref{eq:app:diff-2} with 
\paragraph{Conclusion.} Inserting \eqref{eq:app:diff-3} in \eqref{eq:app:diff-1} we find that 
\[
\langle \Psi, H_N\Psi \rangle -  \langle \psi_\textnormal{pw}, H_N \psi_\textnormal{pw}\rangle \ge \left ( \frac{\hbar^2}{2}  - \lambda \|\widehat V\|_{\ell^1} \right)  \tr [ (\gamma-\gamma_\textnormal{pw})^2]\;.
\]
Hence, under the condition $\lambda \|\widehat V\|_{\ell^1}<\hbar^2/2$, we conclude that $\psi_\textnormal{pw}$ is the unique Hartree--Fock minimizer. 
\end{proof}

\section{Kinetic Energy Estimates}\label{app:A}

In this appendix we provide a simplified proof of \cref{lem:kinetic}, which was first established in \cite[Lemma~4.7]{HPR20}. Like \cite{HPR20} we use the following special case of a result by \cite{Hux03}.
\begin{thm}[Integer points in ellipses]
\label{thm:ellipses}
Let $d_0 \in \mathbb{N}$. For every $R>0$  consider the ellipse
\[
E(R)= \{ (x,y)\in \Rbb^2: d_0 x^2 + y^2 \le R^2\}\;. 
\]
Then the number of points in $E(R) \cap \Zbb^3$ is $|E(R)| + \Ocal(R^{\gamma})$ for $R\to \infty$, with any $\gamma>131/208$. Here $|E(R)|= \pi d_0^{-1/2} R^2$ is the area of $E(R)$. 
\end{thm}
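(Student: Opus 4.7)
The statement to prove is a classical lattice-point counting problem for ellipses; when $d_0 = 1$ it reduces to the famous Gauss circle problem, so I anticipate that a self-contained proof of the optimal error term is out of reach and the strategy will be to combine a Gauss-style heuristic with a serious input from analytic number theory.

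My plan is to split the argument into the easy main term and the hard error term. For the main term, I would use the classical Gauss-style correspondence: associate to each lattice point $(m,n) \in E(R) \cap \Zbb^2$ the closed unit square $Q_{m,n} := [m,m+1) \times [n,n+1)$. Since these squares tile the plane, $\lvert E(R) \cap \Zbb^2 \rvert$ equals the total area of $\bigcup_{(m,n) \in E(R) \cap \Zbb^2} Q_{m,n}$, which differs from $\lvert E(R) \rvert = \pi d_0^{-1/2} R^2$ only by contributions from squares that straddle the boundary $\partial E(R)$. A trivial bound, using the fact that the perimeter of $E(R)$ is $\Ocal(R)$ and that the diameter of each square is $\sqrt{2}$, yields the weak error estimate $\Ocal(R)$, which is already good enough to recover Gauss' leading-order asymptotic.

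The real difficulty is to sharpen the boundary error from $\Ocal(R)$ to $\Ocal(R^{\gamma})$ with $\gamma > 131/208$. Here I would not try to prove anything new but instead appeal to the deep work of Huxley~\cite{Hux03}. The rough strategy in that approach is to write the counting function in terms of the truncated Fourier series of the characteristic function of $E(R)$, apply Poisson summation, and reduce the remainder to bounding exponential sums of the form $\sum_n e^{2\pi i f(n)}$ where $f$ parametrizes the boundary $\partial E(R)$. The classical van der Corput method already gives substantial improvements over the trivial $\Ocal(R)$ bound, while Huxley's refinement employs the discrete Hardy--Littlewood method together with the Bombieri--Iwaniec exponential-sum machinery and a careful treatment of the curvature of $\partial E(R)$ (which for the ellipse is uniformly bounded away from zero and infinity, so qualitatively behaves like the circle). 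The ellipse case with integer $d_0$ can be reduced to the circle case by the change of variables $(x,y) \mapsto (\sqrt{d_0}\,x, y)$; the only issue is that this scales $\Zbb \times \Zbb$ into $(d_0^{-1/2} \Zbb) \times \Zbb$, which is a rational dilation of the standard lattice, so Huxley's bound applies verbatim up to constants depending on $d_0$.

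The principal obstacle, of course, is that the exponent $131/208 < 2/3$ is not something one can recover by elementary means: the trivial perimeter bound gives exponent $1$, van der Corput--type second-derivative estimates reach exponent $2/3$, and further progress requires the full strength of Huxley's machinery. For our purposes in the body of the paper, however, any exponent strictly less than $1$ is sufficient, so in a self-contained treatment one could actually replace the quoted bound by the weaker (but still non-trivial) classical bound of Sierpi\'nski or van der Corput; this would slightly worsen some error estimates elsewhere but would not affect the leading-order correlation energy.
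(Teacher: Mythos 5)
The paper does not prove this theorem: it quotes it as a special case of Huxley's result \cite{Hux03} and remarks, exactly as you do, that any exponent $\gamma<1$ suffices and that the classical van der Corput exponent $2/3$ \cite{Cor19} is already enough for the application (it only degrades the final error $\Ocal(\hbar^{1+1/16})$ slightly). Your proposal — Gauss--style tiling for the main term, a perimeter bound to get $\Ocal(R)$, and an appeal to Huxley for the sharp exponent — is therefore the same route, with some accurate heuristic commentary on the Poisson-summation/exponential-sum machinery added.

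One detail in your reduction is off, though it does not affect the conclusion. The substitution $(x,y)\mapsto(\sqrt{d_0}\,x,y)$ sends $\Zbb^2$ to $\sqrt{d_0}\,\Zbb\times\Zbb$ (a stretch, not the contraction $d_0^{-1/2}\Zbb\times\Zbb$ you wrote), and more importantly it converts the problem of counting $\Zbb^2$-points inside an ellipse into that of counting points of a rectangular lattice inside a disc. That is the same problem, not the Gauss circle problem, so the reduction is circular and does not by itself let you invoke a circle-problem bound. What actually makes the citation valid is that Huxley's estimate in \cite{Hux03} is stated for lattice points inside a general smooth closed convex curve with bounded nonvanishing curvature (rescaled by $R$), and the ellipse with fixed $d_0\in\Nbb$ satisfies those hypotheses directly; no rescaling to a circle is needed, and the implicit constant may depend on $d_0$.
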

\noindent We do not need the full power of this theorem; for our purpose, any $\gamma<1$ is sufficient. With exponents $\gamma = 2/3$, it is a classic result due to Van der Corput's thesis \cite{Cor19} from 1919.

\begin{proof}[Proof of \cref{lem:kinetic}] We start by proceeding as in \cite[Lemma~4.7]{HPR20}. Using the Cauchy--Schwarz inequality we get
\begin{align*}
 & \sum_{p \in \BFc \cap (\BF +k)} \norm{a_p a_{p-k} \psi} \\
%  & = \sum_{p \in \BFc \cap (\BF +k)} \frac{1}{\sqrt{e(p)+e(p-k)}} \sqrt{e(p)+e(p-k)} \norm{a_p a_{p-k} \psi} \\
 & \leq \Bigg[\sum_{p \in \BFc \cap (\BF +k)} \frac{1}{e(p)+e(p-k)} \Bigg]^{1/2} \Bigg[\sum_{p \in \BFc \cap (\BF +k)}\left( e(p)+e(p-k) \right) \norm{a_p a_{p-k} \psi}^2 \Bigg]^{1/2}\;.
\end{align*}
The second factor is bounded by the kinetic energy as claimed,
\begin{align*}
 & \sum_{p \in \BFc \cap (\BF +k)}\left( e(p)+e(p-k) \right) \norm{a_p a_{p-k} \psi}^2 \\
 & \leq \sum_{p \in \BFc \cap (\BF +k)} e(p) \norm{a_p \psi}^2 + \sum_{p \in \BFc \cap (\BF +k)} e(p-k) \norm{a_{p-k} \psi}^2 \\
 & \leq \sum_{p \in \BFc} e(p) \langle \psi, a^*_p a_p \psi\rangle + \sum_{h \in \BF} e(h) \langle \psi, a^*_h a_h \psi \rangle = \langle \psi, \Hbb_0 \psi\rangle\;. 
\end{align*}
The hard part is to show that there exists some constant $C > 0$ such that
\[
 \sum_{p \in \BFc \cap (\BF +k)} \frac{1}{e(p)+e(p-k)} \leq C N \;,
\]
which is equivalent to 
\begin{equation}\label{eq:toprove}
 \sum_{p \in \BFc \cap (\BF +k)} \frac{1}{\lvert p\rvert^2 - \lvert p-k\rvert^2} \leq C N^{\frac{1}{3}} \;.
\end{equation}

Heuristically, one may understand this bound as follows: the sum is over lattice points in the grey area of \cref{fig:normalization}; since $\lvert k \rvert = \Ocal(1)$ and the area of the Fermi surface is of order $N^{\frac{2}{3}}$, the number of these points is also of order $N^{\frac{2}{3}}$. Since $\lvert p\rvert \sim N^{\frac{1}{3}}$ ($p$ has to be close to the Fermi surface) and $\lvert k\rvert \sim 1$, we expect that in average $\lvert p\rvert^2 - \lvert p-k\rvert^2 = 2 p \cdot k - k^2 \sim N^{\frac{1}{3}}$, leading us to the order $N^{\frac 13}$ in \cref{eq:toprove}. 

Strictly speaking, $\lvert p\rvert^2 - \lvert p-k\rvert^2$ can be much smaller than the average size.  Actually $\lvert p\rvert^2 - \lvert p-k\rvert^2$ can be of order $O(1)$ if both $p$ and $p-k$ are very close to the Fermi surface. Fortunately, integer points very close to the surface of the sphere are quite rare and the contribution from this part can be controlled by number theoretic results. 
%according to a famous number theoretic result\footnote{ See, e.\,g., \cite{BRS17}: for large $n\in \Nbb$ we have 
%   $r_3(n) := \lvert \{ x \in \Zbb^3: \lvert x \rvert^2 =n \} \rvert \ll n^{\frac{1}{2}+\bourgain }$ for any $\bourgain > 0$.  
%  This bound follows directly from the bound for the analogous two-dimensional problem, $r_2(m) := \lvert \{ x \in \Zbb^2 \colon \lvert x \rvert^2 =m \} \rvert \ll m^{\bourgain }$ for any $\bourgain > 0$, which can be deduced easily from the exact formula for $r_2(m)$ \cite[Theorem 9.1]{Tak18}. Alternatively, the bound for $r_3(n)$ can be proved by expressing $r_3(n)$ through the Dirichlet $L$--function $L(1,\chi_{-n})$ using a famous formula by Gauss \cite{Arn92} and the Dirichlet class number formula; the $L$--function can then be bounded by a logarithm as in \cite[Lemma 10.15]{MV06}.}, the number of lattice points exactly on the sphere is not much bigger than proportional to the radius $N^{\frac{1}{3}}$, again leading us to an overall order $N$. Even more precisely, one notices that the dangerous situation is restricted to a region around the equator (if we think of $k$ indicating the direction of the north pole) of the Fermi surface in \cref{fig:normalization}, so that the number of points is even smaller.   

This argument was formulated rigorously in \cite{HPR20} but the detailed proof is rather technical. Here we present a simplified proof of \eqref{eq:toprove} for the reader's convenience. 

% \begin{proof}[Proof of \cref{eq:toprove}] 
Note that for every $p\in \BFc \cap (\BF+k)$ we have
\[
1\le p^2-(p-k)^2 = 2 p\cdot k - k^2 \;.
\]
The lower bound $1$ follows from the fact that $|p| >|p-k|$ and that $p,p-k\in \Zbb^3$. Hence, 
\[\frac{1}{p^2-(p-k)^2} = \frac{1}{2p\cdot k} \left(1 + \frac{k^2}{2p \cdot k - k^2}\right) \leq \frac{1}{2p\cdot k} \left(1 + k^2 \right) \leq \frac{C}{p\cdot k}\;.\]
Moreover, when $p\in \BFc \cap (\BF+k)$ we have $|p|\le CN^{\frac{1}{3}}$, and hence $|p\cdot k|\le CN^{\frac{1}{3}}$. Recall also that $p,k \in \Zbb^3$, so that $p\cdot k\in \Zbb$. Thus we get
\begin{align} \label{eq:sum-1/ep-2}
 \sum_{p\in \BFc \cap (\BF+k)} \frac{1}{ p^2-(p-k)^2  } \le  \sum_{p\in \BFc \cap (\BF+k)} \frac{C}{p\cdot k} \le \sum_{s\in \Zbb \cap [(1+k^2)/2,CN^{1/3}]}  \frac{C|B_s|}{s}
\end{align}
where
\[
B_s := \{p\in \BFc \cap (\BF+k) : p\cdot k = s\}\;. 
\]
To count $|B_s|$ we use \cref{thm:ellipses}; we are going to show that for any $1>\gamma > 131/208$ we have
\begin{align} \label{eq:sum-1/ep-3}
|B_s| \le C_\gamma (|s| + N^{\frac{\gamma}{3}}) \qquad \forall s\in \Zbb \cap [(1+k^2)/2,CN^{1/3}]\;. 
\end{align}

\paragraph{Easy case.} Assume that $k=(k_1,0,0)$ with $k_1\ne 0$. Consider $p=(p_1,p_2,p_3)\in B_s$. Then $p_1=(p\cdot k)/k_1 = s/k_1$ and 
\[
p^2 > k_F^2 \geq (p-k)^2 \iff k_F^2 + 2 s - k^2 - p_1^2  \geq p_2^2 + p_3^2 > k_F^2 - p_1^2\;. 
\]
Thus, considered in the plane $\Rbb^2$, we have $(p_2,p_3)\in B(0,R_2)\backslash B(0,R_1)$ with the radii of the two centered balls being
\[
R_1= \sqrt{k_F^2 - p_1^2}\;, \quad R_2 = \sqrt{k_F^2 + 2 s - k^2 - p_1^2}\;. 
\]
Note that 
\[
R_1<R_2 \le C N^{\frac{1}{3}}\;, \quad R_2^2- R_1^2 \le C s\;.  
\]
Thus $|B_s|$ is bounded by the number of integer points in the annulus $B(0,R_2)\backslash B(0,R_1)$, which according to \cref{thm:ellipses} is bounded by
\[
|B_s|\le |B(0,R_2)|- |B(0,R_1)| + \Ocal(N^{\frac{\gamma}{3}}) = \pi (R_2^2-R_1^2) + \Ocal(N^{\frac{\gamma}{3}})  \le C (s + N^{\frac{\gamma}{3}})\;. 
\]
  
\paragraph{General case.} If $k$ is aligned with one of the basis vectors $(1,0,0), (0,1,0), (0,0,1) \in \Rbb^3$, then we can proceed as above. Otherwise, we can assume  $k=(k_1,k_2,k_3)$ with $k_1,k_2\ne 0$.  Consider the set of orthogonal vectors in $\Rbb^3$ given by
 \begin{align*}
k=(k_1,k_2,k_3)\;,\quad k_\bot=(0, -k_3,k_2)\;,\quad k_\bot' = (-k_2^2-k_3^2, k_1 k_2, k_1 k_3)\;. 
\end{align*} 
Every $p\in \Zbb^3$ is determined uniquely by $(n_1,n_2,n_3)\in \Zbb^3$ with 
\[
n_1= p\cdot k\;, \quad n_2= p\cdot k_\bot\;, \quad n_3= p \cdot k_\bot'\;; 
\]
in fact
\[
p = \Big(p\cdot \frac{k}{\lvert k\rvert} \Big) \frac{k}{\lvert k\rvert} + \Big(p\cdot \frac{k_\bot}{\lvert k_\bot\rvert} \Big) \frac{k_\bot}{\lvert k_\bot\rvert} + \Big(p\cdot \frac{k'_\bot}{\lvert k'_\bot\rvert} \Big) \frac{k'_\bot}{\lvert k'_\bot\rvert}\;.
\]
Then, using 
\[
p^2 = \left| p \cdot \frac{k}{|k|} \right|^2 + \left| p \cdot \frac{k_\bot}{|k_\bot|} \right|^2 + \left| p \cdot \frac{k_\bot'}{|k_\bot'|} \right|^2
\]
and the identity $|k_\bot'|=|k| |k_\bot|$ we find that
\[
|k_\bot'|^2 p^2 = |k_\bot|^2  \left| p \cdot k \right|^2 + |k|^2 \left| p \cdot  k_\bot  \right|^2 + \left| p \cdot k_\bot' \right|^2 = |k_\bot|^2  n_1^2  + |k|^2 n_2^2 + n_3^2\;. 
\]
Consequently, if $p\in B_s$, then $n_1=k\cdot p = s$ and  
\[
p^2 > k_F^2 \geq (p-k)^2  \iff |k_\bot'|^2 ( k_F^2 + 2 s - k^2) - |k_\bot|^2  s^2   \geq   |k|^2 n_2^2 + n_3^2  > |k_\bot'|^2 k_F^2 - |k_\bot|^2  s^2\;. 
\]
Thus $(n_2,n_3)\in E(R_2)\backslash E(R_1)$ where $E(R)$ is the ellipse 
\[
E(R)= \{(x,y)\in \Rbb^2: |k|^2 x^2+ y^2 \le R^2\}
\]
and 
\[
R_1= \sqrt{|k_\bot'|^2 k_F^2 - |k_\bot|^2   s ^2}\;, \quad R_2= \sqrt{ |k_\bot'|^2 ( k_F^2 + 2 s - k^2)  - |k_\bot|^2  s ^2  }\;. 
\]
Note that $R_1<R_2 \le C N^{\frac{1}{3}}$ and $R_2^2 - R_1^2 \le C s$. Thus $|B_s|$ is bounded by the number of integer points in  $E(R_2)\backslash E(R_1)$, which is bounded by \cref{thm:ellipses} (with $d_0= \lvert k\rvert^2$) by: 
\begin{align*}
|B_s|\le |E(R_2)| -|E(R_1)| + \Ocal(N^{\gamma/3}) = \pi \frac{R_2^2}{|k|} - \pi \frac{R_1^2}{|k|} + \Ocal(N^{\frac{\gamma}{3}}) \le C(s + N^{\frac{\gamma}{3}})\;. 
\end{align*}

Thus in conclusion, we have proved that \eqref{eq:sum-1/ep-3} holds for every  $0\ne k\in \Zbb^3$. Inserting \eqref{eq:sum-1/ep-3} in \eqref{eq:sum-1/ep-2} we conclude that
\begin{align*}
 \sum_{p\in \BFc \cap (\BF+k)} \frac{1}{ p^2-(p-k)^2  }  &\le \sum_{s\in \Zbb \cap [(1+k^2)/2,CN^{1/3}]}  \frac{C |B_s|}{s} \\
 &\le \sum_{s= 1}^{CN^{1/3}}  \frac{C (s+N^{\frac{\gamma}{3}})}{s } \le CN^{\frac{1}{3}}+ N^{\frac{\gamma}{3}} \log N\;.
 \end{align*}
 Since $\gamma<1$, this implies \cref{eq:toprove}.
\end{proof}

The above proof can be adapted easily to give \cref{lem:kineticequator}. 

\begin{proof}[Proof of \cref{lem:kineticequator}] As in the proof of \cite[Lemma 4.7]{HPR20}, by Cauchy--Schwarz we get  
\begin{align*}
\sum_{\substack{p\colon p \in \BFc \cap (\BF +k)\\ e(p)+e(p-k) \leq 4 N^{-\frac{1}{3} -\delta}  }} \norm{a_p a_{p-k} \psi} &\leq 
 \Bigg[ \sum_{\substack{p\colon p \in \BFc \cap (\BF +k)\\ e(p)+e(p-k) \leq 4 N^{-\frac{1}{3} -\delta}  }}  (e(p)+e(p-k))^{-1} \Bigg]^{1/2} \times \nonumber\\
 &\quad \times  \Bigg[ \sum_{\substack{p\colon p \in \BFc \cap (\BF +k)\\ e(p)+e(p-k) \leq 4 N^{-\frac{1}{3} -\delta}  }}  (e(p)+e(p-k))  \norm{a_p  a_{p-k} \psi}^2 \Bigg]^{1/2}\;. 
\end{align*}
In the second factor we simply drop the constraint $e(p)+e(p-k) \leq 4 N^{-\frac{1}{3} -\delta}$ and bound it by the kinetic energy $\Hbb_0$ exactly as in the proof of \cref{lem:kinetic}.
It remains to prove  
\begin{align} \label{eq:IABC}
\sum_{\substack{p\colon p \in \BFc \cap (\BF +k)\\ \lvert p\rvert^2 - \lvert p -k \rvert^2 \leq 4 N^{\frac{1}{3}-\delta}}}  \frac{1}{\lvert p\rvert^2 - \lvert p-k\rvert^2} \leq CN^{\frac{1}{3}-\delta}\;.  
\end{align}
We proceed exactly as in the proof of  \cref{eq:toprove}. The only difference is that the condition $4 N^{\frac{1}{3} -\delta} \ge p^2 - (p-k)^2 = 2 p\cdot k -k^2 \geq 1$ implies that $\lvert p\cdot k\rvert\le C N^{\frac{1}{3} -\delta}$. Hence, using \eqref{eq:sum-1/ep-3} we have 
\begin{align*}
\sum_{\substack{p\colon p \in \BFc \cap (\BF +k)\\ \lvert p\rvert^2 - \lvert p-k\rvert^2 \leq 4 N^{1/3 -\delta}  }} \frac{1}{ p^2-(p-k)^2  } &\le  \sum_{s\in \Zbb \cap [(1+k^2)/2,CN^{1/3-\delta}]}  \frac{C |B_s|}{s}\\
&\le \sum_{s=1}^{4N^{\frac{1}{3}-\delta}}  \frac{C (s+N^{\frac{\gamma}{3}})}{s} \leq CN^{\frac{1}{3}-\delta} + N^{\frac{\gamma}{3}} \log N. 
\end{align*}
Since the latter bound holds true for every $\gamma>131/208$, we conclude that \eqref{eq:IABC} holds true for every $\delta< 77/624$. This completes the proof of Lemma 4.2.
\end{proof}

% \section{Plane Waves minimize the Hartree--Fock Functional}

\bibliographystyle{alpha}
\newcommand{\etalchar}[1]{$^{#1}$}

\end{document}